\pdfoutput=1
\documentclass[a4paper,oneside,reqno]{amsart}

\usepackage[english]{babel}
\usepackage[utf8]{inputenc}		
\usepackage[scaled]{helvet}		
\usepackage{courier}			
\usepackage{eulervm}			
\usepackage[bb=ams,bbscaled=1.05,scr=dutchcal]{mathalpha}	%
\usepackage[T1]{fontenc}
\normalfont

\usepackage[a4paper,margin=2.5cm]{geometry}
\usepackage[dvipsnames]{xcolor}
\usepackage[hypertexnames=false,colorlinks=true,linkcolor=blue,%
citecolor=purple,filecolor=magenta,urlcolor=cyan]{hyperref}                      
\usepackage{amssymb,anyfontsize,enumerate,mathrsfs,mdwlist,stmaryrd,tikz}
\usepackage{dsfont} 
\usepackage[textsize=footnotesize]{todonotes}
\presetkeys%
{todonotes}%
{color=Apricot}{}%
\usepackage[style=alphabetic,maxnames=99,maxalphanames=5]{biblatex}
\usepackage{csquotes}
\usepackage{mathtools}
\usepackage{braket}

\usepackage[english,noabbrev]{cleveref}
\crefname{lemma}{lemma}{lemmata}
\Crefname{lemma}{Lemma}{Lemmata}

\allowdisplaybreaks 

\theoremstyle{plain}                          
\newtheorem{theorem}{Theorem}[section]
\newtheorem{proposition}[theorem]{Proposition}    
\newtheorem{lemma}[theorem]{Lemma}
\newtheorem{corollary}[theorem]{Corollary}
\newtheorem{conjecture}[theorem]{Conjecture}
\theoremstyle{definition}
\newtheorem{definition}[theorem]{Definition}
 
\newtheorem{example}[theorem]{Example}
\theoremstyle{remark}
\newtheorem{remark}[theorem]{Remark}

\renewcommand{\theta}{\vartheta}
\renewcommand{\phi}{\varphi}
\renewcommand{\epsilon}{\varepsilon}

\newcommand{\bs}[1]{\ensuremath{\boldsymbol{#1}}}
\newcommand{\mb}[1]{\mathbb{#1}} 
\newcommand{\mf}[1]{\mathfrak{#1}}
\newcommand{\mc}[1]{\mathcal{#1}}

\newcommand{\R}{\mb{R}} 
\newcommand{\N}{\mb{N}} 
\newcommand{\C}{\mb{C}} 
\newcommand{\Z}{\mb{Z}} 
\newcommand{\Q}{\mb{Q}}
\renewcommand{\P}{\mb{P}}

\DeclareMathOperator{\Log}{Log}
\DeclareMathOperator{\spa}{span}
\DeclareMathOperator{\Sym}{Sym}
\DeclareMathOperator{\csch}{csch}

\DeclareMathOperator{\Aut}{Aut}
\DeclareMathOperator{\sgn}{sgn}
\DeclareMathOperator{\Erd}{Erd}
\DeclareMathOperator{\Ord}{Ord}
\DeclareMathOperator{\Mult}{Mult}

\DeclareMathOperator*{\Res}{Res}

\newcommand{\raisingfactorial}[1]{%
	^{\mspace{2mu}\overline{\mspace{-2mu}#1\mspace{-2mu}}\mspace{2mu}}%
}

\begingroup\expandafter\expandafter\expandafter\endgroup
\expandafter\ifx\csname pdfsuppresswarningpagegroup\endcsname\relax
\else
\fi
{}

\begin{document}
	
\title{Topological recursion on spectral curves with infinite ramification loci}

\author[Q.~Weller]{Quinten Weller}
\email{qgw1@nyu.edu}
\address{New York University, Department of Physics, 726 Broadway, New York, NY 10003, USA}	
	
\thanks{}

\begin{abstract}
Recently in the physics literature there have been a small, but growing, number of papers that have applied the Eynard-Orantin topological recursion procedure to spectral curves with ramification loci of infinite cardinality. However, there is a corresponding gap in the mathematics literature; indeed, it has not been checked rigorously whether the resulting infinite sums converge, or whether the resulting correlators have all the desired properties of the topological recursion. The present work aims to bridge this gap by defining topological recursion on a suitably broad class of spectral curves to cover the aforementioned physics applications and to rigorously prove that this definition has the properties one would intuitively expect. As a by-product, some interesting coincidences involving quantum curves are discovered, where two different spectral curves yield the same quantum curve, a situation that has, hitherto, not appeared in the literature.
\end{abstract}
	
\maketitle
	
\tableofcontents


\section{Introduction}
\subsection{Motivation}\label{ss:mot}

Topological recursion (TR) was originally developed to solve loop equations that arose from matrix models; the solution of these loop equations yielded an all orders solution of the large-$N$ expansion of the matrix model correlators \cite{E04,CEO06}. However, in \cite{EO07} it was realised that one could consider the recursion as a mathematical formalism independent of its physical origins, thereby employing it as a general method of associating a collection of symmetric multidifferentials (the `correlators') to an object called a spectral curve. Since the key insight of \cite{EO07}, the topological recursion has found uses across many fields of mathematics and physics including, but not limited to: mirror symmetry \cite{BM08,GS11,BS12,Z12}; knot theory \cite{ABM12,BE12,GJKS14,DPSS19}; Gromov--Witten theory \cite{BKMP08,EO15,DOSS14,NS14,FLZ17,FLZ20,GKLS22}; intersection theory on the moduli space of curves \cite{EO07,E11,DOSS14}; several kinds of integrable systems \cite{BEM17,EGMO21,BDKS20a}.

A spectral curve, the initial data of TR, is a quadruple $\mc{S} = \left(\Sigma,x,y,\omega_{0,2}\right)$ where: $\Sigma$ is a Riemann surface; $x,y:\Sigma\to \P^1$ are holomorphic except, possibly, on a finite set of points; $\omega_{0,2}$ is a symmetric $2$-differential with poles only on the diagonal. Previously, it has always been assumed that $x$ has a finite ramification locus, i.e. the set of zeroes and non-simple poles of $dx$ was finite. However, in the physics literature a few papers have recently appeared: \cite{CEMR24,CEMR25}, which study the worldsheet theory of what the authors dub the `complex Liouville string'; \cite{CER26}, which studied the $c=1$ string from the context of matrix modles and topological recursion; \cite{DLY25} which studies a supersymmetric variant of the complex Liouville String; \cite{BERW26}, which studies the related, so-called `Runkel-Watts' string; \cite{A25}, which studies the implications of the $x$-$y$ duality (see \Cref{ss:xy}) in TR on minimal strings, and, in a limiting case, JT gravity; in all of these works computations were performed on curves where the ramification locus of $x$ was infinite. In \cite{CEMR24,CEMR25}, $x(z) = -2 \cos(\pi b^{-1}z)$ ($b\in\C^*$ is a fixed parameter), in \cite{CER26} $x(z) = 2\sqrt{2}\cos(z)$, in \cite{DLY25} $x(z) = 2 \sin(\pi b^{-1} z)$, in \cite{BERW26} $x(z) = -2 \cos(\pi z / \sqrt{qq'})$ ($q,q'\in \C^*$ are fixed parameters), and in \cite{A25} $x(z) = (4\pi)^{-1}\sin(2\pi z)$.

The correct definition of topological recursion with infinitely many ramification points is more or less obvious (although one may worry about possible contributions from the inevitable essential singularities, such as those that appear in \cite{BKW24,AH26}): the original Eynard-Ornatin (or Bouchard-Eynard in the case of higher order ramification) should be applied locally around each ramification point of $x$ and the infinite number of contributions summed, and this was the approach applied in \cite{CEMR24,CEMR25,CER26,A25,DLY25,BERW26}. However, to make this intuitively obvious definition mathematically sound one should check two things: first, does this infinite sum converge for every $g$ and $n$ (in the physics papers this was only checked for those correlators which were actually computed), and, second, are the key properties of the correlators maintained?

The approach adopted here is as follows. First, the space of all possible $x$ with infinitely many ramification points is much too broad for the present work's techniques to tackle. To restrict the domain of curves down to one that can be solved, two classes of curves will be studied. The first are those where $x$ is a periodic function when written in some global coordinate, and such curves will be called \emph{periodic spectral curves}; this class will encompass the computations performed in the physics literature \cite{CEMR24,CEMR25,CER26,A25,DLY25,BERW26}. The second are those curves of the form $P(x,e^{xy})=0$, where $P$ is a polynomial,\footnote{Note that here it is assumed that $x$ and $y$ are parameterised \emph{without} any branch cuts, but \emph{with} essential singularities.} and such curves will be called \emph{exponential spectral curves}. The motivation for considering the second class is twofold: first, it provides many examples of higher genus curves with infinite ramification loci; second, circumstantial evidence will be presented that the quantisation of such curves is related to the quantisation of curves of the form $P(e^x,e^y) = 0$, a problem that is intimately related to knot theory \cite{ABM12,BE12,GJKS14,DPSS19} and mirror symmetry \cite{BKMP08,BM08,GS11,BS12,Z12,DOSS14,NS14,EO15,FLZ17,FLZ20,GKLS22}. 

Second, with the arena of action delineated, for every periodic/exponential curve $\mc{S}$, it will be argued that there exists a sequence of spectral curves $\mc{S}_N$, defined within the well-known Bouchard-Eynard formalism, that converges to $\mc{S}$ such that the correlators $\omega_{g,n}^N$ produced from $\mc{S}_N$ converge to the correlators of $\mc{S}$ in the limit as $N\to\infty$. From this one can immediately deduce the limiting correlators inherit the key properties of the $\omega_{g,n}^N$. In particular, employing the methodology developed in \cite{BE17,BKW24} one can qauntise many curves with an infinite ramification loci. Curiously, in some of the examples considered here one gets a quantisation that has already appeared in the literature, that is, a quantisation of a spectral curve where $x$ has only a finite number of ramification points. It must be stressed that this seeming coincidence is not a trivial consequence of one set of correlators being pullbacks of the others or any other more-or-less obvious relation; \textit{au contraire} it relies on seemingly non-trivial cancellations between different correlators of different $g$ and $n$ but the same $2g+n-2$. Indeed, such cancellations are explicitly checked for and found for different curves for low $g$ and $n$.

\subsection{Main results}

Two extensions of the topological recursion are proposed. The first is for spectral curves which are dubbed \emph{periodic}; these are curves of the form $\mc{S} = \left(\Sigma,x,y,\omega_{0,2}\right)$ where $\Sigma$ is an open subset of $\P^1$ and $x$ is a periodic function in some affine coordinate on $\P^1$. Second, are a class of curves dubbed \emph{exponential}; these are curves such that $x$ and $y$ satisfy identically a relation of the form $P(x,e^{\kappa xy}) = 0$, where $\kappa\in\C$ and $P$ is a polynomial.\footnote{These two cases are not disjoint; for example, the curve $ P(x,y) = -e^{(f+1)xy} + e^{fxy} - x $ for an integer $f\in\Z\setminus\{0,-1\}$, falls under both categories.} In both these cases, the difference with the usual Bouchard-Eynard formalism is the fact that the sum over ramification points becomes infinite (see: \Cref{d:per} and \Cref{d:expc}).

It is shown that, for both classes of curves, the sum over ramification points will always converge. Furthermore, for any spectral curve $\mc{S}$ in either class, the existence of a sequence of spectral curves $\mc{S}_N$, such that $\mc{S}_N$ is a spectral curve for which topological recursion has been previously defined and $\mc{S}_N\to\mc{S}$ as $N\to\infty$, is proven. Furthermore, denoting $\omega_{g,n}^N$ as the correlators constructed from $\mc{S}_N$, and $\omega_{g,n}$ the correlators constructed from $\mc{S}$, the relation
\begin{equation}
	\lim\limits_{N\to\infty}\omega_{g,n}^N = \omega_{g,n}\,,
\end{equation}
is established in \Cref{t:seqper,t:seqexp}. This allows for the deduction, in \Cref{c:perprop,c:expprop}, that the $\omega_{g,n}$ of $\mc{S}$ satisfy the key properties of the Bouchard-Eynard formalism; in particular, the correlators of \cite{CEMR24,CEMR25,CER26,A25,DLY25,BERW26} satisfy these very same properties.

Furthermore, the construction of the sequence $\mc{S}_N$ and the proof that the limiting correlators agree with that of $\mc{S}$ allows for the establishment of the quantum curve/topological recursion (QC/TR) connection for genus zero periodic and exponential curves. In addition to general results, two specific examples are studied and quantum curves are obtained.

First, consider the curve $x(z) = 2\cosh(z)$, $y(z) = z$ so that $x$ and $y$ satisfy $P(x,y) = x - 2\cosh(y) = 0$. The quantum curve is shown to be (\Cref{p:QCGWP1})
\begin{equation}
	\left[\hat{x} - 2\cosh(\hat{y})\right]\psi = 0\,,
\end{equation}
where $\hat{x} \coloneq x\cdot$ and $\hat{y}\coloneq \hslash d/dx$. Alternatively, the relation $P(x,y) = x - 2\cosh(y) = 0$ can be parametersied as $x=z+1/z$, $y(z) = \log(z)$; this parameterisation is known to calculate the Gromov-Witten invariants of $\P^1$ \cite{NS14}. Surprisingly, the quantum curve computed here agrees with the result computed for the second parametrisation \cite{M17,DMNPS17}.

The second example is the curve $x(z)=e^{fz}(1-{\rm e}^{z})$, $\omega_{0,1} = z/x(z)$ so $x$ and $y$ satisfy identically $P(x,y)=-e^{(f+1)xy}+e^{fxy}-x=0$, for an integer $f\in\Z\setminus\{0,-1\}$ called the framing. A quantum curve is computed for the first parametrisation (\Cref{p:QCGWtor})
\begin{equation}
	\left[-e^{(f+1)\hat{x}\hat{y}}+e^{f\hat{x}\hat{y}}-\hat{x}\right]\psi = 0\,.
\end{equation}
A different parametrisation of $P$, $x(z)=\log(z^f-z^{f+1})$, $y(z) = \log(z)$, is known to calculate the Gromov-Witten invariants of $\C^3$ \cite{BM08,BS12,Z12}. This does not agree with the quantum curve obtained for the second parametrisation in \cite{Z12} for any integer $f\in\Z$, but the two wavefunctions are shown to satisfy the simple relation 
\begin{equation}
  \psi(x) = (\pi^*\tilde{\psi})(e^{\hslash(f+1/2)}x)\,.
\end{equation}
where $\pi = \exp$ is the exponential map in $z$ and $\tilde{\psi}$ is the wavefunction for the parameterisation known to calculate the Gromov-Witten invariants of $\C^3$. If one relaxes the condition $f\in\Z$ and sets $f=-1/2$, then the two quantum curves and wavefunctions agree. Although half-integer framing was not considered in \cite{Z12}, half-integer framing has appeared in the physics literature \cite{DF05}.

In both cases where the quantum curves agree it is checked explicitly in \Cref{Comp} that the two wavefunctions agree up to $\mc{O}(\hslash^2)$.

\subsection{Outline}

The work begins with a review of the necessary background in \Cref{s:back}. Spectral curves and their various properties are defined and discussed, and the topological recursion of \cite{BBCCN24,BKW24} is explained. The relevant information and theorems about the QC/TR connection from \cite{BE17,BKW24} are also touched upon. This section primarily contains information that can already be found in the literature, with some minor new results and reformulations (there is one new result, \Cref{t:TR/QCess}, but this has an identical flavour to results that already appeared in \cite{BKW24}).

Periodic spectral curves are studied in \Cref{s:per}; in particular, how topological recursion can take such curves as initial data is explained rigorously. In \Cref{ss:defper}, topological recursion is defined on such curves, and it is shown that the sum over the infinitely many ramification points converges. Building upon this, \Cref{ss:propper} establishes that for any periodic curve, there exists a sequence of transalgebraic curves (transalgebraic in the sense of \cite{BKW24}; see \Cref{d:specprop}) that converge to the given periodic curve, and that topological recursion commutes with this limit. This theorem is used to establish general properties about periodic curves that mirror that of the Eynard-Orantin formalism, including the existence of quantum curves.

The structure of \Cref{s:exp} is analogous to that of \Cref{s:per}, except that \Cref{s:exp} deals with exponential, rather than periodic, spectral curves. Topological recursion on exponential curves is defined in \Cref{ss:defexp}, and the sum over ramification points is shown to converge. In \Cref{ss:propper} it is shown that, for any exponential curve there exists a sequence of algebraic curves converging to the given exponential curve and this limit commutes with the topological recursion. As in \Cref{s:per}, this is used to establish the key properties one would expect from the topological recursion formalism.

Next, the results of \Cref{s:per} and \Cref{s:exp} are put to work on examples in \Cref{s:app}. In each of \Cref{ss:P1,ss:C3}, a particular spectral curve is studied, and quantum curves are computed for each such spectral curve. The spectral curve of interest in \Cref{ss:P1} is
\begin{equation}
  \left(\P^1,\, x(z)=2\cosh(z),\, y(z)=z,\, \omega_{0,2}(z_1,z_2) = \frac{dz_1dz_2}{(z_1-z_2)^2}\right)\,,
\end{equation}
whereas in \Cref{ss:C3} the relevant curve is
\begin{equation}
  \left(\P^1,\, x(z)=e^{fz}(1-{\rm e}^{z}),\, y(z)=z/x(z),\, \omega_{0,2}(z_1,z_2) = \frac{dz_1dz_2}{(z_1-z_2)^2}\right)\,.
\end{equation}

In \Cref{s:pullback} the obvious notion of pullback is defined for a spectral curve $\mc{S}$,\footnote{\cite{E19} already defined pullbacks of spectral.} and various basic properties of this notion are demonstrated. Indeed, this notion is used to prove various relations between correlators of periodic/exponential curves and curves that fall under the traditional Bouchard-Eynard framework.

In the first appendix, \Cref{Npol}, the ideas of \Cref{s:pullback} are used to comment upon non-canonically polarised spectral curves, which have not yet appeared in the literature (see \Cref{d:specprop} for the definition of non-canonically polarised). Finally, the $\omega_{g,n}$ of topological recursion are computed for low values of $g,n$ and various periodic and exponential spectral curves in \Cref{Comp}.

\subsection{Relation to recent generalisations of TR}
The most obvious antecedent to the present work is the authors master's thesis \cite{W22}. In \cite{W22} many of the results of \Cref{s:per} are already present in some, albeit less polished\footnote{The author discourages anyone from actually reading \cite{W22}; the results are completed superseded by \cite{BKW24} and the present work.} and general, form. Furthermore, it has been brought to the author's attention that in the recent master's thesis \cite{D26} it was established rigorously that the sum over ramification points in the definition of TR converged for the spectral curve studied by physicists in \cite{CEMR24,CEMR25}; this is a direct presaging of \Cref{l:converge} and the results of \Cref{ss:propper} in one, quite important, special case.

Generalising TR in a seemingly different direction, \cite{ABDKS25} developed a new approach to topological recursion called \emph{generalised TR}, or Gen-TR for short. Here, the integrand is globally defined without the use of deck transformations (however, the integrand does involve more involved combinatorics). For curves that fall under the class originally studied in \cite{EO07}, Gen-TR and the original TR definition coincide. Howbeit, both Gen-TR and the earlier Bouchard-Eynard framework \cite{BE13} yield correlators for curves with higher ramification orders and these correlators do not always agree (see \cite{ABDKS25} for a discussion of this fact). The present work takes as its starting point the more traditional Bouchard-Eynard approach, rather than the Gen-TR approach. As the two frameworks give the same result for simple ramification points, and Gen-TR is very well behaved when taking limits of curves where ramification points collide, many of the results in the present work should be essentially the same if Gen-TR was taken as a starting point.

However, in \cite{AH26}, Gen-TR was used to define topological recursion for curves where $x$ or $y$ have essential singularities. Due to the fact that the integrand in Gen-TR is globally defined, one can flip the residue computation and sum over the poles of the integrand that aren't the ramification points of $x$. This insight was used by \cite{AH26} to avoid the essential singularities and define TR based on a different residue computation. Employing this strategy allows one to define topological recursion for many of the curves considered in the present work (see \cite{AH26}) and has the distinct advantage that the proofs (and often: computations) are much more elegant than the ones given here. However, the approach taken here has two points in its favour.
\begin{itemize}
  \item The naïve formula for computing TR correlators for curves with infinitley many ramification points is shown to be correct. As this is what is being done in the physics literature \cite{CEMR24,CEMR25,CER26,A25,DLY25,BERW26}, and it is \textit{a priori} unclear whether this is equivalent to the approach in \cite{AH26} (in the cases when the approach in \cite{AH26} is well-defined; see the next item), this is an important point.
  \item The approach of \cite{AH26} fails in cases when both $x$ and $y$ both have infinitely many ramification points, whereas the approach here still succeeds. This is not purely theoretical question: such curves find applications \cite{CEMR24,CEMR25,CER26,DLY25}.
\end{itemize}
For periodic curves where all the ramification points of $x$ and $y$ are simple and $y$ has finitely many ramification points, it is clear that the definition of TR used here will coincide with the one used in \cite{AH26} as both satisfy the $x$-$y$ duality (see \Cref{c:perx-y}) and the dual side is manifestly identical for both definitions.

Finally, \Cref{s:app} provides a partial answer to the question of quantum curves posed in the `Further directions' section of \cite{AH26}.

\subsection{Acknowledgements}

The author would like to thank Yifan Wang for useful discussions. The nascent portion of this work was completed while the author was a master's student of Vincent Bouchard's \cite{W22}, and without his invaluable guidance and mentorship the present work would surely never have come to fruition.

\section{Background}\label{s:back}
\subsection{The geometry of spectral curves}
Central to the notion of a spectral curve is the basic algebrogeometric notion of a branched covering. Consider a meromorphic function $x$ from a dense, open subset $D$ of a Riemann surface $\Sigma$ to $\P^1$, where it is assumed that there is no $D'\supset D$ such that $x$ can be meromorphically extended to $D'$. Near generic points, by the open mapping theorem, one expects $x$ to behave as a covering map. This expectation fails at two distinct classes of points \cite{I1914} (see also the introduction of \cite{BE95} for a brief explanation in English and more references).
\begin{enumerate}
	\item \emph{Critical Values:} points $x_0\in x(D)$ such that $\exists$ $z_0\in x^{-1}(x_0)$ with $dx(z_0)=0$.
	\item \emph{Asymptotic Values:} points $x_0 \in \P^1$ such that $\exists$ a continuous curve $\gamma:[0,1) \to D$ with $x\circ \gamma (t) \to x_0$ as $t\to 1^{-}$, but $\gamma(1^-) \notin D$.
\end{enumerate}
Collectively these two types of points are called \emph{Singular Values} and are denoted by $S = C \cup A \subset \P^1$, where $C$ denotes the critical values and $A$ the asymptotic values. In this work singular values shall usually be referred to as \emph{branchpoints}.

Then, the map $x|: D \setminus x^{-1}(S) \to x\left(D \setminus x^{-1}(S)\right),\, z\mapsto x(z)$ is an honest covering map, and, if $x(D \setminus x^{-1}(S))$ is dense in $\P^1$, $x$ itself shall be called a \emph{branched covering}. The set $R = R_\infty \sqcup R_0$, where $R_\infty = D^c$ and $R_0 = x^{-1}(S)$, is called the \emph{ramification locus}, and elements $a\in R$ are called \emph{ramification points}. The points in $R_\infty$ are called \emph{infinite}, whereas the points in $R_0$ are called \emph{finite}, ramification points; note that the points in $R_\infty$ correspond to the (not necessarily isolated) essential singularities of $x$. A point in $a\in R_0$ is said to lie over the branchpoint $x(a)$ and a point $a\in R_\infty$ is said to lie over the branchpoint $x\circ\gamma(1^{-})$, where $\gamma$ is a path as in the definition of asymptotic values such that $\gamma(1)=a$. Alternatively, the ramification locus $R$ can be decomposed into isolated and non-isolated ramification points via $R = R_I \sqcup R_N$, where $R_N$ are the non-isolated (limit points) of $R$ and $R_I$ are the isolated points of $R$; the following obvious relations hold: $R_N \subseteq R_\infty$; $R_0 \subset R_I$; $\overline{R}_I = R$.

Around a ramification point $a \in R$ define the positive integer $ r_a $ as the ramification order of $ x $ at $a$ if $a$ is a finite ramification point and put $r_a=\infty$ if $a$ is an infinite ramification point. For a point $ z \in \Sigma$ denote $ \mf{f}(z) = x^{-1}(x(z))$ as the fibre and $ \mf{f}'(z) = \mf{f}(z) \setminus \{ z\}$. Next, if $ a \in R_I$ and $ z$ is close to $ a$, write $ \mf{f}_a(z)$ for the local Galois conjugates of $ z$,\footnote{In particular, $z$ should be in an open neighbourhood $U$ of $a$ such that the natural action of the monodromy group on $ \mf{f}(z)$ restricts to a well-defined action on $\mf{f}_a(z)$.} and again $ \mf{f}'_a(z) = \mf{f}(z) \setminus \{ z\}$. If $a \in R_N$ is a non-isolated member of the ramification locus it will be considered, by convention, to have no local Galois conjugates. Note that $\mathfrak{f}_a(z)$ is always of cardinality $r_a$; in particular, for finite ramification points, it is finite. 

On a small open neighbourhood $U_a$ of each ramification point $a\in R_0$ one can define a local coordinate $\zeta_a$ through the relation $x = x(a) + \zeta_a^{r_a} $ if $ x (a) \neq \infty $ and through the relation $x =  \zeta_a^{-r_a} $ if $ x(a) = \infty$. In this coordinate the action of a local deck transformation $\sigma_a$ amounts to multiplication by an $r_a$th root of unity $\zeta\mapsto e^{2\pi im/r_a}\zeta$ where $m = 0,1\dots,r_a-1$.

Now examine a point $a\in \R_{\infty}\setminus R_N$; it was argued in \cite[Proposition 3.3]{BKW24} that for any such isolated infinite ramification point one may choose an open neighbourhood $U_a$ such that $x|_{U_a}$ has precisely two lacunary values $p_1,p_2\in\P^1$. Defining an affine coordinate $t$ on $\P^1$ such that $t(p_1)=0=1/t(p_2)$, one may then define the \emph{exponential order} of $x$ at $a$ as 
\begin{equation}
\Erd_x(a) \coloneq \inf\left\{ \rho>0\ \big|\ \log|t(x(z))|=\mc{O}\big(|z-a|^{-\rho}\big),\, z\to a \right\}\,. 
\end{equation}
Letting $m_0 \coloneq \Res_{a} d\log(x) $ and $ m_1 \coloneq \Erd_x(a) $ one can define a local coordinate $\zeta_a$ through $ t(x) = \zeta_a^{m_0}\exp(-(m_0/m_1)\zeta_a^{-m_1}) $ if $m_0\neq 0$ or $t(x)=\exp(\zeta_a^{-m_1})$ if $m_0=0$ \cite{BKW24}. In the coordinate $\zeta_a$, the local deck transformations can be solved for in terms of elementary functions if $m_0=0$ and in terms of the Lambert $W$ function if $m_0\neq 0$. For more information on the local geometry around these points refer to \cite{BKW24} and the references contained therein.

\begin{remark}\label{r:erd_fin}
	The astute reader may have noticed that it was implicitly assumed that $\Erd_x(a)$ was finite. The fibres of functions with $\Erd_x(a) = \infty$ are extraordinarily complicated and it seems exceedingly unlikely to the author that they will be pertinent to the topological recursion; therefore, in this work, such pathologies will be excluded by assumption.
\end{remark} 

Next, in a small open neighbourhoods $U_a$ of each ramification point $a$, it will be necessary to imbue the space of locally exact meromorphic $1$-forms
\begin{equation}
	V_a \coloneq \{df \,|\, f\in \mc{M}(U_a)\}\,,
\end{equation}
with a symplectic structure. To do so define the symplectic pairing
\begin{equation}
	\Omega_a(\omega_1,\omega_2) \coloneq \Res_{a}\omega_2\int\omega_1\,,
\end{equation}
and note that the space of holomorphic $1$-forms on $U_a$
\begin{equation}
	V_a^+ \coloneq \{\omega\in V_a \,|\, \lim\limits_{a} \int\omega < \infty\}\subset V_a\,,
\end{equation}
is then a Lagrangian subspace of $V_a$. The next step is to extend this structure to a set of ramification points $A \subset R$. Begin by making the following definitions
\begin{equation}
	V\coloneq \bigoplus_{a\in A} V_a,\quad \Omega \coloneq \sum_{a\in A} \Omega_a,\quad V^+\coloneq \bigoplus_{a\in A} V_a^+\,. 
\end{equation}
Now, $V$ is a symplectic vector space with symplectic form $\Omega$, and $V^+$ is a Lagrangian subspace. A \emph{choice of polarisation} corresponds to a choice of Lagrangian subspace $V^-$ complimentary to $V^+$.\footnote{It is important to note, for each $a\in A$, $V_a^-$, $V_a^+$, and $V_a$ are recoverable just from the data of $V^-$. This is because $V_a$ and $V_a^+$ depend on $V^-$ only implicitly through $A$, and $V_a^- = V^-/\sim_a$, where $\omega_1\sim_a\omega_2$ iff $\omega_1$ and $\omega_2$ define the same germ at $a$.} However, the most convenient way to encode a choice of polarisation, for the purposes of topological recursion, is through a symmetric bidifferential.

To this end, start by assuming $A\subset R_0$. Take the local coordinate $\zeta_a$ on each $U_a$\footnote{as $R_0$ is a discrete set, it is now assumed that these open neighbourhoods are chosen to be disjoint, \textit{i.e.} $U_a\cap U_b=\emptyset$ for all $b\in A\setminus\{a\}$.} Then, for $l\in\Z_{>0}$ define the $1$-forms
\begin{equation}
	d\xi_a^l(z) \coloneq \bs{1}_{U_a}(z)\zeta_a^{l-1}(z)d\zeta_a(z)\,,
\end{equation}
where $\bs{1}_{U_a}$ is the indicator function for the set $U_a$. The space of holomorphic $1$-forms on $U_a$ is given by $V_a^+ \coloneq \spa \{d\xi^{l}_a \,|\, l\in\Z_{>0}\}$. Then, for each $a\in A\subseteq R$ choose $1$-forms $d\xi_a^l$ with $l\in\mathbb{Z}_{< 0}$
\begin{equation}
	d\xi_a^{-l}(z) \coloneq \bs{1}_{U_a}(z)\zeta_a^{-l-1}(z)d\zeta_a(z) + \frac{1}{l}\sum_{b\in A}\sum_{k=0}^{\infty} \phi_{l,k}^{a,b} d\xi^b_k\,,
\end{equation}
which completes the basis of $V_a^+$ into a basis of $V_a = \spa\{d\xi_a^l \,|\, l\in\Z\setminus\{0\}\}$; furthermore $V = \spa\{d\xi_a^l \,|\, l\in\Z\setminus\{0\} \wedge a \in A\}$. Notice that, if the coefficients $\phi_{l,k}^{a,b}$ are chosen to be symmetric in $l,k$
\begin{equation}
	\Omega(d\xi_a^l,d\xi_b^{k}) = \frac{1}{l}\delta_{a,b}\delta_{l+k,0}\,,
\end{equation}
so the space $V^- = \spa\{d\xi_a^l \,|\, l\in\Z_{<0} \wedge a \in A\}$ is a Lagrangian subspace complimentary to $V^+$. Indeed, the choice of polarisation can then be encoded in a symmetric bidifferential
\begin{equation}
	B(z_1,z_2) = \sum_{a\in A}\left(\frac{\bs{1}_{U_a}(z_1)\bs{1}_{U_a}(z_2)d\zeta_a(z_1) d\zeta_a(z_2)}{( \zeta_a(z_1)-\zeta_a(z_2) )^2} + \sum_{b\in A}\sum_{l,k=0}^{\infty} \phi_{l,k}^{a,b} d\xi_a^l(z_1) d\xi_b^k(z_2)\right)\,,\footnote{For notational succinctness, the tensor product between differentials will be suppressed.}
\end{equation}
where the basis of $V^-$ can be recovered via
\begin{equation}
	d\xi_a^{-l}(z) = \Omega\left(B(z,\cdot),\zeta^{-l-1}(\cdot)d\zeta(\cdot)\right)\,,\qquad l>0\,.
\end{equation}

For an infinite ramification point, such a simple local picture may not exist. However, one can still encode the choice of polarisation in a bidifferential
\begin{equation}\label{e:bidiff}
	B \in \Sym^2 V^+ + \sum_{a\in A}\frac{\bs{1}_{U_a}(z_1)\bs{1}_{U_a}(z_2)d\zeta_a(z_1) d\zeta_a(z_2)}{( \zeta_a(z_1)-\zeta_a(z_2) )^2}\,,
\end{equation}
where $\Sym^n$ is the $n$-fold symmetric tensor product\footnote{Unless otherwise stated, all the tensor products are assumed to be completed.} and $\zeta_a$ is any coordinate choice on $U_a$ (the resulting space is independent of this choice), by defining $V^-$ to be the invariant subspace of the projection
\begin{equation}
	P: V \to V, \, \omega\mapsto \Omega(B,\omega).
\end{equation}
Thus, it is evident that there are three equivalent ways of specifying a choice of polarisation:
\begin{itemize}
\item a choice of Lagrangian subspace $V^-\subset V$ that is complimentary to $V^+$;
\item a choice of symmetric bidifferential $B$, locally taking the form \eqref{e:bidiff}, along with the set $A$;
\item a choice of projection $P$ with $\ker P = V^+$.
\end{itemize}
It will be most convenient to specify a polarisation using a symmetric bidifferential $B$. Of particular interest will be those $B$ that can be considered as an element of $\Sym^2\Omega^1(\Sigma\setminus R_N)$ that locally takes the form \eqref{e:bidiff} whenever its two arguments are near elements of $A$. This is formalised in the following definition.

\begin{definition}
	Let $\Sigma = \sqcup_\alpha \Sigma_\alpha$ be a finite disjoint union of Riemann surfaces and let $x$ be a branched covering from $\Sigma$ to $\P^1$. A \emph{choice of polarisation} $(A,B)$ is a collection of ramification points $A\subseteq R$, subject to $\overline{R\setminus A}\cap A = \emptyset$, along with a residueless bidifferential $B\in\Sym^2\Omega^1(\Sigma\setminus R_N)$ such that, for $a_1,a_2\in A\cap R_I$
	\begin{equation}
		B(z_1,z_2)\bs{1}_{U_{a_1}}(z_1)\bs{1}_{U_{a_2}}(z_2) \in \Sym^2 V^+ + \sum_{a\in A}\frac{\bs{1}_{U_a}(z_1)\bs{1}_{U_a}(z_2)d\zeta_a(z_1) d\zeta_a(z_2)}{( \zeta_a(z_1)-\zeta_a(z_2) )^2}\,,
	\end{equation}
	\textit{i.e.} $B$ locally restricts to the form \eqref{e:bidiff}.
\end{definition}

Equipped with this knowledge of polarisation, it is time to define what a spectral curve is. 
 \begin{definition} \label{d:SC}
	A \emph{spectral curve} is a quintuple $ \mc{S} = (\Sigma,\, x,\, (A,B),\, y,\, \omega_{0,2})$, where:
	\begin{enumerate}
		\item $\Sigma = \sqcup_\alpha \Sigma_\alpha$ is a finite disjoint union of Riemann surfaces;
		\item $x$ is a branched covering from $\Sigma$ to $\P^1$;
		\item $(A,B)$ is a choice of polarisation;
		\item $y$ is a function on $\Sigma\setminus R_N$ such that, for each local Galois transformation $\sigma_a$ about a ramification point $a\in A$
		\begin{equation}
			\left(y-\sigma_a^*y\right)\bs{1}_{U_a}dx \in V_a^+\setminus\{0_{V_a^+}\}\,;
		\end{equation}
		\item $\omega_{0,2}$ is a symmetric residueless bidifferential on $\Sigma^2$ with the following two properties:
		\begin{itemize}
			\item for any ramification point $a \in A$ $\omega_{0,2}(z_1,z_2)$ may be approximated by a sequence of elements in the uncompleted tensor product $V^+ \otimes V^-$ converging compactly on the subset of $U_a^2$ defined by $|z_1-a|<|z_2-a|$;
			\item for any ramification point $a\in R_I$, and any two distinct local Galois transformations $\sigma_{a,1}$ and $\sigma_{a,2}$, the mapping
			\begin{equation}
				z\mapsto \frac{\omega_{0,2}(\sigma_{a,1}(z),\sigma_{a,2}(z))}{d\sigma_{a,1}(z)d\sigma_{a,2}(z)}\,,
			\end{equation}
			must be a well-defined meromorphic function on $U_a$.
		\end{itemize}
	\end{enumerate}
\end{definition}

In addition to the defining properties above, spectral curves can have a myriad of additional properties that will facilitate nicer results. Many of these properties are listed in the following definition.
\begin{definition}[properties of spectral curves]\label{d:specprop}
	A spectral curve $\mathcal{S} = (\Sigma,\, x,\, (A,B),\, y,\, \omega_{0,2})$ is called:
	\begin{itemize}
		\item \emph{condensed} if $|A|<\infty$ (notice this implies $A \cap R_N = \emptyset$);
		\item \emph{quasi-meromorphic} if it is condensed and for every point $a\in A\cap R_\infty$, it follows that $\Erd_x(a) = 1$.
		\item \emph{meromorphic} if it is quasi-meromorphic and $A\subseteq R_0$ (so $x$ defines a meromorphic function on each $U_a$);
		\item \emph{simple} if $a\in A \Rightarrow r_a=2$;
		\item \emph{connected} if $\Sigma$ is connected;
		\item \emph{compact} if $\Sigma$ is compact;
		\item \emph{finite} if for all $z\in\mathbb{P}^1$, we have $|x^{-1}(z)|<\infty$;
		\item \emph{globally polarised} if the only pole of $B$ is the double pole on the diagonal and this double pole has biresidue one;
		\item \emph{canonically polarised} if $B=\omega_{0,2}$ and $A=R$;
		\item \emph{global} if the curve is globally and canonically polarised, compact, connected, and $A = R$;
		\item \emph{algebraic} if it is global and finite;
		\item \emph{transalgebraic} if it is global and condensed, but not algebraic.
	\end{itemize}
\end{definition}

\begin{remark}
	A spectral curve may given as a quadruple $\mathcal{S} = (\Sigma,\, x,\, y,\, \omega_{0,2})$ rather than a quintuple $\mathcal{S} = (\Sigma,\, x,\, (A,B),\, y,\, \omega_{0,2})$. If so, it is implicitly assumed that locally $\omega_{0,2}$ takes the form \eqref{e:bidiff} and the choice of polarisation is given by $(A,B) = (R,\omega_{0,2})$, \textit{i.e.} that the curve is canonically polarised.
\end{remark}

\begin{remark}
	The presentation of spectral curves given here closely resembles that of \cite{BBCCN24} with a few differences. First, in \cite{BBCCN24}, all objects were defined as formal expansions in local coordinates. This, strictly speaking, results in slightly more generality as the expansions need not converge. However, the reason such local coordinates were always available was the fact that \cite{BBCCN24} only considered meromorphic curves. Furthermore, all curves in \cite{BBCCN24} were canonically polarised. 
\end{remark}

\begin{remark}
	Although, to the knowledge of the author, non-canonically polarised curves have not previously appeared in the literature, these curves are not the focus of the present work. Rather, they appear almost accidentally when it becomes necessary to consider certain transformations of spectral curves. Although several examples of curves that are non-canonically polarised but for which the topological recursion is still well-defined will be given, the exposition of a general theory of non-canonically polarised curves will not be undertaken.
\end{remark}

The topological recursion was first discovered and studied for simple algebraic spectral curves in \cite{EO07}; this has come to be refereed to as the Eynard-Orantin topological recursion. The generalisation to simple, canonically polarised curves is straightforward from the work of \cite{EO07}. The study of general meromorphic curves that were also canonically polarised was conducted in \cite{BE13}, and the resulting formalism is referred to as the Bouchard-Eynard topological recursion. Recently, in \cite{BKW24}, the assumption that the curve must be meromorphic was dropped and transalgebraic curves were studied.\footnote{Although it wasn't done in \cite{BKW24}, the topological recursion is essentially local in nature, and so the story of general condensed and canonically polarised curves is almost certainly entirely similar to the presentation in \cite{BKW24}.} Here, the requirement that the curve be condensed is loosened and the topological recursion generalised to two new categories of curves: \emph{periodic} curves and \emph{exponential} curves. 

\begin{definition}\label{d:per}
	 A spectral curve $ \mc{S} = (\Sigma,\, x,\, (A,B), y,\, \omega_{0,2})$ will be referred to as \emph{periodic} if the following conditions hold\footnote{The author is unsure if the second condition follows from the first and third.}
	 \begin{enumerate}
	 	\item $\Sigma$ is genus zero and there exists an affine coordinate $z$ on $ \Sigma \cong \mathbb{P}^1 \cong (\C\cup\{\infty\}) $ such that the meromorphic function $x$, when written in this coordinate, is a periodic entire function;
	 	\item As an entire function, $x$ has finite order;\footnote{The \emph{order} of an entire function $x$ is defined as the infimum of all $\rho>0$ such that $x(z)=\mathcal{O}\big(\exp(|z|^\rho)\big)$. This is just the exponential order of $x$ at infinity.}
	 	\item denoting $p$ as the period of $x$ in coordinate $z$, we have that $x$ has finitely many ramification points on the set $\C/p\Z$.
	 \end{enumerate}
\end{definition}

\begin{definition}\label{d:expc}
	 A global spectral curve $ \mc{S} = (\Sigma,\, x,\, y,\, \omega_{0,2})$ will be referred to as \emph{exponential} if $x$ and $y$ satisfy identically an equation $P(x, e^{\kappa xy}) = 0$ where $P$ is an irreducible polynomial and $\kappa\in\C^*$.
\end{definition}

Below are some examples of spectral curves which will be given starring roles in the present work. All of them are connected.


\begin{example}\label{ex:GWnorm}
	Consider the spectral curve $\mathcal{S}$ with $\Sigma = \mathbb{C} \setminus [0,-i\infty]$,\footnote{Oftentimes, curves like this are sloppily written without properly specifying the branch cut needed to define the logarithm. For this curve, one should require \cite{NS14}
	\begin{equation}
		y(z) = \log(z) = -\frac{1}{2}\sum_{k=0}^{\infty}\frac{(1-z^2)^k}{k}\,,\quad |1-z^2|<1\,.
	\end{equation}
} $x = z+z^{-1}$, $y = \log(z)$, and $B = \omega_{0,2} = \frac{d z_1 dz_2}{(z_1-z_2)^2}$. $x$ and $y$ satisfy identically the equation
	\begin{equation}
		x-2\cosh(y)=0\,,
	\end{equation}
	which is not algebraic despite the fact that $x$ is a finite degree branched covering. Despite not being algebraic, this curve is meromorphic, simple, and finite. It is known that the $\omega_{g,n}$ of this curve are generating functions of the Gromov-Witten invariants of the complex projective line \cite{Z12,NS14,DOSS14}.	
\end{example}

\begin{example}\label{ex:GWus}
	Consider the spectral curve $\mathcal{S}$ with $\Sigma = \mathbb{P}^1$, $x = 2\cosh(z)$, $y = z$, and $B = \omega_{0,2} = \frac{d z_1 dz_2}{(z_1-z_2)^2}$. $x$ and $y$ satisfy identically the equation
	\begin{equation}
		x-2\cosh(y)=0\,.
	\end{equation}
	which is the same equation as in the prior example despite the fact $x$, $y$, and $\Sigma$ contain significant differences.\footnote{It is important to note that, as $\omega_{0,2}$ is the same for both curves, this curve is \emph{not} just the pullback of the prior curve.} While the former example was meromorphic and finite, this example is periodic and global; it is also a non-condensed spectral curve where $R_N=\{z=\infty\}$ is non-empty. 
\end{example}


\begin{example}\label{ex:GWtorus}
	Consider the spectral curve $\mathcal{S}$ with $\Sigma = \mathbb{P}^1$, $x = e^{fz}(1-e^z)$, $y = z / x$, and $B=\omega_{0,2} = \frac{d z_1 dz_2}{(z_1-z_2)^2}$ where $f \in \mathbb{Z}\setminus\{0,-1\}$ is an integer called the framing. Note that $x$ and $y$ satisfy identically the equation
	\begin{equation}
		-e^{(f+1)xy}+e^{fxy}-x=0\,,
	\end{equation}
	which can be obtained from $-e^{(f+1)y}+e^{fy}-e^{x}=0$ via the transformation $(x,y) \mapsto (\log(x),xy)$. This curve is periodic, and, as we can see from the polynomial equation, it is also an exponential curve.
\end{example}

\begin{example}\label{ex:Lstring}
  Consider the spectral curve $\mc{S}$ with $\Sigma = \mathbb{P}^1$, $x = -2\cos(\pi b^{-1} z)$, $y = 2\cos(\pi b z)$, $A = b\Z\setminus\{0\}$, $B = \frac{d z_1 dz_2}{(z_1-z_2)^2}$, $\omega_{0,2} = \frac{4z_1z_2dz_1dz_2}{(z_1^2-z_2^2)^2}$ where $b\in\C\setminus (\R\cup i\R)$. This curve is periodic, but, unlike the previous examples, is not canonically polarised. This curve calculates the string amplitudes of the complex Liouville String when $b\in e^{i\pi/4}\R$. \cite{CEMR24,CEMR25}.\footnote{The spectral curve is written here in a form slightly differently from that of \cite{CEMR25}. That the two forms calculate the same correlators is a consequence of the results in \Cref{s:pullback}.}
\end{example}

\begin{example}\label{ex:JTdual}
	Consider the spectral curve $\mc{S}$ with $\Sigma = \mathbb{P}^1$, $x = (4\pi)^{-1}\sin(2\pi z)$, $y = z^2$, $B = \omega_{0,2} = \frac{d z_1 dz_2}{(z_1-z_2)^2}$ where $b\in\C\setminus\{0,1,-1\}$. This curve is periodic, and is related to JT gravity via the $x$-$y$ swap \cite{A25}. It is argued in \Cref{ss:propexp} that the $x$-$y$ swap extends to periodic curves.
\end{example}

\subsection{Systems of correlators and topological recursion}

Begin by defining the notion of a system of correlators.
\begin{definition}
	Given a spectral curve $ \mc{S} = (\Sigma,\, x,\, y,\, (A,B),\, \omega_{0,2}) $ a \emph{system of correlators} is a collection of symmetric multidifferentials, $\{\omega_{g,n}\}_{(g,n)\in\Z_{\geq 0}\times\Z_{\geq 1}}$, on $\Sigma^n$, called \emph{correlators}. The correlators $\omega_{0,1} \coloneq ydx$ and $\omega_{0,2}$ are given by $\mc{S}$ and are referred to as \emph{unstable}, whereas the rest of the correlators are referred to as \emph{stable}.
\end{definition}

Pause now to establish some prerequisite notation. Given an index set $ N $ denote $ z_{N}\coloneq\{z_n|n\in N\}\subset\Sigma $ as a collection of $|N|$ points on $\Sigma$ indexed by the elements of $N$. Then define, for $n \in \mathbb{Z}_{\geq 1}$, the set $\llbracket n \rrbracket\coloneq\{1,2,\dots,n\}$ so, for example, $z_{\llbracket n \rrbracket} = \{z_1,\dots,z_n\}$.

\begin{definition}
	Given a spectral curve $ \mc{S} = (\Sigma,\, x,\, y,\, (A,B),\, \omega_{0,2}) $ and a system of correlators $\{\omega_{g,n}\}_{(g,n)\in\Z_{\geq 0}\times\Z_{>0}}$ on $\mc{S}$, then the system of correlators is said to be \emph{polarised}\footnote{Sometimes in the literature polarisation is referred to by saying the system of correlators satisfy the projection property.} if all stable correlators $\omega_{g,n}$ satisfy
	\begin{equation}
		\omega_{g,n+1}(z_0,z_{\llbracket n \rrbracket}) = \Omega\big(B(z_0,\cdot),\omega_{g,n+1}(\cdot,z_{\llbracket n \rrbracket})\big) = \sum_{a\in A} \Res_{z=a} \omega_{g,n+1}(z,z_{\llbracket n \rrbracket}) \int_a^z B(z_0,\cdot)\,.
	\end{equation}
\end{definition}

\begin{definition}\label{d:combcomb}
	Given a system of correlators $\{\omega_{g,n}\}_{(g,n)\in\Z_{\geq 0}\times\Z_{>0}}$ on a spectral curve $\mc{S}$ define
	\begin{equation}
		\begin{split}
			\mc{E}_{g,n,|Z|} (Z \mid z_{\llbracket n \rrbracket}) &\coloneqq \sum_{\substack{\mu \vdash Z\\ \bigsqcup_{k=1}^{\ell (\mu)} N_k = \llbracket n \rrbracket\\ \sum g_k = g + \ell (\mu) - |Z|}} \prod_{k=1}^{\ell (\mu)} \omega_{g_k,|\mu_k| + |N_k|}(\mu_k, z_{N_k})\,, \\
			\mc{W}_{g,n,|Z|} (Z \mid z_{\llbracket n \rrbracket}) &\coloneqq \sum_{\substack{\mu \vdash Z\\ \bigsqcup_{k=1}^{\ell (\mu)} N_k = \llbracket n \rrbracket\\ \sum g_k = g + \ell (\mu) - |Z| }}' \prod_{k=1}^{\ell (\mu)} \omega_{g_k,|\mu_k| + |N_k|}(\mu_k, z_{N_k})\,,
		\end{split}
	\end{equation}
	where the prime on the summation indicates that one should omit any term containing a factor of $ \omega_{0,1}$, the notation $\mu \vdash Z$ means that $\mu$ is a set partition of a set $Z$, and $l(\mu)$ is used to denote the length of the partition $\mu$. 
\end{definition}

With this definition one can define the notion of higher abstract loop equations. Solving these equations is what leads to topological recursion.

\begin{definition}\label{d:loop}
	Given a meromorphic spectral curve $\mc{S} = (\Sigma,\, x,\, y,\, (A,B),\, \omega_{0,2})$, a system of correlators $\{\omega_{g,n}\}_{(g,n)\in\Z_{\geq 0}\times\Z_{>0}}$ on $\mc{S}$ is said to satisfy the \emph{higher abstract loop equations} if for every $a\in A$, $(g,n)\in\Z_{\geq 0}\times\Z_{\geq 1}$ with $2g+n-2>0$, and $i\in\Z_{\geq 1}$
	\begin{equation}
		\sum_{\substack{Z\subseteq f_a(z) \\ |Z|=i}} \mc{E}_{g,n,i} (Z \mid z_{\llbracket n \rrbracket}) = \mc{O}\left(x(z)^{-\mf{d}_a(i)}\big(dx(z)\big)^{i}\right)\,,\quad z\to a \, ,
	\end{equation}
	where,
	\begin{equation}
		\mf{d}_a(i) \coloneq i-1-\left\lfloor \frac{s_a(i-1)}{r_a} \right\rfloor\,,\qquad s_a \coloneq 1 + \max\left\{ \Ord_{(\omega_{0,1}-\sigma_a^*\omega_{0,1})}(a) \, | \, \sigma_a(z)\in f_a(z) \right\}\,.
	\end{equation}
	The notation $\sigma_a(z) \in f_a(z)$ means that $\sigma_a$ is a local deck transformation about $a$; the maximum is taken over all such local deck transformations. Notice that, by \Cref{d:SC}, $s_a$ must be a positive number.
\end{definition}

Now ask the following question: given a spectral curve $\mc{S}$ does there exists a polarised system of correlators that satisfy the higher abstract loop equations\footnote{For a non-meromorphic curve, it is unknown how to formulate the higher abstract loop equations so one would have to ask some as-of-yet undiscovered generalisation of this question. See \cite{BKW24} for more details and speculation.} and, if there is such a system, is it unique? The answer is provided by, at least in part, the topological recursion.

\begin{definition} \label{d:CndTR}
	Given a condensed quasi-meromorphic spectral curve $\mathcal{S} = (\Sigma,\, x,\, y,\, (A,B),\, \omega_{0,2})$, the \emph{topological recursion} is a procedure to define a polarised system of correlators $ \{ \omega_{g,n} \}_{g\geq 0, n \geq 1}$, recursive on $ -\chi_{g,n}=2g-2+n$, as follows
	\begin{equation} \label{TR-RecursiveStep}
		\omega_{g,n+1}(z_0, z_{\llbracket n\rrbracket}) \coloneqq \sum_{a \in A \cap R_0} \Res_{z = a}\sum_{\emptyset \neq Z \subseteq \mf{f}'_a(z)} K_{|Z|+1}(z_0,z, Z) \mc{W}_{g,n,|Z|+1}(z, Z \mid z_{\llbracket n \rrbracket})\,,
	\end{equation}
	where
	\begin{equation}
		K_{|Z|+1} (z_0, z, Z )  \coloneqq \frac{\int_{a}^z B(z_0, \cdot)}{\prod_{z' \in Z} (y (z') -y(z))dx(z)}\,,
	\end{equation}
	is the so-called \emph{recursion kernel}.
\end{definition}

\begin{theorem}\label{t:loop}
	Given a meromorphic canonically and globally polarised spectral curve $\mc{S}$, if there exists a polarised system of correlators satisfying the higher abstract loop equations, then this solution is constructed by the topological recursion, and therefore is unique. Conversely, if the topological recursion constructs a system of correlators, then these correlators are polarised and satisfy the higher abstract loop equations.
\end{theorem}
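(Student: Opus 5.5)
This theorem is essentially the Bouchard--Eynard uniqueness/existence statement of \cite{BE13,BBCCN24}, and the plan is to reduce to the local analysis at each ramification point $a\in A$, where $\mc{S}$ being meromorphic supplies a local coordinate $\zeta_a$ with $x=x(a)+\zeta_a^{r_a}$ (or $\zeta_a^{-r_a}$) and deck transformations $\zeta_a\mapsto e^{2\pi i m/r_a}\zeta_a$. Throughout I use polarisation to write, for stable $(g,n+1)$,
\begin{equation*}
\omega_{g,n+1}(z_0,z_{\llbracket n\rrbracket})=\sum_{a\in A}\Res_{z=a}\omega_{g,n+1}(z,z_{\llbracket n\rrbracket})\int_a^z B(z_0,\cdot)\,,
\end{equation*}
which holds because $B=\omega_{0,2}$ is global with a single double pole of biresidue one and $A=R$, so $V^-$ is spanned by the principal parts and the remaining poles of $\omega_{g,n+1}$ lie in $R$.

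\textbf{Uniqueness direction.} Fix $a$ and $z$ near $a$. The first step is to separate the terms of $\sum_{|Z|=i}\mc{E}_{g,n,i}$ that contain $\omega_{g,n+1}(z,\cdot)$ linearly. For $i=1,\dots,r_a$ this gives the identity $\mc{E}_{g,n,i}=\omega_{g,n+1}(z)\,e_{i-1}(\{\omega_{0,1}(z')\}_{z'\ne z})+(\text{terms with fewer unknowns})$, which can be reorganised into the combination $\sum_{Z\subseteq\mf{f}_a'(z)}\prod_{z'\in Z}(\omega_{0,1}(z')-\omega_{0,1}(z))\,\mc{W}_{g,n,|Z|+1}(z,Z)$ plus a symmetric function of $\mf{f}_a(z)$. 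This is the standard ``inclusion--exclusion on the fibre'' manipulation of \cite{BE13}, via the generating polynomial $\prod_{z'\in\mf{f}_a(z)}(t-\omega_{0,1}(z')-\cdots)$. Next I will use the loop equations with the orders $\mf{d}_a(i)$ to show that
\begin{equation*}
\omega_{g,n+1}(z)+\sum_{\emptyset\ne Z\subseteq\mf{f}'_a(z)}\frac{\mc{W}_{g,n,|Z|+1}(z,Z)}{\prod_{z'\in Z}(y(z')-y(z))}
\end{equation*}
(with suitable powers of $dx$) is, near $a$, the sum of a term that is holomorphic at $a$ and terms whose product with $\int_a^z B(z_0,\cdot)$ has vanishing residue. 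The order bounds $\mf{d}_a(i)$ together with $s_a$ (the vanishing order of $y-\sigma_a^*y$) are exactly tuned so that dividing by $\prod(y(z')-y(z))dx$ leaves no pole that survives the residue. Substituting this into the polarisation formula yields precisely \eqref{TR-RecursiveStep}. The recursion is on $2g-2+n$, so by induction the correlators are determined and unique.

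\textbf{Existence direction.} Define the $\omega_{g,n}$ by \eqref{TR-RecursiveStep}. Polarisation follows from the fact that the kernel's $z_0$-dependence is $\int_a^z B(z_0,\cdot)$. Symmetry in $z_0\leftrightarrow z_1$ is shown by the usual two-residue exchange argument, which is also quoted from \cite{BE13,BBCCN24}. The loop equations follow by reversing the local computation: the recursion makes the polar part of the combination above vanish, and the projection property then gives the required orders.

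\textbf{Main obstacle.} I expect the hardest step to be the precise power-counting showing that the loop equation bounds $\mf{d}_a(i)$ with general $s_a$ (that is, not only $s_a=r_a+1$) suffice. The subtle cases are when $r_a$ and $s_a$ are not coprime, and also the case $x(a)=\infty$. My first attempt would be to expand in $\zeta_a$ and use the ``admissibility'' analysis of \cite{BBCCN24}. If that fails, I would simply cite their theorem, of which this statement is a direct reformulation.
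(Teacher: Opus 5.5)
The paper proves this theorem only by citing \cite{BBCCN24}. Your sketch is the standard Bouchard--Eynard/\cite{BBCCN24} argument and falls back on the same citation for the power counting, so the route is essentially the paper's. Your central identity is right, with one inaccuracy: in $\sum_{|Z|=i}\mc{E}_{g,n,i}$ the terms $\omega_{g,n+1}(z')$ with $z'\neq z$ are not ``terms with fewer unknowns''. What removes them is evaluating $P(t)=\sum_{i=1}^{r_a}(-1)^i t^{r_a-i}\sum_{|Z|=i}\mc{E}_{g,n,i}(Z\mid z_{\llbracket n\rrbracket})$ at $t=\omega_{0,1}(z)$. Then $-P(\omega_{0,1}(z))/\prod_{z'\in\mf{f}'_a(z)}(\omega_{0,1}(z)-\omega_{0,1}(z'))$ is exactly your displayed combination $C(z)$. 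For the converse, the same identity can be inverted by interpolation over the sheets: $\sum_{|Z|=i}\mc{E}_{g,n,i}=\sum_{w\in\mf{f}_a(z)}C(w)\,e_{i-1}\big(\{\omega_{0,1}(w')\}_{w'\neq w}\big)$. This is a precise form of your ``reverse the local computation''; the loop equations then reduce to $C\in V_a^+$ plus power counting.

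Two points need correcting.

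First, the converse is not an existence statement, and symmetry does not follow from a residue-exchange argument. For a meromorphic curve with no condition relating $r_a$ and $s_a$, the output of \eqref{TR-RecursiveStep} need not be symmetric. That is why admissibility (\Cref{d:CondSCAdm}) is introduced, and why the theorem says ``if the topological recursion constructs a system of correlators''. Symmetry is therefore a hypothesis, and it is genuinely used: the set-partition sums $\mc{E}_{g,n,i}$ and $\mc{W}_{g,n,i}$ only make sense, and only match a recursion that singles out the first slot, when the $\omega_{g,n}$ are symmetric.

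Second, there is a sign problem with the conventions exactly as written. These are \Cref{d:loop} with $\omega_{0,1}=y\,dx$ inside $\mc{E}$, and the kernel of \Cref{d:CndTR} built from $y(z')-y(z)$. Substituting your (correct) combination into the projection formula then gives $\omega_{g,n+1}(z_0)=-\sum_a\Res_{z=a}\sum_Z K_{|Z|+1}\mc{W}_{g,n,|Z|+1}$, not \eqref{TR-RecursiveStep}. For example, take $x=z^2/2$, $y=z$. The quadratic loop equation forces $\omega_{0,3}=-dz_1dz_2dz_3/(z_1z_2z_3)^2$, while the recursion and the $\omega_{0,3}$ formula of \Cref{t:origprop} give the opposite sign. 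This is a convention mismatch in the statement, not an error in your algebra. Still, ``yields precisely'' only holds once the conventions are made compatible, as in \cite{BBCCN24}.
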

\begin{proof}
	See \cite{BBCCN24}.
\end{proof}

\begin{remark}
	The proof of the non-canonically polarised case should carry over \textit{mutatis mutandis} to this more general case as the required properties of $\omega_{0,2}$ are the ones listed in \Cref{d:SC}.		
\end{remark}

Ignoring the ramification points in $R_\infty$ is justified as follows. In \cite[Corollary 4.11]{BKW24} it was argued, given a transalgebraic spectral curve and an essential singularity $a$ of $x$, if $\Erd_x(a) = 1$ then the ramification point $a$ may be ignored for the purposes of topological recursion. Furthermore, the proof of this fact only relied upon the limiting behaviour of certain local loop equations and so should be valid in a more general setting than the transalgebraic case. The definition of the topological recursion at essential singularities is more involved when $\Erd_x(a) \geq 2$ and the reader is invited to look at \cite{BKW24} for details. For both exponential and periodic curves it will be demonstrated that $\Erd_x(a) \leq 1$ (for exponential curves this is immediately clear from the definition).

It is important to realise that the topological recursion will always construct polarised multidifferentials. The question of whether these multidifferentials form a system of correlators is a question of whether these polarised multidifferentials are symmetric. Indeed, in \Cref{d:CndTR} the variable $z_0$ plays a special role and it is a priori unclear how symmetric multi-differentials will be produced from this equation. Sufficient conditions for symmetry are commonly referred to as admissibility.

\begin{definition}\label{d:CondSCAdm}
	A condensed, quasi-meromorphic spectral curve is \emph{admissible} if: 
	\begin{enumerate}
		\item for every $a\in A$ either $s_a \leq -1$, or $1 \leq s_a \leq r_a + 1$ with $ r_a = \pm 1 \pmod{s_a}$;
		\item for every point $a\in R_\infty$ the meromorphic function $xy$ has a pole at $a$;
		\item the curve is globally and canonically polarised.
	\end{enumerate}
\end{definition}

The following theorem lists some well-known properties of the correlators.

\begin{theorem}\label{t:origprop}
	The correlators $\omega_{g,n}$ constructed by the topological recursion from an admissible spectral curve $\mc{S}$ defined by \Cref{d:TRinflocus} enjoy the following properties.
	\begin{itemize}
		\item \emph{Symmetry:} the $\omega_{g,n}$ are symmetric in their $n$ variables;\\
		\item \emph{Residueless:} for $2g+n-2\geq 0$ the $\omega_{g,n}$ have vanishing residue; \\
		\item \emph{Homogeneity:} under the rescaling $y \mapsto fy$ it follows that $\omega_{g,n} \mapsto f^{2-2g-n} \omega_{g,n}$;\\
		\item \emph{Pole Structure:} if $\mc{S}$ is globally polarised, then the $\omega_{g,n}$ only have poles in $R_0$, and for every $a\in R_0$ and $2g+n-2\geq 0$, $\omega_{g,n}$ will have a pole of order no more than $(s_a-1)(2g+n-2)+2g$ at $a$ in each of its $n$ variables, where $s_a$ is defined in \Cref{d:CondSCAdm}; \\
		\item \emph{Formula for $\omega_{0,3}$:} the simplest stable correlator $\omega_{0,3}$ is given by the following formula
		\begin{equation}
			\omega_{0,3}(z_1,z_2,z_3) = \sum_{a\in A \cap R_0} \Res_{z=a} \frac{ \omega_{0,2}(z,z_1) \omega_{0,2}(z,z_2) \omega_{0,2}(z,z_3) } { dx(z) dy(z) }\,.
		\end{equation}
	\end{itemize}
\end{theorem}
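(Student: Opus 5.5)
This theorem collects well-known properties of Bouchard--Eynard/transalgebraic recursion, so the plan is to reduce each item to the meromorphic case of \cite{BE13,BBCCN24} and to \cite{BKW24} for the infinite ramification points, rather than reprove everything from scratch. The first step is to note that, for an admissible condensed quasi-meromorphic curve, \Cref{d:CndTR} only sums over $A\cap R_0$. Every point of $A\cap R_\infty$ has $\Erd_x(a)=1$. By condition (2) of \Cref{d:CondSCAdm}, $xy$ has a pole there, so \cite[Corollary 4.11]{BKW24} applies and these points may be dropped. Locally the recursion is then the Bouchard--Eynard recursion at finitely many ramification points in $R_0$, with the global $B=\omega_{0,2}$.

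\emph{Homogeneity} comes from a direct induction on $2g-2+n$. Under $y\mapsto fy$ the kernel $K_{|Z|+1}$ scales by $f^{-|Z|}$. Each term of $\mc W_{g,n,|Z|+1}$ scales by $f^{\sum(2-2g_k-|\mu_k|-|N_k|)}$. Using $\sum g_k=g+\ell(\mu)-|Z|-1$ and $\sum|\mu_k|=|Z|+1$, this exponent totals $f^{2-2g-n-1+|Z|}$, which gives $f^{2-2g-(n+1)}$ overall. \emph{Residuelessness} follows from polarisation: $\omega_{g,n+1}(z_0,\dots)$ is a finite sum of residues against $\int_a^z B(z_0,\cdot)$, and $B$ is residueless in $z_0$. \emph{Pole structure} in $z_0$ follows the same way, since with global polarisation $B$ only has poles on the diagonal, so $\omega_{g,n}$ only has poles at $A\cap R_0=R_0$. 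The order bound is obtained by power counting in $\zeta_a$: the kernel contributes $\zeta_a^{-(s_a-1)|Z|}$-type growth, and one inducts on $2g+n-2$ as in \cite{BE13}.

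\emph{Symmetry} is the main obstacle. Here I would invoke \Cref{t:loop}: the recursion output is polarised and satisfies the higher abstract loop equations. I would then follow the argument of \cite{BBCCN24}, where admissibility condition (1) ($s_a\le -1$, or $1\le s_a\le r_a+1$ with $r_a\equiv\pm1 \pmod{s_a}$) ensures that the loop equations together with the projection property determine a symmetric solution. For points in $R_\infty$ I would cite \cite{BKW24}. Concretely, one shows that $\omega_{g,n+1}(z_0,z_1,\dots)-\omega_{g,n+1}(z_1,z_0,\dots)$ is a double residue expression that vanishes by commuting residues, using that $B$ is globally defined. The delicate point is whether the non-meromorphic points really do not spoil this exchange, which relies on the local-loop-equation argument of \cite{BKW24}.

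Finally, the formula for $\omega_{0,3}$ is obtained by specialising the recursion to $(g,n)=(0,2)$ and evaluating the residue. I would expand $\mc W_{0,2,|Z|+1}$: the only terms without $\omega_{0,1}$ are products of $\omega_{0,2}$'s, which forces $|Z|=1$. At simple points the standard manipulation gives $\omega_{0,3}$ via the symmetrisation $z\leftrightarrow\sigma_a(z)$ and the Taylor expansion of $y(z)-y(\sigma_a(z))$. At higher-order points, the residue of the higher-$|Z|$ terms vanishes by the admissibility order count. In both cases this yields $\Res\,\omega_{0,2}(z,z_1)\omega_{0,2}(z,z_2)\omega_{0,2}(z,z_3)/(dx\,dy)$.
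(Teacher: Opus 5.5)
Your overall route is the paper's: its proof of this theorem is only a pointer to \cite{EO07,BE13,BBCCN24,BKW24}. You likewise discard $A\cap R_\infty$ via \cite[Corollary 4.11]{BKW24} and defer to the Bouchard--Eynard literature. Your homogeneity count (exponent $1-2g-n+|Z|$ for $\mc{W}_{g,n,|Z|+1}$ against $-|Z|$ from the kernel) is correct, and so is your observation that $\mc{W}_{0,2,|Z|+1}=0$ unless $|Z|=1$.

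The one place where you try to explain \emph{why} a cited result holds, the symmetry paragraph, does not work as written. \Cref{t:loop} cannot give symmetry. Its converse direction presupposes that the recursion ``constructs a system of correlators'', and systems of correlators are symmetric by definition. Its forward direction is only a uniqueness statement, conditional on existence. What you actually need from \cite{BBCCN24} is an \emph{existence} result. There, the higher loop equations plus the projection property are realised as a higher quantum Airy structure built from modules of the W-algebra $\mathcal{W}(\mathfrak{gl}_{r_a})$ at self-dual level. They construct it when $r_a\equiv\pm1\pmod{s_a}$ with $1\le s_a\le r_a+1$, and the recursion output is identified with its automatically symmetric free energies. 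Your ``concrete'' alternative, exchanging two residues using that $B$ is global, is the Eynard--Orantin argument for simple ramification points. It gives no role to the condition $r_a\equiv\pm1\pmod{s_a}$, and it is not how symmetry is obtained at points with $r_a\geq 3$.

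The $\omega_{0,3}$ paragraph has a similar slip. Having shown that only $|Z|=1$ terms occur, your remark that higher-$|Z|$ residues ``vanish by the admissibility order count'' is vacuous. The real content at a point with $r_a\geq 3$ is the identity (with $B=\omega_{0,2}$)
\begin{multline*}
\sum_{\sigma\neq\mathrm{id}}\Res_{z=a}\frac{\int_a^z B(z_0,\cdot)}{\sigma^*\omega_{0,1}(z)-\omega_{0,1}(z)}\Bigl[\omega_{0,2}(z,z_1)\,\omega_{0,2}(\sigma(z),z_2)+\omega_{0,2}(z,z_2)\,\omega_{0,2}(\sigma(z),z_1)\Bigr]\\
=\Res_{z=a}\frac{\omega_{0,2}(z,z_0)\,\omega_{0,2}(z,z_1)\,\omega_{0,2}(z,z_2)}{dx(z)\,dy(z)}\,,
\end{multline*}
and you assert it rather than derive it. It requires nontrivial root-of-unity sums over the $r_a-1$ nontrivial local deck transformations to collapse, and should be cited from \cite{BE13}. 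With these two steps replaced by the appropriate citations, your proposal coincides with the paper's proof.
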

\begin{proof}
	See \cite{EO07,BE13,BBCCN24,BKW24}.
\end{proof}

\subsection{Quantum curves}

%

Topological recursion first appeared in the context of $1$-matrix models \cite{E04}, which are zero dimensional quantum field theories with one field (a matrix) where gauge transformations correspond to unitary change of basis transformations of the matrix. Observables, therefore, are expectation values of basis independent objects associated to the matrix: the multidifferentials $ \omega_{g,n} $, for example, are generating functions for the large-$N$ expansion of expectation values of traces \cite{E04}. 

But the trace is only one basis-independent object one can form from a matrix. Arguably, the most fundamental basis-independent object one can associate to a matrix is its characteristic polynomial. The expectation value of the characteristic polynomial is, perforce, an object of considerable interest in a matrix model. Indeed, this expectation value enjoys intimate connections with the $\omega_{g,n}$ themselves through what are known as determinantal formulae \cite{M90}. In the context of topological recursion, one expression of this relation is the topological recursion/quantum curve connection.

To be more precise, let $ \mathcal{S} = \left( \Sigma,\, x,\, y,\, B\right) $ be a global spectral curve (assume  here that $\Sigma$ is genus zero for simplicity). The functions $x$ and $y$ then satisfy a relation $P(x,y)=0$; if the spectral curve is algebraic then $P$ is a polynomial, but more generally it may be an entire function.

Define the \emph{wavefunction} $\psi$ (in a matrix model $\psi$ is proportional to an asymptotic expansion of the expectation value of the characteristic polynomial) associated to the spectral curve $\mathcal{S}$ as
\begin{equation}\label{e:wfuncconj}
	\psi(x(z);b) \coloneq \exp\left[ \sum_{n=1}^{\infty} \sum_{g=0}^{\infty} \frac{\hslash^{2g+n-2}}{n!} \int^z_b\cdots\int^z_b \left( \omega_{g,n}  - \delta_{g,0} \delta_{n,2} \frac{dx(z_1) dx(z_2)}{(x(z_1) - x(z_2))^2} \right) \right] \,,
\end{equation}
here $ \hslash $ is a formal expansion parameter, there are $n$ integrations in each term, $b\in\Sigma$ is some basepoint. Whether $\psi$ is written as function of $x$ or $z$ is conventional: the physics motivation leans towards the former, but with the latter $\psi$ is single-valued. The exact nature of the integration should be defined carefully (see \cite{BE17}). Then, a \emph{quantisation} of the spectral curve $\mc{S}$ is an operator $\hat{P}$ satisfying the following definition.

\begin{definition}\label{d:Quant}
	Let $\mathcal{S}$ be a global spectral curve, with $x$ and $y$ satisfying the relation $ P(x,y) = 0 $. $\hat{P}(\hat{x},\hat{y};\hslash) $ is called a quantisation of $P(x,y)$ if the following expansion exists for some $ m\in\mathbb{N}\cup\{\infty\} $:
	\begin{equation}
		\hat{P}(\hat{x},\hat{y};\hslash) = P(\hat{x},\hat{y}) + \sum_{i=1}^{m} \hslash^i\hat{P}_i(\hat{x},\hat{y})\,,
	\end{equation}
	where $P(\hat{x},\hat{y})$ is taken to be normally ordered (in each term all the $\hat{x}$ are put to the left of the $\hat{y}$) and the $ \hat{P}_i $ are  normal ordered polynomials of degree at most $ \deg P - 1 $. The quantisation is referred to as \emph{simple} if $ m < \infty $.
\end{definition}

Finally, one can now state the topological recursion/quantum curve correspondence.

\begin{conjecture}\label{con:GS}
	Let $\mathcal{S}$ be a global spectral curve, with $x$ and $y$ satisfying the relation $ P(x,y) = 0 $. Let $\psi(x(z))$ be the wavefunction (see \Cref{e:wfuncconj}) associated to $\mathcal{S}$, with the $\omega_{g,n}$ constructed from topological recursion. Then there exists a quantisation $\hat{P}(\hat{x},\hat{y}; \hslash)$ of $P(x,y)$ such that
	\begin{equation}
		\hat{P}(\hat{x},\hat{y}; \hslash) \psi(x(z)) = 0\,.
	\end{equation}
	$\hat{P}(\hat{x},\hat{y}; \hslash) $ is called the \emph{quantum curve}.
\end{conjecture}

This conjecture has been proven for a wide class of algebraic genus zero spectral curves \cite{BE17,W24,HS25}, every simple algebraic spectral curve \cite{EG19,MO20,EGMO21,MO22}, and a broad range of genus zero transalgebraic curves \cite{BKW24}.

This section will now review some results from \cite{BKW24,BE17}. To set the stage let $ P(x,y)=0 $ be the equation corresponding to a compact spectral curve $ \mathcal{S} $
\begin{equation}\label{e:Pexpansion}
	P(x,y) = \sum_{i=0}^{d} q_i (x) y^i = \sum_{(i,j) \in \mathbb{N}^2} \alpha_{i,j} y^i x^j \,,
\end{equation}
where $d$ is the degree of the curve (which may be infinite) and proceed with the following few definitions which will be necessary to explain the construction of quantum curves.
\begin{definition}
	The \emph{Newton polygon} $\Delta$ of $P$ is the convex hull of the exponents in $P$, i.e., the convex hull in $ \R^2 $ of $ A \coloneqq \{(i,j)\in\mathbb{N}^2 \, | \, \alpha_{i,j} \neq 0\}$.
\end{definition}

\begin{definition}
	A global spectral curve is called \emph{regular}\footnote{More simply, but less practically, a curve is regular iff the interior of it's Newton polygon contains no integral point.} if 
	\begin{enumerate}
		\item $P$ is linear in $x$;
		\item the Newton polygon of $P$ is the convex hull of $\{(0,0),(0,2),(2,0)\}$;
		\item it is related to one of the first two cases by a transformation of the form $(x,y)\to(x^ay^b,x^cy^d)$ with $ad-bc=1$ and $a,b,c,d\in\Z$.
	\end{enumerate}
\end{definition}

\begin{definition}\label{d:QCpoly}
	For $m=0,\dots,d$ define
	\begin{equation}
		Q_m(x,y) \coloneq \sum_{i=1}^{d-m} q_{m+i}(x) y^i.
	\end{equation}
\end{definition}

\begin{definition}\label{d:alpha}
	Given $m=0,\dots,d$ denote
	\begin{equation}
		\alpha_m \coloneqq \inf \{a \, | \, (a,m) \in \Delta \}, \qquad \beta_m \coloneqq \sup \{a \, | \, (a,m) \in \Delta\}.
	\end{equation}
\end{definition}
\begin{definition}\label{d:diffoperate}
	Let $b\in\mathbb{C}$ be a pole of $ dx$ where all the $\omega_{g,n}$ are holomorphic and $x$ is meromorphic. Define
	\begin{equation}
		E_i \coloneqq - \lim_{z_1 \to b} \frac{Q_{i}(x(z_1),y(z_1))}{x(z_1)^{\lfloor\alpha_{i}\rfloor+1}} \,,
		\qquad
		F_i \coloneqq \hslash\frac{x^{\lfloor\alpha_i\rfloor}}{x^{\lfloor\alpha_{i-1}\rfloor}} \frac{d}{ dx},
	\end{equation}
	where $\lfloor\cdot\rfloor$ is the floor function. For $b\in\mathbb{C}$ a zero of $q_d(x(z))$ that is not in the ramification locus of $x$ put
	\begin{equation}
		G_{i} \coloneqq \lim\limits_{z_1\rightarrow b} \frac{1}{x(z_1)^{\lfloor \alpha_{i}\rfloor}} Q_{i}(x(z_1),y(z_1)), \quad H_i \coloneqq \hslash\frac{x^{\lfloor\alpha_i\rfloor}}{x^{\lfloor\alpha_{i-1}\rfloor}} \left( \frac{d}{dx}-\frac{1}{x-x(b)} \right).
	\end{equation}
\end{definition}

With these definitions two theorems from \cite{BE17,BKW24}, may be stated.

\begin{theorem}(\cite[Lemma 5.14]{BE17}; \cite[Theorem~A.5]{BKW24}) \label{t:TR/QCpole}
	Let $\mathcal{S} = (\Sigma,\, x,\, y,\, B)$ be an algebraic or transalgebraic, admissible, regular spectral curve. Let $b$ be a pole of ${\rm d}x$ at which the $\omega_{g,n}$ are holomorphic. Then the wave-function \eqref{e:wfuncconj} satisfies the differential equation
	\begin{equation}
		\left(\frac{q_0(x)}{x^{\lfloor\alpha_0\rfloor}} + \sum_{i=1}^{d}F_1F_2\cdots F_{i-1}\frac{q_i(x)}{x^{\lfloor\alpha_i\rfloor}}F_i + \hslash\sum_{i=1}^{d-1}E_iF_1F_2\cdots F_{i-1}\frac{x^{\lfloor\alpha_i\rfloor}}{x^{\lfloor\alpha_{i-1}\rfloor}}\right)\psi(z;b)=0.
	\end{equation}
	(Note that $d$ may be infinite if the curve is transalgebraic.)
\end{theorem}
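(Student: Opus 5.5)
This result is cited from \cite[Lemma 5.14]{BE17} and \cite[Theorem~A.5]{BKW24}, so the plan is to reproduce the Bouchard--Eynard argument and check that it survives in the transalgebraic setting. The argument has three stages: loop equations, a principal-specialisation step, and a telescoping reconstruction of the operator.

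\textbf{Step 1: loop equations at a point of the fibre.} By \Cref{t:loop} and \Cref{t:origprop}, the correlators built by \Cref{d:CndTR} are symmetric, polarised, and satisfy the higher abstract loop equations. For each $i$, I sum $\mc{E}_{g,n,i}$ over all $i$-subsets of the full fibre $\mf{f}(z)$, not just the local one $\mf{f}_a(z)$. Combining the local loop equations with polarisation and the pole structure of \Cref{t:origprop} shows that the resulting expression is a rational-type function of $x$ times $(dx)^i$. Its degree is controlled by the Newton polygon, through the numbers $\alpha_i$ of \Cref{d:alpha}.

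\textbf{Step 2: principal specialisation.} I introduce wavefunction-type objects by integrating each $\omega_{g,n}$ from the base point $b$ to $z$, weighted by $\hslash^{2g+n-2}/n!$, and setting all $z_j=z$, exactly as in \eqref{e:wfuncconj}. The fibre sums of Step 1 then become, order by order in $\hslash$, elementary symmetric functions of the quantities $\hslash\, d/dx$ acting on $\psi$, evaluated at the sheets of the fibre. The key input is the regularity hypothesis: $P$ is linear in $x$, or has the Newton polygon of \Cref{d:alpha} type, up to $\mathrm{SL}_2(\Z)$. This ensures that, after dividing by $x^{\lfloor\alpha_i\rfloor}$, no interior lattice points survive. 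Consequently the only undetermined constants are the leading coefficients extracted by the limits $z_1\to b$ defining $E_i$ in \Cref{d:diffoperate}.

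\textbf{Step 3: assembling the operator.} I reconstruct $P(\hat x,\hat y)$ from the $Q_m$ of \Cref{d:QCpoly} by a telescoping recursion. The nested operators $F_1\cdots F_{i-1}\,\frac{q_i}{x^{\lfloor\alpha_i\rfloor}}F_i$ arise from the identity $Q_{m-1}=q_m\,y+y\,Q_m$, with each factor of $y$ replaced by $F_i$. The $\hslash E_i$ correction terms absorb the boundary contributions at $b$. Here the holomorphy of the $\omega_{g,n}$ at the pole $b$ of $dx$ is what guarantees that these limits exist and that all other boundary terms vanish.

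\textbf{Main obstacle.} In the transalgebraic case the degree $d$ and the fibres may be infinite. The fibre sums of Step 1 therefore become infinite series in $\hslash$-coefficients, and the loop equations must be handled at the essential singularities in $R_\infty$. I would use the fact, cited after \Cref{t:loop}, that points with $\Erd_x(a)=1$ do not contribute. I would then argue convergence of the infinite sums termwise in $\hslash$, following \cite{BKW24}: either write $P$ as an entire function in $y$, or realise the curve as a limit of algebraic curves and pass the algebraic identity to the limit.
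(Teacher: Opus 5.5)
The paper does not prove this statement itself but cites \cite[Lemma 5.14]{BE17} and \cite[Theorem~A.5]{BKW24}, and your outline follows that route. For algebraic curves it is the Bouchard--Eynard argument: full-fibre loop equations, Newton-polygon degree control, principal specialisation, and the nested operator generated by $Q_{m-1}=q_m y+yQ_m$ with the base-point terms $E_i$. For transalgebraic curves it is passage to the limit along algebraic approximations such as $e^{P_0}\to(1+P_0/N)^N$, which is exactly how the paper uses this theorem in the proof of \Cref{t:TR/QCess}.

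The one adjustment is to commit to that limit route rather than the ``direct'' alternative. \Cref{t:loop}, and hence your Step~1, requires a meromorphic curve, and no higher loop equations are available at the essential singularities. The limit route then needs only the convergence $\omega^N_{g,n}\to\omega_{g,n}$ from \cite{BKW24}, together with convergence of the $q_i$, the $E_i$ and the resulting $\hslash$-expansion.
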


\begin{theorem}(\cite[Theorem~A.7]{BKW24})\label{t:TR/QCzero}
	Let $\mathcal{S} = (\Sigma,\, x,\, y,\, B)$ be a transalgebraic, admissible, regular spectral curve. Let $b$ be a zero of $q_d(x)$ for $d<\infty$ or, if $d=\infty$, a zero of $x$, with $b$ not in the ramification locus of $x$. Then the wave-function \eqref{e:wfuncconj} satisfies the differential equation
	\begin{equation}
		\left( \frac{q_0(x)}{x^{\lfloor\alpha_0\rfloor}} + \sum_{i=1}^{d}H_1 \cdots H_{i-1}\frac{q_i(x)}{x^{\lfloor\alpha_i\rfloor}}F_i + \hslash\sum_{i=1}^{d-1}G_iH_1 \cdots H_{i-1}\frac{x^{\lfloor\alpha_i\rfloor}}{x^{\lfloor\alpha_{i-1}\rfloor}(x-x(b))} \right)\psi(z;b)=0.
	\end{equation}
\end{theorem}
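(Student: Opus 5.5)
The plan is to follow the strategy of \cite[Section 5]{BE17}, in the form adapted to transalgebraic curves in \cite[Appendix A]{BKW24}. The only change from \Cref{t:TR/QCpole} is that the integration divisor is now $[z]-[b]$ with $b$ a regular point of $x$ (a zero of $q_d$, or of $x$ when $d=\infty$) rather than a pole of $dx$. First, I would use \Cref{t:loop}, together with the fact that by \cite[Corollary 4.11]{BKW24} essential singularities with $\Erd_x=1$ impose no extra conditions, to pass from topological recursion to the higher abstract loop equations. For each $i$ I would then form the fibre-symmetrised combination
\begin{equation}
	\sum_{\substack{Z\subseteq \mf{f}(z)\\ |Z|=i}} \mc{E}_{g,n,i}(Z\mid z_{\llbracket n\rrbracket})\,.
\end{equation}
This descends to a function of $x$, and its poles are controlled by the loop equations at the ramification points and by the Newton polygon data $\alpha_i$ at the poles of $x$.

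Second, I would insert the integrated correlators with divisor $[z]-[b]$ and apply the principal specialisation, as in \cite{BE17}, to obtain a closed system of first-order equations for the auxiliary functions $\psi_m$. These are built from $Q_m(x,y)$ (\Cref{d:QCpoly}) and the $\omega_{g,n}$. Regularity of the curve is used here: because the Newton polygon has no interior lattice points, each equation involves only the single next function $\psi_{m+1}$ plus boundary terms. This is what lets the system be eliminated into one scalar ODE for $\psi=\psi_0$.

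Third, because the basepoint $b$ is now a finite regular point, the regularised $\omega_{0,2}$ subtraction and the integration from $b$ generate the extra factor $-1/(x-x(b))$. This turns the operators $F_i$ into the operators $H_i$. The boundary contributions at $z_1\to b$ then survive only through $\lim_{z_1\to b} x(z_1)^{-\lfloor\alpha_i\rfloor}Q_i(x(z_1),y(z_1))$, which are exactly the constants $G_i$. Because $b$ is a zero of $q_d$ and is not a ramification point, one sheet of $y$ over $x(b)$ runs off to infinity while the others stay finite. I would use this to check that $G_i$ is finite and that all other boundary terms vanish.

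The main obstacle is the transalgebraic case $d=\infty$, where the fibres are infinite and $P$ is an entire function in $y$. There the symmetrised sums over subsets of the fibre, and the resulting sums over $i$, must be shown to converge. I would first try to reduce every fibre sum to residue computations at the finitely many ramification points of $A$, using condensedness. I would then control the contributions near the essential singularities using $\Erd_x=1$ and the admissibility condition that $xy$ has a pole at each point of $R_\infty$, following the estimates of \cite{BKW24}. Finally, I would truncate at finite $d$ and pass to the limit termwise in $\hslash$.
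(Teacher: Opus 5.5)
The paper gives no proof of this statement; it is imported from \cite[Theorem~A.7]{BKW24}. The mechanism is visible in the proof of \Cref{t:TR/QCess}, which the paper says has an identical flavour: approximate by admissible regular \emph{algebraic} curves, apply the finite-degree result to each, and pass to the limit. Measured against this, your proposal has a genuine gap, located exactly where the transalgebraic case differs from \cite{BE17}.

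Your first step is unavailable. \Cref{t:loop} is stated only for meromorphic curves, and the paper points out that higher abstract loop equations have not even been formulated for non-meromorphic ones. \cite[Corollary~4.11]{BKW24} only says that an essential singularity with $\Erd_x=1$ contributes nothing to the recursion. It says nothing about sums over the full fibre $\mf{f}(z)$, which for transalgebraic $x$ is infinite by the great Picard theorem; condensedness ($|A|<\infty$) therefore does not reduce those sums to finitely many residues. The BE17 argument needs the fibre-symmetrised combinations to be meromorphic functions of $x$ with a known list of poles. Here they are infinite series, and their behaviour at the asymptotic values of $x$, where infinitely many sheets run into the essential singularity, is precisely what is uncontrolled (a holomorphic function of $x\in\C^*$ can be essentially singular at $0$ and $\infty$).

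The same problem breaks your third paragraph. For $d=\infty$ one has $x(b)=0$, which is an asymptotic value of $x$. So as $x\to x(b)$, infinitely many points of the fibre escape into the essential singularity, with $y$ unbounded along them because $xy$ has a pole there. Moreover, the sheet through $b$ need not be one of the escaping ones: for $x=ze^{-z}$, $y=e^{z}$, $b=0$ one has $y(b)=1$. ``One sheet escapes, the others stay finite'' is the picture for a simple zero of $q_d$ with $d<\infty$, so the boundary analysis producing the $G_i$ cannot be run as you describe.

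Your closing remedy, truncating and passing to the limit, is the right route, but it has to carry the whole proof, and its essential inputs are missing.

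First, the approximating sequence must be one for which convergence is under control. Truncating $P$ in $y$ amounts to Taylor-truncating $P_1(xy)e^{axy}$, which creates of order $N$ new zeros and ramification points escaping to infinity, and you would have to estimate their combined contribution yourself. The proof of \Cref{t:TR/QCess} instead replaces $e^{axy}$ by $(1+axy/N)^N$. This has a single escaping ramification point whose principal parts are controlled by \cite[(78)]{BKW24}. For this sequence an unramified zero $b$ of $x$ remains an unramified zero of $x_N$, and $q_{d_N}$ is a constant times a power of $x$, so the algebraic (finite-$d$) version of the statement applies to every approximant.

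Second, you need that topological recursion commutes with this limit, i.e.\ $\omega^N_{g,n}\to\omega_{g,n}$ locally uniformly away from the ramification points. This is the real content, and it cannot be obtained ``termwise in $\hslash$'', since each $\omega^N_{g,n}$ is computed on a different curve.

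Third, you need $\psi_N(z;b)\to\psi(z;b)$. For finite $b$ this follows by integrating along a fixed compact path from $b$ to $z$, in contrast with the basepoint at infinity in \Cref{t:TR/QCess}, where the escaping principal parts had to be tracked through the integrations. You also need convergence of $q_i^{(N)}/x^{\lfloor\alpha_i\rfloor}$, of $G_i^{(N)}$, and of the now infinite sum over $i$.

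With these in place, your BE17-type derivation is only needed for the algebraic approximants, where \Cref{t:loop} does apply. Even there, the analysis at $b$ must use the actual sheet structure over $x(b)=0$ rather than the one-escaping-sheet picture.
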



There is one other choice of basepoint which will be useful here, but was not considered in \cite{BKW24}. For a transalgebraic admissible regular spectral curve the polynomial $P(x,y)=0$ takes the form \cite[Proposition 5.8]{BKW24}
\begin{equation}
	P(x,y) = x P_2(xy) - P_1(xy) e^{P_0(xy)},
\end{equation}
up to rescaling by $x$ to get an irreducible equation, for some polynomials $P_0,P_1,P_2$ where it is assumed $P_1$ and $P_2$ are coprime. Furthermore, if $P_0$ is linear, take as an affine coordinate $z = P_0(xy)$, and note that the stable correlators $\omega_{g,n}$ are regular at $z=\infty$; thus, it seems the basepoint $z=\infty$ would be a natural choice. The next theorem makes this precise.

\begin{theorem}\label{t:TR/QCess}
	Let $\mathcal{S} = (\P^1,\, x,\, y,\, B)$ be a transalgebraic admissible regular spectral curve such that $\Erd_x(xy=\infty)=1$; following the above discussion denote the corresponding irreducible polynomial equation as $P(x,y) = P_2(xy) x - P_1(xy) e^{axy}$ for some $a\in\C^*$.\footnote{One would have to divide by $x$ if $P_1(0)=0$.} Then, taking the affine coordinate $z=xy$ and the basepoint $b=\infty$ the wavefunction \eqref{e:wfuncconj} satisfies the differential equation
	\begin{equation}
		\left(P_2\big(\hslash x \frac{d}{dx}\big) x - P_1\big(\hslash x \frac{d}{dx}\big) e^{a\hslash x\frac{d}{dx}}\right)\psi(z;\infty)=0.
	\end{equation}
\end{theorem}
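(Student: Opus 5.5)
The plan is to follow the strategy of \cite[Appendix A]{BKW24} and \cite[Section 5]{BE17}, which derive \Cref{t:TR/QCpole,t:TR/QCzero}, but with the basepoint at the essential singularity $z=\infty$. Work in the global coordinate $z=xy$. Because $P(x,y)=P_2(xy)x-P_1(xy)e^{axy}=0$, we can parametrise
\begin{equation}
x(z)=\frac{P_1(z)}{P_2(z)}e^{az},\qquad y(z)=\frac{z}{x(z)},\qquad \omega_{0,1}=y\,dx=z\,d\log x .
\end{equation}
We have $\omega_{0,2}=dz_1dz_2/(z_1-z_2)^2$, and the operator $\hslash x\,d/dx$ acts naturally on functions of $z$. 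The starting point is the standard loop-equation ingredient used in \cite{BE17,BKW24}: for $D(z;\cdot)$ the principal-specialised generating functions, the combination $\sum_{Z\subseteq\mf f(z),|Z|=i}\mc E$ of \Cref{d:combcomb} is a rational function of $x$ with controlled poles. These are the global consequences of the higher loop equations in \Cref{t:loop}, extended to transalgebraic curves in \cite{BKW24}. Here $d=\infty$ in the $y$-degree, so I would instead use the Newton-polygon-type structure in the variables $(x,xy)$. Concretely, the equation is linear in $x$ once $e^{axy}$ is regarded as a function of $xy$. I would write the symmetric combinations with respect to the variable $u=xy$ rather than $y$, via the change $(x,y)\mapsto(x,xy)$, which keeps the curve regular.

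The key steps are as follows. First, I would check that the stable $\omega_{g,n}$ are regular at $z=\infty$, as asserted before the statement. This follows from the pole structure in \Cref{t:origprop}, since poles occur only at $R_0$, which is finite, together with \cite[Corollary 4.11]{BKW24} for $\Erd_x=1$. Hence the integrals in \eqref{e:wfuncconj} from $b=\infty$ make sense. I would also check that the regularised $\omega_{0,2}$ term behaves correctly as $z_2\to\infty$: $dx_1dx_2/(x_1-x_2)^2$ tends to zero there, since $x\to0$ or $\infty$ along the relevant direction. Second, I would form
\begin{equation}
\psi_k(z)=\Big(\hslash x\tfrac{d}{dx}\Big)^k\psi
\end{equation}
and express it, following \cite[Lemma 5.5]{BE17}, as $\psi$ times a sum over $\mc E$-type combinations integrated from $\infty$. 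Third, I would sum over the fibre $\mf f(z)$, which is infinite here. I would use the fact that for fixed $x$ the equation $P_2(u)x=P_1(u)e^{au}$ has fibre structure encoded by the single function $P_2(u)x-P_1(u)e^{au}$ of $u$. The operator identity should then reduce to showing that
\begin{equation}
\frac{\sum_{\mf f}\,(\text{loop combination})}{P_2(u)x-P_1(u)e^{au}}
\end{equation}
has no singularities except those that cancel, i.e.\ a partial-fraction or residue argument in $u$ replacing the polynomial-division argument of \cite{BE17}. The shift $e^{a\hslash x\,d/dx}$ acts as a translation $z\mapsto z+a\hslash$ on functions of $z$ written appropriately. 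I would exploit this translation to turn $P_1(\hslash x\frac{d}{dx})e^{a\hslash x\frac{d}{dx}}\psi$ into a shifted evaluation, and then compare the two terms order by order in $\hslash$.

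The main obstacle is the absence of the $\hslash E_i$ or $\hslash G_i$ correction terms. In \Cref{t:TR/QCpole,t:TR/QCzero} these arise as limits of $Q_i$ at the basepoint, and here the claim is that they vanish. I would try to show that with basepoint $z=\infty$ each such limit is $\lim_{z\to\infty}$ of a polynomial in $xy$ divided by a power of $x$ along a direction where $e^{az}$ dominates. Such limits would be zero, leaving exactly the normally ordered operator with $\hat x$ to the right of $P_2$, as stated. Controlling these limits uniformly near the essential singularity, where the fibre is infinite and convergence of the fibre sums must be justified, is the delicate point. For that I would rely on $\Erd_x(\infty)=1$ and the Lambert-$W$ description of the deck transformations from \cite{BKW24}.
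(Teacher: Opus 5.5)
Your plan breaks at the step you identify as the main obstacle. The basepoint corrections do \emph{not} vanish at $z=\infty$, and the stated operator needs them.

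Write $D=\hslash x\,d/dx$, $P_2(w)=\sum_m t_m w^m$ and $P_1(w)e^{aw}=\sum_m u_m w^m$ (or, for algebraic approximants, $P_1(w)(1+aw/N)^N$). Then $q_m(x)=t_mx^{m+1}-u_mx^m$, $\alpha_i=i$, $F_i=D$, and on the curve $Q_i=-\sum_{m\le i}q_m y^{m-i}$, so
\begin{equation*}
\frac{Q_i}{x^{i+1}}=-\sum_{m=0}^{i}\Big(t_m(xy)^{m-i}-\frac{u_m(xy)^{m-i}}{x}\Big)\,.
\end{equation*}
The $m=i$ summand is $t_i-u_i/x$, which has no negative power of $xy$. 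So the limit as $z=xy\to\infty$ is never zero. Wherever $x\to\infty$ (e.g.\ at the pole $z=\infty$ of $dx_N$) it is $-t_i$, giving $E_i=t_i$; along directions with $x\to0$ it diverges unless $u_i=0$.

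These terms are exactly what produce the ordering in the statement. Since $Dx=xD+\hslash x$,
\begin{equation*}
D^{i-1}\big(t_ix-u_i\big)D+\hslash\,t_iD^{i-1}x=t_iD^{i}x-u_iD^{i}\,,
\end{equation*}
and summing over $i$ turns \Cref{t:TR/QCpole} into $P_2(D)\,x-P_1(D)e^{aD}$. Had the $E_i$ vanished, you would get $\sum_i t_iD^{i-1}xD$ instead, which differs from $P_2(D)x$ by $\hslash\sum_it_iD^{i-1}x$. The $\hat x$ sitting to the right of $P_2(D)$ is the signature of nonzero corrections, not of their absence.

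Relatedly, $e^{a\hslash x\,d/dx}$ is a formal series in $\hslash$: it shifts $\log x$ by $a\hslash$, not the curve coordinate by $z\mapsto z+a\hslash$. So your ``shifted evaluation'' step does not make sense as stated.

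Beyond this, the core of your argument is not supplied. You propose to sum loop-equation combinations over the infinite fibre of $x$ and to replace the polynomial-division step of \cite{BE17} by an unspecified residue argument in $xy$. This is precisely where the difficulty lies: infinitely many sheets run into the essential singularity as $x\to0,\infty$, and nothing you cite shows the fibre sums are polynomial in $x$.

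The paper avoids this entirely. It approximates by the algebraic curves $P_N=xP_2(xy)-P_1(xy)(1+axy/N)^N$, dividing by $x$ when $P_1(0)=0$. It applies \Cref{t:TR/QCpole} at the pole $z=\infty$ of $dx_N$, where the $\omega^N_{g,n}$ are holomorphic, computes $E_i=t_i$ as above, and lets $N\to\infty$ order by order in $\hslash$.

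The real work is then showing that $\lim_N\psi_N$ equals $\psi(z;\infty)$ up to a constant. The only new contributions come from the order-$N$ ramification point at the zero of $1+axy/N$. By \cite[(78)]{BKW24}, its principal parts are $\mc{O}(N^{1-2g-2n})$ with pole order at most $2g$. Each integration from $\infty$ gains at most one power of $N$, so only $\omega_{1,1}$ can survive, and it contributes a constant.

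Your proposal has no counterpart of this limit-interchange step. A direct approach would need a genuine substitute for it, namely uniform control of the fibre sums near $z=\infty$.
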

\begin{proof}
	First assume $P_1(0) \neq 0$. Define a sequence of spectral curves via
	\begin{equation}\label{e:P_N}
		P_N(xy) = xP_2(xy) - P_1(xy)\left(1+a\frac{xy}{N}\right)^N.
	\end{equation}
	Now notice that
	\begin{equation}
		Q_i(x,y) = \sum_{m=1}^{d-i}q_{m+i}(x) y^m = -\sum_{m = 0}^{i} q_m(x) y^{m-i},
	\end{equation}
	as $P_N(x,y) = 0$. Then, from \eqref{e:P_N}, it follows that $q_m(x) = t_m x^{m+1} - u_m x^m$ and $\alpha_i = i$. Using these two facts
	\begin{equation}
		E_i = - \lim_{z \to \infty} \frac{Q_{i}(x(z),y(z))}{x(z)^{\lfloor\alpha_{i}\rfloor+1}} = t_i.
	\end{equation}
	Finally, from \Cref{t:TR/QCpole} one obtains, taking the limit as $N\to\infty$
	\begin{equation}
		\begin{split}
			&\left( t_0 x - u_0 + \sum_{i=1}^\infty \left(\hslash x \frac{d}{dx}\right)^{i-1} (t_i x - u_i) \left(\hslash x \frac{d}{dx}\right) + \hslash\sum_{i=1}^{\infty}t_i \left(\hslash x \frac{d}{dx}\right)^{i-1} x \right)\psi_\infty(z;\infty)=0, \\
			\Rightarrow & \left(P_2\big(\hslash x \frac{d}{dx}\big) x - P_1\big(\hslash x \frac{d}{dx}\big) e^{a\hslash x\frac{d}{dx}}\right)\psi_\infty(z;\infty)=0,
		\end{split}
	\end{equation}
	where $\psi_\infty(z;\infty) \coloneq \lim\limits_{N\to\infty} \psi_N(z;\infty)$ is the limiting wavefunction.
	
	Now assume the $P_1(0)=0$. Define the sequence of spectral curves
	\begin{equation}
		P_N(xy) = P_2(xy) - \frac{1}{x}P_1(xy)\left(1+a\frac{xy}{N}\right)^N.
	\end{equation}
	Here $\alpha_i = i-1+\delta_{i,0}$ and on can define $q_m(x) = t_m x^m - u_m x^{m-1}$ with $u_0=0$. It still follows that
	\begin{equation}
		E_i = - \lim_{z \to \infty} \frac{Q_{i}(x(z),y(z))}{x(z)^{\lfloor\alpha_{i}\rfloor+1}} = t_i,
	\end{equation}
	and \Cref{t:TR/QCpole} yields here, taking the limit as $N\to\infty$
	\begin{equation}
		\begin{split}
			&\left( t_0 + \frac{1}{x}\sum_{i=1}^\infty \left(\hslash x \frac{d}{dx}\right)^{i-1} (t_i x - u_i) \left(\hslash x \frac{d}{dx}\right) + \hslash\frac{1}{x}\sum_{i=1}^{\infty}t_i \left(\hslash x \frac{d}{dx}\right)^{i-1} x \right)\psi_\infty(z;\infty)=0, \\
			\Rightarrow & \left(\frac{1}{x}P_2\big(\hslash x \frac{d}{dx}\big) x - \frac{1}{x}P_1\big(\hslash x \frac{d}{dx}\big) e^{a\hslash x\frac{d}{dx}}\right)\psi_\infty(z;\infty)=0,
		\end{split}
	\end{equation}
	from which the stated result follows after multiplying by $x$.
	
	Lastly, it is not a priori clear if $\psi_\infty(z;\infty)=\psi(z;\infty)$; it will now be argued they are equal up to an irrelevant multiplicative constant. The potential issue is that the $\omega_{g,n}^N$ correlators constructed from the spectral curve $P_N$ may have contributions from the ramification point at $z=N$. These will vanish in the limit, but this vanishing may not commute with integration if the integration contour passes through infinity, which is the case here.
	
	However, \cite[(78)]{BKW24} says that these contributions must take the form
	\begin{equation}
		\frac{\omega_{g,n+1}^N(z,z_{\llbracket n \rrbracket})}{dzdz_1\cdots dz_n} = N^{1-2g-2n}\sum_{l=2}^{2g}\frac{A_l(z_{\llbracket n \rrbracket})}{(z/N-1)^l} + \mathcal{O}\left((z/N-1)^0\right),
	\end{equation}
	where $A_l(z_{\llbracket n \rrbracket}) = \mathcal{O}(N^0)$. Each integral from the basepoint infinity has the potential to add one power of $N$ so the principal part of $\int_\infty^z \cdots \int_\infty^z \omega_{g,n}^N(z_{\llbracket n \rrbracket})$ at $z=N$ is $\mathcal{O}(N^{3-2g-n})$. Thus, the only possible contributions are from the $2g+n-2=1$ correlators. However, the order of the pole at this point is bounded by $2g$, so the only contribution will be the $(g,n) = (1,1)$ correlator. This will take the form
	\begin{equation}
		\frac{1}{N}\int_\infty^z\frac{A_2dz}{(z/N-1)^2} = \frac{A_2}{1-z/N},
	\end{equation}
	which just converges to a constant as $N\to\infty$ as $A_2 = \mathcal{O}(N^0)$.
\end{proof}

\begin{remark}
	The above theorem could have been amended to quantise the curve $P(x,y) = P_2(xy) e^{a(\tau-1)xy} x - P_1(xy) e^{a\tau xy}$ for $\tau \in \C$ by using \Cref{t:TR/QCpole} when $\deg P_1 > \deg P_2$ and \Cref{t:TR/QCzero} when $\deg P_1 < \deg P_2$ (curiously, there is no analogous way to proceed when $\deg P_1 = \deg P_2$ for general $\tau$). It is easy to check that the result is as expected
	\begin{equation}
		\left(P_2\big(\hslash x \frac{d}{dx}\big) e^{a(\tau-1)\hslash x\frac{d}{dx}} x + P_1\big(\hslash x \frac{d}{dx}\big) e^{a\tau\hslash x\frac{d}{dx}}\right)\psi(z;\infty)=0 \,.
	\end{equation}
\end{remark}

\subsection{$x$-$y$ swap} \label{ss:xy}

One can solve certain $2$-matrix models with two fields (matrices) $M_1$ and $M_2$ using the topological recursion \cite{CEO06}. Interchanging $M_1$ and $M_2$ produces a natural duality of the matrix model; then, it is a natural to ask whether the topological recursion has a general manifestation of this duality. The answer is yes: switching $M_1$ and $M_2$ interchanges $x$ and $y $, one can switch $x$ and $y$ for any spectral curve, and the so-called $x$-$y$ swap is the correct abstraction of this duality to general spectral curves, regardless of whether or not they came from a $2$-matrix models \cite{EO08}.

\begin{definition}\label{d:x-yswap}
	Given a spectral curve $ \mathcal{S} = \left(\Sigma,\, x,\, y\, \omega_{0,2}\right) $ define the $x$-$y$ swap dual spectral curve
	\begin{equation}
		\mathcal{S}^{\vee} = \left( \Sigma,\, x^{\vee} \coloneq y,\, y^{\vee} \coloneq x,\, \omega_{0,2}^{\vee}\coloneq \omega_{0,2} \right)\,.
	\end{equation} 
\end{definition}

Critically ${\mc{S}^{\vee}}^{\vee} = \mc{S}$; but it's not just that the double dual is the same as the original, the two curves $\mathcal{S}$ and $\mathcal{S}^{\vee}$ are related in diver ways \cite{EO07,EO08,ABDKS22,ABDKS23,ABDKS24,H23b,H23a,H23c,W24,HS25}. Here, the relations most relevant to the present work are reviewed.

The fundamental result is the following \cite{ABDKS22}.
\begin{theorem}\label{t:x-yswap} 
	Let $\mc{S}$ be an admissible meromorphic simple global spectral curve and denote by $\omega_{g,n}$ the corresponding system of correlators. If $\mc{S}^{\vee}$ is also admissible, denote by $\omega_{g,n}^{\vee}$ its corresponding system of correlators. Then
	\begin{equation}
		\omega_{g,n}^{\vee}(z_{\llbracket n \rrbracket}) = \mathrm{Expr}_{g,n}\left(\left\{\omega_{g',n'}\right\}_{2g'+n'\leq 2g+n},\,\left\{dx(z_i),dy(z_i)\right\}_{i\in \llbracket n \rrbracket},\,\left\{\frac{dx(z_i)dx(z_j)}{\big( x(z_i) - x(z_j) \big)^2}\right\}_{i,j\in \llbracket n \rrbracket}\right),
	\end{equation}
	where $\mathrm{Expr}_{g,n}$ is some explicit expression that, for each $g,n$, is polynomial in its arguments and their derivatives with respect to the $z_i$.
\end{theorem}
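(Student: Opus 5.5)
This theorem is attributed to \cite{ABDKS22}, so the plan follows the strategy developed there rather than reproving the whole $x$-$y$ swap theory from scratch. The approach has three stages. First, both systems of correlators are encoded in a common generating object, the ``$n$-point functions with insertions''. Second, it is shown that the same object satisfies the loop equations and projection property for $\mc{S}$ (expanded near the zeroes of $dx$) and for $\mc{S}^\vee$ (expanded near the zeroes of $dy$). Third, \Cref{t:loop} is used to identify each side with the corresponding topological recursion output.

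Concretely, I would introduce the disconnected, $\hslash$-deformed generating functions
\begin{equation}
	\mc{T}(z_{\llbracket n\rrbracket};\hslash) = \sum_{g} \hslash^{2g-2+n}\,\big(\text{sums over set partitions of products of }\omega_{g_k,n_k}\big)\,,
\end{equation}
and define candidate dual correlators by the explicit formula. Each $z_i$ is dressed by an operator of the form
\begin{equation}
	\sum_{r\ge 0}\Big(-\partial_{y_i}\Big)^r [u^r]\,\frac{dx_i}{dy_i}\,\frac{1}{u\hslash}\,e^{u\left(\mathcal{S}(u\hslash\partial_{x_i})-1\right)\frac{ydx}{\hslash}}\,,\qquad \mathcal{S}(t)=\frac{e^{t/2}-e^{-t/2}}{t}\,,
\end{equation}
applied to integrals of the $\omega_{g,n}$, with the regularised $\omega_{0,2}$ terms $\frac{dx_idx_j}{(x_i-x_j)^2}$ appearing in the cross terms. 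By construction, each coefficient of $\hslash^{2g-2+n}$ is a finite polynomial expression in the $\omega_{g',n'}$ with $2g'+n'\le 2g+n$, in $dx(z_i)$, $dy(z_i)$, in the $\frac{dx_idx_j}{(x_i-x_j)^2}$, and in their $z$-derivatives. This gives the shape of $\mathrm{Expr}_{g,n}$ claimed in the statement.

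The main step is to check that these candidate $\omega^\vee_{g,n}$ are symmetric. They should also have poles only at the zeroes of $dy$, up to removable poles at the zeroes of $dx$ and on the diagonals. Finally, they should satisfy the linear and quadratic loop equations at the zeroes of $dy$, together with the projection property with respect to $\omega_{0,2}$.

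For the poles at the ramification points of $x$, I would use that the simple-ramification loop equations of $\omega_{g,n}$ imply that the combination in $\mathrm{Expr}$ is invariant under the local involution $\sigma_a$. It therefore has no pole there, which is exactly the cancellation mechanism of \cite{ABDKS22}. At a zero of $dy$, the dressing operator $(-\partial_y)^r$ generates poles. I would verify the loop equations there by rewriting the $\hslash$-deformed expression as a formal Laplace-type integral in $y$, so that the quadratic loop equation becomes a statement about Gaussian integration in the variable $y-y(a)$. Given these properties, admissibility of $\mc{S}^\vee$ and \Cref{t:loop}, applied to the global canonically polarised curve $\mc{S}^\vee$, force the candidates to coincide with the TR correlators $\omega^\vee_{g,n}$.

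I expect the main obstacle to be the quadratic loop equation at zeroes of $dy$. It requires a delicate resummation showing that the non-symmetric-looking insertions combine into a function whose $\sigma^\vee$-anti-invariant part vanishes to the right order. The first thing I would try is the Laplace-transform interpretation, in which the swap becomes a change of integration variable in a formal matrix-integral-like expression.
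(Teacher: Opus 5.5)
The paper gives no proof of its own here. It imports the statement from \cite{ABDKS22}, and the remark right after it concedes that one really needs $dx$ and $dy$ to have simple zeros and their poles not to contribute. Your architecture is the right one: define candidates by the explicit operator/graph formula, establish pole structure, loop equations at the zeros of $dy$ and projection, then conclude by uniqueness via \Cref{t:loop} for $\mc{S}^\vee$. However, the two steps that carry the content are not established.

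\textbf{Regularity at zeros of $dx$.} This step fails as written, because $\sigma_a$-invariance is neither sufficient nor true. In a coordinate with $x=x(a)+\zeta^2$, the $\sigma_a$-invariant forms are exactly the pullbacks of meromorphic forms in $x$, e.g.\ $d\zeta/\zeta^3=dx/\big(2(x-x(a))^2\big)$. So invariance does not exclude poles. Moreover, $\omega^\vee_{g,n}$, being holomorphic at $a$, generically has a nonzero leading term $c\,d\zeta$, which is anti-invariant.

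The actual mechanism is a cancellation of principal parts. Already for $(0,3)$, apply the residue theorem to $\omega_{0,2}(z,z_1)\omega_{0,2}(z,z_2)\omega_{0,2}(z,z_3)/(dx(z)dy(z))$ and use the formula of \Cref{t:origprop} for both curves. This gives
\begin{equation*}
	\omega^\vee_{0,3}(z_1,z_2,z_3) = -\omega_{0,3}(z_1,z_2,z_3) - \sum_{i=1}^{3} d_{z_i}\Big(\frac{\omega_{0,2}(z_i,z_j)\,\omega_{0,2}(z_i,z_k)}{dx(z_i)\,dy(z_i)}\Big)\,,\qquad \{j,k\}=\llbracket 3\rrbracket\setminus\{i\}\,.
\end{equation*}
The anti-invariant pole of $\omega_{0,3}$ at $a$ in $z_i$ is cancelled by the pole that $1/dx(z_i)$ creates in the $i$-th term. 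In general you need a genuine local lemma, proved in the coordinate $\zeta$: the linear and quadratic loop equations of $\{\omega_{g,n}\}$ at a simple zero $a$ of $dx$ with $dy(a)\neq 0$ make all principal parts of the full combination at $a$ vanish.

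\textbf{Loop equations at zeros of $dy$.} You leave these open, and the Laplace/Gaussian heuristic is not yet an argument. One way to close this without any resummation has two ingredients:
\begin{itemize}
\item prove the local lemma as an equivalence: at a simple zero $p$ of $dx$ with $dy(p)\neq0$, a system satisfies the linear and quadratic loop equations at $p$ if and only if its transform is holomorphic at $p$;
\item use that the transform is inverted by the same formula with $x$ and $y$ exchanged (up to sign conventions).
\end{itemize}
Apply the equivalence, with the roles of $x$ and $y$ swapped, to $\{\omega^\vee_{g,n}\}$, whose inverse transform is $\{\omega_{g,n}\}$. Then the loop equations of $\omega^\vee_{g,n}$ at a simple zero $b$ of $dy$ with $dx(b)\neq 0$ become equivalent to the holomorphy of $\omega_{g,n}$ at $b$, which you already have.

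\textbf{Smaller points.} Two further claims need arguments rather than assertion.
\begin{itemize}
\item The explicit terms $dx(z_i)dx(z_j)/(x(z_i)-x(z_j))^2$ produce poles along the whole locus $x(z_i)=x(z_j)$, not only on the diagonal. These must be shown to cancel.
\item Your argument needs $dy$ to have simple zeros and the poles of $dx,dy$ not to contribute. This is more than the admissibility of $\mc{S}^\vee$ assumed in the statement.
\end{itemize}
Finally, because of the factors $dx_i/dy_i$, $\partial_{x_i}$ and $\partial_{y_i}$, the resulting $\mathrm{Expr}_{g,n}$ is polynomial only after adjoining $1/dx(z_i)$ and $1/dy(z_i)$.
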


The explicit form of $\mathrm{Expr}_{g,n}$ involves intricate sums over graphs and will not be needed; the interested reader can find it in \cite{ABDKS22}. Furthermore, it is slightly sloppy to say the only condition for the $x$-$y$ swap is the admissibility of $S$ and $S^{\vee}$; the truth is slightly more subtle. See \cite{ABDKS25} for an extensive discussion of exactly how general the $x$-$y$ swap formalism is. In the following, it will be sufficient that it always holds when both $dx$ and $dy$ have simple zeros and their poles do not contribute to the topological recursion; this is essentially the generic case and the case for all examples studied here. 

One relevant consequence of the $x$-$y$ swap is it implies a relation between the wavefunction $\psi(x;b)$ of \eqref{e:wfuncconj} and it's dual $\psi^{\vee}(y^{\vee},b^{\vee})$, which is the wavefunction defined from the dual spectral curve of \Cref{d:x-yswap}.

\begin{corollary}\label{c:wavex-y}
	Let $\mc{S}$ be an admissible spectral curve such that $\mc{S}^{\vee}$ is also admissible and take two base points $b\in \Sigma$. Then, for the wavefunctions
	\begin{equation}
		\begin{split}
		\psi(z;b) &= \exp\left[ \sum_{n=1}^{\infty} \sum_{g=0}^{\infty} \frac{\hslash^{2g+n-2}}{n!} \int^z_b\cdots\int^z_b \left( \omega_{g,n}  - \delta_{g,0} \delta_{n,2} \frac{dx(z_1) dx(z_2)}{(x(z_1) - x(z_2))^2} \right) \right] \,, \\
		\psi^{\vee}(z;b) &= \exp\left[ \sum_{n=1}^{\infty} \sum_{g=0}^{\infty} \frac{(-\hslash)^{2g+n-2}}{n!} \int^z_{b}\cdots\int^z_{b} \left( \omega^{\vee}_{g,n}  - \delta_{g,0} \delta_{n,2} \frac{dy(z_1) dy(z_2)}{(y(z_1) - y(z_2))^2} \right) \right] \,,
		\end{split}
	\end{equation}
	the following relation holds \cite{HS25}
	\begin{equation}
		\frac{\psi(z;b)}{x(z)-x(b)} = \frac{i}{2\pi\hslash} \iint \frac{dy(w)dy(a)}{y(a)-y(w)} e^{\frac{x(z)y(w) - x(b)y(a)}{\hslash}} \psi^{\vee}(w;a) \,,
	\end{equation}
	where the integrations are understood to be taken formally, order by order in $\hslash$ (see \cite{HS25} for details). 
\end{corollary}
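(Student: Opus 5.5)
This is a consequence of the $x$-$y$ swap (\Cref{t:x-yswap}), and the plan follows the strategy of \cite{HS25}. The idea is to show that both sides of the identity are annihilated by the same formal operator and agree at leading order.

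First, I rewrite both wavefunctions in terms of the symmetrised integrals
\begin{equation*}
	\Phi_{g,n}(z)\coloneq\int_b^z\cdots\int_b^z\omega_{g,n},
\end{equation*}
keeping track of the regularising $\delta_{g,0}\delta_{n,2}$ terms. For $\psi$ this term involves $dx\,dx/(x-x)^2$; for $\psi^\vee$ it involves $dy\,dy/(y-y)^2$.

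Second, I substitute the explicit expression $\mathrm{Expr}_{g,n}$ of \Cref{t:x-yswap}, which gives $\omega^\vee_{g,n}$ in terms of the $\omega_{g',n'}$ and the $dx,dy$ data. Integrating this expression over all $n$ variables along the diagonal $z_i=z$ and resumming over $n$ and $g$ should turn the logarithm of $\psi^\vee(w;a)$ into an expression built from the two-point-extended objects
\begin{equation*}
	\exp\Big(\sum \tfrac{\hslash^{2g+n-2}}{n!}\textstyle\int_a^w\cdots\int_a^w\omega_{g,n}\Big).
\end{equation*}
In \cite{HS25} this step is organised by introducing the extra regularising factor $dy(w)dy(a)/(y(a)-y(w))$. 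That factor is exactly what converts the one-point wavefunction into the two-point kernel $\psi(z;b)/(x(z)-x(b))$: the $(0,2)$ contribution between the two insertion points produces the Cauchy-type factor $1/(x(z)-x(b))$.

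Third, I evaluate the double integral over $w$ and $a$ formally by stationary phase. The critical point of $x(z)y(w)$ in $w$, combined with the $\omega_{0,1}$ terms $\hslash^{-1}\int y\,dx$ and the dual $\hslash^{-1}\int x\,dy$, sits at $w=z$ and $a=b$ by the Legendre relation $\int y\,dx+\int x\,dy=xy$. The Gaussian integral at that point produces the prefactor $i/(2\pi\hslash)$ together with the Hessian $dx\,dy$ factors. These combine with the $(0,2)$ regularisations to give leading-order agreement. The higher-order terms of the stationary phase expansion are graph sums, and these match the graph sums in $\mathrm{Expr}_{g,n}$ order by order in $\hslash$; this matching is precisely the content of the $x$-$y$ swap formula of \cite{ABDKS22}, rewritten as a Laplace-transform identity.

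I expect the main obstacle to be the bookkeeping in the third step: showing that the formal stationary phase graph expansion reproduces exactly $\mathrm{Expr}_{g,n}$, including the treatment of the regularised $(0,2)$ terms and the sign $(-\hslash)$ in $\psi^\vee$. I would not redo this computation. Instead I would invoke \cite{HS25}, after checking that the hypotheses there (admissibility of $\mc{S}$ and $\mc{S}^\vee$, simple zeros of $dx$ and $dy$, no contribution from the poles) are exactly those of \Cref{t:x-yswap} as assumed here.
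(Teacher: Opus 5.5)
This is essentially the paper's route: the paper gives no independent argument and simply imports the identity from \cite{HS25} as a consequence of the $x$-$y$ swap (\Cref{t:x-yswap}). Your proposal lands in the same place after the hypothesis check, and your leading-order stationary-point check ($w=z$, $a=b$ via $\int y\,dx+\int x\,dy=xy$) is sound. Two small slips in the heuristic: your opening plan of showing both sides are killed by a common operator is not what your steps do, and no quantum curve is available for general admissible curves. Also, the two formal Gaussian integrals produce a factor proportional to $2\pi\hslash$, which the prefactor $i/(2\pi\hslash)$ cancels; they do not produce that prefactor.
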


This theorem yields an explicit relation between a wavefunction and it's dual and is a highly effective tool in the construction of quantum curves. If both $dx$ and $dy$ have a pole at a mutual basepoint $b$, the above result simplifies further \cite{W24}. In the present work, quantum curves will be constructed using the tools of \cite{BE17,BKW24}, but the results should also be obtainable via the $x$-$y$ duality.

\section{Topological recursion on periodic curves}\label{s:per}
\subsection{Definition}\label{ss:defper}
To begin the examination of periodic spectral curves, start by arguing that periodicity implies a very specific form for the differential $dx$.

\begin{proposition}\label{p:perform}
	Let $\mathcal{S} = \left(\Sigma,\, x,\, y,\, (A,B),\, \omega_{0,2}\right)$ be a periodic spectral curve (recall \Cref{d:per}). Let $z$ be an affine coordinate on $\Sigma\cong \P^1$ such that $x$ is a periodic function in $z$ with period $p$. By \Cref{d:per} the ramification locus of $x$ may be written as
	\begin{equation}\label{e:finitegen}
		R=R_0\cup R_\infty=\left\{a_j+kp|k\in\mathbb{Z},1\leq j\leq r\right\}\cup\{\infty\}\,,
	\end{equation}
	where $r \in \mathbb{Z}_{\geq 0}$ and $a_{j_1}\neq a_{j_2}+kp$ for all distinct $1\leq j_1, j_2 \leq r$ and integers $k\in\mathbb{Z}$. Then, there exists positive integers $b_1,\dots,b_r \in \mathbb{Z}_{\geq 1}$ and constants $\lambda p\in 2\pi{\rm i}\mathbb{Z}$ and $C_0\in\mathbb{C}^*$ such that
	\begin{equation}
		dx(z) = C_0 e^{\lambda z} \prod_{j=1}^{r} (z-a_j)^{b_j} \prod_{k=1}^{\infty} \left[ 1-\left( \frac{z-a_j}{kp} \right)^2 \right]^{b_j} dz \,.
	\end{equation}
\end{proposition}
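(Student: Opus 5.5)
The strategy is a Hadamard factorisation of the entire function $x'(z) = dx/dz$. Since $x$ is entire and periodic of period $p$ in the coordinate $z$, so is $x'$. Its zeros in $\C$ are exactly the finite ramification points, because an entire function has no poles, and these lie in $R_0$. By \eqref{e:finitegen} they are the points $a_j + kp$. Let $b_j \geq 1$ denote the order of vanishing of $x'$ at $a_j$, which is the ramification order minus one. Periodicity of $x'$ then forces the same order $b_j$ at every translate $a_j+kp$.

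Next I would control the growth of $x'$. By \Cref{d:per}, $x$ has finite order $\rho$, and Cauchy's estimate on discs of radius $1$ shows that $x'$ also has order at most $\rho$. The zero set of $x'$ is a finite union of arithmetic progressions, so its exponent of convergence is $1$, and $\sum_k |kp|^{-2} < \infty$ shows that genus-one canonical factors suffice.

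The symmetric products $\prod_{k\geq 1}\bigl[1-((z-a_j)/kp)^2\bigr]$ pair the factors for $k$ and $-k$. They therefore converge absolutely and locally uniformly, with no exponential convergence factors needed, and they vanish to order exactly $1$ at each $a_j+kp$ with $k \neq 0$. Multiplying by $(z-a_j)$ accounts for the zero at $k=0$. In fact $(z-a_j)\prod_{k}\bigl[1-((z-a_j)/kp)^2\bigr] = \frac{p}{\pi}\sin(\pi(z-a_j)/p)$.

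Define $G(z) = \prod_j \bigl(\frac{p}{\pi}\sin(\pi(z-a_j)/p)\bigr)^{b_j}$. The quotient $h = x'/G$ is then entire and zero-free, so $h = e^{g}$ for some entire $g$. Both $x'$ and $G$ have finite order, and a standard argument bounds the order of the quotient: either Hadamard's theorem applied directly, or the Borel–Carathéodory estimate together with the minimum-modulus lemma for $G$. This shows $h$ has finite order, so $g$ is a polynomial.

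Two cases remain, and periodicity settles both.

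\emph{Showing $g$ is linear.} The function $G$ is quasi-periodic: $G(z+p) = (-1)^{\sum_j b_j} G(z)$. Together with the periodicity of $x'$, this gives $h(z+p) = \pm h(z)$. Hence $g(z+p) - g(z)$ is a constant, namely an element of $i\pi\Z$. A polynomial whose first difference is constant has degree at most $1$, so $g(z) = \lambda z + c$. Setting $C_0 = e^{c}$ (times harmless constants from the $p/\pi$ factors) produces the claimed formula.

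\emph{The condition on $\lambda$.} From $e^{\lambda p} = (-1)^{\sum_j b_j}$ we get $\lambda p \in i\pi\Z$. The statement, however, asserts the stronger condition $\lambda p \in 2\pi i\Z$. I would check whether the sign should be absorbed: the product as literally written is not $p$-periodic by itself when $\sum_j b_j$ is odd. In that case one can either reinterpret the claim with $\lambda p \in \pi i(\sum_j b_j + 2\Z)$, or observe that the stated normalisation implicitly absorbs the sign. I expect this bookkeeping, along with the rigorous justification that the quotient has finite order, to be the only delicate points.

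Finally, the case $r = 0$ is trivial: $x' = C_0 e^{\lambda z}$ with $e^{\lambda p} = 1$, so $\lambda p \in 2\pi i\Z$ exactly as stated.
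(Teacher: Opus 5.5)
Your route is the paper's: divide $dx/dz$ by the symmetric product, write the zero-free entire quotient as $e^{g}$, use finite order to force $g$ to be a polynomial, and use periodicity to force $g$ to be linear. The paper simply asserts the finite-order step; you at least indicate how to justify it.

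The sign you flag is a genuine error in the paper's statement, not bookkeeping. The paper's proof claims the quotient has period $p$, but the product equals $\prod_j\bigl(\tfrac{p}{\pi}\sin(\pi(z-a_j)/p)\bigr)^{b_j}$. The correct conclusion is therefore $e^{\lambda p}=(-1)^{\sum_j b_j}$. Moreover $\lambda$ is uniquely determined, since changing the representatives $a_j$ only changes $C_0$ by a sign, so nothing can ``absorb'' the sign. You should drop that alternative and state the corrected condition.

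The paper's own Example~\ref{ex:GWtorus} is a counterexample to the statement as written. There $x=e^{fz}(1-e^{z})$ with $f\in\Z$ and $p=2\pi i$, with one simple critical point $a$ per period. One finds $dx=-(f+1)e^{a/2}e^{(f+\frac12)z}(z-a)\prod_{k\geq1}\bigl[1-\bigl(\tfrac{z-a}{2\pi i k}\bigr)^2\bigr]dz$, so $\lambda p=(2f+1)\pi i\notin 2\pi i\Z$.
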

\begin{proof}
	As $x$ is periodic and has only finitely many singularities (by assumption), it's only singularity is an essential singularity at $z=\infty$. Let $b_j = \Mult_x(a_j) \geq 1$ and consider the expression
	\begin{equation}
		\frac{d x}{d z}\prod_{j=1}^{r}(z-a_j)^{-b_j}\prod_{k=1}^{\infty}\left[1-\left(\frac{z-a_j}{kp}\right)^2\right]^{-b_j} \,,
	\end{equation}
	which is a never zero entire function with period $p$. Thus, it is equal to $\exp(f(z))$ for some entire function $f(z)$ so
	\begin{equation}
		dx(z) = e^{f(z)} \prod_{j=1}^{r} (z-a_j)^{b_j} \prod_{k=1}^{\infty} \left[ 1-\left( \frac{z-a_j}{kp} \right)^2 \right]^{b_j} dz.
	\end{equation}

	As $x$ must be of finite order, $f$ can not have an essential singularity at infinity and is therefore a rational function. However, as $x$ is periodic with period $p$, we know that $f(z+p)=f(z)+2\pi i m$ for some $m\in\Z$. The only rational functions with this property are those of the form $f(z) = \lambda z + C$ with $C \in \mathbb{C}$ and $\lambda p\in 2\pi{\rm i}\mathbb{Z}$.
\end{proof}

\begin{example}\label{ex:infprod}
	Consider the differential $dx(z) = \sinh(z)dz$. Here $r=2$ with $b_1=b_2=1$ and one can choose $a_1=0$ and $a_2=i\pi$ so the expansion reads (notice $\lambda = 0$)
	\begin{equation}
		dx(z) = \frac{2i}{\pi^2} z(z-i\pi) \prod_{k=1}^{\infty} \left[ 1 + \left( \frac{z}{2\pi k }\right)^2 \right] \left[ 1 + \left( \frac{z-i\pi}{2\pi k} \right)^2 \right] dz \,.
	\end{equation}
	This is clearly not the well-known infinite product expansion of $\sinh z$. To obtain the regular infinite product expansion of $\sinh$ first observe
	\begin{equation}
		1+\left(\frac{z-i\pi}{2\pi k}\right)^2 = \left(\frac{4k^2}{(2k-1)(2k+1)}\right)\left(1-\frac{iz}{(2k+1)\pi}\right)\left(1+\frac{iz}{(2k-1)\pi}\right) \,,
	\end{equation}
	and then note the following re-arrangement of factors
	\begin{equation}
		(z-i\pi)\prod_{k=1}^{\infty}\left[1+\left(\frac{z}{2\pi k}\right)^2\right]\left[1-\frac{iz}{(2k+1)\pi}\right]\left[1+\frac{iz}{(2k-1)\pi}\right] = -i\pi\prod_{k=1}^{\infty}\left[1-\left(\frac{z}{\pi k}\right)^2\right] \,.
	\end{equation}
	Finally note
	\begin{equation}
		\prod_{k=1}^{N} \frac{4k^2} {(2k-1)(2k+1)} = \frac{((2N)!!)^2} {(2N-1)!!(2N+1)!!} \stackrel{N\to\infty}{\sim} \frac{ 2\pi N(2N/e)^{2N} } {4N((2N-1)/e)^{N-1/2}((2N+1)/e)^{N+1/2}} \stackrel{N\to\infty}{\to} \frac{\pi}{2} \,,
	\end{equation}
	where Stirling's approximations for the double factorial $(2N)!!\sim\sqrt{2\pi N}(2N/e)^{N}$ and $(2N+1)!!\sim\sqrt{4N+2}((2N+1)/e)^{2N+1/2}$ were applied. So indeed, the well-known infinite product expansion is obtained
	\begin{equation}
		\frac{2i}{\pi^2} z(z-\pi) \prod_{k=1}^{\infty} \left[1+\left(\frac{z}{2\pi k}\right)^2\right]\left[1+\left(\frac{z-i\pi}{2\pi k}\right)^2\right] = z\prod_{k=1}^{\infty}\left[1+\left(\frac{z}{\pi k}\right)^2\right] \,,
	\end{equation}
	as expected.
\end{example}

To define admissibility, we first need the following two definitions. For the convergence of the infinite sums in the topological recursion, we want $y$, if meromorphic, to have a pole at infinity.
If $y$ has an essential singularity at infinity, this essential singularity needs to somehow behave as if it was pole.
\begin{definition}\label{d:pole-like}
  A meromorphic function $y(z)$ on a domain $\Sigma\subseteq \C_\infty$ is said to have a \emph{pole-like} singularity along a set $R_0 \subset \Sigma$ if $\exists$ $\epsilon,c_y >0$ and $\phi:R_0\to [0,2\pi)$ such that 
  \begin{equation}
    \Re\left\{ e^{i \phi(a)} y'(z+a) \right\} \geq c_y \quad \forall \ a\in R_0 \text{ and } |z|<\epsilon \,.
  \end{equation}
  Notice that if $y$ has poles at all accumulation points of $R_0 \subset \C_\infty$, then it automatically has a pole-like singularity along $R_0$.
\end{definition}

An analytic condition on $\omega_{0,2}$ will also be needed for convergence.
\begin{definition}\label{d:tamed}
  A symmetric meromorphic bidifferential on a domain $\C_\infty^2$, $D \subseteq \C_\infty$, $\omega_{0,2}$ is said to be \emph{tamed} on the set $R_0$ if it takes the following form
  \begin{equation}
    \omega_{0,2}(z_1,z_2) = \frac{dz_1dz_2}{(z_1-z_2)^2} + \phi_{0,2}(z_1,z_2)dz_1dz_2 \,,
  \end{equation}
  where $\exists$ $\epsilon,c_{0,2}>0$ such that
  \begin{equation}
    |\phi_{0,2}(z_1,z_2)| < c_{0,2} \quad \forall \ z_1,z_2 \in \Sigma \text{ and } a\in R_0 \text{ such that } |z_1 - a| < \epsilon \,.
  \end{equation}
\end{definition}

Now, for periodic curves, admissibility is defined as follows; this definition will be justified a posteriori.
\begin{definition}\label{d:admissper}
	A periodic canonically polarised spectral curve $\mathcal{S} = \left(\Sigma,\, x,\, y,\, \omega_{0,2}\right)$ is called \emph{admissible} if it is locally admissible at every finite ramification point $a\in R_0$ in the sense of \Cref{d:CondSCAdm}, $\omega_{0,2}$ is tamed on the set $R_0$, and the curve falls into one of the following two cases:
	\begin{enumerate}
    \item $y$ has a pole like singularity at the essential singularity of $x$ along $R_0$ and $\lambda=0$ (where $\lambda$ is as in \Cref{p:perform});
		\item $xy$ is meromorphic on $ \Sigma $ with a pole at the essential singularity of $ x $.
	\end{enumerate}
	When it is necessary to distinguish between the two cases curves that satisfy (1) will be called $\mathbb{C}$-admissible and curves that satisfy (2) $\mathbb{C}^*$-admissible.
\end{definition}

\begin{definition}\label{d:regper}
  A global periodic spectral curve $\mathcal{S} = \left(\Sigma,\, x,\, y,\, \omega_{0,2}\right)$ is called \emph{regular} if the curve satisfies one of the following two conditions:
	\begin{enumerate}
		\item $z \coloneq y$ is an affine coordinate on $ \Sigma \cong \mathbb{P}^1 $, $ \lambda = 0 $, and the essential singularity of $x$ is located at $ z = \infty $;
		\item $z \coloneq xy$ is an affine coordinate on $ \Sigma \cong \mathbb{P}^1 $ and the essential singularity of $x$ is located at $ z = \infty $.
	\end{enumerate}
	When it is necessary to distinguish between the two cases curves that satisfy (1) will be called $\mathbb{C}$-regular and curves that satisfy (2) $\mathbb{C}^*$-regular.
\end{definition}

Note that regularity implies admissibility, as in \Cref{d:CondSCAdm} one will always get $s_a = 1$ or $s_a = r_a + 1$ for all $a\in R_0$ and $z$ automatically has a simple pole at $z=\infty$. The justification for this notion of regularity can be found in the discussion preceding Definition 5.6 in \cite{BKW24}. Next, define the topological recursion for periodic spectral curves.

\begin{definition}\label{d:TRinflocus}
	Given a spectral curve admissible in the sense of \Cref{d:admissper}, define the topological recursion to be exactly that of definition \Cref{d:CndTR} except the sum over $R_0$ is now infinite. The ramification point at infinity is ignored.
\end{definition} 

Of course, the above definition only makes sense if the sum over $R_0$ indeed converges, and it is only a natural generalisation of the topological recursion if the key properties of the correlators are maintained.

\begin{lemma}\label{l:converge}
  Given an admissible periodic spectral curve, the sum over $R_0$ in the definition of the $\omega_{g,n}$ (\Cref{d:TRinflocus}) converges absolutely.
\end{lemma}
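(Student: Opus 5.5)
Since $x$ is periodic, the ramification points come in finitely many families $a_j+kp$, and I will treat each family separately. The plan is an induction on $2g-2+n$. At each stage I prove a stronger claim: the residue contribution from $a=a_j+kp$ is bounded, uniformly for $z_0,z_{\llbracket n\rrbracket}$ in compact sets away from $R_0$, by $C\,|k|^{-2}$ (or at least by something summable in $k$). I also prove that the correlators $\omega_{g',n'}$ already constructed are holomorphic and uniformly bounded near each $a$ in the arguments not being residued, with bounds independent of $k$. Absolute convergence then follows from $\sum_k |k|^{-2}<\infty$, and the lower-order correlators appearing in $\mc{W}$ stay under control.

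\textbf{Key steps.}

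First, I write the local picture uniformly in $k$. Periodicity of $x$ means translation by $kp$ maps a neighbourhood $U_{a_j}$ onto $U_{a_j+kp}$. It preserves $x$, the local coordinate $\zeta_a$, and the local deck transformations $\sigma_a$, so the fibres $\mf{f}'_a(z)$ are exact translates. I fix a radius $\epsilon$, uniform in $k$, below the $\epsilon$ of \Cref{d:pole-like} and \Cref{d:tamed}.

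Second, I estimate the kernel $K_{|Z|+1}$ on the circle $|z-a|=\epsilon/2$.
\begin{itemize}
\item For the numerator, tameness of $\omega_{0,2}$ gives $\int_a^z B(z_0,\cdot)=\frac{dz_0\,(z-a)}{(z_0-z)(z_0-a)}+\mc{O}(\epsilon)$ with a uniform constant. Taking $z_0$ in a fixed compact set, this is $\mc{O}(1/|k|^2)$ for the singular part plus a bounded term.
\item For the denominator, I treat the two admissibility cases of \Cref{d:admissper} separately.
\item In the $\C$-admissible case, $\lambda=0$ makes $dx$ exactly periodic, so the $dx$ factors are uniform in $k$. The pole-like condition gives $|y(\sigma(z))-y(z)|\geq c_y|\sigma(z)-z|$ after rotating by $e^{i\phi(a)}$ and integrating $y'$ along the segment. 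Since $|\sigma(z)-z|$ is uniform in $k$, each factor $1/(y(z')-y(z))$ is bounded uniformly.
\item In the $\C^*$-admissible case, $dx$ picks up $e^{\lambda kp}$, which has modulus $1$ because $\lambda p\in 2\pi i\Z$. Writing $y=(xy)/x$, the function $xy$ is meromorphic with a pole at $\infty$, so $xy$ grows like $|k|^m$ on the translates while $x$ is periodic. Hence $1/(y(z')-y(z))$ actually decays, and uniform bounds again hold.
\end{itemize}

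Third, I handle the factor $\mc{W}$. By the induction hypothesis, the polarised correlators of lower $2g-2+n$ are sums over $R_0$ of terms with poles only at $R_0$. Evaluated at $z$ and its conjugates near $a$, with external points $z_{\llbracket n\rrbracket}$ far away, the pieces depending on the external variables are $\mc{O}(|k|^{-2})$ uniformly. The factors $\omega_{0,2}(\sigma_1(z),\sigma_2(z))$ are uniform by translation together with tameness. I then bound the residue by the contour length times the supremum.

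\textbf{Main obstacle.} The hardest step is the decay/uniformity bookkeeping in $\mc{W}$. The issue is the products in which all external variables sit in one factor and the remaining factors have poles at $a$ itself: for instance $\omega_{g',1}(z)$ or $\omega_{g',2}(z,\sigma(z))$, which need not decay in $k$. In these, the required $|k|^{-2}$ decay must come from the kernel's numerator $\int_a^z B(z_0,\cdot)$ and from the factor carrying the $z_i$. So I will strengthen the induction hypothesis to say that the local principal parts at $a_j+kp$ of every stable correlator, in any variable, are bounded uniformly in $k$ (in the $\C^*$ case even decaying). This follows because each such principal part is the residue formula at a single translate, where all inputs are uniformly bounded. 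The non-decaying principal parts then combine with the $\mc{O}(|k|^{-2})$ factor from $z_0$, giving a summable bound.
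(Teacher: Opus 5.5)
Your route is essentially the paper's. Translation by $kp$ makes the deck transformations and $1/dx$ (resp.\ $x/dx$) independent of $k$. The pole-like condition gives a uniform lower bound on $M_2(\sigma(z))-M_2(z)$, and tamedness controls $\omega_{0,2}$. The hypothesis you settle on in your last paragraph (principal parts at each translate with $k$-bounded coefficients) is the paper's expansion \eqref{e:convergeform}, and summability comes from the poles of order at least two in $z_0$. So your claimed $\mc{O}(|k|^{-2})$ decay of the external-variable factors of $\mc{W}$ is neither proved nor needed; uniform boundedness suffices.

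One step is wrong as written: in the $\C^*$-admissible case, $1/(y(z')-y(z))$ does not decay in general. Since $x(z')=x(z)$, you have $y(z')-y(z)=\big((xy)(z')-(xy)(z)\big)/x(z)$, and the growth of $xy$ says nothing about this difference. In the regular case $z=xy$ it is just $(z'-z)/x(z)$. The correct argument is your segment estimate applied to $M_2=xy$, with $x(z)^{|Z|}$ absorbed into the periodic factor $(x/dx)^{|Z|}$, as the paper does. Drop the ``even decaying'' strengthening. Also, $dx$ is periodic whenever $x$ is, so $\lambda=0$ plays no role here.

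There is also a gap you share with the paper. Only the diagonal part of $\int_a^z B(z_0,\cdot)$ is $\mc{O}(|k|^{-2})$. The $\phi_{0,2}$ part is merely $\mc{O}(\epsilon)$ under \Cref{d:tamed}, and the remaining factors (including the $\phi_{0,2}$ parts of the $\omega_{0,2}(\sigma(z),z_i)$ in $\mc{W}$) are only bounded. So your $C|k|^{-2}$ bound needs an extra hypothesis, e.g.\ that $\sup_{|w-a|<\epsilon}|\phi_{0,2}(w,z_0)|$ be summable over $a\in R_0$, locally uniformly in $z_0$. In the paper, summing \eqref{e:convergeform} over $k_0$ tacitly needs $\sum_k|d\xi^{-l}_{a+k}(z_0)|<\infty$, which is the same requirement.

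This holds for the curves of interest: $\phi_{0,2}=0$ for global genus-zero curves, and $\phi_{0,2}=(z_1+z_2)^{-2}$ in \Cref{ex:Lstring-prop}. It fails in general. Adding $c\,dz_1dz_2$ to $\omega_{0,2}$ in \Cref{ex:GWus} keeps it tamed, yet the $k$-th residue in $\omega_{0,3}$ then contains $\tfrac{(-1)^k c^3}{2}\,dz_1dz_2dz_3$, which is not summable.
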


\begin{proof}
	For notational convenience, up to periodicity, assume there is only one ramification point $a$ (it will be clear that the proof easily generalises) and rescale the coordinate $z$ so that $p=1$. Begin with an examination of the local deck transformations to find (perhaps unsurprisingly) that they obey a periodicity-like property. Let $\theta$ be a primitive $r_a$'th root of unity. Fixing $k\in\mathbb{Z}$, every $ \sigma_{a+k}^m(z) \in f_a(z) $ has an expansion of the form
	\begin{equation}
		\sigma_{a+k}^m(z)=a+k+\theta^m(z-a-k)+\mathcal{O}(z-a-k)^2 \,.
	\end{equation}
	Then compute
	\begin{equation}
		x(\sigma_{a+k}^m(z+1))=x(z+1)=x(z)\,,
	\end{equation}
	so $\sigma(z):=\sigma_{a+k}^m(z+1)$ is also a deck transformation. One can then observe, using a prime to denote a $z$-derivative
	\begin{equation}
		\sigma'(a+k-1)=\theta^m \,,
	\end{equation}
	to conclude that, using the uniqueness of local deck transformations upon fixing their first derivative,
	\begin{equation}
		\sigma^m_{a+k}(z+a)=\sigma^m_{a+k-1}(z) = a+k-1+\theta^m(z-a-k-1)+\mathcal{O}(z-a-k-1)^2 \,.
	\end{equation}
	Finally, use induction to establish that, for any $k-l$ with $l\in\mathbb{Z}$, there are expansion coefficients $s_l^m$ not depending on $k$ such that
	\begin{equation}\label{e:sigexp}
		\sigma_{a+k}^m(z)=a+k+\sum_{l=1}^{\infty}s_l^m(z-a-k)^l \,.
	\end{equation}

  Now proceed via induction on $2g+n-2$. For each $g$ and $n$ with $2g+n-2 \geq 1$ claim that the $\omega_{g,n}$ have the following expansion
  \begin{equation} \label{e:convergeform}
    \omega_{g,n+1}(z_0,z_{\llbracket n \rrbracket}) = \sum_{k_0,\dots,k_n = -\infty}^{\infty} \sum_{l_0,\dots,l_n = 1}^{N_{g,n}} a^{k_0;k_1,\dots,k_n}_{l_0;l_1,\dots,l_n} d\xi_{a+k_0}^{-l_0}(z_0) \cdots d\xi_{a+k_n}^{-l_n}(z_n) \,,
  \end{equation}
  where $a^{k_0;k_1,\dots,k_n}_{l_0;l_1,\dots,l_n} = \mc{O}(k_0^0\cdots k_n^0)$, is symmetric in the indices $l_1,\dots,l_n$ and $d\xi_{a+k}^l$ is defined to be, for $l\in\Z_{\geq 1}$,
  \begin{equation}\label{e:perbasis}
    \begin{split}
      d \xi^{-l}_{a+k}(z) &= \Res_{z' = a+k} (z'-a-k)^{-l-1}dz' \int_{a+k}^{z'} \omega_{0,2}(\cdot,z) \,, \\ 
      d\xi^l_{a+k}(z) &= (z-a-k)^{l-1} dz \,.
    \end{split}
  \end{equation}
  Notice that, by the fact that $\omega_{0,2}$ is tamed on $R_0$, there exists the following expansions
  \begin{equation}
    d\xi^{-l}_{a+k}(z) = \sum_{\substack{s=-l \\ s\neq -1}}^{\infty} \Xi_{a+k}^{l,s} (z-a-k)^s dz \,, \quad (z-a-k)^{-l} = \sum_{\substack{s=-l \\ s\neq 0}}^{\infty} \tilde{\Xi}_{a+k}^{l,s} d\xi_{a+k}^{s} \,, 
  \end{equation}
  with $\Xi^{l,s}_{a+k},\tilde{\Xi}^{l,s}_{a+k} = \mc{O}(k^0)$, so checking that there exists an expansion with coefficients bounded in $k$ is the same whether one uses the $d\xi^{l}_{a+k}$ basis or the $(z-a-k)^{l}dz$ basis.

  The induction beginning can be taken to be $\omega_{0,2}$; as there is no claim about the form of $\omega_{0,2}$, we can skip to the induction step. The strategy will be to examine each object in the recursive definition of the correlator $\omega_{g,n+1}$ and prove that it has an expansion about $a+k$ with coefficients bounded in $k$ or the sufficient condition that it is bounded in small ball about each $a+k$ in a manner such that the radius of the ball and the bound are independent of $k$.
	
  Put $M_2 \coloneq y$ if the curve is $\C$-admissible and $M_2 \coloneq xy$ if the curve is $\C^*$-admissible. In either case, $M_2$ has a pole-like singularity along $R_0$. Take $\phi:R_0 \to [0,2\pi)$ and $\epsilon,c_{y}>0$ as in \Cref{d:pole-like} and take $z$ such that both $|z-a-k|< \epsilon$ and $|\sigma_{a+kp}^m(z) - a-k| < \epsilon$; by \cref{e:sigexp} this can be done by choosing one $\epsilon' \leq \epsilon$ and taking $|z-a-kp| < \epsilon'$, where only one $\epsilon'$ is needed for every $k$. Let $\theta_{a}^m(z) = \arg(\sigma^m_{a+k}(z) - z)$ and compute
  \begin{equation}\label{e:y-kernel-bound}
    \begin{split}
      \left| M_2(\sigma^m_{a+k}(z)) - M_2(z) \right| &= \left| e^{i(\phi(a)-\theta_a^m(z))} \int_{0}^1 M_2'(z + t[\sigma^m_{a+k}(z)-z]) [\sigma^m_{a+k}(z)-z] dt \right| \\ 
      &\geq \left| \int_0^1 \Re\left\{ e^{i\phi(a)} M_2'(z + t[\sigma^m_{a+k}(z)-z]) \right\} \left|\sigma^m_{a+k}(z)-z\right| dt \right| \\
      &\geq c_y \left|\sigma^m_{a+k}(z)-z\right| \,.
    \end{split}
  \end{equation}
  Now, by \cref{e:sigexp}, $|\sigma^m_{a+k}(z)-z|$ is bounded below independently of $k$ so the expression
  \begin{equation}
    \left| \frac{1}{M_2(\sigma^m_{a+k}(z)) - M_2(z)} \right|
  \end{equation}
  may be uniformly bounded in $k$ in a sufficiently small, but with $k$-independent radius, circle about each point in $a+k \in R_0$.

  Now briefly consider $1/dx$ or $x/dx$ (depending on whether the curve is $\C$ or $\C^*$ admissible). As $x$ is periodic with period $p=1$, these expressions clearly have expansions about $a+k$ with coefficients independent of, and therefore bounded in, $k$.
	
  Next examine $\omega_{0,2}$. $\omega_{0,2}$ appears in two ways: first, as $\omega_{0,2}(\sigma^m_{a+k}(z),z_i)$, for $i=0,\dots,n$; second, as $\omega_{0,2}(\sigma^{m_1}_{a+k}(z),\sigma^{m_2}_{a+k}(z))$. Write
  \begin{equation}
    \omega_{0,2}(z_1,z_2) = \frac{dz_1dz_2}{(z_1-z_2)^2} + \phi_{0,2}(z_1,z_2) dz_1dz_2 \,,
  \end{equation}
  as in \Cref{d:tamed}. As $\phi_{0,2}$ is uniformly bounded whenever one of it's arguments approaches an element of $R_0$ by assumption, there is nothing that needs to be said of it. For the first term with the pole on the diagonal, note that
  \begin{equation}
    \frac{dz_1dz_2}{(z_1-z_2)^{2}} = \frac{d(z_1-a-k)d(z_2-a-k)}{\big((z_1-a-k)-(z_2-a-k)\big)^{2}} \,,
  \end{equation}
  so $\omega_{0,2}(\sigma^{m_1}_{a+k}(z),\sigma^{m_2}_{a+k}(z))$ will have expansions with coefficients bounded in $k$ by \cref{e:sigexp}. Finally, observe that
  \begin{equation}\label{e:02exp}
\omega_{0,2}( \sigma^m_{a+k}(z), z_i ) = \sum_{l\neq 0} l d\xi^{-l}_{a+k}(\sigma_{a+k}^m(z)) d\xi^l_{a+k}(z_i) \,.
  \end{equation}

  Now examine the full TR integrand about a ramification point $a+k \in R_0$
	\begin{equation}
		\Res_{z=a+k}\sum_{\emptyset \neq Z \subseteq \mf{f}'_{a+k}(z)} K_{|Z|+1}(z_0,z, Z) \mc{W}_{g,n,|Z|+1}(z, Z \mid z_{\llbracket n \rrbracket})\,,
	\end{equation}
  The recursion kernel $K_{|z|+1}$ is taken care of by the arguments for $\omega_{0,2}$, $1/dx$ (or $x/dx$), and the difference of $M_2$s. The $\mc{W}_{g,n,|Z|+1}$ factor is therefore what remains. However, the results for $\omega_{0,2}$, the specific form of the deck transformations \eqref{e:sigexp}, and the induction assumption will yield the result. Look first at the variable $z_0$. This expansion is obtained by taking the integrand 
  \begin{equation}
    \sum_{\emptyset \neq Z \subseteq \mf{f}'_{a+k}(z)} \frac{1}{dx(z)^{|Z|+1} \prod_{z'\in Z}\big(M_2(z')-M_2(z)\big)} \mc{W}_{g,n,|Z|+1}(z, Z \mid z_{\llbracket n \rrbracket}) \,,
  \end{equation}
  which is has $k$-bounded expansion coefficients about the expansion point $z=a+k$ by the induction assumption and prior results, multiplying by
  \begin{equation}
    \int^z_{a+k} \omega_{0,2}(\cdot, z_0) = \sum_{l} (z-a-k)^l d\xi_{a+k}^{-l}(z_0)\,,
  \end{equation}
  and taking the residue at $z=a+k$, which will give an expansion of the claimed form. That the other variables also expand as desired follows immediately from the expansion \eqref{e:02exp} and the induction assumption.
\end{proof}

\subsection{Properties}\label{ss:propper}

Here it is first shown that there always exists a sequence of condensed spectral curves $\mathcal{S}_N$ converging to any periodic spectral curve $\mathcal{S} = \left( \Sigma, \, x,\, y,\,\omega_{0,2}\right)$, and the properties of the topological recursion on $\mc{S}$ are then deduced from the properties of the topological recursion on $\mc{S}_N$.

\begin{theorem}\label{t:seqper}
  Given a periodic admissible spectral curve, $\mathcal{S} = \left( \Sigma,x,y,\omega_{0,2} \right)$, there exists a sequence of global condensed admissible spectral curves $\mathcal{S}_N = \left( \Sigma,x_N,y_N,\omega_{0,2} \right)$ such that the $\omega_{g,n}^N$ constructed from $\mathcal{S}_N$ converge to the $\omega_{g,n}$ constructed from $ \mathcal{S} $. The sequence of curves $\mc{S}_N$ will be global if $\mc{S}$ is.
\end{theorem}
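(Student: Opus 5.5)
The construction truncates the infinite product of \Cref{p:perform}. Concretely, I would set
\begin{equation}
  dx_N(z) \coloneq C_0\, e_N(z)\prod_{j=1}^{r}(z-a_j)^{b_j}\prod_{k=1}^{N}\left[1-\left(\frac{z-a_j}{kp}\right)^2\right]^{b_j}dz\,,
\end{equation}
where $e_N(z)=(1+\lambda z/N)^N$ if $\lambda\neq0$ and $e_N\equiv1$ otherwise. I would take $x_N(z)=\int dx_N+\text{const}$, with the constant chosen so that $x_N\to x$ compactly. The ramification points of $x_N$ in $\C$ are then exactly $a_j+kp$ for $|k|\le N$ (plus, if $\lambda\ne0$, a zero at $-N/\lambda$ that runs off to infinity), and $x_N$ is a polynomial, so $\mc{S}_N$ is condensed.

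For $y_N$: in the $\C$-admissible case I keep $y_N=y$. In the $\C^*$-admissible case I set $y_N=(xy)/x_N$, so that $x_Ny_N=xy$ is unchanged and keeps its pole at the point where the essential singularity used to be. The bidifferential $\omega_{0,2}$ is kept fixed. If $\mc{S}$ is global then so is $\mc{S}_N$, since $\Sigma$ and $\omega_{0,2}$ are unchanged. Admissibility of $\mc{S}_N$ in the sense of \Cref{d:CondSCAdm} is local at each $a+k$ and inherited from $\mc{S}$ for $N$ large. The one caveat is the extra ramification point at $-N/\lambda$ or at $\infty$, where $x_N$ has a pole and $y_N$ (or $x_Ny_N$) has a pole; I expect its contribution to vanish in the limit, by the same scaling argument used at the end of the proof of \Cref{t:TR/QCess}.

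The convergence $\omega^N_{g,n}\to\omega_{g,n}$ would be proved by induction on $2g+n-2$, with the inductive claim slightly stronger than the statement. The claim is that the expansions \eqref{e:convergeform} of $\omega^N_{g,n}$ about each $a+k$ with $|k|\le N$ have coefficients bounded uniformly in both $k$ and $N$, and that each individual coefficient converges to that of $\omega_{g,n}$ as $N\to\infty$. For the inductive step, the recursion for $\omega^N_{g,n+1}$ is a sum over $|k|\le N$ of residues at $a+k$. Each residue converges termwise to the corresponding residue for $\mc{S}$, because $x_N\to x$ uniformly with derivatives on a fixed-radius disc around each fixed $a+k$, so the local deck transformations $\sigma^m_{a+k,N}$ converge to $\sigma^m_{a+k}$. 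Dominated convergence over $k$ then gives convergence of the whole sum, provided the summands are bounded by a summable sequence independent of $N$. That summable bound comes from repeating the argument of \Cref{l:converge} uniformly in $N$: the $z_0$-dependence $d\xi^{-l}_{a+k}(z_0)$ decays like $|k|^{-2}$ for $z_0$ in a compact set away from $R_0$, and all the other factors must be bounded uniformly in $k$ and $N$.

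The main obstacle is that uniformity. The truncated $x_N$ is no longer periodic, so the clean periodicity argument \eqref{e:sigexp} for the deck transformations fails, and one has to control $x_N$ near $a+k$ when $|k|$ is comparable to $N$, where the truncation error of the product is not small. To handle this, I would first write $dx_N/dx$ near $a+k$ as the tail product $\prod_{k'>N}[\cdots]^{-b_j}$, whose logarithmic derivative is $O(|k|/N^2)$ on discs of fixed radius. This should give a uniform Taylor control of $x_N$ relative to $x$ in the rescaled local coordinate, and hence uniform radii and bounds for $\sigma^m_{a+k,N}$.

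With that in hand, the lower bound \eqref{e:y-kernel-bound} for $M_2(\sigma(z))-M_2(z)$ goes through unchanged, since $M_2$ itself is unchanged. The bounds on $1/dx_N$ and $x_N/dx_N$ follow from the same uniform Taylor control, and the bounds involving $\omega_{0,2}$ are identical to those in \Cref{l:converge}. If the tail-product estimate turns out too weak near $|k|\approx N$, my fallback is to place the truncation symmetrically and then discard the boundary ramification points $N-K<|k|\le N$. This works because their residues carry a factor $d\xi^{-l}_{a+k}(z_0)$ of size $O(N^{-2})$, and there are only $O(K)$ such points, so their total contribution tends to zero for fixed $K$.
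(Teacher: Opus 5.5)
Your architecture matches the paper's: truncate the product of \Cref{p:perform}, keep $y$ (resp.\ $xy$) fixed, induct on $2g+n-2$ with bounds uniform in $k$ and $N$, then use dominated convergence. For $\lambda=0$ your $\mc{S}_N$ is the paper's. For $\lambda\neq0$, however (this occurs, e.g., in \Cref{ex:GWtorus}), replacing $e^{\lambda z}$ by $(1+\lambda z/N)^N$ opens a genuine gap. At $z_*=-N/\lambda$ the form $dx_N$ vanishes to order $N$, while $x_N(z_*)$ is a generic finite nonzero value; $x_N$ has no pole there, contrary to your caveat. So $z_*$ is a ramification point of order $N+1$, at which $\omega^N_{0,1}-\sigma^*\omega^N_{0,1}$ vanishes to order $N+1$. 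The scaling argument you borrow from \Cref{t:TR/QCess} rests on \cite[(78)]{BKW24}, which needs \cite[Lemma 4.8]{BKW24}, i.e.\ a bounded order of $\omega^N_{0,1}-\sigma^*\omega^N_{0,1}$. That holds only when the branchpoint is $0$ or $\infty$, which is exactly why \Cref{t:seqexp} assumes naturalness. So nothing in your proposal controls this Bouchard--Eynard residue at a point of unbounded ramification order. In addition, $(1+\lambda z/N)^N e^{-\lambda z}$ is exponentially far from $1$, with logarithmic derivative of order one, near the ramification points $a_j+kp$ with $|k|\sim N$, and your tail-product estimate does not see this factor. The repair is to leave the exponential alone, as the paper does. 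Then $x_N$ is transalgebraic (condensed is all the statement requires), and its finite ramification points are exactly the $a_j+kp$ with $|k|\le N$. Its only other singular point is an $\Erd=1$ essential singularity at $\infty$ where $x_Ny_N=xy$ has a pole, and by \cite{BKW24} this point does not contribute.

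Your treatment of $|k|\approx N$ also fails as written. For a single orbit with exponent $b$, the logarithmic derivative of the tail $\prod_{l>N}$ at $a+kp$ has modulus $|b/p|\,(H_{N+|k|}-H_{N-|k|})$, where $H_m=\sum_{i\le m}1/i$. This is $O(|k|/N)$ rather than $O(|k|/N^2)$ for $|k|\ll N$, which is harmless for termwise convergence. But it is of order one for $|k|$ a fixed fraction of $N$ and of order $\log N$ at the edge, so you get no uniform Taylor control of $\sigma^m_{a+k,N}$ on discs of fixed radius. The fallback does not rescue this, for three reasons. First, dropping ramification points from $A$ makes $\mc{S}_N$ non-canonically polarised, hence not global, against the last sentence of the statement. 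Second, showing that the dropped residues are negligible presupposes exactly the coefficient bounds at those points that are in doubt. Third, the distortion at $|k|=N-K'$ is still of order $\log(N/K')$, so removing a fixed number $K$ of points does not remove the problem. The paper aims only at boundedness. It controls $1/dx_N$ because every tail factor has modulus at most about one on these discs, so $|dx_N|\gtrsim|dx|$, which is bounded below by periodicity. It controls the deck transformations by a normal-family argument (Montel--Carath\'eodory plus Hurwitz) instead of Taylor estimates. Note that at the very edge not even a fixed disc of univalence is available. Near $a+Np$ one has $dx_N\approx \mathrm{const}\cdot(z-a-Np)^b e^{c(z-a-Np)}dz$ with $|c|$ of order $\log N$, so the local deck transformations are univalent only on discs of radius $O(1/\log N)$. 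Those points need a rescaled local analysis, not removal.
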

\begin{proof}
	By admissibility assume the form \Cref{p:perform} for $dx$. Choose coordinates on $\Sigma\cong \C\cup\{\infty\}$ and $\P^1$ so that $x(0)=0$ and define
	\begin{equation}
    dx_N(z) \coloneq C e^{\lambda z} \prod_{j=1}^{r}(z-a_j)^{b_j} \prod_{k=1}^{N} \left[ 1-\left(\frac{z-a_j}{kp}\right)^2 \right]^{b_j} dz \,, \qquad x_N(z) \coloneq \int_{0}^{z} dx_{N} \,.
	\end{equation}
	Note that any holomorphic $1$-form on $\C$ is exact so there is no ambiguity in defining the integration. Then, if the curve is $\mathbb{C}$-admissible take the sequence of spectral curves to be 
	\begin{equation}
		\mathcal{S}_N = \left( \Sigma,\, x_N,\, y,\, \omega_{0,2} \right)\,,
	\end{equation}
	whereas if the curve is $\mathbb{C}^*$-admissible take the sequence to be
	\begin{equation}
		\mathcal{S}_N = \left( \Sigma,\, x_N,\, y_N \coloneq xy/x_N, \omega_{0,2}  \right)\,.
	\end{equation}
	In the following the $\mathbb{C}$-admissible case will be assumed and the appropriate adjustments for the $\mathbb{C}^*$-admissible case will be discussed afterwards.
	
	Now make two key observations. First, the finite $N$ ramification locus is a strict subset of the limiting ramification locus. Second, by \cite{BE13,BBCKS23},\footnote{In particular \cite[Theorem 5.8]{BBCKS23}. Note that the dependence on $N$ is analytic away from $z=\infty$ for large enough $N$.} in small punctured disks around each finite ramification point of $x_N$ the $\omega_{g,n}^N$ constructed from $\mathcal{S}^N$ converge locally uniformly as $x_N$ and $\omega_{0,1}^N = yx_N$ converge locally uniformly. The Weirerstrauss $M$-test will then be used to commute the limit in $N$ with the sum over ramification points. For notational simplicity, assume that $p=1$ and up to periodicity there is only one ramification point $a$. Then, inductively claim that, for $2g+n-2\geq 1$
  \begin{equation} \label{e:per-ind-ass}
    \omega_{g,n+1}^N(z_0,z_{\llbracket n \rrbracket}) = \sum_{l=2}^{L_{g,n+1}} \sum_{k=-\infty}^\infty a_{l,k}^N(z_{\llbracket n \rrbracket}) d\xi^{-l}_{a+k}(z_0) dz_1\cdots dz_n \,,
	\end{equation}
  where $L_{g,n}$ are some positive integers (they exist by \Cref{t:origprop}) and the $a_{l,k}^N(\llbracket n \rrbracket)$ are uniformly bounded in $k$ and $N$ in the following sense: for every $1>D>\delta>0$ there exists $M(z_{\llbracket n \rrbracket}) \in \mathbb{R}_{\geq 0}$ such that 
  \begin{equation}
    \left| a_{l,k}^N(z_{\llbracket n \rrbracket}) \right| < M(z_{\llbracket n \rrbracket}) \quad \forall \ z_1,\dots,z_n \in \Sigma \cap \left( \bigcup_{k \in \Z} \mathrm{ann}(a+k;\delta,D) \right) \,, 
  \end{equation}
  where $\mathrm{A}(a+k;\delta,D)$ is the annulus of inner radius $\delta$ and outer radius $D$ centred at $k+a$. Note that the $a_{l,k}^N$ do, of course, depend on $g$ and $n$ even though this is suppressed in the above notation. 

The induction argument will be started with $\omega_{0,2}$, and as nothing is claimed for $\omega_{0,2}$ proceed to the induction step. The claim follows if the following integral is bounded independently of $k,m,N\in\mathbb{Z}$\footnote{In a fixed coordinate $z$, there is no ambiguity in referring to a bound on a differential as one may always divide by $dz$.}
	\begin{equation}
		\oint_{|z-a-k|=\epsilon}(z-a-k)^{m} \sum_{\emptyset \neq Z \subseteq \mf{f}'_{a+kp}(z)} \left(\frac{1}{dx_N(z)}\right)^{|Z|} \prod_{z_0\in Z}\left(M_2(z)-M_2(z_0)\right)^{-1} \mc{W}_{g,n,|Z|+1}(z, Z \mid z_{\llbracket n \rrbracket})\,,
	\end{equation}
	where $M_2=\omega_{0,1}^N/dx_N$.\footnote{Of course, in this case, this is just $y$. But a similar argument for the $\mathbb{C}^*$-admissible case will be used, and this more general notation will make this later argument clearer. In either case, the function does not depend on $N$, by the definition of the sequence of curves.} The bound on the above integral will follow from a bound on the integral below
	\begin{equation}
		\epsilon^m \max_{|z-a-k|=\epsilon} \sum_{\emptyset \neq Z \subseteq \mf{f}'_{a+kp}(z)} \left| \frac{dz}{dx_N(z)} \right|^{|Z|} \prod_{z_0 \in Z} \left|M_2(z)-M_2(z_0)\right|^{-1} \left| \frac{\mc{W}_{g,n,|Z|+1}(z, Z \mid z_{\llbracket n \rrbracket})}{dt^{|Z|+1}} \right|\,.
	\end{equation}
  Obviously the pre-factor $|z-a-k|^{m}=\epsilon^m$ is bounded in the desired manner. Next check the $\left| \mc{W}_{g,n,|Z|+1}(z, Z \mid z_{\llbracket n \rrbracket}) \right|$ factor; if the $\omega_{0,2}$ factors within $\mc{W}_{g,n,|Z|+1}$ could be shown to behave, this factor would be bounded in the desired manner (recall \Cref{d:combcomb}) provided that $(\sigma_{a+k}^{m,N}(a+k+\epsilon e^{i\theta})-a-k)^{-l}$ for $l\in \Z_{>0}$ is bounded independently of $N$ and $k$ and that $\omega_{g',n'+1}$ (with $2g'+n'-1 < 2g+n-1$) factors are well-behaved by the induction assumption ($\omega_{0,2}$ is special as the induction assumption says nothing about these correlators). 
	
  To see $(\sigma_{a+k}^{m,N}(a+k+\epsilon e^{i\theta})-a-k)^{-l}$ for $l\in \Z_{>0}$ is bounded in the desired manner argue as follows. Fix an open neighbourhood $U$ of $a$ such that each $\sigma_{a+k}^{m,N}$ is univalent on $U+k$ (it is possible to choose one such $U$ as there is a fixed minimal distance between ramification points). Then consider the set of functions $\mc{N} \coloneq \{f:U\mapsto\mathbb{C}\,|\, f(z)=\sigma_{a+k}^{m,N}(z+k)-a-k\}$, where $N$ and $k$ are integers such that $|k|\leq N \geq 0$ and $l=0,\dots,b-1$. Now claim, for any $f\in\mathcal{N}$, that $\infty,\pm 1$ are lacunary values for all functions in $\mc{N}$. That any $f$ can not be infinite is clear. The fact that the values $\pm 1$ are missed is because, for $N\geq 1$, these correspond to ramification points and all non-ramification points that are mapped to ramification points by deck transformations correspond to branchpoints of those deck transformations. As all the $f$ are single-valued on $U$, these values must be missed. Thus, by the Montel-Carathéodory theorem $\mathcal{N}$ is a normal family.
	
  Now proceed by contradiction and presume there exists a sequence $\{(k_i,N_i)\}_{i=1}^\infty$ such that $\sigma_{a+k_i}^{m,N_i}(a+k+\epsilon e^{i\theta})-a-k \to 0$ as $i\to\infty$ for some $l$ and $\theta$. Presuming $\epsilon$ is chosen small enough, one can apply the above result on $\mc{N}$ to conclude that this sequence of functions has a convergent subsequence. This is an immediate contradiction of the Hurwitz theorem.
	
  Therefore, it is indeed the case that $(\sigma_{a+k}^{m,N}(a+k+\epsilon e^{i\theta})-a-k)^{-l}$ is bounded independently of $N$ and $k$. Examine now $\omega_{0,2}$. These factors will appear in $\mc{W}_{g,n,|Z|+1}$ in two ways. The first is where one argument is in $Z$ and the other is in $z_{\llbracket n \rrbracket}$. In this case
\begin{equation}
  \omega_{0,2}^N(\sigma^{m,N}_{a+k}(z),z_i) = \omega_{0,2}(\sigma^{m,N}_{a+k}(z),z_i) = \sum_{l \in \Z_{\geq 1}} l d\xi^{l}_{a+k}(\sigma^{m,N}_{a+k}(z)) d\xi^{-l}_{a+k}(z_i) \,,
\end{equation}
thus the result follows if $(\sigma_{a+k}^{m,N}(a+k+\epsilon e^{i\theta})-a-k)^{l}$ is bounded uniformly in $k,N$ for $l\in \Z_{>0}$. Luckily, the same argument as above, but using the analogous statement to Hurwitz's theorem for poles rather than zeroes, shows that this is the case. The second case is where both arguments are in $Z$. Here recall that $\omega_{0,2}$ can be written as
\begin{equation}
  \omega_{0,2}(z_1,z_2) = \frac{dz_1 dz_2}{(z_1 - z_2)^2} + \phi_{0,2}(z_1,z_2)dz_1dz_2 \,,
\end{equation}
as it is tamed. That the $\phi_{0,2}(\sigma^{m_1,N}_{a+k}(z),\sigma^{m_2,N}_{a+k}(z))$ portion is bounded is also follows from $\omega_{0,2}$ being tamed. Ergo examine
\begin{equation}
  \frac{ d\big( \sigma^{m_1,N}_{a+k}(z)-a-k\big) d\big( \sigma^{m_2,N}_{a+k}(z)-a-k\big) }{ \left( \big( \sigma^{m_1,N}_{a+k}(z)-a-k\big) - \big( \sigma^{m_2,N}_{a+k}(z)-a-k\big) \right)^2 }
\end{equation}
The only issue with such a term that has not been fairly directly addressed by prior arguments is whether the denominator will converge to zero for $z\neq a+k$. However, this is impossible, again by the normality of $\mc{N}$.

As $M_2$ is independent of both $N$ and $k$, and by a nearly identical boundedness argument as that shown in \cref{e:y-kernel-bound}, the $\left|M_2(z_0)-M_2(z)\right|^{-1}$ factors are bounded in the desired manner.
	
	Thus, all that is left is the factor from the recursion kernel
	\begin{equation}
		\left|\frac{dz}{dx_N(z)}\right|^{|Z|} \,,
	\end{equation} 
	and this is the most involved part. One wishes to show $dx_N(z)/dz$ is bounded below independently of $k$ and $N$ for $z = a+k+\epsilon e^{i\theta}$ for some $0\leq\theta<2\pi$. Explicitly, notice
	\begin{equation}
		dx_N(z) = \prod_{l=1}^{N} \left[ 1-\frac{(k+\epsilon e^{i\theta})^2}{l^2} \right]^b dz \,,
	\end{equation}
	for some $b\in\mathbb{Z}_{\geq 2}$. Now note
	\begin{multline}
		\left| 1-\frac{(k+\epsilon e^{i\theta})^2}{l^2} \right|^2 = 1 + \frac{k^2}{l^2}\left( -2-\frac{4\epsilon\cos\theta}{k} \right)\\
		+ \frac{k^4}{l^4} \left( 1 + \frac{4\epsilon\cos\theta}{k} + \frac{2\epsilon^2}{k^2}( 1+2\cos^2\theta ) + \frac{4\epsilon^3\cos\theta}{k^3} \right) + \mathcal{O}\left( \frac{\epsilon^2k^0}{l^2} \right)\\
		= \frac{\left( l^2-(k^2+\mathcal{O}(k^0l^0\epsilon))^2 \right)^2}{l^4} + \mathcal{O}\left(\frac{\epsilon^2k^0}{l^2}\right) \,.
	\end{multline}
	Ergo, one can choose $\epsilon\ll 1$ small enough to get arbitrarily close to having this expression be $(l^2-k^2)^2/l^4$ independently of $k$ and $l$, provided $k < l$ so the leading order behaviour does not cancel out. Then note
	\begin{equation}
		\prod_{l=1}^{k-1}\left(\frac{k^2-l^2}{l^2}\right)^b=\prod_{l=1}^{k-1}\left(1+\frac{k}{l}\right)^b=\left(\frac{(k+1)\raisingfactorial{k}}{(k-1)!}\right)^b \,,
	\end{equation}
	where $x\raisingfactorial{n}=x(x+1)\cdots(x+n-1)$ is the rising factorial. Now observe
	\begin{equation}
		\prod_{l=k+1}^{N}\left(\frac{l^2-k^2}{l^2}\right)^b\geq \prod_{l=k+1}^{\infty}\left(\frac{l^2-k^2}{l^2}\right)^b=\left(\frac{(k!)^2}{(2k)!}\right)^b \,,
	\end{equation}
	putting these two together
	\begin{equation}
		\left(\prod_{l=1}^{k-1}\frac{k^2-l^2}{l^2}\right)^b\left(\prod_{l=k+1}^{N}\frac{l^2-k^2}{l^2}\right)^b\geq k^b \,.
	\end{equation}
	Finally, the $k=l$ term is $\mathcal{O}_s\left((\epsilon/k)^b\right)$ so the $k$-independent lower bound exists.
	
	For the $\mathbb{C}^*$-admissible case the proof is very similar except the following replacement must be made
	\begin{equation}
		\left|\frac{dz}{dx_N(z)}\right|^{|Z|} \to \left|\frac{x_N(z)dz}{dx_N(z)}\right|^{|Z|}.
	\end{equation}
	Examine the large $N$ behaviour of $x_N(z)/dx_N(z)$ through repeatedly integrating by parts to obtain an asymptotic series
	\begin{equation}\label{e:x/dxexp}
		\begin{split}
			\frac{\int_0^z dx_N(w)}{dx_N(z)} &= \frac{1}{dx_N(z)} \int_0^z \frac{ dx(w) }{ \prod_{l=N+1}^{\infty}\left[ 1-\frac{(w-a)^2}{l^2} \right]^b } \\
			&= \frac{x(z)}{dx(z)} - \frac{1}{dx_N(z)}\int_{0}^{z} \left( \sum_{l=N+1}^{\infty} \frac{ 2bx(w)(w-a)dw }{ l^2 \left(1-\frac{(w-a)^2}{l^2}\right)^b } \right) \prod_{l=N+1}^{\infty} \left[ 1-\frac{(w-a)^2}{l^2} \right]^{-b} \\
			&= \frac{x(z)}{dx(z)} - \frac{1}{dx(z)}\left(\int_0^zx(w)dw\right)\left( \sum_{l=N+1}^{\infty} \frac{ 2b(z-a) }{ l^2\left(1-\frac{(z-a)^2}{l^2}\right)^b } \right) \\
			&+ \frac{1}{dx_N(z)} \int_0^z \left(\int_0^w x(w')dw'\right) \prod_{l=N+1}^{\infty} \left[ 1-\frac{(w-a)^2}{l^2} \right]^{-b} \\
			&\times \left[ \left( \sum_{l=N+1}^{\infty} \frac{ 2b(w-a) }{ l^2 \left(1-\frac{(w-a)^2}{l^2}\right)^b } \right)^2 + \sum_{l=N+1}^{\infty} \frac{ 2b }{ l^2 \left(1-\frac{(w-a)^2}{l^2}\right)^b }\left(1+\frac{2b}{l^2\left(1-\frac{(w-a)^2}{l^2}\right)^b}\right) \right]dw \,.
		\end{split}
	\end{equation}
	Examine now the order of the behaviour of the corrections when $z = a+k+\epsilon e^{i\theta}$ where $|k|\leq N$. Evidently, as $x$ is periodic, the factor that is just the integral of $x$ is order one, and the factor that is just $dx$ is order one (or, more precisely, $x'$ is order one). Then, observe
	\begin{equation}
		\sum_{l=N+1}^{\infty} \frac{ 2(a+k+\epsilon e^{i\theta}) }{ l^2\left(1-\frac{(k+\epsilon e^{i\theta})^2}{l^2}\right)^b } = \mathcal{O}\left(\frac{k}{N^2}\right) = \mathcal{O}\left(\frac{k^0}{N}\right) \,,
	\end{equation}
	as $k\leq N$. From \eqref{e:x/dxexp} applying similar estimates will yield the next term in the series is $\mathcal{O}(k^0/N^2)$. By continuing to integrate by parts, more factors of $l$ in the denominator will accumulate and fewer powers of $w$ (or $z$) in the numerator will appear and the asymptotic series may thereby be calculated to any finite order. 
	
	However, what is needed here is not the whole asymptotic series but just the fact $x_N(z)/x'_N(z)=x(z)/x'(z)+\mathcal{O}(N^{-1})$ uniformly in $z$ when $z=a+k+\epsilon e^{i\theta}$ for an integer $k \leq N$. This indeed follows, as $x(a+k+\epsilon e^{i\theta}) = x(a+\epsilon e^{i\theta})$ and $dx(a+k+\epsilon e^{i\theta}) = dx(a+\epsilon e^{i\theta})$.
\end{proof}

\begin{remark}\label{r:nintlam}
	Notice that in the above proof, for the $\mathbb{C}^*$-admissible curves, one does not actually need that $\lambda p \in 2\pi i\mathbb{Z}$. Rather, any $\lambda$ such that $\lambda p\in 2\pi i\mathbb{R}$ would have been sufficient. The main difference in defining such curves is that the $s_l^m$ of \eqref{e:sigexp} will depend on $k$, but will still satisfy a boundedness property. Although, here, TR will not be formally define in general for such curves, all of the results should go through. Even, perhaps unintuitively, a version of \Cref{p:periodic}, as the non-periodicity in $dx/x$ cancels.
\end{remark}

\begin{corollary}\label{c:perprop}
	The correlators $\omega_{g,n}$ constructed by the topological recursion defined by \Cref{d:TRinflocus} satisfy the following properties.
	\begin{itemize}
		\item \emph{Symmetry:} the $\omega_{g,n}$ are symmetric in their $n$ variables;\\
		\item \emph{Residueless:} for $2g+n-2\geq 0$ the $\omega_{g,n}$ have vanishing residue; \\
		\item \emph{Homogeneity:} under the rescaling $\omega_{0,1} \mapsto f\omega_{0,1}$ it follows that $\omega_{g,n} \mapsto f^{2-2g-n} \omega_{g,n}$;\\
		\item \emph{Pole Structure:} the $\omega_{g,n}$ only have poles in $R_0$, and for every $a\in R_0$ and $2g+n-2\geq 0$, $\omega_{g,n}$ will have a pole of order no more than $(s_a-1)(2g+n-2)+2g$ at $a$ in each of its $n$ variables, where $s_a$ is defined in \Cref{d:CondSCAdm}; \\
		\item \emph{Formula for $\omega_{0,3}$:} the simplest stable correlator $\omega_{0,3}$ is given by the following formula
		\begin{equation}
			\omega_{0,3}(z_1,z_2,z_3) = \sum_{a\in R_0} \Res_{z=a} \frac{ \omega_{0,2}(z,z_1) \omega_{0,2}(z,z_2) \omega_{0,2}(z,z_3) } { dx(z) dy(z) }\,.
		\end{equation}
	\end{itemize}
\end{corollary}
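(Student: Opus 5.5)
The plan is to deduce every listed property from \Cref{t:seqper}, taking limits of the corresponding properties of the approximating curves $\mc{S}_N$. Those properties are supplied by \Cref{t:origprop} (and, in the canonically polarised case, \Cref{t:loop}), since each $\mc{S}_N$ is a global, condensed, admissible curve. Concretely, I would fix $(g,n)$ with $2g+n-2\geq 1$ and compact sets $K\subset(\Sigma\setminus(R_0\cup\{\infty\}))^n$. The first step is to check that the convergence $\omega^N_{g,n}\to\omega_{g,n}$ in \Cref{t:seqper} is locally uniform on such sets. This should follow from the proof of that theorem: the bound \eqref{e:per-ind-ass} on the coefficients $a^N_{l,k}$ is uniform in $k$ and $N$, and the basis differentials $d\xi^{-l}_{a+k}$ are summable by tamedness of $\omega_{0,2}$. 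So the Weierstrass $M$-test gives uniform convergence away from $R_0$.

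\emph{Symmetry} and \emph{homogeneity} then pass directly to the limit, since both are pointwise identities preserved by pointwise limits. For homogeneity, rescaling $\omega_{0,1}\mapsto f\omega_{0,1}$ corresponds to rescaling $y$ (or $y_N=xy/x_N$) by $f$ in every $\mc{S}_N$, and the limit is linear. \emph{Residuelessness} and the \emph{pole structure} both need the limiting form \eqref{e:convergeform} from \Cref{l:converge}. There $\omega_{g,n}$ is an absolutely convergent sum of products of $d\xi^{-l}_{a+k}$ with $1\le l\le N_{g,n}$. Each such differential is residueless and has poles only at $a+k$, of order $l+1$. Near a fixed $a+k$, all the other terms converge uniformly to a holomorphic form, so the pole order at $a+k$ is the maximal $l+1$ that actually occurs. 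To obtain the sharp bound $(s_a-1)(2g+n-2)+2g$, I would argue via Hurwitz/Cauchy estimates on a small circle: for large $N$ the point $a+k$ is a ramification point of $x_N$ with the same local data $r_a,s_a$, so \Cref{t:origprop} bounds the pole order of $\omega^N_{g,n}$ there. The Laurent coefficients of too-negative order are contour integrals over the circle, vanish for all $N$, and hence vanish in the limit by uniform convergence on the circle. Residues likewise are contour integrals of uniformly convergent sequences, so they vanish. That there are no poles at points outside $R_0$ (other than $\infty$) follows since each $\omega^N_{g,n}$ is holomorphic there for large $N$ and the limit is locally uniform.

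The \emph{formula for $\omega_{0,3}$} follows either by taking $N\to\infty$ in the $\mc{S}_N$ formula, or directly from \Cref{d:TRinflocus} with $(g,n)=(0,2)$. For the direct route, $\mc{W}_{0,2,2}$ contains only $\omega_{0,2}$ factors, and at simple points the standard local computation converts the Eynard--Orantin residue into the displayed residue of $\omega_{0,2}^3/(dx\,dy)$. For higher-order points, the standard Bouchard--Eynard identity applies termwise. By \Cref{l:converge} the sum over $R_0$ converges, so the formula holds with $A\cap R_0=R_0$.

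I expect the main obstacle to be the uniform-in-$N$ control on circles around \emph{every} $a+k$ that is needed for the pole-order and residue statements. In particular, $a+k$ is a ramification point of $x_N$ only when $|k|\le N$, so for fixed $k$ I need to take $N$ large first. The Montel--Carathéodory normality argument from \Cref{t:seqper} provides exactly this uniformity on a fixed circle $|z-a-k|=\epsilon$, and I would reuse it there.
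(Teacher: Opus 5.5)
Your proposal is correct and follows the paper's route. The paper's entire proof is that each property holds for the approximating curves $\mc{S}_N$ by \Cref{t:origprop} and therefore persists in the limit supplied by \Cref{t:seqper}. Your extra steps flesh out that one-line argument rather than change it: locally uniform convergence away from $R_0$, and Cauchy estimates on small circles about each $a+k$ once $N$ is large enough that $a+k$ is a ramification point of $x_N$ with the same $r_a,s_a$. One small correction: tamedness only bounds the regular parts of the $d\xi^{-l}_{a+k}$ uniformly, it does not make them summable, so rely on the convergence asserted in \Cref{t:seqper} for that step instead.
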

\begin{proof}
	These properties hold for each curve in the sequence by \Cref{t:origprop}, and therefore for the limiting curve. 
\end{proof}

\begin{corollary}\label{c:perx-y}
  The correlators $\omega_{g,n}$ constructed by the topological recursion defined by \Cref{d:TRinflocus} satisfy the $x$-$y$ swap relation \Cref{t:x-yswap}.\footnote{This isn't quite true in full generality. See the discussion following \Cref{t:x-yswap}; an easy sufficient condition is that the sequences $dx_N$ and $dy_N$ should (for generic large $N$) have simple zeroes.}
\end{corollary}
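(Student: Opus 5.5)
The plan is to deduce the statement from \Cref{t:seqper} in the same way as \Cref{c:perprop}, with one extra ingredient: the dual side also has to converge. Take the approximating sequence $\mc{S}_N = (\Sigma, x_N, y_N, \omega_{0,2})$ built in the proof of \Cref{t:seqper}. For each $N$ the curve $\mc{S}_N$ is condensed (indeed transalgebraic or meromorphic), so, granting the genericity hypothesis of the footnote (that $dx_N$ and $dy_N$ have simple zeroes and their poles do not contribute), \Cref{t:x-yswap} gives
\begin{equation}
	\omega^{\vee,N}_{g,n} = \mathrm{Expr}_{g,n}\Big(\{\omega^N_{g',n'}\}_{2g'+n'\le 2g+n},\{dx_N(z_i),dy_N(z_i)\},\Big\{\tfrac{dx_N(z_i)dx_N(z_j)}{(x_N(z_i)-x_N(z_j))^2}\Big\}\Big).
\end{equation}

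The first step is to pass to the limit on the right-hand side. $\mathrm{Expr}_{g,n}$ is a polynomial in its arguments and finitely many of their $z_i$-derivatives. By \Cref{t:seqper}, $\omega^N_{g',n'}\to\omega_{g',n'}$ locally uniformly away from $R_0$. Also $dx_N\to dx$ locally uniformly on $\C$, since the partial products converge (with derivatives, by Weierstrass), and $y_N\to y$ (either $y_N=y$, or $y_N = xy/x_N$). By Cauchy estimates derivatives converge as well, so the right-hand side converges to $\mathrm{Expr}_{g,n}$ evaluated on the limiting data.

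The second step is to identify the limit of the left-hand side. The dual curves $\mc{S}_N^\vee=(\Sigma,y_N,x_N,\omega_{0,2})$ are condensed. In the $\C$-admissible case $y$ is fixed, so the ramification points of $x^\vee$ do not move. The only thing that changes is $\omega^\vee_{0,1}=x_N\,dy$, which converges locally uniformly. By the local analytic dependence of Bouchard--Eynard correlators on $\omega_{0,1}$ \cite{BBCKS23} used in \Cref{t:seqper}, we get $\omega^{\vee,N}_{g,n}\to\omega^\vee_{g,n}$, provided $y$ has finitely many ramification points. If $y$ itself has infinitely many ramification points (e.g.\ \Cref{ex:Lstring}), then $\mc{S}^\vee$ is again periodic, and one needs the same Weierstrass $M$-test argument as in \Cref{t:seqper}, now with $x_N$ in the role of the fixed $y$. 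Equating the two limits then yields the swap relation for $\mc{S}$.

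The main obstacle is this second step in the doubly-infinite case. The bounds of \Cref{t:seqper} used that the pole-like function $M_2$ was $N$-independent, whereas on the dual side $x_N$ appears in the kernel denominators $x_N(\sigma(z))-x_N(z)$. I would try to show that $x_N$ has a pole-like singularity along the ramification locus of $y$ uniformly in $N$, using $dx_N/dx\to1$ uniformly on $|z|\le$ const$\cdot N$, in the spirit of the estimate on $x_N/x_N'$ above. The $\C^*$ case additionally requires the analogous control for $y_N$.
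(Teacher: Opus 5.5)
Your skeleton is the paper's proof, which consists only of your first step: \Cref{t:x-yswap} holds for each $\mc{S}_N$ of \Cref{t:seqper}, and an algebraic identity passes to the limit. Your second step, checking that the correlators of $\mc{S}_N^\vee$ converge to those of $\mc{S}^\vee$, is left tacit in the paper. Your argument for it is correct when $\mc{S}$ is $\C$-admissible and $y$ has finitely many ramification points: the dual ramification points are then fixed, and only $x_N\,dy$ moves. This is the case the paper actually uses, e.g.\ in the comparison with \cite{AH26}.

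In the doubly-infinite case, however, the argument already breaks at step one, not step two. There $\mc{S}_N^\vee=(\Sigma,y,x_N,\omega_{0,2})$ has $x_N^\vee=y$, which has infinitely many ramification points. So your claim that the dual curves are condensed is false, and \Cref{t:x-yswap} does not apply to the pair $(\mc{S}_N,\mc{S}_N^\vee)$. The swap for that pair is itself an instance, with $x$ and $y$ exchanged, of the statement being proved, and no uniform bound for the outer limit repairs this.

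What is missing is an inner approximation of $y$ (when $y$ is periodic, as in \Cref{ex:Lstring}). Truncate the product for $dy$ at level $M$ and let $M\to\infty$ at fixed $N$. The original side then has only the finitely many fixed ramification points of $x_N$. The dual side is exactly \Cref{t:seqper} applied to the periodic curve $\mc{S}_N^\vee$, whose $y$ is the polynomial $x_N$. This gives the swap for $(\mc{S}_N,\mc{S}_N^\vee)$, and only then does your $N\to\infty$ argument with $N$-dependent $M_2=x_N$ apply.

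For that uniform estimate, note that $dx/dx_N=\prod_{l>N}[1-((z-a)/l)^2]^{b}=\exp(-b(z-a)^2/N+\dots)$ (with $p=1$). Hence $dx_N/dx\to1$ uniformly only on $|z|=o(\sqrt N)$, not on $|z|\le cN$. The lower bound at the ramification points of $y$ must instead be read off the truncated product directly, as the paper does for $dx_N$ in \Cref{t:seqper}, and it depends on where those points sit relative to the zeros of $dx$.

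In the $\C^*$-admissible case the problem is of a different kind. There $x^\vee=y=(xy)/x$ generically has infinitely many critical points that are not periodically arranged. For \Cref{ex:GWtorus}, $dy=0$ iff $e^z=(fz-1)/((f+1)z-1)$. So $\mc{S}^\vee$ lies in no class for which the paper defines TR, and no control of $y_N$ can give the statement. It has to be restricted there, as the footnote concedes.
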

\begin{proof}
  This holds for each element in the sequence constructed in \Cref{t:origprop}. As \Cref{t:x-yswap} is a purely algebraic relation between the correlators, there is no issue with taking the limit.
\end{proof}

Now consider an example to illustrate how the above results may be put to work on the physics applications.

\begin{example}\label{ex:Lstring-prop}
  Consider the periodic spectral curve $\mc{S}$ with $\Sigma = \big\{ \Re\{z\} > 0 \big\} \subset \C$, $x = -2\cos(\pi b^{-1} z)$, $y = 2\cos(\pi b z)$, and $\omega_{0,2} = \frac{4z_1z_2dz_1dz_2}{(z_1^2-z_2^2)^2}$ where $b\in\C\setminus(\R\cup i\R)$. This curve will be shown to produce the same correlators as \Cref{ex:Lstring} in \Cref{s:pullback} and, for $b\in e^{i\pi/4} \R$, produces correlators related to the string amplitudes of the complex Liouville string \cite{CEMR24,CEMR25}. Here, it will be checked that it is $\C$-admissible. 
  First notice that
  \begin{equation}
    \omega_{0,2}(z_1,z_2) = \frac{4z_1z_2dz_1dz_2}{(z_1^2-z_2^2)^2} = \frac{dz_1dz_2}{(z_1-z_2)^2} + \frac{dz_1dz_2}{(z_1+z_2)^2} \,,
  \end{equation}
  is tamed on the set $R_0 = b\Z_{\geq 1}$ given the domain $\Sigma^2$. Now consider $y=2\cos(\pi b z)$; the goal is to show this function has a pole-like singularity along $b\Z_{\geq 1}$. Put $b = r_be^{i\theta_b}$, and, as $b\notin (\R\cup i\R)$, note that $\sin(2\theta_b) \neq 0$. Let $0 < \epsilon < (2r_b)^{-1}$ and choose $\phi = -\theta_b + \sgn(\sin(2\theta_b))\pi (kr_b^2\cos(2\theta_b)+3/2)$. Now consider $z=re^{i\theta}$ with $r<\epsilon$ and $kb \in b\Z = R_0$
  \begin{equation}
    \begin{split}
      \Re \left\{ e^{i\phi} y'(kb+z) \right\} &= 2r_b \Re \left\{ e^{i\sgn(\sin(2\theta_b))\pi (kr_b^2\cos(2\theta_b)+3/2)} \sin(\pi b(kb+z)) \right\} \\
      &= r_b \Re \left\{ e^{3i\pi/2 + i\sgn(\sin(2\theta_b))\pi (kr_b^2\cos(2\theta_b)+3/2)} \left( e^{i\pi(b^2 k + bz)} + e^{-i\pi(b^2 k + bz + 1)} \right) \right\} \,.
    \end{split}
  \end{equation}
  Next, if $\sin(2\theta_b) > 0$ the first term will go to zero as $k\to \infty$ whereas if $\sin(2\theta_0) < 0$ the second term will go to zero. Assume $\sin(2\theta_b) < 0$ so the second term may be neglected as $k\to\infty$. Then
  \begin{equation}
    \Re \left\{ e^{i\phi} y'(kb+z) \right\} = r_b e^{-\pi k r_b\sin(2\theta_b) - \pi r r_b \sin(\theta+\theta_0)}\cos\big( \pi r_b r \cos(\theta+\theta_0)\big) + \mc{O}(e^{\pi k r_b \sin(2\theta_0)}) \,.
  \end{equation}
  Thus, as $0 < \epsilon < (2r_b)^{-1}$ we have that $\cos\big( \pi r_b r \cos(\theta+\theta_0)\big) > \cos\big(\pi\epsilon / (4 r_b) \big) > 0$ is uniformly bounded below for $r < \epsilon$. The $\sin(2\theta_b) > 0$ case proceeds \textit{mutatis mutandis}.
\end{example}

\begin{proposition}\label{p:periodic}
	For regular curves, taking $z = x^ly$ where $l=0,1$ is chosen so that $z$ is an affine coordinate, it follows that 
	\begin{equation}
		\omega_{g,n}(z_1+p,z_2+p,\dots,z_n+p) = \omega_{g,n}(z_1,\dots,z_n) + px(z_1)^{-l}dx(z_1)\delta_{g,0}\delta_{n,1}.
	\end{equation}
\end{proposition}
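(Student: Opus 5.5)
The plan is to argue by induction on $2g+n-2$ directly from the recursive definition of \Cref{d:TRinflocus}. The key facts are that translation $T(z)=z+p$ is an automorphism of the whole recursive structure (except for $\omega_{0,1}$), and that the sum over $R_0$ is absolutely convergent by \Cref{l:converge}, so it may be freely re-indexed.

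First I will fix the coordinate. By \Cref{d:regper}, $z=x^ly$ is an affine coordinate with the essential singularity of $x$ at $z=\infty$. Any two affine coordinates on $\P^1$ that both put this point at $\infty$ differ by an affine map $z\mapsto \alpha z+\beta$. Hence $x$ is periodic in $z$ as well, with period $p$ (rescaling $p$ by $\alpha$ if necessary), and I will use this $p$.

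Next I treat the unstable correlators. Since $y=x^{-l}z$,
\begin{equation}
\omega_{0,1}(z+p)=x(z+p)^{-l}(z+p)\,dx(z+p)=\omega_{0,1}(z)+p\,x(z)^{-l}dx(z),
\end{equation}
which gives the $(0,1)$ term. The curve is global of genus zero with canonical global polarisation, so $\omega_{0,2}=B=dz_1dz_2/(z_1-z_2)^2$, which is manifestly invariant under simultaneous translation.

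For the induction step I will establish the following ingredients.
\begin{enumerate}
\item $T$ maps $R_0$ bijectively onto itself, sending $a$ to $a+p$.
\item As shown in the proof of \Cref{l:converge}, local deck transformations satisfy $\sigma_{a+p}(z+p)=\sigma_a(z)+p$. Consequently $\mf{f}'_{a+p}(z+p)=T(\mf{f}'_a(z))$.
\item The kernel numerator $\int_{a+p}^{z+p}B(z_0+p,\cdot)=\int_a^zB(z_0,\cdot)$ by translation invariance of $B$, and $dx(z+p)=dx(z)$.
\item The denominator factors $y(z')-y(z)$, with $z'$ in the fibre of $z$, are invariant. For $l=0$ this is immediate. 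For $l=1$, the shift of $y$ is $p/x$, which is a function of $x$, and $x(z')=x(z)$ on the fibre, so the correction cancels. This is the usual invariance of TR under $y\mapsto y+f(x)$.
\item $\mc{W}_{g,n,|Z|+1}$ contains no $\omega_{0,1}$ factors. It therefore only involves $\omega_{0,2}$ and lower stable correlators, which are translation invariant by the induction hypothesis.
\end{enumerate}

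With these in hand, changing variables $z\mapsto z+p$ in each residue shows that the contribution of $a+p$, evaluated at the shifted points $(z_0+p,z_{\llbracket n\rrbracket}+p)$, equals the contribution of $a$ at $(z_0,z_{\llbracket n\rrbracket})$. Absolute convergence (\Cref{l:converge}) then allows reindexing the sum over $R_0$, which yields $\omega_{g,n+1}(z_0+p,\dots)=\omega_{g,n+1}(z_0,\dots)$.

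I expect the main care to be needed in two places. The first is the $\C^*$-regular case, namely checking that the non-invariance of $y$ genuinely drops out of every factor of the kernel; it enters nowhere else, since $\omega_{0,1}$ is excluded from $\mc{W}$. The second is justifying that the periodic coordinate coincides with $x^ly$ up to an affine map.
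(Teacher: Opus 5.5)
Your proposal is correct and takes the same approach as the paper: the paper also notes that translation by $p$ leaves $\omega_{0,2}$ invariant and shifts $\omega_{0,1}$ by $p\,x^{-l}dx$, a pure function of $x$ to which the recursion is insensitive. Your induction makes explicit what the paper leaves implicit, namely the covariance of deck transformations and of the kernel under translation, and the reindexing of the absolutely convergent sum over $R_0$.
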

\begin{proof}
	Shifting all coordinates $z\mapsto z+p$ it is easy to observe that $\omega_{0,2}$ is invariant and $\omega_{0,1}\mapsto\omega_{0,1} + px^{-l}dx$. As the correlators are unaffected by shifting $\omega_{0,1}$ by a pure function of $x$ (this property is immediately clear from the local definition of topological recursion, \Cref{d:TRinflocus}), the result follows immediately.
\end{proof}

In latter sections, to construct the wavefunction \eqref{e:wfuncconj} of periodic curves, one must obtain antiderivatives of the correlators $\omega_{g,n}$ by integrating each variable from a base point $b$. In the cases consider, it will turn out to be most natural to take this base point to be the essential singularity. This could, in principle, raise questions of whether one may commute limits with the integrals and sums, and, perhaps more importantly, whether the integral itself even converges. The following lemma assures us that these issues do not occur.

\begin{lemma}\label{l:wavefuncper}
	Let $\mathcal{S} = \left(\Sigma,\, x,\, \omega_{0,1},\, \omega_{0,2}\right)$ be a periodic regular spectral curve. Identifying $\Sigma$ with $\mathbb{C}\cup R_{\infty}$ where $R_\infty = \{\infty\}$, take any complex number $v\in\mathbb{C}^*\setminus \Delta$, where $\Delta$ is the convex hull of $R_0$;\footnote{$\Delta$ can not be all of $\C$, as $x$ has only a single period $p$ and has only finitely many ramification points in $\C/p\Z$ by \Cref{d:per}.} then, for all all stable correlators $\omega_{g,n}$, the following anti-derivatives are well-defined
	\begin{equation}
		W_{g,n}(w_{\llbracket n \rrbracket}) \coloneq \int_{v\infty}^{w_1}\int_{v\infty}^{w_2} \cdots \int_{v\infty}^{w_n} \omega_{g,n}(z_{\llbracket n \rrbracket}) \,,
	\end{equation}
	where the symbol $\int_{v\infty}^{w_i}$ means that one integrates along the shortest line segment from $\infty$ to $v\neq 0$, and then integrates from $v$ to $w_i$ in the regular, contour independent (as the correlators are residueless), manner. Furthermore, taking $\mathcal{S}_N$ to be the sequence of spectral curves constructed in \Cref{t:seqper} and the $W_{g,n}^N$ to be the antiderivatives obtained by integrating the $\omega_{g,n}^N$, it follows that
	\begin{equation}
		\lim\limits_{N\to\infty}W_{g,n}^N(w_{\llbracket n \rrbracket}) = W_{g,n}(w_{\llbracket n \rrbracket}) + \delta_{n,1}C_g(v) \,,
	\end{equation}
	where $C_g(v)\in\C$ is a constant depending on $v$ and $g$.
\end{lemma}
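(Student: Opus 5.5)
The plan is to use the expansion \eqref{e:convergeform} from the proof of \Cref{l:converge}. Each stable correlator is a sum over $k\in\Z$ of terms $d\xi^{-l_0}_{a+k}(z_0)\cdots d\xi^{-l_n}_{a+k_n}(z_n)$ with coefficients bounded in the $k_i$. For regular curves the polarisation is global, so $d\xi^{-l}_{a+k}(z)$ equals $(z-a-k)^{-l-1}dz$ plus a term coming from $\phi_{0,2}$; for a regular periodic curve $\omega_{0,2}$ is the standard bidifferential, so only the pure pole part remains. Thus each variable contributes a sum over the lattice $a+k$ (and the finitely many $a_j$) of poles of order at least two, since the correlators are residueless. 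I will integrate these term by term along the prescribed path: first the ray from $v\infty$ to $v$, then from $v$ to $w_i$.

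First I show that $W_{g,n}$ is well defined. The antiderivative of $(z-a-k)^{-l-1}dz$ from $v\infty$ is $-(z-a-k)^{-l}/l$, and this vanishes at the endpoint $t v$, $t\to\infty$, because $v\notin\Delta$ keeps the ray a bounded-below distance from $R_0$. More precisely, along $tv$ with $t\geq1$ we have $|tv-a-k|\geq c\,(t+|k|)$ for some $c>0$, using that $\Delta$ is a strip or half-strip containing the lattice $a+k p$ and that $v$ lies outside its convex hull direction. Hence $\sum_k |tv-a-k|^{-l-1}$ is integrable in $t$, and $\sum_k |w-a-k|^{-l}$ with $l\geq1$ converges for $l\geq2$. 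The case $l=1$ needs care: I will pair $k$ with $-k$, or equivalently use that the coefficient sums reconstruct periodic-type functions such as $\pi\cot$. The cleanest argument is Fubini/dominated convergence. The bounded coefficients together with the bound above give absolute convergence of $\sum_k\int |\cdot|$ whenever the pole order after integration is at least $2$. For pole order $1$ after integration, i.e.\ $l_i=1$, I will argue that the $k$-sum of the coefficients with fixed $l_i=1$ must be such that the one-form term is residueless globally and decays. I would try to show this via \Cref{p:periodic}: periodicity of the stable correlators forces the $l=1$ coefficient to be $k$-independent, and then $\sum_k[(z-a-k)^{-1}-(tv-a-k)^{-1}]$ converges.

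Second, I treat the limit in $N$. The inductive bound \eqref{e:per-ind-ass} from \Cref{t:seqper} gives coefficients $a^N_{l,k}$ bounded uniformly in $k$ and $N$. The same integrable majorant then applies, and the Weierstrass $M$-test / dominated convergence lets me exchange $\lim_N$ with both the sum and the integral, for all but the finitely many ramification points of $x_N$ that have no limiting counterpart. These are the points near $|k|\sim N$ and the spurious ones near $\infty$, and they produce the constant $C_g(v)$. Following the proof of \Cref{t:TR/QCess}, I will bound the pole structure of $\omega^N_{g,n}$ at these escaping points: they carry a factor $N^{1-2g-2n}$ of negative order, and each integration from the basepoint at infinity can gain at most one power of $N$. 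So for $n\geq2$ their contribution vanishes, while for $n=1$ it converges to a $w$-independent constant depending on the direction $v$.

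The main obstacle I expect is the second step's control of the escaping ramification points, because the integration path passes through $\infty$, which is exactly where they accumulate. The $l=1$ borderline convergence in the first step is a secondary difficulty.
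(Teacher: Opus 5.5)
Your one-variable argument is fine and matches the paper. For $\omega_{g,1}$, \Cref{p:periodic} makes the coefficients $k$-independent, poles of order $\geq 3$ have a summable integrable majorant, and the double poles resum to $\pi\cot$.

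\textbf{Gap 1: the case $n\geq 2$.} Here the double-pole terms (for instance all of $\omega_{0,3}$) are covered neither by your majorant nor by your periodicity argument. Simultaneous-shift invariance only says that the coefficients in \eqref{e:convergeform} depend on the relative offsets $k_i-k_0$. So, with the other variables fixed, the coefficient of $(z_i-a-k_i)^{-2}dz_i$ is not constant in $k_i$. Mere boundedness then only gives an $\mathcal{O}(1/t)$ bound on the integrand along the ray $tv$, which is not integrable; a coefficient that is a step function of the offset genuinely produces a logarithmic divergence.

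The missing idea is the paper's decay estimate \eqref{e:claimvan}. It is proved by induction on $2g+n-2$, starting from $\omega_{0,3}$, whose coefficients are purely diagonal ($\propto\delta_{k_1,0}\delta_{k_2,0}$). The inductive step uses that, in the residue at $z=k$, a factor $(\sigma(z)-k')^{-l}$ with $k'\neq k$ and $l\geq 2$ contributes only $\mathcal{O}((k-k')^{-2})$. The conclusion is that off-diagonal coefficients decay like $k_i^{-2}$ in each offset. After integrating each variable, the remaining lattice sums are of the type $\sum_{k_1\neq k_2}1/(k_1k_2(k_1-k_2)^2)$, which converge absolutely, so Fubini applies. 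The same estimate holds uniformly in $N$, and it is what makes the $M$-test work in the limit.

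\textbf{Gap 2: the source of $C_g(v)$.} The curves of \Cref{t:seqper} are obtained by truncating the product for $dx$. Hence the finite ramification points of $x_N$ are exactly the lattice points $a_j+kp$ with $|k|\leq N$, a subset of $R_0$. The only other ramification point is $\infty$ itself, which is fixed and does not contribute. There are no ramification points escaping with $N$ and no $N^{1-2g-2n}$ principal parts; that mechanism belongs to \Cref{t:TR/QCess} and \Cref{l:wavefuncexp}, where $\exp_N$ creates the ramification point $z=-N$.

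The constant comes instead from the double-pole part of $\omega_{g,1}$. At finite $N$ the lattice sum is finite, so you may integrate term by term and the boundary terms at $v\infty$ vanish. For the limiting curve, your regularised sum $\sum_k[(w-a-k)^{-1}-(tv-a-k)^{-1}]$ keeps the boundary term $\lim_{t\to\infty}\pi\cot(\pi(tv-a))=\mp i\pi$ (with $p=1$, sign according to the side of the strip the ray escapes to).

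These terms have no integrable majorant in $t$, so your claim that ``the same integrable majorant'' lets you exchange $\lim_N$ with the integral fails precisely for them. As in the paper, you must integrate first at finite $N$, then take the limit of the resulting sum using the uniform bounds in $k$ and $N$, and compare with the sum-then-integrate value for $\mathcal{S}$. That comparison is what produces $C_g(v)$.
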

\begin{proof}
	For simplicity assume only one ramification point up to periodicity, and choose coordinates on $\C$ so that $x$ has a ramification point at zero and the period of $x$ is unity. Combining \Cref{p:periodic} with the form \eqref{e:convergeform} in \Cref{l:converge}, one can expand any stable correlator $\omega_{g,n+1}$ in the following manner
	\begin{equation}\label{e:badform}
		\omega_{g,n+1}(z_0,z_{\llbracket n \rrbracket}) = \sum_{l_0,\dots,l_{n}=2}^{L_{g,n+1}} \sum_{k_0,\dots,k_n=-\infty}^{\infty} \frac{a_{g;k_1,\dots,k_n}^{l_0,\dots,l_n}dz_0\cdots dz_n}{(z_0-k_0)^{l_0}(z_1-k_0-k_1)^{l_1}\dots(z_n-k_0-k_n)^{l_n}} \,,
	\end{equation}
	where $a_{g;k_1,\dots,k_n}^{l_0,\dots,l_n} = \mc{O}(k_1^0\cdots k_n^0)$, and the bound $L_{g,n+1}<\infty$ exists by \Cref{c:perprop}. Notice that $a_{g;k_1,\dots,k_n}^{l_0,\dots,l_n}$ does not depend on $k_0$, which exactly implements the periodicity property of \Cref{p:periodic}.
	
	However, this form is not quite good enough for the present purposes. Instead, claim inductively on $2g+(n+1)-2\geq 1$ that
	\begin{equation}\label{e:claimvan}
		a_{g;k_1,\dots,k_n}^{l_0,\dots,l_n} = \mc{O}\left(k_i^{2\delta_{k_i,0}+2\sum_{j\neq i}\delta_{k_i,k_j}-2n}\right) \,,\quad k_i\to\infty \,, \quad i\in\llbracket n \rrbracket \,.
	\end{equation}
	For $(g,n+1) = (0,3)$, the special formula for $\omega_{0,3}$ given in \Cref{c:perprop} implies that
	\begin{equation}
		a_{0;k_1,k_2}^{l_0,l_1,l_2} \propto \delta_{k_1,0}\delta_{k_2,0} \,,
	\end{equation}
	so the claim is clear in this case. Furthermore, for $n+1=1$ the claim is obvious, as there is no further claim beyond the already established form \eqref{e:badform}; in particular, the claim holds for $(g,n+1) = (1,1)$. Ergo, the claim holds for $2g+(n+1)-2 = 1$, and the proof proceeds to the induction step.
	
	Recall
	\begin{equation}
		\omega_{g,n+1}(z_0,z_{\llbracket n \rrbracket}) = \sum_{k=-\infty}^{\infty} \Res_{z = k} \sum_{\emptyset\neq Z\subseteq f'_k(z)} \frac{(k-z)dz_0}{(z_0-z)(z_0-k)}\frac{1}{\prod_{z'\in Z}\omega_{0,1}(z')-\omega_{0,1}(z)}\mc{W}_{g,n+1,|Z|}(z,Z,z_{\llbracket n \rrbracket}) \,.
	\end{equation}
	Then, by \eqref{e:badform} and \Cref{d:combcomb} (the definition of the $\mc{W}$), $z$ will appear in terms of $\mc{W}_{g,n+1,|Z|}(z,Z,z_{\llbracket n \rrbracket})$ via factors of the form (the first will come from stable correlators, whereas the last two will result from $\omega_{0,2}$)
	\begin{equation}
		\frac{d\sigma(z)}{(\sigma(z)-k')^l}\,,\qquad \frac{d\sigma(z)d\sigma'(z)}{(\sigma(z)-\sigma'(z))^2}\,,\qquad \frac{d\sigma(z)dz_i}{(\sigma(z)-z_i)^2}\,,
	\end{equation}
	where $l\in \Z_{\geq 2}$, $i\in\llbracket n \rrbracket$, and $\sigma,\sigma'$ are a deck transformations of $x$ such that $\sigma(k)=k$ and $\sigma'(k')=l'$. In the first sort of factor, if $k\neq k'$, then this will result in an $\mc{O}((k-k')^{-l'})$ contribution to the coefficient when the residue is taken, where $l'\in \Z_{\geq l}$. As $l'\geq l \geq 2$ it will, in particular, result in a contribution of order $\mc{O}(k^{-2})$. This precisely gives the contributions described in \eqref{e:claimvan}, where the delta functions take care of the fact that this does not occur for factors where $k=k'$. The second factor will give contributions that behave like the $k=k'$ terms, whereas the third will give contributions that behave somewhat like the $k\neq k'$ scenario and add an $\mc{O}\left((z_i-k)^2\right)$ factor to the final result, consistent with the induction claim. There will never be any contributions of positive powers of $k$, by the arguments given in \Cref{l:converge}.
	
	Now that \eqref{e:claimvan} has been established, that one can integrate in the desired manner will follow readily. First consider the $n=1$ case. Here
	\begin{equation}
		\sum_{l_0=3}^{L_{g,1}}\sum_{k_0=-\infty}^{\infty}\int_{v\infty}^{w_0}\left|\frac{a_g^{l_0}}{(z_0-k_0)^{l_0}}\right|dz_0<\infty\,,
	\end{equation}
	so the integral may be commuted with the sums when $l_0\geq 3$ and the answer will be finite. The $l_0=2$ terms require special care. Indeed, note that
	\begin{equation}
		\int_{v\infty}^{w_0}\sum_{k_0=-\infty}^{\infty}\frac{a_g^2dz_0}{(z_0-k_0)^2} = \pi^2a_g^2\int_{v\infty}^{w_0}\csc^2(\pi z_0)dz_0 = -\pi a_g^2 \cot(\pi w_0) + C_g(v)\,,
	\end{equation}
	so this term is also well-defined. That this $C_g(v)$ is precisely the one appearing in the statement of the lemma will be justified momentarily.
	
	Now consider the $n\neq 1$ case. First note that
	\begin{equation}
		\int_{v\infty}^{w_i}\left|\frac{1}{(z-k_0-k_i)^{l_i}}\right| dz = \mathcal{O}(|k_0-k_i|^{-l_i+1})\,.
	\end{equation}
	Then, recalling \eqref{e:badform} and the fact that the sums
	\begin{equation}
		\sum_k \frac{1}{k^2},\quad \sum_{k_1\neq k_2}\frac{1}{k_1k_2(k_1-k_2)^2},\dots\,,
	\end{equation}
	converge absolutely, one concludes the integration and the sum may be interchanged in these cases, and that all integrals are well-defined.
	
	Now consider the curves $\mc{S}_N$ defined in \Cref{t:seqper}; the goal is to prove $W^N_{g,n+1}\to W_{g,n+1}$. The proof presented in \Cref{t:seqper} that the coefficients $a_{l,k}^N$ are uniformly bounded in $k$ and $N$ is precisely the proof that the present $a_{g;k_1,\dots,k_n}^{l_0,\dots,l_n}$ are uniformly bounded in $k$ and $N$. The argument for the additional `factors' of $k_i^{-2}$ is identical for finite $N$. For $n\neq 1$ the Weierstrauss $M$-test may then be used to commute the limit in $N$ with the sum over $k$, as in the proof of \Cref{t:seqper}, and dominated convergence again commutes the limit with the integrals. For $n=1$, the $l_0\geq 3$ terms proceed exactly as in the $n > 1$ case.
	
	However, for $l=2$ and $n=1$, one instead notes (going back, temporarily, to the limiting curve)
	\begin{equation}
	\sum_{k_0=-\infty}^{\infty}\frac{a_g^2dz_0}{(z_0-k_0)^2} = a_g^2\left(\frac{dz_0}{z_0^2}+2\sum_{k=1}^{\infty}\frac{(z_0^2+k^2)dz_0}{(z^2-k^2)^2}\right)\,,
	\end{equation}
	evaluating the integral first, and then the sum
	\begin{equation}
		a_g^2\left(\int_{v\infty}^{w_0}\frac{dz_0}{z_0^2}+2\sum_{k=1}^{\infty}\int_{v\infty}^{w_0}\frac{(z_0^2+k^2)dz_0}{(z_0^2-k^2)^2}\right) = -a_g^2\left(\frac{1}{w_0}+2\sum_{k=1}^{\infty}\frac{1}{w_0^2-k^2}\right) = -\pi a_g^2\cot(\pi w_0)\,,
	\end{equation}
	whereas, evaluating the sum first, then the integral (the calculation is the same as before)
	\begin{equation}
		\int_{v\infty}^{w_0}a_g^2\left(\frac{dz_0}{z_0^2}+2\sum_{k=1}^{\infty}\frac{(z_0^2+k^2)dz_0}{(z_0^2-k^2)^2}\right) = -\pi a_g^2\cot(\pi w_0) + C_g(v)\,.
	\end{equation}
	
	For finite $N$ the $a_g^2$ will depend on $k$ and $N$ (although they will be uniformly bounded in both). In particular, $a_g^2$ will vanish for sufficiently large $|k|$, so the sum is truncated, and one can compute the integral first and then the sum. One can then commute the sum with the limit as $N\to\infty$ using the Weirstrauss $M$-test, and the limit with the integral using dominated convergence. Notice that the constant $C_g(\nu)$ arises as, to commute the limit in $N$ with both the sum and the integral, one has to integrate first then sum. If one tried to do the sum first, then the integrand wouldn't necessarily be bounded uniformly in $N$, and so dominated convergence couldn't be applied.
\end{proof}

\begin{remark}
	As the choice of $v$ only changes the wavefunction up to a multiplicative constant, which is largely irrelevant, this choice will often be dropped in notation, and the basepoint will merely be denoted by $\infty$ rather than $v\infty$.
\end{remark}

The section concludes with a theorem giving an explicit formula for quantum curves when the basepoint is chosen to be at infinity.

\begin{theorem}\label{t:TR/QCper}
	Let $\mathcal{S} = (\Sigma,\, x,\, y,\, \omega_{0,2})$ be a regular periodic spectral curve. Then $x$ and $y$ satisfy the relation $P(x,y) =  x - f(\tilde{y}) = 0$ for some periodic entire function $f$, where $\tilde{y} = y$ if $\mc{S}$ is $\C$-admissible and $\tilde{y} = xy$ if $\mc{S}$ is $\C^*$-admissible. Then, taking the affine coordinate $z= \tilde{y}$, the wave-function, for $v\in \C^*\setminus \Delta$ where $\Delta$ is the convex hull of $R_0$,
	\begin{equation}
		\psi(z;v\infty) \coloneq \exp\left[\sum_{n=1}^{\infty}\sum_{g=0}^{\infty}\frac{\hslash^{2g+n-2}}{n!} \int_{v\infty}^z\cdots\int_{v\infty}^z \left( \omega_{g,n}-\delta_{n,2}\delta_{g,0} \frac{dx(z_1) dx(z_2)}{(x(z_1) - x(z_2))^2} \right)\right]\,,
	\end{equation}
	satisfies the differential equation
	\begin{equation}
		\left( x - f(\frac{d}{d\tilde{x}}) \right) \psi(z;v\infty)=0 \,,
	\end{equation}
	where $\tilde{x} = x$ for $\C$-regular curves and $\tilde{x} = \log(x)$ for $\C^*$-regular curves.
\end{theorem}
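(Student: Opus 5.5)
The plan is to reduce to the condensed case through the approximating sequence $\mc{S}_N$ of \Cref{t:seqper} and then pass to the limit in the quantum curve using \Cref{l:wavefuncper}.

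\textbf{Step 1: the relation $P(x,y)=0$.} Regularity (\Cref{d:regper}) makes $z=\tilde{y}$ an affine coordinate with the essential singularity of $x$ at $z=\infty$. By \Cref{d:per}, $x$ is then a periodic entire function of $z$, so we may write $x=f(z)=f(\tilde{y})$. This gives $P(x,y)=x-f(\tilde{y})$.

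\textbf{Step 2: quantum curves for $\mc{S}_N$.} I would take the curves $\mc{S}_N$ from \Cref{t:seqper}. Their $x_N$ is the primitive of the truncated product in \Cref{p:perform}, namely $e^{\lambda z}$ times a polynomial.
\begin{itemize}
\item In the $\C$-regular case $\lambda=0$. Then $x_N=f_N(z)$ is a polynomial in $z=y$, so $\mc{S}_N$ is an algebraic curve that is linear in $x$, hence regular.
\item In the $\C^*$-regular case I would use $xy=z$ and $y_N=z/x_N$. Then $x_N=f_N(z)$ with $f_N$ of the form $e^{\lambda z}\times$polynomial, so $\mc{S}_N$ is transalgebraic and regular.
\end{itemize}
In both cases I would apply \Cref{t:TR/QCpole} at the pole of $dx_N$ at $z=\infty$, or in the $\C^*$ case \Cref{t:TR/QCess}, or its $\tau$-variant in the subsequent remark. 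For curves linear in $x$ the computation of the $E_i$ and $F_i$ is straightforward, just as in the proof of \Cref{t:TR/QCess}, and it should give
\[
\bigl(x-f_N(\hslash\, d/d\tilde{x})\bigr)\psi_N(z;\infty)=0 .
\]
Here $f_N(\hslash\,d/d\tilde x)$ is an operator power series in $\hslash\,d/d\tilde x$. The $\hslash E_i$ correction terms should cancel because the polynomial is linear in $x$ and $q_0$ contains the only $x$-dependence. I would check this carefully, mirroring the telescoping of the sums in the proof of \Cref{t:TR/QCess}.

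\textbf{Step 3: passing to the limit.} By \Cref{l:wavefuncper}, $W^N_{g,n}\to W_{g,n}+\delta_{n,1}C_g(v)$. Hence $\psi_N(z;v\infty)\to c\,\psi(z;v\infty)$ order by order in $\hslash$, where $c$ is a constant that is irrelevant for a linear ODE. The remaining issue is the basepoint. \Cref{t:TR/QCpole} is stated for a basepoint at a pole of $dx_N$ where the correlators are holomorphic, which is $z=\infty$ for $\mc{S}_N$. I would argue, as at the end of the proof of \Cref{t:TR/QCess}, that moving the basepoint from $\infty$ to $v\infty$ only multiplies by constants. For this I would use that the correlators of $\mc{S}_N$ are holomorphic at $\infty$, so integrating along the ray through $v$ is legitimate.

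Finally I would show $f_N(\hslash\,d/d\tilde x)\psi_N\to f(\hslash\,d/d\tilde x)\psi$ at each fixed order in $\hslash$. At order $\hslash^k$, only the Taylor coefficients of $f_N$ up to degree $k$ enter, together with finitely many derivatives of finitely many $W^N_{g,n}$. The Taylor coefficients of $f_N$ at $z=0$ converge to those of $f$ because $x_N\to x$ locally uniformly. The derivatives of the $W^N_{g,n}$ converge locally uniformly by the Weierstrass $M$-test and dominated convergence arguments of \Cref{l:wavefuncper}. Together these give the claimed equation.

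\textbf{Main obstacle.} I expect the hardest step to be the uniformity needed to exchange the $N\to\infty$ limit with the infinite-order operator $f_N(\hslash\,d/d\tilde x)$ and with the basepoint at the essential singularity. The delicate part is the $(1,1)$ and $l_0=2$ contributions, which produce the constant $C_g(v)$. If these are not controlled, they could produce $z$-dependent discrepancies rather than constants.
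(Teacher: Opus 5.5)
Your route is the paper's. Its proof says only to rerun the argument of \Cref{t:TR/QCess} on approximating curves and to invoke \Cref{l:wavefuncper} for the basepoint, and your Steps 1--2 in the $\C$-regular case are fine: $x_N$ is a polynomial, $E_i=0$, and you get $(x_N-f_N(\hslash\,d/dx_N))\psi_N=0$. The step you single out as the crux, however, is argued incorrectly. It is false that at order $\hslash^k$ only the Taylor coefficients of $f_N$ of degree at most $k$ enter. The function $\psi_N$ contains $e^{S^N_{-1}/\hslash}$ with $dS^N_{-1}/d\tilde x_N=\tilde y$ in both cases, since $\omega^N_{0,1}=\tilde y\,d\tilde x_N$. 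Hence $\psi_N^{-1}(\hslash\,d/d\tilde x_N)^i\psi_N=\tilde y^{\,i}+\mc{O}(\hslash)$, and already the $\hslash^0$ part of $\psi_N^{-1}\bigl(x_N-f_N(\hslash\,d/d\tilde x_N)\bigr)\psi_N$ is $x_N-f_N(\tilde y)$, which involves every coefficient of $f_N$. Note also that at level $N$ the equation involves $x_N$ and $\tilde x_N$, not $x$ and $\tilde x$.

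The correct bookkeeping is to conjugate. The $\hslash^k$ coefficient of $e^{-S^N_{-1}/\hslash}f_N(\hslash\,d/d\tilde x_N)e^{S^N_{-1}/\hslash}$ is a differential operator whose coefficients are polynomials in $f_N^{(m)}(\tilde y)$ with $m\le 2k$ and in derivatives of $\tilde y$ with respect to $\tilde x_N$. Since $x_N=f_N(z)\to f(z)$ locally uniformly on $\C$, Cauchy's estimates give $f_N^{(m)}\to f^{(m)}$ locally uniformly. Also $d/d\tilde x_N\to d/d\tilde x$ locally uniformly away from $R_0$. Together with convergence of the $S^N_\chi$ and their derivatives, this passes the equation to the limit. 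It is also what gives $f(\hslash\,d/d\tilde x)\psi$ a meaning in the first place. For $\chi=-1,0$ that convergence is not supplied by \Cref{l:wavefuncper}, which treats only stable correlators. These terms are explicit antiderivatives that converge up to additive constants, but you need to say so.

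Second, your description of $\mc{S}_N$ in the $\C^*$-regular case is wrong when $\lambda\neq0$, which \Cref{d:regper} allows (for \Cref{ex:GWtorus}, $\lambda=f+\tfrac12$). With $x_N=\int_0^z dx_N$ as in \Cref{t:seqper}, one gets $x_N=e^{\lambda z}q_N(z)-q_N(0)$ with $q_N$ a polynomial, and in general $q_N(0)\neq0$. So $\mc{S}_N$ is cut out by $x-K_N-q_N(xy)e^{\lambda xy}=0$ with $K_N=-q_N(0)$. Then $dx_N$ has no pole, and neither \Cref{t:TR/QCpole} nor \Cref{t:TR/QCess} (nor the $\tau$-variant in the remark after it) applies as stated.

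Redoing the $(1+\lambda xy/M)^M$ approximation with the constant still gives $E_i=0$ for $i\geq1$ and the operator $x-f_N(\hslash x\,d/dx)$. The problem is the final step of the proof of \Cref{t:TR/QCess}, which shows that the ramification point escaping to $z=-M/\lambda$ only changes $\psi$ by a constant. That step rests on \cite[(78)]{BKW24}, and, as the use of naturalness in the proof of \Cref{t:seqexp} explains, this needs the point to lie over $0$ or $\infty$. Here it lies over $K_N$, and the order of $\omega_{0,1}-\sigma^*\omega_{0,1}$ there grows with $M$. You therefore need either a separate estimate at that point, or, for natural curves, another route. That route uses the fact that a $\C^*$-regular periodic $x$ is a Laurent polynomial in $e^{2\pi i z/p}$, i.e.\ gives a regular exponential curve, and goes through \Cref{t:seqexp}, \Cref{l:wavefuncexp} and \Cref{t:TR/QCexp}. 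This is how the paper actually handles its $\C^*$ example in \Cref{p:QCGWtor}. The paper's one-line proof is silent on this point too, but your explicit claim that $f_N$ is $e^{\lambda z}$ times a polynomial does not hold.
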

\begin{proof}
	The proof of the form of the quantum curve proceeds, \textit{mutatis mutandis}, as the proof of \Cref{t:TR/QCess}. That the wavefunction is well-defined follows from \Cref{l:wavefuncper}.
\end{proof}

\section{Topological recursion on exponential curves}\label{s:exp}
\subsection{Definition}\label{ss:defexp}

There are a large number of spectral curves that take the form $P(X,Y) = 0$, where $(X,Y)\in\mathbb{C}^*\times\mathbb{C}^*$. Such curves are often related to topological strings \cite{BM08,BS12,Z12} and/or knot theory \cite{ABM12,BE12,GJKS14}, and are therefore of great interest. To get to regular $\C$-variables, rather than $\C^*$-variables, define $x,y\in\C$ through $e^x=X,\, e^y=Y$. The natural $1$-form $\omega_{0,1}$ is then $\omega_{0,1} = ydx = \log(Y)d\log(X)$. Normally, these curves are parametrised so that $X$ and $Y$ are meromorphic. However, one then has to cut the Riemann surface $\Sigma$ to define the logarithms, which ruins compactness.

To remedy this, and to obtain the curves that are referred to as exponential throughout the present work, apply the transformation $x = \log(x'),\, y = x'y'$ and note that the new $x'$ and $y'$ still satisfy $\log(Y)d\log(X) = y'dx'$. However, the polynomial equation $P(e^x,e^y) = 0$ has now become $P(x',e^{x'y'}) = 0$. To parametrise such a curve, one may take $x'y'$ to be meromorphic and allow $x'$ to have essential singularities (but no $\log$-cuts). This is the perspective adopted in the present work, and corresponds to the exponential spectral curves of \Cref{d:expc}. Of course, there is no a priori reason that changing the parametrisation of the curves yields related correlators, but some of the results with quantum curves (\Cref{s:app}) and some direct results (\Cref{s:pullback}) suggest that this is the case here. 

\begin{remark}
	In \Cref{d:expc}, the polynomial equation was written as $P(x,e^{\kappa xy}) = 0$, where $\kappa\in\C^*$ was introduced. Using the homogeneity property of the $\omega_{g,n}$ stated in \Cref{t:origprop}, one can always get rid of $\kappa$ by rescaling $y$. However, it is sometimes more convenient to keep $\kappa$ in the definition of the curve, rather than scaling it away.
\end{remark}

Notice that, given the above discussion, $x'$ should be a variable in $\mathbb{C}^*$ rather than $\P^1$ (in fact, $x'=X$). It turns out that exponential curves which, in some sense, respect the $(x,e^{xy}) \in \mathbb{C}^*\times\mathbb{C}^*$ structure are especially nice to work with. This is the source of the following definition.

\begin{definition}\label{d:natexp}
	An exponential spectral curve is called \emph{natural} if, on the curve $P(x,e^{\kappa xy}) = 0$ whenever $e^{\kappa xy}$ approaches zero or infinity it follows that $x$ approaches zero or infinity (not necessarily respectively, e.g. $(x,e^{\kappa xy})\to (0,\infty)$ is acceptable).
\end{definition}

Although most of the main results in the present work are only proved for natural exponential curves, it is the belief of the author that this is more of a technical assumption, and that the main theorems should go through in the `unnatural' case. Indeed, it is not hard to see there exists a sequence of natural exponential curves converging to any unnatural exponential curve (just rescale $x\to e^{N^{-1}x}$), but proving this limit is well-behaved turns out to be subtle.

A key result that will allow progress to be made with exponential curves is the following.
\begin{lemma}\label{l:simppole}
	For curves $P(x,e^{\kappa xy}) = 0$ the poles of $xy:(x,y)\mapsto xy$ are simple.
\end{lemma}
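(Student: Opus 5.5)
Write $u \coloneq xy$, viewed as a meromorphic function on $\Sigma$; for an exponential curve the parametrisation makes $u$ meromorphic, while $x$ is allowed essential singularities. Take a pole $z_0$ of $u$ of order $m\geq 1$. The plan is a local argument at $z_0$: show that the identity $P(x,e^{\kappa u})=0$ together with single-valuedness of $x$ forces $m=1$.

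The first step uses irreducibility of $P$. Since $P$ is irreducible and $x$ is nonconstant, $x$ is an algebraic function of $w\coloneq e^{\kappa u}$. So near $z_0$ there is a Puiseux-type description: the roots $x_1(w),\dots,x_d(w)$ of $P(x,w)=0$ are algebraic functions of $w$. As $w\to 0$ or $w\to\infty$, each root has an expansion in fractional powers $w^{1/e}$ (respectively $w^{-1/e}$), at most finitely many terms of which are singular. Moreover, on a punctured neighbourhood of $z_0$, $x(z)=x_j(e^{\kappa u(z)})$ for some branch $j$, continued analytically.

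The second step is a monodromy count. Let $t$ be a local coordinate with $u = t^{-m}h(t)$ and $h(0)\neq 0$. A small loop around $t=0$ maps, under $u$, to a curve winding $m$ times around $u=\infty$. Hence $w=e^{\kappa u}$ covers a neighbourhood of $w=0$, $w=\infty$ and everything in between infinitely often. Near $t=0$ there are $m$ sectors where $\Re(\kappa u)\to -\infty$ (so $w\to 0$) and $m$ sectors where $\Re(\kappa u)\to+\infty$ (so $w\to\infty$).

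The key step is to compare $x$ in these sectors. In a sector where $w\to 0$ we have $x\approx c\,w^{\rho}=c\,e^{\rho\kappa u}$ for some rational $\rho$, using the leading Puiseux term of the relevant branch. Since $x$ is single-valued near $z_0$, $\log x$ differs from $\rho\kappa u$ by a bounded, eventually holomorphic correction on each sector. Globally around $t=0$, $x$ is holomorphic and nonvanishing on the punctured disc, except at the isolated points where a branch $x_j$ itself has zeros or poles. So one can write $x = t^{k} e^{g(t)}$ with $g$ meromorphic at worst at $0$. Matching gives $g \sim \rho_i\kappa u$ on the sectors, with possibly different rational $\rho_i$ in different sectors. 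But $g$ has a single Laurent principal part, so $\rho_i$ must be the same $\rho$ in all sectors, and $g-\rho\kappa u$ is bounded near $0$. Therefore $x = t^k e^{\rho\kappa u+\text{hol}}$, and then $P\big(t^ke^{\rho\kappa u+\cdots},\,e^{\kappa u}\big)\equiv 0$.

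I expect the main obstacle to be turning this into $m=1$. My first attempt is to expand $P(x,w)=\sum_{i,j}c_{ij}x^iw^j$ and substitute, giving terms $c_{ij}\,t^{ik}\exp\big((i\rho+j)\kappa u + i\cdot\text{hol}\big)$. These must cancel identically. Group the terms by the exponent $i\rho+j$; because $u$ has a pole, distinct exponents give exponentially separated growth, so each group cancels on its own. Within a group the factors $t^{ik}e^{i\,\text{hol}}$ must then cancel, which forces $k$ and the holomorphic part to be of a special form. At that point I would argue that $x$ restricted to a small loop has winding and monodromy behaviour compatible with an algebraic relation to $w$ only if $u$ is injective near $z_0$, i.e.\ $m=1$. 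Concretely, the $m$ distinct sectors where $w\to 0$ give $m$ distinct preimage paths for a single branch of the algebraic function. These would have to be exchanged by the monodromy of $P$ over $w=0$, which is a finite cyclic group acting on branches. Comparing this with the $\Z/m$ rotation of the sectors is where I expect the real work to lie. If this fails, my fallback is to invoke the naturality-free fact that $x\,dy+y\,dx = du$ and derive a contradiction from the order of $dx/x = \rho\kappa\, du+\cdots$ at $z_0$.
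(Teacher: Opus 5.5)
Your proposal has a genuine gap, and it concerns what actually forces $m=1$. You aim to derive $m=1$ from the identity $P(x,e^{\kappa u})\equiv 0$ together with single-valuedness of $x$ near $z_0$, but these two facts alone do not imply it. Take $P(x,w)=x-w+1$, i.e.\ $x=e^{\kappa u}-1$ with $u$ having a simple pole. Write $\tau=1/u$ near the pole and substitute $\tau=t^2$. Then $\tilde x(t)=e^{\kappa t^{-2}}-1$ is single-valued on a punctured disc and satisfies $P(\tilde x,e^{\kappa\tilde u})\equiv 0$, yet $\tilde u=t^{-2}$ has a double pole. What excludes this is that $\Sigma$ is the curve itself (the paper's proof sets $\Sigma=\overline{\{P(x,e^{\kappa xy})=0\}}$), so distinct points of $\Sigma$ carry distinct pairs $(x,y)$.

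The paper's proof uses exactly this. It chooses $z$ with $z^r=(\kappa xy)^{-1}$, and Newton--Puiseux gives $x=\sum_{j\geq -M} a_j e^{\frac{j}{n}z^{-r}}$ on $V=U\cap\{|e^{z^{-r}}|<\epsilon\}$. Hence $z\mapsto e^{2\pi i/r}z$ preserves $x$ and $xy$ on $V$, and so on $U$ by the identity theorem. Then $r$ distinct points of $\Sigma$ have the same $(x,y)$, forcing $r=1$. The endgame you need therefore runs opposite to the one you sketch. It is not a clash between the $\Z/m$ rotation of the sectors and the monodromy of $P$ over $w=0$; the example shows no such clash exists. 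It is a proof that $x$ is invariant under that rotation, i.e.\ that the same Puiseux branch occurs in every sector where $e^{\kappa u}\to 0$, followed by the injectivity argument. A monodromy comparison of the kind you gesture at is a reasonable tool for establishing that invariance (the paper takes it directly from the single expansion on all of $V$). Its target, however, is invariance, not a contradiction.

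Independently, the key step of your proposal is false as written. You assert $x=t^k e^{g(t)}$ with $g$ having at worst a pole at $t=0$. The zeros and poles of $x$ are isolated in the punctured disc. But $e^{\kappa u}$ takes every value of $\C^*$ infinitely often near $t=0$, so they generally accumulate at $t=0$. In the legitimate example $x=e^{\kappa u}-1$ (simple pole, so an honest instance of the lemma), $x$ vanishes whenever $\kappa u\in 2\pi i\Z$. Then $\log x$ has no single-valued branch on any punctured neighbourhood, and everything built on it collapses:
\begin{itemize}
\item the sector-matching that produces a common $\rho$;
\item the form $x=t^ke^{\rho\kappa u+\mathrm{hol}}$;
\item the grouping of the terms of $P$ by the exponents $i\rho+j$;
\item the fallback via the order of $dx/x$.
\end{itemize}
Indeed, had that form held, $x$ would be zero-free near $z_0$. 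Your grouping argument would then force $P$ to be a binomial $x^q-cw^p$, which is certainly not true of general exponential curves.
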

\begin{proof}
	By the Newton-Puiseux theorem there exists a non-zero integer $n$ such that, for all $xy$ with $|e^{\kappa xy}|<\epsilon$
	\begin{equation}\label{e:xiny}
		x(xy)=\sum_{m=-M}^{\infty}a_me^{\kappa \frac{m}{n}xy}.
	\end{equation}
	Assume $xy$ has a non-simple pole at $p\in\Sigma:=\overline{\{P(x,e^{\kappa xy})=0\}}$ and let $z$ be a coordinate chart near $p$ such that $z^r=(\kappa xy)^{-1}$ is valid in some open neighbourhood $U$ of $p$. Then the expansion \eqref{e:xiny} takes the form
	\begin{equation}
		x(z)=\sum_{m=-M}^{\infty}a_me^{\frac{m}{n}z^{-r}},
	\end{equation}
	which is valid in the open set $V=U\cap\{|e^{z^{-r}}|<\epsilon\}\neq\emptyset$, where the set is non-empty as $V$ must contain $p$ as a limit point by the great Picard theorem. Note that the biholomorphic map $z\mapsto e^{2\pi i/r}z$ preserves both $x$ and $xy$ on $V$, so by the identity theorem it preserves both $x$ and $xy$ on $U$. Ergo, the map $z:\Sigma\to\mathbb{P}^1$ maps the same point $(x(q),y(q))$, for $q\in U$, to $r$ different values in $\mathbb{P}^1$, which implies $r=1$.
\end{proof}

One should observe that the poles of $xy$ are exactly the elements of $R_\infty$, and away from these points the curve $P(x,e^{\kappa xy})=0$ has only finitely many ramification points, all of which are in $R_0$. Near the poles of $xy$, as the poles are simple, $xy$ is an injective function and thus a valid local coordinate choice on $\Sigma$. The next lemma establishes that the function $x$ is periodic in this coordinate.

\begin{lemma}\label{l:expperx}
	Let $\mathcal{S} = \left(\Sigma,\, x,\, y, \omega_{0,2}\right)$ be an exponential spectral curve. Let $U\subset\Sigma$ be an open neighbourhood of a pole of $z \coloneq xy$. Then $x$, considered as a function of $z$, is periodic.
\end{lemma}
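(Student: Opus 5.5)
The plan is to reuse the local picture from the proof of \Cref{l:simppole}. Let $p\in U$ be the pole of $z=xy$. By \Cref{l:simppole} the pole is simple, so after shrinking $U$ the function $w\coloneq(\kappa z)^{-1}$ is a genuine coordinate on $U$, with $w(p)=0$.

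Near $p$ we have $|e^{\kappa z}|\to 0$ or $\infty$ along suitable directions. Treat the case $e^{\kappa z}\to 0$ first; the other case follows by replacing $e^{\kappa z}$ with $e^{-\kappa z}$, i.e.\ by exchanging the roles of the coefficients of $P$. Write $Y\coloneq e^{\kappa z}$. The equation $P(x,Y)=0$ is a polynomial relation in $(x,Y)$, so by Newton--Puiseux the branch of $x$ through the relevant region has a convergent expansion
\begin{equation*}
x=\sum_{m\geq -M}a_m Y^{m/n}
\end{equation*}
for $|Y|<\epsilon$. This is exactly \eqref{e:xiny}:
\begin{equation*}
x(z)=\sum_{m\ge -M}a_m e^{\kappa\frac{m}{n}z},
\end{equation*}
valid on $V=U\cap\{|e^{\kappa z}|<\epsilon\}$. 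As noted in the proof of \Cref{l:simppole}, $V$ is nonempty and accumulates at $p$ (great Picard). In the coordinate $z$, $V$ contains a half-plane-like region $\{\Re(\kappa z)<\log\epsilon\}\cap\{|z|>R\}$. The right-hand side is manifestly invariant under $z\mapsto z+2\pi i n/\kappa$.

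The key step is to transport this periodicity from $V$ to all of $U$. Set $T(z)=z+2\pi i n/\kappa$. In the coordinate $w$ this map is $w\mapsto w/(1+2\pi i n w)$, which is biholomorphic near $w=0$ and fixes $p$. After shrinking $U$ to a small disc around $p$ that is mapped into $U$ by $T$, the functions $x$ and $x\circ T$ are both meromorphic on $U\setminus\{p\}$, which is connected. They agree on the open set $V\cap T^{-1}(V)$, and this set is nonempty because the half-plane region is $T$-invariant. The identity theorem therefore gives $x\circ T=x$ on $U\setminus\{p\}$, so $x$ is periodic in $z$ with period $2\pi i n/\kappa$ (possibly with a smaller minimal period).

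The main obstacle is justifying the Puiseux expansion in the variable $Y$ uniformly on the region $V$. Specifically, the branch of $x$ on $V$ must be a single Puiseux branch and must not switch branches as $z$ moves within $V$. To handle this, I would note that $V$'s image under $Y$ is a punctured disc $0<|Y|<\epsilon$. The map $z\mapsto Y$ is a covering of that punctured disc by the half-plane, so $x$ restricted to the half-plane is a lift of a multivalued algebraic function of $Y$ along the universal cover. Choosing $n$ equal to the ramification index of the relevant branch over $Y=0$, the lift is single-valued in $Y^{1/n}=e^{\kappa z/n}$, and this gives the stated series. I must also rule out the possibility that $x$ tends to a finite nonzero value with no Puiseux form; this does not occur, since every branch of an algebraic function over a punctured disc has such an expansion.
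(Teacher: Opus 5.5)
Your proposal is correct and follows the paper's argument. Both take the Newton--Puiseux expansion of $x$ in powers of $e^{\kappa z/n}$ from \Cref{l:simppole}, observe that it is manifestly invariant under $z\mapsto z+2\pi i n/\kappa$ on the region $|e^{\kappa z}|<\epsilon$, and extend the relation to $U$ by the identity theorem. Your additional care fills in details the paper leaves implicit: the single-branch issue via the universal cover $z\mapsto e^{\kappa z}$ of the punctured disc, and the nonempty $T$-invariant overlap. Your period $2\pi i n/\kappa$ is also the correct one; the paper's displayed $2\pi i n/(m\kappa)$ should not depend on the summation index $m$.
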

\begin{proof}
	In the last lemma, it was established that, for $|e^{\kappa z}|$ sufficiently small and in $U$
	\begin{equation}
		x(z)=\sum_{m=-M}^{\infty}a_me^{\kappa \frac{m}{n}z}\,.
	\end{equation}
	Thus, locally near this pole and for $|e^{\kappa z}|$ sufficiently small 
	\begin{equation}
		x(z) = x\left(z + \frac{2\pi i n}{m\kappa} \right)\,.
	\end{equation}
	By the identity theorem, this relation must hold on all of $U$.
\end{proof}

Before defining topological recursion for exponential curves, the issue of admissibility must be addressed.

\begin{definition}
	An exponential spectral curve is called \emph{admissible} if, for every finite ramification point the regular definition of admissibility, \Cref{d:CondSCAdm}, holds.
\end{definition}

With the condition for admissibility established, it is now an appropriate time to define the topological recursion on exponential curves.

\begin{definition}\label{d:TRexp}
	Given an exponential spectral curve $\mathcal{S}$, the topological recursion is defined to be that of \Cref{d:CndTR}, except that the sum over $R_0$ is now, possibly, infinite.
\end{definition}

One should immediately check that the above definition makes sense, i.e. the sum over $R_0$ converges. This is the content of the lemma below.

\begin{lemma}
	Given the above definition of the topological recursion, the sum over $R_0$ converges and produces well-defined multidifferentials $\omega_{g,n}$.
\end{lemma}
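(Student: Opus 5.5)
The plan is to reduce the statement to \Cref{l:converge} by working locally near each pole of $z \coloneq xy$, exactly as the discussion after \Cref{l:simppole} suggests. The ramification locus splits as $R = R_0 \sqcup R_\infty$, with $R_\infty$ the (finitely many, since $\Sigma$ is compact) poles of $xy$. Away from small disks $U_p$ around each $p\in R_\infty$, $x$ has only finitely many ramification points, so only finitely many terms of the sum over $R_0$ live outside $\bigcup_p U_p$. These contribute finitely many residues, each well-defined by the usual Bouchard--Eynard argument (\Cref{d:CndTR}). It therefore suffices to show that, for each fixed $p\in R_\infty$, the sum over the ramification points accumulating at $p$ converges absolutely.

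Fix such a $p$ and use the coordinate $z=xy$ on $U_p$. By \Cref{l:simppole} this is a genuine local coordinate, and $U_p$ corresponds to a neighbourhood $\{|z|>\rho\}$ of $z=\infty$. By \Cref{l:expperx}, $x$ is periodic in $z$ there, with period $\tau = 2\pi i n/\kappa$ (taking the minimal one). By the expansion \eqref{e:xiny}, $x$ is a finite Laurent polynomial-plus-power-series in $e^{\kappa z/n}$ on the half-plane where $|e^{\kappa z}|$ is small. The ramification points in $U_p$ are therefore, up to the finitely many outside a half-strip, of the form $a_j + k\tau$ with finitely many $a_j$. Moreover, $M_2\coloneq xy = z$ trivially has a pole-like singularity along them, in the sense of \Cref{d:pole-like}, with $\phi\equiv 0$ and $c_y=1$. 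In this coordinate we have $\omega_{0,1}=ydx = z\,dx/x$, so the kernel involves $x/dx$ and differences $\sigma(z)-z$. This is precisely the $\C^*$-admissible situation of \Cref{d:admissper}.

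I would then rerun the inductive argument of \Cref{l:converge} verbatim. The steps are as follows. First, establish the periodicity of the local deck transformations, \eqref{e:sigexp}, from $x(z+\tau)=x(z)$. Second, bound $|M_2(\sigma(z))-M_2(z)|^{-1}=|\sigma(z)-z|^{-1}$ uniformly in $k$ on $k$-independent circles. Third, observe that $x/dx$ has $k$-independent expansions by periodicity. Fourth, show that $\omega_{0,2}$ is tamed on $R_0\cap U_p$, and conclude inductively that every stable $\omega_{g,n}$ has the form \eqref{e:convergeform} with coefficients bounded in $k$. Absolute convergence of the residue sum then follows, since each residue contributes $d\xi^{-l}_{a+k}(z_0)$ with bounded coefficients, and the $d\xi^{-l}_{a+k}(z_0)$ decay like $|z_0-a-k|^{-l-1}$ with $l\ge1$. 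The same estimate also shows that the limit is a meromorphic multidifferential away from $R_0$.

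The main obstacle is the tameness of $\omega_{0,2}$, together with the fact that $x$ is only periodic on part of $U_p$. For the first point: since the curve is global, $\omega_{0,2}$ is the canonical bidifferential on the compact surface $\Sigma$. In the coordinate $z$ near $p$ it is $dz_1dz_2/(z_1-z_2)^2$ plus a term holomorphic near $(p,p)$. The regular part is bounded, and its values when $z_1$ is near $p$ but $z_2$ is far away are bounded by compactness. This gives \Cref{d:tamed}. For the second point: the accumulating ramification points only occur where \eqref{e:xiny} holds, i.e. in a half-plane $\Re(\kappa z)\ll0$. There I would use the identity theorem, as in \Cref{l:expperx}, to extend periodicity to the whole punctured disk, and I would check that the deck transformations stay within a $k$-independent radius. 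If the naturality assumption of \Cref{d:natexp} turns out to be needed to guarantee that the finitely many branches $\mf{f}_{a+k}$ all stay inside $U_p$, I would invoke it at this point.
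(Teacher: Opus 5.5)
Your proposal is correct and follows the paper's proof. You localise at each pole of $xy$, use $z=xy$ as the coordinate in which $x$ is periodic (\Cref{l:simppole,l:expperx}), and rerun the $\C^*$-admissible argument of \Cref{l:converge} with $M_2=z$, filling in details the paper leaves implicit (only finitely many ramification points away from the poles, and the regular part of $\omega_{0,2}$ being holomorphic at $p$ in the chart $1/z$, hence decaying in $z$, which is what makes the $d\xi^{-l}_{a+k}(z_0)$ summable). Two harmless slips: the accumulating ramification points lie on lines $\Re(\kappa z)=\mathrm{const}$ rather than only in the half-plane where \eqref{e:xiny} holds, which your identity-theorem extension of periodicity already covers; and naturality is not needed, since local Galois conjugates of points near $a+k\tau$ stay near $a+k\tau$ uniformly in $k$.
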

\begin{proof}
	This proof proceeds in a nearly identical manner to that of \Cref{l:converge}. The argument regarding the periodicity property of the deck transformations carries over with the slight modification that it must be done locally in a neighbourhood of every pole of $z \coloneq xy$. Fixing an open neighbourhood $U$ of a pole of $z$ such that $z$ is a valid local coordinate on $U$, claim inductively that the stable correlators take the form \eqref{e:convergeform} for $z_0\in U$
	\begin{equation}
		\omega_{g,n}(z_0,z_{\llbracket n-1 \rrbracket})=\Omega^U_{g,n}(z_0 | z_{\llbracket n-1 \rrbracket})+\sum_{|k| \geq K^U_{g,n}}\sum_{l=2}^{N^U_{g,n}}\frac{a_l^{k,U}(z_{\llbracket n-1 \rrbracket})}{(z_0-a^U-kp^U)^l}\,,
	\end{equation}
	where for simplicity we assumed all ramification points in $U$ are related to $a^U$ via a period $p$, $\Omega^U$ is holomorphic in $z_0$ on $U$ for fixed $z_1,\dots,z_{n-1}$ and $a^{k,U}_l = \mc{O}(k^0)$. Then proceed with the argument of \Cref{l:converge} for each open set $U$ to get the desired claimed.
\end{proof}

Next, regularity is defined.

\begin{definition}\label{d:regexp}
	An exponential spectral curve $\mc{S}$ is called \emph{regular} if it is genus zero and $xy\in\Aut(\Sigma)$.
\end{definition}

This definition is no surprise, based on the definition of regularity presented for periodic curves. As with periodic curves, for a justification of this definition, consult the discussion just before \cite[Definition 5.6]{BKW24}. As with periodic curves, in the language of \Cref{d:CondSCAdm}, $s_a=1,r_a+1$ for all $a\in R_0$ for regular curves, and regular, perforce, implies admissible. Furthermore, regular exponential curves take on a very simple form.

\begin{proposition}\label{p:formexp}
	A regular exponential spectral curve must take the form
	\begin{equation}
		\mc{S} = \left(\P^1,\, x(z) = \frac{P_1(e^{\kappa z})}{P_2(e^{\kappa z})},\, y = z/x(z),\, \omega_{0,2}(z_1,z_2) = \frac{dz_1dz_2}{(z_1-z_2)^2}\right)\,,
	\end{equation}
	where $z\in\Aut(\P^1)$ is a coordinate on $\P^1$ and $P_1,P_2$ are coprime polynomials. The corresponding polynomial equation is then
	\begin{equation}
		P(x,e^{\kappa xy}) = xP_2(e^{\kappa xy}) - P_1(e^{\kappa xy})\,.
	\end{equation}
\end{proposition}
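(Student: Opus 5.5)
Since the curve is regular, $\Sigma$ has genus zero and $z\coloneq xy\in\Aut(\Sigma)$, so $z$ is an affine coordinate identifying $\Sigma\cong\P^1$. Because $z=xy$ and $y = z/x$, it suffices to determine $x$ as a function of $z$. We may also take $\omega_{0,2}$ to be the standard genus zero Bergman kernel $dz_1dz_2/(z_1-z_2)^2$. This holds because a regular exponential curve is in particular global, hence globally and canonically polarised and compact of genus zero, so $\omega_{0,2}$ is the unique residueless bidifferential with only a biresidue-one double pole on the diagonal.

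The plan is then as follows. First, by \Cref{l:simppole} and the remark following it, $R_\infty$ consists exactly of the poles of $xy=z$. Since $z$ is an automorphism, its only pole is $z=\infty$, so $x$ is meromorphic on $\C$ with at most an essential singularity at infinity. Second, apply \Cref{l:expperx} on a neighbourhood $U$ of $z=\infty$. From the Newton--Puiseux expansion in its proof,
\begin{equation}
  x(z)=\sum_{m\geq -M}a_m e^{\kappa\frac{m}{n}z}
\end{equation}
on the region $|e^{\kappa z}|<\epsilon$, and $x$ is periodic with period $2\pi i n/\kappa$. Since $U\setminus\{\infty\}$ is connected to all of $\C$ and $x$ is meromorphic there, the identity theorem extends the periodicity to all of $\C$. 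After replacing $\kappa$ by $\kappa/n$, which is permitted by the rescaling remark after \Cref{d:expc} once we check the equation still has the stated polynomial form, $x$ factors through $w=e^{\kappa z}\in\C^*$ as a meromorphic function $g$ on $\C^*$.

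Third, and this is the main step, show that $g$ is rational. The defining relation $P(x,e^{\kappa xy})=P(g(w),w)=0$ holds identically with $P$ an irreducible polynomial. So $g$ is an algebraic function of $w$ that is single-valued and meromorphic on $\C^*$. It is therefore a meromorphic function on the compact normalisation of the curve $\{P(X,w)=0\}$, which is a finite cover of $\P^1_w$ and is single-sheeted because $g$ is single-valued. Hence $g$ has at worst algebraic (not essential) behaviour at $w=0,\infty$. Concretely, Puiseux expansions at $w=0$ and $w=\infty$ have finitely many negative powers, and single-valuedness removes the fractional ones, so $g$ is meromorphic on $\P^1_w$ and thus rational: $g=P_1/P_2$ with $P_1,P_2$ coprime. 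The obstacle is making the single-sheet argument rigorous. The fact I would lean on is that the map $z\mapsto(x(z),e^{\kappa z})$ is injective onto the curve, because $z$ is a global coordinate on $\Sigma$. Combined with the irreducibility of $P$, this gives $\deg_X P=1$ after clearing denominators.

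Finally, substitute back. From $x=P_1(e^{\kappa z})/P_2(e^{\kappa z})$ we get $xP_2(e^{\kappa xy})-P_1(e^{\kappa xy})=0$, which is irreducible because $P_1$ and $P_2$ are coprime. Therefore this is the polynomial equation of the curve, up to the scalar allowed by irreducibility.
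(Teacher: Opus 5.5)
Your argument is correct. It shares the paper's skeleton: identify $\Sigma\cong\P^1$ via $z=xy$, fix $\omega_{0,2}$ as the genus-zero Bergman kernel, then show $x$ is a rational function of $e^{\kappa z}$. But it fills in the step the paper compresses into one sentence. The paper simply asserts that because $x$ is a function of $xy$, the relation $P(x,e^{\kappa xy})=0$ must be linear in $x$. You instead proceed in four steps:
\begin{enumerate}
\item obtain periodicity from \Cref{l:expperx}, extended to all of $\C$ by the identity theorem;
\item factor $x$ through $w=e^{\kappa z}$;
\item get rationality from the polynomial growth of roots of an algebraic equation at $w=0,\infty$;
\item deduce linearity from uniqueness of the irreducible relation.
\end{enumerate}

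Your route buys something real: it exposes the rescaling of $\kappa$, which is actually necessary. For example, $x(z)=e^{z/2}-e^{z}$ with $\kappa=1$ gives a regular exponential curve whose irreducible relation $(X+W)^2-W=0$ is quadratic in $X$. So the proposition only holds for a suitable $\kappa$ (here $\kappa=1/2$), and the paper's one-line claim tacitly assumes such a choice.

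The rescaling is not what the remark after \Cref{d:expc} provides. That remark rescales $y$, which would change your coordinate $xy$. The rescaling comes simply from $e^{\kappa xy}=(e^{(\kappa/n)xy})^n$: with $x,y$ unchanged, they satisfy an irreducible factor of $P(X,W^n)$. It is that factor, not $P$ itself, that you should use in your third step.

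One sentence should be deleted. The map $z\mapsto(x(z),e^{\kappa z})$ is not injective: after your rescaling, both components are invariant under $z\mapsto z+2\pi i/\kappa$. You do not need this claim. That the curve is single-sheeted over $w$ follows in either of two ways:
\begin{itemize}
\item from transitivity of monodromy for an irreducible polynomial, since a single-valued branch over $\C^*$ minus finitely many points forces $\deg_X=1$;
\item exactly as in your final paragraph: once $g=P_1/P_2$ is rational, $XP_2(W)-P_1(W)$ is irreducible. Two non-proportional irreducible polynomials have only finitely many common zeros, whereas both vanish on the infinite set $\{(g(w),w)\}$.
\end{itemize}
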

\begin{proof}
	The curve is genus zero, so the associated Riemann surface is isomorphic to $\P^1$. As $xy \in\Aut(\P^1)$ by \Cref{d:regexp}, the form of $\omega_{0,1}$ is clear. As $\mathcal{S}$ must be global and genus zero, the form of $\omega_{0,2}$ is fixed.\footnote{It is a standard result that in genus zero there is only one symmetric bidifferential whose only pole is a double pole on the diagonal with birsediue one. See, for example, \cite{EO07}.} Then, as $xy$ can be considered as a global affine coordinate on $\P^1$, $x$ must be a function purely of $xy$, so $P(x,e^{\kappa xy}) = 0$ must be linear in $x$. This immediately yields the claimed form of $x$.
\end{proof}
 
\subsection{Properties}\label{ss:propexp}

 This section will proceed analogously to \Cref{ss:propper}; first it will establish, for any exponential curve, the existence of a sequence of spectral curves converging to the exponential curve in question, and then use this sequence to deduce the properties of the limiting curve.

\begin{theorem}\label{t:seqexp}
	Given a natural admissible exponential spectral curve $\mathcal{S} = \left(\Sigma,\, x,\, y,\, \omega_{0,2}\right)$ there exists a sequence of algebraic curves $\mathcal{S}_N = \left(\Sigma,\, x_N,\, y_N \coloneq yx/x_N,\, \omega_{0,2}\right)$ such that the $\omega^N_{g,n}$ constructed from $S_N$ converge to the $\omega_{g,n}$ constructed from $\mathcal{S}$.
\end{theorem}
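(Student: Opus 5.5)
The plan is to imitate the proof of \Cref{t:seqper}, now working locally near each pole of $z\coloneq xy$. The curve satisfies $P(x,e^{\kappa xy})=0$, so the natural approximating curves replace the exponential by its finite-$N$ truncation. Concretely, I would define $x_N$ implicitly on $\Sigma$ by
\begin{equation}
	P\Big(x_N,\big(1+\tfrac{\kappa xy}{N}\big)^N\Big)=0\,,
\end{equation}
choosing the branch of $x_N$ that converges to $x$ locally uniformly away from $R_\infty$. This is the same device used in \eqref{e:P_N} in the proof of \Cref{t:TR/QCess}. With $y_N\coloneq xy/x_N$ one has $x_Ny_N=xy=z$, so $\omega^N_{0,1}=y_N\,dx_N$ and $P_N(x_N,y_N)=P(x_N,(1+\kappa x_Ny_N/N)^N)$ is a polynomial relation. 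Hence each $\mc{S}_N$ is algebraic, provided $\Sigma$ is kept as the same compact surface (genus and $\omega_{0,2}$ unchanged).

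I need naturality to make this construction legitimate. The exponential $e^{\kappa z}$ tends to $0$ or $\infty$ only near the poles of $z$, where naturality forces $x\to 0,\infty$. For finite $N$, $(1+\kappa z/N)^N$ has an order-$N$ pole there, and its zero at $z=-N/\kappa$ lies over $x_N\in\{0,\infty\}$. So the branch structure of $x_N$ is that of $x$, except near the poles of $z$, where it acquires a genuine finite-order pole or zero. As a result, the finite ramification locus of $x_N$ converges to $R_0$, while the extra ramification points escape to $R_\infty$; near $z=-N/\kappa$ they behave as in the $z=N$ point in \Cref{t:TR/QCess}.

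Convergence of the correlators then has three parts. Away from the poles of $z$ there are only finitely many ramification points, and local convergence of $\omega^N_{g,n}$ follows from \cite[Theorem 5.8]{BBCKS23}, exactly as in \Cref{t:seqper}. Near each pole $U$ of $z$, \Cref{l:expperx} shows that $x$ is periodic in $z$, so $x_N$ is a \emph{quasi}-periodic approximation. I would then run the induction of \Cref{t:seqper}: claim an expansion of $\omega^N_{g,n+1}$ in $d\xi^{-l}_{a+kp}$ with coefficients bounded uniformly in $k$ and $N$, and bound each ingredient of the kernel separately.
\begin{itemize}
	\item The factors $1/(z(\sigma(w))-z(w))$ are handled by \eqref{e:y-kernel-bound} with $M_2=z$, which has a simple pole by \Cref{l:simppole}.
	\item The $\omega_{0,2}$ factors and the inverse powers of deck transformations are handled by the Montel--Carathéodory normal family argument together with Hurwitz's theorem.
	\item The factor $x_N/dx_N$ is handled by writing $\log(1+\kappa z/N)^N=\kappa z-\kappa^2z^2/(2N)+\dots$. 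For ramification points with $|kp|\ll N$ this gives $x_N/dx_N=x/dx+o(1)$ uniformly; this replaces the integration-by-parts estimate \eqref{e:x/dxexp}.
\end{itemize}
The Weierstrass $M$-test then commutes $\lim_N$ with the sum over $k$.

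The main obstacle is the uniform lower bound on $dx_N/x_N$ at ramification points with $|k|$ comparable to $N$, together with control of the spurious ramification points near $z=-N/\kappa$. In \Cref{t:seqper} the explicit product formula supplied the rising-factorial bound; here I would instead use that in $w=z/N$, $x_N$ is a rational function of $(1+\kappa w)^N$, which is exponentially close to $e^{\kappa z}$ only for $|w|$ small. So I would split the sum over $k$ into $|k|\le \delta N$, where the perturbative estimate works uniformly, and $|k|>\delta N$. For the latter, the plan is to show that the contributions are suppressed by powers of $N$, in the style of the \cite[(78)]{BKW24} estimate used in \Cref{t:TR/QCess}. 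I expect the scaling of the kernel there to give $\mc{O}(N^{1-2g-n})$ per ramification point, which may fail to beat the $\mc{O}(N)$ count of points. If so, I would fall back on rescaling $N\to N^2$ in the truncation, which pushes these points further out at the cost of a cruder but sufficient bound.
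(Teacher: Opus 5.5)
The parts of your proposal that match the paper are the construction of $\mathcal{S}_N$ through $\exp_N(z)=(1+\kappa z/N)^N$, the induction on $2g+n-2$, the appeal to \cite[Theorem 5.8]{BBCKS23} away from the poles of $xy$, the use of \cite[(78)]{BKW24} with naturalness at the single high-order point $z=-N/\kappa$, and the closing $M$-test. The gap is exactly the step you flag as the main obstacle, and your remedy does not work.

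Take $\kappa=1$ and write $x_N=\bar x\circ\exp_N$ in the coordinate $z$, where $\bar x$ belongs to the algebraic curve $P(\bar x,w)=0$ with ramification locus $\bar R$. The ramification points of $x_N$ over $\bar a\in\bar R$ are the $N$ points $a^N_m=N(\bar a^{1/N}e^{2\pi i m/N}-1)=\log\bar a+2\pi i m+\mathcal{O}(m^2/N)$. They lie on a circle of radius $\approx N$ about $-N$. These are ordinary bounded-order ramification points, not of the type controlled by \cite[(78)]{BKW24}, and their local contributions are \emph{not} suppressed by powers of $N$. For instance, in genus zero the residue defining $\omega^N_{0,3}$ at a simple point $a$ is $\frac{x_N(a)}{x_N''(a)}\prod_{i=1}^{3}(a-z_i)^{-2}\,dz_1dz_2dz_3$ (using $y_N=z/x_N$), and $|x_N(a)/x_N''(a)|$ takes the same value at every $a^N_m$. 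Far points are harmless only because they are far from fixed arguments: once the coefficients are uniformly bounded, $\sum_{|m|>\delta N}|z_0-a^N_m|^{-l}=\mathcal{O}(N^{1-l})$. But the induction needs that uniform coefficient bound \emph{at} the far points, because the $\mathcal{W}$ factors there involve lower correlators evaluated near $a^N_m$, which are of order one. So no suppression can be bootstrapped. Replacing $N$ by $N^2$ merely passes to a subsequence of the same family, so it cannot improve any estimate.

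The missing idea is what replaces periodicity at finite $N$. The deck transformations of $\exp_N$ are the rotations $\sigma(w)=-N+\theta(w+N)$ with $\theta^N=1$. They are Euclidean isometries with $x_N\circ\sigma=x_N$, they act transitively on $\exp_N^{-1}(\bar a)$, and they conjugate the local deck transformations at $a$ to those at $\sigma(a)$. Hence the following quantities at any point of $\exp_N^{-1}(\bar a)$ coincide with their values at a point $a_N\to\log\bar a$ in a fixed compact set, where $x_N/dx_N\to x/dx$ uniformly:
\begin{itemize}
\item $|\tau w-w|$ and the distances between fibre points;
\item $|x_N/dx_N|$;
\item $|(y_N(\tau w)-y_N(w))\,dx_N(w)|=|(\tau w-w)\,d\log x_N(w)|$.
\end{itemize}
Concretely, $\frac{x_N(w)}{x_N'(w)}=(1+\frac{w}{N})\frac{x(w')}{x'(w')}$ whenever $e^{w'}=\exp_N(w)$, and $|1+w/N|=|\exp_N(w)|^{1/N}\to1$ uniformly near $\exp_N^{-1}(\bar R)$. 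So your obstacle disappears without any split at $|k|\sim\delta N$. Combined with the uniform separation of these points (Lemma \ref{l:mindist}), this bounds the $c^{a,N}_l$ uniformly in $a$ and $N$, and \eqref{e:sumNexp} then compares the finite sum with the limiting one.

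Three smaller issues remain:
\begin{itemize}
\item Your expansion in the fixed basis $d\xi^{-l}_{a+kp}$ is not valid at finite $N$, since the poles of $\omega^N_{g,n}$ sit at the moving points $a^N_m$.
\item The normal-family/Hurwitz step you borrow from \Cref{t:seqper} would itself need uniform separation of the omitted values.
\item You must still dispose of $z=\infty$. There naturalness gives $\omega^N_{0,1}=z\,d\log x_N$ a double pole, so it never contributes.
\end{itemize}
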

\begin{remark}\label{r:concol}
	For this proof the assumption will be made that the topological recursion is well-behaved with respect to limits when there are finitely many ramification points that all have uniformly bounded multiplicity. This is certainly true when these ramification points do not collide (one can just use dominated convergence to commute the limit with the residue), and in cases where the ramification points do collide the results of \cite{BE13} suggested that there are still no issues. However, in \cite{BBCKS23} it was realised that, in some edge cases, there can be issues; these will be ignored here, and the concerned reader should consult \cite{BBCKS23}, in particular Theorem 5.8.\footnote{Theorem 5.8 in \cite{BBCKS23} requires analytic dependence on a complex parameter. This is actually $N$ here; even though the dependence of $x_N$ on $N$ is not analytic, the dependence of $\omega_{0,1}^N = yx d\log(x_N)$ on $N$ is.} 
\end{remark}
\begin{proof}
	As $\mathcal{S}$ is an exponential curve $x$ and $y$ must satisfy, for some polynomial $P$, an equation of the form
	\begin{equation}
		P(x,e^{\kappa xy}) = 0 \,.
	\end{equation}
	By rescaling $y$, one can set $\kappa =1$ while only rescaling the correlators $\omega_{g,n}$ by the constant $\kappa^{2g+n-2}$, so, without loss of generality, $\kappa = 1$. Define $\mathcal{S}_N$ as the algebraic spectral curve such that $x_N$ and $y_N$ satisfy
	\begin{equation}
		P(x_N,\exp_N(x_Ny_N)) = 0 \,,
	\end{equation}
	where $\exp_N(z) = (1+z/N)^N$ and put $\omega_{0,1}^N \coloneq y_N dx_N = x_Ny_N d\log(x_N)$. The function $z\coloneq x_Ny_N$ has only simple poles for sufficiently large $N$ by \Cref{l:simppole} and is, therefore, a valid coordinate on an open neighbourhood of each such simple pole. Outside these open neighbourhoods there are only finitely many ramification points all with bounded multiplicity so there should be no convergence issues (see \Cref{r:concol}).
	
	There are now two main problems to be overcome: first, the number of ramification points becomes infinite, and it is a priori not clear whether the limit as $N\to\infty$ may be commuted with this infinite sum; second, the contribution from the ramification points where $x_Ny_N = N,\infty$ must vanish in the limit. The first issue will be resolved first and the second, second.
	
	Consider the algebraic spectral curve $\bar{\mathcal{S}}$ defined through $P(x,xy) = 0$ with ramification locus $\bar{R}$ and meromorphic functions $\bar{x},\bar{y}$. Choose an open neighbourhood $U$ of a simple pole of $z=x_Ny_N$ such that $z$ is a well-defined coordinate on $U$. Then notice that, in the coordinate $z$ there is the relation $x_N = \bar{x}\circ \exp_N$ so the location of the ramification points of $x_N$ in $U$ (other than potentially $z=N,\infty$) are $V_N\coloneq U\cap \exp_N^{-1}(\bar{R})$; also define $V_\infty \coloneq U\cap \exp^{-1}(\bar{R})$.
	
	Pause to prove the following lemma.
	\begin{lemma}\label{l:mindist}
		There exists $\delta>0$ (independent of $N$) such that all points in $V_N$ are at least a distance of $\delta$ apart in the coordinate $z$ (for all $N$). Furthermore, for any such $\delta>0$ and for every $z\in U \setminus \bigcup_{a\in V_N}B_{\delta}(a)$ it follows that $\exp_N^{-1}(\exp_N(z)) \cap \bigcup_{a\in V_N}B_{\delta}(a) = \emptyset$ . 
	\end{lemma}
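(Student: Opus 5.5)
We work in the coordinate $z$ on $U$, where $U$ is a neighbourhood of the simple pole $z=\infty$ of $x_Ny_N$. For $|z|$ large we have $\exp_N(z)=e^{N\log(1+z/N)}$, so $\exp_N$ is locally injective away from $z=-N$. Let $\bar{R}$ be the finite ramification locus of $\bar{x}$. By naturality of $\mc{S}$, the points of $\bar{R}$ relevant in $U$ have $\exp$-values $w_j\in\C^*$, which we take finite in number and bounded away from $0$ and $\infty$.

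The strategy is to compare $V_N$ with $V_\infty=U\cap\exp^{-1}(\bar{R})=\{\log w_j+2\pi i k\}$. The set $V_\infty$ is a finite union of translates of the lattice $2\pi i\Z$, so it has a minimal separation $\delta_\infty>0$. The claim is then that $V_N$ is close to $V_\infty$ in a controlled way. Solving $\exp_N(z)=w_j$ gives
\begin{equation}
z=N\big(w_j^{1/N}e^{2\pi i k/N}-1\big),\qquad k=0,\dots,N-1 .
\end{equation}
These points lie on a circle of radius about $N|w_j|^{1/N}$ centred at $-N$. Two consecutive points for fixed $j$ are separated by
\begin{equation}
N|w_j|^{1/N}\,|e^{2\pi i/N}-1| = 2N|w_j|^{1/N}\sin(\pi/N)\geq 2\cdot\tfrac{2}{\pi}\cdot\pi\,|w_j|^{1/N},
\end{equation}
which is bounded below uniformly in $N$, since $|w_j|^{1/N}\to1$ and the $w_j$ lie in a compact set away from $0$. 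For distinct $j,j'$, the points are $N w^{1/N}e^{2\pi i k/N}$ with $N w^{1/N}=N+\log w+\mc{O}(1/N)$. Their separation is therefore governed by $\log w_j-\log w_{j'}$ modulo $2\pi i$, up to $\mc{O}(1/N)$ and up to the angular spacing $2\pi/N$, whose effect is of size $2\pi i$. So the separation stays near a nonzero quantity bounded below by roughly $\min(\delta_\infty,\cdot)$.

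To make this uniform, the key step is the computation $\exp_N^{-1}(\exp_N(z))=\{-N+(z+N)e^{2\pi i k/N}\}$. All preimages of a value lie on one circle and differ by rotations. This yields a minimal separation $\delta_0>0$ valid for all $N\geq N_0$, and finitely many small $N$ can be handled separately by shrinking $\delta$. The main obstacle I expect is uniformity near the parts of $U$ where $|z|$ is comparable to $N$, since there the approximation $\exp_N\approx\exp$ fails. There I would shrink $U$ appropriately, or note that in those regions the angular spacing $2\pi N^{-1}\cdot N$ already gives the bound directly.

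For the second claim, suppose $z\notin\bigcup_{a\in V_N}B_\delta(a)$ while some conjugate $z'=-N+(z+N)e^{2\pi ik/N}$ lies in $B_\delta(a)$. Rotating back by $e^{-2\pi i k/N}$ is an isometry about $-N$ that preserves $V_N$, because $V_N$ is a union of full fibres of $\exp_N$. It maps $a$ to some $a'\in V_N$ with $|z-a'|=|z'-a|<\delta$, a contradiction. Strictly this needs $U$ to be replaced by its saturation under these rotations, which I would arrange by choosing $U$ invariant under them. This part is immediate, so the separation estimate is the real content.
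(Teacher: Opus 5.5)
Your proposal is correct and is essentially the paper's argument: the separation claim rests on the same explicit fibres $z=N(w^{1/N}e^{2\pi i k/N}-1)$ and the same expansion $Nw^{1/N}=N+\log w+\mathcal{O}(1/N)$, and the second claim is the same observation that $\sigma(w)=-N+\theta(w+N)$ is an isometry permuting the preimage set. The only difference in the first claim is that you argue directly, fibre by fibre, whereas the paper argues by contradiction that $l_N$ is bounded, then $a_N=b_N$, then $l_N=0$.

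Your cross-fibre step is only sketched. It closes as in the paper because the expansion is uniform in $k$. Either $k-k'$ is bounded modulo $N$, which gives distance $|\log w_j-\log w_{j'}+2\pi i(k-k')|+\mathcal{O}(1/N)\geq\delta_\infty-\mathcal{O}(1/N)$, or the term $N(e^{2\pi i k/N}-e^{2\pi i k'/N})$ dominates. So no shrinking of $U$ is needed.

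Your saturation caveat for the second claim is a genuine subtlety that the paper glosses over. It is best resolved by stating the lemma for the full rotation-invariant set $\exp_N^{-1}(\bar R)\subset\C$. Choosing $U$ itself invariant does not work in general: an $N$-independent $U$ invariant under all these rotations would have to contain the whole $z$-plane.
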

	\begin{proof}
		It will first be established that there exists a $\delta>0$ (independent of $N$) such that all points in $V_N$ are at least $\delta$ apart in the coordinate $z$. Proceed by contradiction and assume such a delta doesn't exist; there then are two sequences of points $p_N,q_N$ such that the distance between $p_N,q_N\in V_N$ goes to zero as $N\to\infty$, but $p_N\neq q_N$ for all $N$. Put
		\begin{equation}
			p_N = N\left(\exp\left(\frac{2\pi i m_N}{N}\right) a_N^{1/N}-1\right),\qquad q_N = N\left(\exp\left(\frac{2\pi i n_N}{N}\right)b_N^{1/N}-1\right) \,,
		\end{equation}
		where $0\leq m_N,n_N\leq N-1$ and $a_N,b_N\in \bar{R}$. Observe
		\begin{equation}
			p_N-q_N = N\exp\left(\frac{2\pi i m_N}{N}\right)-N\exp\left(\frac{2\pi i n_N}{N}\right) + \mathcal{O}(N^{0}) \,,
		\end{equation}
		so, defining $l_N = m_N-n_N\mod N$ such that $0\leq l_N\leq N-1$, it must be the case that $l_N/N=\mathcal{O}(N^{-1})$ (recall that the roots of unity are evenly distributed around a circle of fixed circumference) just for the limit of $p_N-q_N$ to be finite. Then (as $U$ is an open neighbourhood about $z=\infty$ it is safe to assume that neither $a_N$ nor $b_N$ are zero),
		\begin{equation}
			\exp\left(-\frac{2\pi i n_N}{N}\right)(p_N-q_N) = 2\pi i l_N + \log(a_N/b_N) + O(N^{-1}) \,,
		\end{equation}
		from which one may conclude $a_N=b_N$ for sufficiently large $N$ as $\bar{R}$ is finite. Now,
		\begin{equation}
			\exp\left(-\frac{2\pi i n_N}{N}\right)(p_N-q_N) = l_Na_N^{1/N} + O(N^{-1}) \,,
		\end{equation}
		so $l_N$ must also be zero for sufficiently large $N$. Thus, $p_N = q_N$ for sufficiently large $N$, so such a $\delta>0$ must exist.
		
		Finally, given any $\sigma$ such that $\exp_N \circ \sigma = \exp_N$, one can find a corresponding $N^\text{th}$ root of unity $\theta$ so that $\sigma(w) = -N + \theta ( w + N )$. Then notice
		\begin{equation}
			|\sigma(z)-a| = |\theta z + \theta N - N - a| = |z + N - \bar{\theta} N - \bar{\theta} a| = |z-\sigma^{-1}(a)| \geq \delta \,,
		\end{equation}
		as $\sigma^{-1}(a) \in V_N$.
	\end{proof}
	
	For each $a\in V_\infty$ choose an open neighbourhood $U_a\subset B_{\delta/2}(a)$ of $a$ such that $x$ restricts to a fully ramified Galois cover on each $U_a$. Then choose $\epsilon>0$ so small that $B_\epsilon(a)\subset U_a$ for every $a$. That such an $\epsilon>0$ exists despite the fact that $V_\infty$ is not of finite cardinality is guaranteed by the fact that $x$ is periodic (in coordinate $z$). Note that such an $\epsilon>0$ is a valid 'delta' in the above lemma.
	
	Now write, for the stable correlators
	\begin{equation}\label{e:claimform}
		\omega_{g,n+1}^N(z_0,z_{\llbracket n \rrbracket}) = H_{g,n+1}^N(z_0,z_{\llbracket n \rrbracket}) + \sum_{l=2}^{2g}\frac{k_{l,g,n+1}^N(z_{\llbracket n \rrbracket})}{(1+z_0/N)^l} + \sum_{a \in V_N} \sum_{l=2}^{L_{g,n+1}} \frac{c_{l,g,n+1}^{a,N}(z_{\llbracket n \rrbracket})}{(z_0-a)^l}dz_0\,,
	\end{equation}
	where $H_{g,n}^N$ is holomorphic and uniformly convergent in $z_0$ on $U$, $k_{l,g,n}^N = \mathcal{O}(N^{1-2g-2n})$, and 
	\begin{equation}\label{e:cbound}
		\left|\frac{c_{l,g,n}^{a,N}(z_{\llbracket n \rrbracket})}{dz_1\cdots dz_n}\right|\leq M_{l,g,n}(z_{\llbracket n \rrbracket})\,,
	\end{equation}
	for some non-negative functions $M_{l,g,n}(z_{\llbracket n \rrbracket})$. The claim is that the stable correlators take this form and that $\omega_{g,n+1}^N\to\omega_{g,n}$ in the limit as $N\to\infty$. This claim will be proven via induction on $2g+n-2$. For the induction beginning, note that the claim is obvious for the unstable correlators $(g,n) = (0,1),(0,2)$ as no specific form is claimed, and the convergence is clear. Proceed, then, to the induction step.
	
	First, as has been discussed, there are no issues at the finite ramification points away from the poles of $xy$. Next, examine the ramification points near the poles of $x_Ny_N$ other than the ones that occur at $x_Ny_N=-N$ using the form \eqref{e:claimform}. Focusing in on one such simple pole of $x_Ny_N$ and defining the open set $U$ as before, fix a branch of $\log z$ such that the branch cut does not pass through any element of the set $\bar{R}$. Then, for each $N$ and a given $\bar{a} \in \bar{R}$ fix a sequence $a_N\in V_N$ such that $\exp_N(a_N) = \bar{a}$ and $a_N\to\log(\bar{a})$ as $N\to\infty$. For each $N$ denote by $E_N\coloneq \{\sigma \mid \exp_N\circ\sigma = \exp_N\}$ the set of deck transformations of $\exp_N$. Fix $\sigma\in E_N$ and $m\in\mathbb{Z}_{\geq 1}$. The current goal is to show \eqref{e:cbound} by showing the following can be bounded independent of $\sigma$ and $N$
	\begin{equation}\label{e:prbound}
		\oint_{|z-a_N|=\epsilon}(\sigma(z)-\sigma(a_N))^{m} \sum_{\emptyset \neq Z \subseteq \mf{f}'_{a_N}(z)} \left(\frac{x_N(z)}{dx_N(z)}\right)^{|Z|} \prod_{z_0\in Z}\left(\sigma(z_0)-\sigma(z)\right)^{-1} \mc{W}_{g,n,|Z|+1}(\sigma(z), \sigma(Z) \mid z_{\llbracket n \rrbracket})\,.
	\end{equation}
	Notice, for any $z,z'\in U$ and some $N^\text{th}$ root of unity $\theta$
	\begin{equation}
		|\sigma(z)-\sigma(z')| = |\theta z + \theta N - N - (\theta z' + \theta N - N)| = |z-z'| \,,
	\end{equation}
	so $|\sigma(z)-\sigma(a_N)| = \epsilon$ and $|\sigma(z)-\sigma(z_0)| = |z-z_0|$ are bounded above and below independently of $N$, $\sigma$, and $a$. $x_N/dx_N$ does not depend on the choice of $\sigma$. As $a_N\to\log(\bar{a})$ the sequence $a_N$ is bounded so the closure of the set $\bigcup_N\{|z-a_N|=\epsilon\}$ is compact and $x_N/dx_N$ will therefore converge uniformly on this set; ergo, $x_N/dx_N$ can be bounded independently of $N$.
	
	The only remaining obstacle to the current goal, then, is the $\mc{W}_{g,n,|Z|+1}(\sigma(z), \sigma(Z) \mid z_{\llbracket n \rrbracket})$ factor. Given the form \eqref{e:claimform}, the $H_{g,n}^N$ portion is bounded independently of $N$ by definition. The $c_{l,g,n+1}^{a,N}$ coefficients are themselves bounded independently of $N$ and $|\sigma(z_0)-a|\geq \epsilon$ for all $a\in V_N$ so none of the individual $(z_0-a)^{-l}$ factors may blow up. Thus, every term in the sum over $V_N$ is bounded in the desired manner, but it is not clear if the entire sum is, as the cardinality of the set $V_N$ grows without bound.
	
	To ameliorate this problem first notice
	\begin{equation}
	N\left[\exp\left(\frac{1}{N}(\log\bar{a}+2\pi i m)\right)-1\right] = (\log(\bar{a})+2\pi i m)\left(1+\mathcal{O}(\frac{m}{N})\right)\,.
	\end{equation}
	The sum over $V_N$ may be divided into a sum over $E_N$ and a sum over $\bar{R}$; as there are only finitely many elements in the set $\bar{R}$, it suffices to focus on one specific element, say $\bar{a}\in \bar{R}$, and consider a term, for $l\in\mathbb{Z}_{\geq 2}$
	\begin{equation}\label{e:sumNexp}
		\begin{split}
		\sum_{a\in \exp_N^{-1}(\bar{a})}\frac{c_{l,g,n}^{a,N}}{(z_0-a)^l} &= \sum_{m=1-\lceil N/2 \rceil}^{\lfloor N/2 \rfloor}\frac{c_{l,g,n}^{a_m,N}} {\big( z_0-[\log \bar{a} + 2\pi i m][1+\mathcal{O}(m/N)] \big)^l} \\ 
		&= \sum_{m=1-\lceil N/2 \rceil}^{\lfloor N/2 \rfloor}\left[\frac{c_{l,g,n}^{a_m,N}} {\left( z_0 - \log \bar{a} - 2\pi i m \right)^l} + \mathcal{O}\left(\frac{m^{1-l}}{N}\right)\right] \\ 
		&= \mathcal{O}\left(\frac{(\log N)^{\delta_{l,2}}}{N}\right) + \sum_{m=1-\lceil N/2 \rceil}^{\lfloor N/2 \rfloor}\frac{c_{l,g,n}^{a_m,N}} {\left( z_0 - \log \bar{a} - 2\pi i m \right)^l}\,,
		\end{split}
	\end{equation}
	where $a_m = \log\bar{a} + 2\pi i m +\mathcal{O}(m^2N^{-1})$. Thus, there is indeed no issue with the infinite sum. Before concluding that the the $\mc{W}_{g,n,|Z|+1}(\sigma(z), \sigma(Z) \mid z_{\llbracket n \rrbracket})$ is bounded, the term
	\begin{equation}
		\sum_{l=2}^{2g}\frac{k_{l,g,n+1}^N}{(1+z_0/N)^l},
	\end{equation}
	must be examined. As the $k_{l,g,n+1}^N$ are order $N^{-1}$, the only possible way this term does not remain bounded is if $z_0$ gets close to $-N$. In terms of \eqref{e:prbound}, this means $\sigma(z)$ must get close to $-N$ as $N\to\infty$ while $|z-a_N|=\epsilon$. However, notice
	\begin{equation}
		|\sigma(z)+N| = |\sigma(z) - \sigma(-N)| = |z+N| \geq |a_N + N| - |z-a_N| = |\log(a_0) + \mathcal{O}(N^{-1}) + N| + \epsilon = \mathcal{O}(N) \,,
	\end{equation}
	so the above scenario can not happen. 
	
	\eqref{e:prbound} is therefore bounded. As it is, up to $(2\pi i)^{-1}$, just a formula for the $c_{l,g,n}^{a,N}$, it establishes, for the induction step, the desired property of the $c_{l,g,n+1}^{a,N}$ in \eqref{e:claimform}. The fact that the sum over $l$ begins at two is simply due to the fact the correlators must be residueless, and the fact that there exists some bound on the order of the pole independent of $N$ follows from the fact that, for a given $a_0 \in \bar{R}$, $\omega_{0,1}$ will have the same order at all ramification points in $\exp_N^{-1}(a_0)$ for sufficiently large $N$, and the order of the pole of the correlators may be bounded based on this \Cref{t:origprop}. Putting these facts together, it has been shown that the contributions to the topological recursion from ramification points in $U$ other than $z=-N$ are compatible with the induction assumption \eqref{e:claimform}.
	
	Turn now to the pole at $z=-N$. That the contributions here fit with the claimed form of
	\begin{equation}
		\sum_{l=2}^{2g}\frac{k_{l,g,n+1}^N}{(1+z_0/N)^l}\,,
	\end{equation}
	with $k_{l,g,n+1}^N=\mathcal{O}(N^{1-2g-2n})$ is given by equation (78) in \cite[Corollary 4.10]{BKW24} (with the variables $m_1$ and $m_2$ set to one). It is here the naturalness assumption is used as \cite[Lemma 4.8]{BKW24} requires the order of $\omega_{0,1}^N(z)-\omega_{0,1}^N(\sigma(z)) = (z-\sigma(z)) d\log(x_N)$ to be bounded. This will only occur if the branchpoint corresponding to $z=-N$ is either zero or infinity, which is ensured by \Cref{d:natexp}.
	
	The pole at $z=\infty$ can not contribute as, by naturalness, $d\log(x_N)$ will have a simple pole at this point, $x_Ny_N = z$ also has a simple pole at this point, so $\omega_{0,1} = y_Ndx_N = x_Ny_N d\log(x_N)$ will have a double pole at $z=\infty$, killing all contributions to topological recursion for all $N$.
	
	Finally, ramification points outside of $U$ may also contribute. If these are away from any simple poles of $xy$, then there are no issues with convergence (or at least, this will be assumed; see \Cref{r:concol}) and these contributions can be put in $H_{g,n+1}^N$. If the ramification points are near a simple pole of $xy$ another open set, say $U'$, may be chosen and a coordinate $z' = x_Ny_N$ may be defined. The above arguments may be repeated, but the analogous terms to
	\begin{equation}
		\sum_{l=2}^{2g}\frac{k_{l,g,n+1}^N}{(1+z_0/N)^l} + \sum_{a \in V_N} \sum_{l=2}^{L_{g,n+1}} \frac{c_{l,g,n+1}^{a,N}(z_{\llbracket n \rrbracket})}{(z_0-a)^l}dz_0\,,
	\end{equation}
	that come from $U'$ will simply be holomorphic on $U$ and thus should also be put in $H_{g,n+1}^N$. That everything converges to the right limiting correlators is now clear; the Weierstrauss M-test may be used to commute the limit in $N$ with the infinite sum over ramification points in $V_N$, and each individual term converges. This concludes the induction step, and by extension, the the proof of the theorem.
\end{proof}

As with \Cref{t:seqper} for periodic curves, the above theorem allows one to conclude that the key properties of the topological recursion carry over in the limit.

\begin{corollary}\label{c:expprop}
	The correlators $\omega_{g,n}$ constructed by the topological recursion defined by \Cref{d:TRexp} satisfy the following properties for natural spectral curves.
	\begin{itemize}
		\item \emph{Symmetry:} the $\omega_{g,n}$ are symmetric in their $n$ variables;\\
		\item \emph{Residueless:} for $2g+n-2\geq 0$ the $\omega_{g,n}$ have vanishing residue; \\
		\item \emph{Homogeneity:} under the rescaling $\omega_{0,1} \mapsto f\omega_{0,1}$ it follows that $\omega_{g,n} \mapsto f^{2-2g-n} \omega_{g,n}$;\\
		\item \emph{Pole Structure:} the $\omega_{g,n}$ only have poles in $R_0$, and for every $a\in R_0$ and $2g+n-2\geq 0$, $\omega_{g,n}$ will have a pole of order no more than $(s_a-1)(2g+n-2)+2g$ at $a$ in each of its $n$ variables, where $s_a$ is defined in \Cref{d:CondSCAdm}; \\
		\item \emph{Formula for $\omega_{0,3}$:} the simplest stable correlator $\omega_{0,3}$ is given by the following formula
		\begin{equation}
			\omega_{0,3}(z_1,z_2,z_3) = \sum_{a\in R_0} \Res_{z=a} \frac{ \omega_{0,2}(z,z_1) \omega_{0,2}(z,z_2) \omega_{0,2}(z,z_3) } { dx(z) dy(z) }\,.
		\end{equation}
	\end{itemize}
\end{corollary}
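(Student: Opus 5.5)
The plan is to mirror the proof of \Cref{c:perprop}: use the approximating sequence of \Cref{t:seqexp} and transfer each property from the $\mc{S}_N$ to the limit. By \Cref{t:seqexp}, for a natural admissible exponential curve $\mc{S}$ there are algebraic curves $\mc{S}_N=(\Sigma,x_N,y_N=xy/x_N,\omega_{0,2})$ with $\omega^N_{g,n}\to\omega_{g,n}$. I would first argue that, for $N$ large, each $\mc{S}_N$ is admissible in the sense of \Cref{d:CondSCAdm}. Away from the poles of $xy$, the finite ramification points of $x_N$ converge to those of $x$ with the same local data $r_a,s_a$, and admissibility of $\mc{S}$ is assumed there. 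Near a pole of $xy$, the points of $V_N$ are $\exp_N$-preimages of the finitely many points of $\bar{R}$, and their local data is constant along each fibre. The points $z=-N,\infty$ are harmless by naturalness, as in the proof of \Cref{t:seqexp}. Hence \Cref{t:origprop} applies to every $\mc{S}_N$.

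Next, I would pass each item to the limit. The convergence in \Cref{t:seqexp} is locally uniform away from the ramification locus, so the properties transfer as follows.

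\textbf{Symmetry.} The identity $\omega^N_{g,n}(\ldots z_i\ldots z_j\ldots)=\omega^N_{g,n}(\ldots z_j\ldots z_i\ldots)$ is pointwise, so it survives the limit.

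\textbf{Homogeneity.} Rescaling $\omega_{0,1}\mapsto f\omega_{0,1}$ is the same as rescaling $y_N$ by $f$ for every $N$, and the recursion of \Cref{d:TRexp} for $\mc{S}$ scales identically. One can also check homogeneity directly from the recursion, since $K$ scales by $f^{-|Z|}$.

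\textbf{Residuelessness.} Residues are contour integrals over small circles that avoid the poles. Uniform convergence on such circles lets the limit pass inside the integral.

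\textbf{Formula for $\omega_{0,3}$.} For each $N$ this is a finite sum of residues. By the Weierstrass $M$-test argument of \Cref{t:seqexp}, the sum converges term by term to the infinite sum over $R_0$. The contributions at $z=-N$ vanish, since $k^N_{l,0,3}=\mc{O}(N^{-1})$.

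\textbf{Pole structure.} I would read this off the form \eqref{e:claimform}. In the limit, the term $\sum_l k^N_l/(1+z_0/N)^l$ disappears because $k^N=\mc{O}(N^{1-2g-2n})$. The holomorphic part $H^N$ converges to a holomorphic limit, and each pole at $a\in V_N$ converges to a pole at the corresponding point of $V_\infty\subset R_0$. The order of that pole is bounded by the $N$-independent bound $(s_a-1)(2g+n-2)+2g$ from \Cref{t:origprop}, because $s_a$ is constant along fibres. A uniform limit of such partial fractions, with uniformly bounded coefficients, has poles only at $R_0$ of at most that order. Uniqueness of principal parts then gives the claim.

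I expect the main obstacle to be this pole-structure statement near the essential singularities, that is, the poles of $xy$. There one must rule out that an infinite sum of principal parts produces new singularities, and confirm that no residual contribution survives from $z=-N$ or from the coalescence discussed in \Cref{r:concol}. I would handle it with the bound \eqref{e:sumNexp}, which shows that the sum over $V_N$ differs from a uniformly convergent sum of principal parts at $\log\bar a+2\pi i m$ by $\mc{O}((\log N)/N)$. Since the limiting sum converges uniformly on compacts away from $V_\infty$, Weierstrass's theorem gives meromorphy with exactly the claimed principal parts.
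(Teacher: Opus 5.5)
Your proposal is correct and follows the paper's route: the paper's proof is a single line saying that each property holds for the approximating curves $\mc{S}_N$ of \Cref{t:seqexp} by \Cref{t:origprop}, and hence for the limit, with $A=R$ giving $A\cap R_0=R_0$ in the $\omega_{0,3}$ formula. Your extra checks (admissibility of the $\mc{S}_N$; the disappearance of the $z=-N$ contributions, which for $\omega_{0,3}$ in fact vanish identically because $\sum_{l=2}^{2g}$ is empty at $g=0$; and the use of \eqref{e:claimform} and \eqref{e:sumNexp} to control pole orders near the poles of $xy$) just make explicit what the paper leaves implicit.
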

\begin{proof}
	These properties hold for each curve in the sequence by \Cref{t:origprop}, and therefore for the limiting curve. Note that $A=R$ so $A\cap R_0 = R_0$.
\end{proof}

\begin{remark}
	For unnatural exponential curves one could consider the sequence of spectral curves $\mc{S}_N = \left(\Sigma,\, x,\, (A_N,\,\omega_{0,2}),\,y,\,\omega_{0,2}\right)$ where $A_N\subset R_0$ is an increasing sequence of subsets of the ramification locus $R_0$ such that $\bigcup_N A_N = R_0$. One can easily prove the limit commutes in this situation and thus the properties in \Cref{c:expprop} hold. However, properties (in particular, quantum curves) requiring $A=R$ may not hold.
\end{remark}

There is an analogous proposition for exponential curves, to \Cref{p:periodic} for periodic curves.
\begin{proposition}\label{p:expper}
For an exponential spectral curve, fix an open neighbourhood $U$ of a simple pole of $z \coloneq xy$ such that $z$ is a valid coordinate on $U$. Then $\omega_{g,n}(z_1+p,z_2+p,\dots,z_n+p) = \omega_{g,n}(z_1,\dots,z_n) + p\delta_{g,0}\delta_{n,1}d\log x(z_1)$ for some $p\in 2\pi i\Q$.
\end{proposition}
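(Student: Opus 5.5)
The plan is to mimic the proof of \Cref{p:periodic}. Take $p$ to be the period of $x$ in the coordinate $z$ on $U$ provided by \Cref{l:expperx}. From the Newton--Puiseux expansion $x(z)=\sum_{m\geq -M}a_m e^{\kappa m z/n}$ used in \Cref{l:simppole}, and after normalising $\kappa=1$ via homogeneity, this period is $p=2\pi i n\in 2\pi i\Z\subset 2\pi i\Q$. In general one has $p=2\pi i n/\kappa$; the rationality statement is the $\kappa=1$ normalisation, and for general $\kappa$ one simply rescales.

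Next, I would record how the initial data transform under the shift $\tau:z\mapsto z+p$ on $U$. By \Cref{l:expperx}, $x\circ\tau=x$, so $\tau$ permutes the ramification points in $U$ and intertwines the local deck transformations: $\tau\circ\sigma_a=\sigma_{a+p}\circ\tau$, as in the first paragraph of the proof of \Cref{l:converge}. Since $\omega_{0,1}=z\,d\log x$, we get
\begin{equation}
\tau^*\omega_{0,1}=\omega_{0,1}+p\,d\log x.
\end{equation}
Because the shift $p\,d\log x$ is a pullback from the $x$-line, it cancels in every difference $\omega_{0,1}(z')-\omega_{0,1}(z)$ with $z'\in\mf{f}_a(z)$. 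Hence the recursion kernels are unchanged.

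For $\omega_{0,2}$ the situation is different: the curve is global, so $\omega_{0,2}$ is not literally $\tau$-invariant on all of $\Sigma$. Only its restriction $dz_1dz_2/(z_1-z_2)^2+$(holomorphic) is relevant when all arguments lie in $U$. I would therefore argue through the approximating algebraic curves $\mc{S}_N$ of \Cref{t:seqexp}. Alternatively, I would argue directly by induction on $2g+n-2$ using the form \eqref{e:claimform}. In the limit the coefficients $c^{a}_{l}$ attached to $a$ and to $a+p$ agree, since they are computed by identical local residues. Shifting all $z_i$ simultaneously then reindexes the sum over $V_\infty$, and the regular part $H$ comes from ramification points outside $U$, which are unaffected.

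The main obstacle is this non-invariance of the global $\omega_{0,2}$ and of the contributions from ramification points outside $U$. The identity must be interpreted with all arguments in $U$ and the shift acting only on the $U$-coordinate, so that only the $\tau$-equivariant local structure enters. I would first try to prove, inductively, that $\omega_{g,n}$ restricted to $U^n$ is determined by the residues at $V_\infty$ together with shift-invariant data, and then conclude that the correlators change only through the $\delta_{g,0}\delta_{n,1}$ term.
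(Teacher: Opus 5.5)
Your first two paragraphs contain the paper's whole argument except for one assertion. The paper takes $p$ to be the period from \Cref{l:expperx}, notes that $\omega_{0,1}\mapsto\omega_{0,1}+p\,d\log x$, uses that the recursion is blind to adding a pullback from the $x$-line, and simply states that $\omega_{0,2}$ is invariant under $z\mapsto z+p$. That invariance is exactly what your proposal leaves open, and the workaround you sketch cannot supply it.

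The case $(g,n)=(0,2)$ of the proposition \emph{is} the identity $\omega_{0,2}(z_1+p,z_2+p)=\omega_{0,2}(z_1,z_2)$. Suppose $\omega_{0,2}=dz_1dz_2/(z_1-z_2)^2+h(z_1,z_2)\,dz_1dz_2$ on $U\times U$ with $h\not\equiv 0$. Regularity at the pole forces $h\to 0$ as the arguments tend to $\infty$, so invariance under repeated shifts by $p$ would force $h\equiv0$. The statement would then simply be false, and no restriction of the arguments to $U$, induction on \eqref{e:claimform}, or passage through $\mc{S}_N$ could rescue it.

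The same invariance is used silently in your inductive step. The residue at $a+p$ matches the shifted residue at $a$ only if $\int^z B(z_0,\cdot)$ and the factors $\omega_{0,2}(\sigma(z),z_i)$ are shift-equivariant. Your claim that the regular part $H$ is `unaffected' is not a substitute. A holomorphic function of $z_0$ on $U$ need not be $p$-periodic, and the shift moves ramification points into and out of $U$, so only the full sum over $R_0$ can be equivariant, not $H$ separately. The route through $\mc{S}_N$ is also ill-suited: $x_N=\bar{x}\circ\exp_N$ is not $p$-periodic, since the deck transformations of $\exp_N$ are rotations about $-N$ rather than translations. There is therefore no finite-$N$ symmetry to pass to the limit.

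The missing input is that, where the paper uses this result, $z=xy$ is a global affine coordinate on $\Sigma\cong\P^1$ and $\omega_{0,2}=dz_1dz_2/(z_1-z_2)^2$ (\Cref{p:formexp}), which is manifestly invariant. This is in fact automatic when $\Sigma$ is the compact normalised closure of $\{P(x,e^{\kappa xy})=0\}$, as in \Cref{l:simppole}. A branch point of $P(x,Y)=0$ over some $Y_0\in\C^*$ would make $z$ ramified over infinitely many points of $\log Y_0+2\pi i\Z$, which is impossible on a compact surface. Riemann--Hurwitz then gives $Y=cu^{D}$ on $\P^1_u$, so $z$ parametrises each component.

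With this in hand, $\tau$ is a global automorphism fixing $\infty$ with $x\circ\tau=x$ and $\tau(R_0)=R_0$. Your first two paragraphs, together with the reindexing $a\mapsto a+p$ of the absolutely convergent sum over $R_0$, then give $\tau^*\omega_{g,n}=\omega_{g,n}$ for all stable $(g,n)$ by induction on $2g+n-2$, with no approximating curves needed. Your remark that $p\in 2\pi i\Q$ holds only after normalising $\kappa$ (in general $p=2\pi i n/\kappa$) is correct and more careful than the paper.
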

\begin{proof}
By \Cref{l:expperx} $x$ is periodic in the coordinate $z$; denote this period by $p$. Observe that, upon a shift in coordinates $z\mapsto z+p$, $\omega_{0,2}$ is invariant and $\omega_{0,1}\mapsto\omega_{0,1} + pd\log x$. As the correlators are unaffected by shifting $\omega_{0,1}$ by a pure function of $x$ (this property is clear from the local definition of topological recursion, \Cref{d:TRinflocus}), the result follows immediately.
\end{proof}

In genus zero, to construct the wavefunction \eqref{e:wfuncconj} for exponential spectral curves, one must perform iterated integrals of the correlators from a fixed basepoint $b$. In \Cref{ss:C3} it will turn out that taking $b$ to be an exponential singularity of $x$ will produce the `nicest' quantum curve. The proceeding lemma, in analogy with \Cref{l:wavefuncper} for periodic curves, resolves the potential issues that may arise when making this choice.

\begin{lemma}\label{l:wavefuncexp}
	Let $\mathcal{S} = \left(\Sigma,\, x,\, y,\, \omega_{0,2}\right)$ be an exponential regular spectral curve. Identifying $\Sigma$ with $\mathbb{C}\cup R_{\infty}$ where $R_\infty = \{\infty\}$, take any complex number $v\in\mathbb{C}^*\setminus \Delta$, where $\Delta$ is the convex hull of $R_0$;\footnote{$\Delta$ can not be all of $\C$, as $x$ has only a single period $p$ and has only finitely many ramification points in $\C/p\Z$, by \Cref{p:formexp}.} then, for all all stable correlators $\omega_{g,n}$, the following anti-derivatives are well-defined
	\begin{equation}
		W_{g,n}(w_{\llbracket n \rrbracket}) \coloneq \int_{v\infty}^{w_1}\int_{v\infty}^{w_2} \cdots \int_{v\infty}^{w_n} \omega_{g,n}(z_{\llbracket n \rrbracket}) \,,
	\end{equation}
	where the symbol $\int_{v\infty}^{w_i}$ means that one integrates along the shortest line segment from $\infty$ to $v\neq 0$, and then integrates from $v$ to $w_i$ in the regular, contour independent (as the correlators are residueless), manner. Furthermore, taking $\mathcal{S}_N$ to be the sequence of spectral curves constructed in \Cref{t:seqper} and the $W_{g,n}^N$ to be the antiderivatives obtained by integrating the $\omega_{g,n}^N$, it follows that
	\begin{equation}
		\lim\limits_{N\to\infty}W_{g,n}^N(w_{\llbracket n \rrbracket}) = W_{g,n}(w_{\llbracket n \rrbracket}) + \delta_{n,1}C_g(v) \,,
	\end{equation}
	where $C_g(v)\in\C$ is a constant depending on $v$ and $g$.
\end{lemma}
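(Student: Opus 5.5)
The plan is to follow the proof of \Cref{l:wavefuncper} almost verbatim, since by \Cref{p:formexp} a regular exponential curve is $\P^1$ with global affine coordinate $z=xy$, $x(z)=P_1(e^{\kappa z})/P_2(e^{\kappa z})$, and $\omega_{0,2}=dz_1dz_2/(z_1-z_2)^2$. In particular $x$ is periodic in $z$ with period $p=2\pi i/\kappa$, and it has only finitely many ramification points modulo $p$: these are the preimages under $e^{\kappa z}$ of the finitely many ramification points of the rational map $P_1/P_2$ in $\C^*$. The regular exponential curve is therefore a $\C^*$-regular periodic curve in the sense of \Cref{d:regper}, except possibly for the finite-order requirement, which holds trivially here. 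The sequence $\mc{S}_N$ to use is the one of \Cref{t:seqexp}, that is $x_N=(P_1/P_2)\circ\exp_N$, rather than that of \Cref{t:seqper}.

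\textbf{Step 1: expansion of the limiting correlators.} Combine \Cref{p:expper}, which gives invariance of the stable $\omega_{g,n}$ under the diagonal shift $z_i\mapsto z_i+p$, with the expansion from the convergence lemma for exponential curves (form \eqref{e:convergeform}). This yields an expansion of the shape \eqref{e:badform}. The coefficients do not depend on $k_0$ and are $\mc{O}(1)$ in the remaining $k_i$, and the pole orders are bounded by \Cref{c:expprop}. There is a single accumulation point at $z=\infty$, because $P$ is a polynomial, so $\Delta$ is a strip and $v\notin\Delta$ makes sense.

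\textbf{Step 2: decay estimate.} Next establish the decay estimate \eqref{e:claimvan} by induction on $2g+n-2$. The base cases are $\omega_{0,3}$, via the explicit formula in \Cref{c:expprop}, and $(1,1)$. The induction step is the same residue analysis as before: off-diagonal factors $(\sigma(z)-k')^{-l}$ with $l\ge2$ produce $\mc{O}(|k-k'|^{-2})$. Here the deck transformations are those of $x$ about $a+kp$, and they satisfy the periodicity relation \eqref{e:sigexp} by \Cref{l:expperx}.

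\textbf{Step 3: well-definedness of the antiderivatives.} With this decay in hand, integrability follows exactly as before.
\begin{itemize}
\item For terms with $l\ge3$, or with $n>1$, the integral commutes with the absolutely convergent sums.
\item The $n=1$, $l=2$ term is handled by summing to $\pi^2\csc^2$ (rescaled by $p$) and integrating explicitly along the ray from $v\infty$. This uses that $\csc^2$ decays exponentially in directions leaving the strip $\Delta$.
\end{itemize}

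\textbf{Step 4: convergence $W^N_{g,n}\to W_{g,n}$.} I would use the form \eqref{e:claimform} from the proof of \Cref{t:seqexp}, which has three parts.
\begin{itemize}
\item The coefficients $c^{a,N}$ are uniformly bounded in $a$ and $N$. The decay factors $k_i^{-2}$ carry over at finite $N$ because the points of $V_N$ approximate $\log\bar a+2\pi i m$ up to $\mc{O}(m^2/N)$, as in \eqref{e:sumNexp}.
\item The holomorphic part $H^N$ converges uniformly.
\item The extra term $\sum_l k^N_l(1+z_0/N)^{-l}$ comes from the ramification point $z=-N$.
\end{itemize}
The Weierstrass $M$-test and dominated convergence then commute the limit with the sums and the integrals. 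As in \Cref{l:wavefuncper}, the $l=2$, $n=1$ term forces us to integrate before summing at finite $N$, and this produces the constant $C_g(v)$.

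\textbf{Main obstacle.} I expect the main obstacle to be the contribution of the point $z=-N$. The integration contour runs to $v\infty$, and it may pass near $-N$ whenever $v$ points along the negative real direction. I would treat this exactly as in the last paragraph of the proof of \Cref{t:TR/QCess}. Using $k^N_l=\mc{O}(N^{1-2g-2n})$ and the pole order bound $2g$, after integration only the $(g,n)=(1,1)$ term survives, and it contributes $A_2/(1-z/N)$. This tends to a constant, which is absorbed into $C_g(v)$. To make this rigorous, I would either note that $\mc{O}(N^{3-2g-n})$ bounds the principal part uniformly along the ray, or deform the ray slightly so that it stays a distance of order $N$ from $-N$.
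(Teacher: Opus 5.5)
Your overall strategy matches the paper's in three respects. You obtain the limiting statement exactly as in \Cref{l:wavefuncper}, and you correctly use the sequence of \Cref{t:seqexp} rather than \Cref{t:seqper}. Your handling of the principal part at $z=-N$ is the paper's argument: $k^N_l=\mc{O}(N^{1-2g-2n})$, one factor of $N$ per integration, and pole order at most $2g$, so only $\omega_{1,1}$ survives and contributes a constant. Your remark that the ray may need to be deformed away from $-N$ is, if anything, more careful than the paper.

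The gap is in Step~4, for the $n=1$, $l=2$ terms. You carry over the mechanism of \Cref{l:wavefuncper}: integrate termwise at finite $N$, then use the Weierstrass $M$-test to pass $N\to\infty$ through the sum over ramification points. That fails here because the ramification points of $x_N$ move with $N$. They are $a^N_m=N(\bar a^{1/N}e^{2\pi i m/N}-1)$, lying on a circle of radius $\approx N$ about $-N$. Their displacement $\mc{O}(m^2/N)$ from $\log\bar a+2\pi i m$ is not small once $|m|\gtrsim\sqrt N$. The integrated terms $c/(w-a^N_m)$ are only $\mc{O}(1/|m|)$, so there is no summable majorant. In fact the limit of the sums is not the sum of the termwise limits. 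Already for constant $c$ one has
\[
\sum_{m=0}^{N-1}\frac{1}{w-a^N_m}=\frac{(1+w/N)^{N-1}}{(1+w/N)^N-\bar a}\longrightarrow\frac{e^{w}}{e^{w}-\bar a}\,,
\]
whereas the symmetrically summed $\sum_{m\in\Z}(w-\log\bar a-2\pi i m)^{-1}=\frac12\coth\frac{w-\log\bar a}{2}$ differs from this by $\frac12$. So the step fails as stated. Your conclusion survives only because this discrepancy happens to be independent of $w$, and nothing in your argument shows that. The same non-uniformity undercuts your justification of the finite-$N$ decay factors via the $\mc{O}(m^2/N)$ approximation; those should come from the separation of the $a^N_m$ along that circle.

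The missing idea, which is what the paper uses, is to avoid commuting $N\to\infty$ with the growing sum for these terms altogether. By \Cref{t:seqexp}, $\omega^N_{g,1}\to\omega_{g,1}$ compactly away from $R_0$. Hence the $w$-derivative of $W^N_{g,1}-W_{g,1}$ tends to zero locally uniformly on the connected set $\C\setminus R_0$, so $W^N_{g,1}-W_{g,1}$ becomes constant in the limit. You then only need to control it at a single point; the paper argues that the $\mc{O}(m/N)$ distortion in \eqref{e:sumNexp} stays bounded after integration. The moving ramification points are therefore a genuine second source of $C_g(v)$, alongside the $\omega_{1,1}$ contribution from $z=-N$, not merely a by-product of ``integrating before summing''.
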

\begin{proof}
	The statement for the limiting curve $\mc{S}$ is proven in the exact same manner as \Cref{l:wavefuncper}. However, commuting the limit here is more subtle than in \Cref{l:wavefuncper} for two reasons: first, one has to consider whether the principal parts of $\omega_{g,n}^N$ at $z = -N$ (where $z \coloneq xy = x_Ny_N$) can survive in the limit once integrated; second, the location of all the ramification points change with $N$. To tackle the first problem recall \eqref{e:claimform} from the proof of \Cref{t:seqexp}, which gives the principal part of $\omega_{g,n+1}^N$ at $z=-N$
	\begin{equation}
		\omega_{g,n+1}^N(z_0,z_{\llbracket n \rrbracket}) = \sum_{l=2}^{2g}\frac{k_{l,g,n+1}^N(z_{\llbracket n \rrbracket})dz_0}{(1+z_0/N)^l} +\mathcal{O}\left((1+z_0/N)^0\right)\,,\quad k_{l,g,n+1}^N = \mathcal{O}\left(N^{1-2g-2n}\right)\,.
	\end{equation}
	Then notice, by direct computation
	\begin{equation}
		\int_{v\infty}^{w_0} \frac{dz_0}{(1+z_0/N)^l} = \mathcal{O}(N)\,.
	\end{equation} 
	Thus, as one performs $n+1$ integrations on $\omega_{g,n+1}$ to obtain $W_{g,n+1}$, this principal part may survive in the limit when $1-2g-2n+n+1\geq 0$, which will only occur for stable correlators when $(g,n+1) = (0,3), (1,1)$. However, the sum goes from $l=2$ to $2g$, so the contributions vanish identically for all $N$ when $g=0$, so one only need consider $\omega_{1,1}^N$. Here the contribution is just a constant in the limit $N\to\infty$, and may therefore be absorbed into $C_1(v)$.
	
	Now the second problem must be tackled. To this end, recall \eqref{e:sumNexp} from the proof of \Cref{t:seqexp}
	\begin{equation}
		\sum_{a\in \exp_N^{-1}(a_0)}\frac{c_{l,g,n}^{a,N}}{(z_0-a)^l} = \sum_{m=1-\lceil N/2 \rceil}^{\lfloor N/2 \rfloor}\frac{c_{l,g,n}^{a_m,N}} {\left( z_0 - \log a_0 - 2\pi i m \right)^l}\left(1+\mathcal{O}\left(\frac{m}{N}\right)\right)\,.
	\end{equation}
	These extra $\mc{O}(m/N)$ terms will not be of concern for $l\geq 3$. By the reasoning, \textit{mutatis mutandis}, as in the proof of \Cref{l:wavefuncper}, the $c_{l,g,n}^{a_m,N}$ will pick up extra factors of $\mathcal{O}(m^{-2})$, and the only actually problematic terms are for $l=2$ and $n=1$ (note that $c_{l,g,n}^{a_m,N}$ depends on $z_{\llbracket n \rrbracket}$, even though this is being suppressed here). For $n=1$, by the periodicity property, $c_{l,g,n}^{a_m,N}$ does not depend on $m$, up to $o(N^0)$. Then, for $l=2$ and $n=1$ consider
	\begin{equation}\label{e:w1nexpint}
		\lim\limits_{N\to\infty}\int_{v\infty}^{w_0}\sum_{m=1-\lceil N/2 \rceil}^{\lfloor N/2 \rfloor}\frac{c_{l,g,n}^{a_m,N}dz_0} {\left( z_0 - \log a_0 - 2\pi i m \right)^2}\left(1+\mathcal{O}\left(\frac{m}{N}\right)\right)\,.
	\end{equation}
	Now recall the following facts:
	\begin{itemize}
		\item the derivative of \eqref{e:w1nexpint} with respect to $w_0$ converges to the correct correlator by \Cref{t:seqexp};
		\item derivatives commute with limits of sequences of compactly convergent holomorphic functions;
		\item the extra $\mathcal{O}(m/N)$ factor in the second term of the summand may contribute in the limit as $N\to\infty$, but it will not blow up, even after integration.
	\end{itemize}
	Putting these recalled facts together, the integral and the limit as $N\to\infty$ in \Cref{e:w1nexpint} must commute up to a constant, as claimed.
\end{proof}

\begin{remark}
	As the choice of $v$ only changes the wavefunction up to a multiplicative constant, which is largely irrelevant, this choice will often be dropped in notation, and the basepoint will merely be denoted by $\infty$ rather than $v\infty$.
\end{remark}

With the previous lemma at hand, one can construct quantum curves for regular exponential curves, choosing the basepoint to be at infinity. The result is the content of the following theorem.

\begin{theorem}\label{t:TR/QCexp}
	Let $\mathcal{S} = (\Sigma,\, x,\, y,\, \omega_{0,2})$ be a regular exponential spectral curve. Denote the corresponding irreducible polynomial equation as $P(x,y) = P_2(e^{xy}) x - P_1(e^{xy})$ for some polynomials $P_1,P_2$. Then, taking the affine coordinate $z= xy$ the wave-function, for $v\in \C^*\setminus \Delta$ where $\Delta$ is the convex hull of $R_0$,
	\begin{equation}
		\psi(z;v\infty) \coloneq \exp\left[\sum_{n=1}^{\infty}\sum_{g=0}^{\infty}\frac{\hslash^{2g+n-2}}{n!} \int_{v\infty}^z\cdots\int_{v\infty}^z \left( \omega_{g,n}-\delta_{n,2}\delta_{g,0} \frac{dx(z_1) dx(z_2)}{(x(z_1) - x(z_2))^2} \right)\right]\,,
	\end{equation}
	satisfies the differential equation
	\begin{equation}
		\left( P_2\big(e^{\hslash x\frac{d}{dx}}\big) x - P_1\big(e^{\hslash x\frac{d}{dx}}\big) \right) \psi(z;v\infty)=0 \,.
	\end{equation}
\end{theorem}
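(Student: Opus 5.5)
The plan is to follow the proof of \Cref{t:TR/QCess}, using the approximating sequence from \Cref{t:seqexp} specialised to the regular case. By \Cref{p:formexp} the curve is $x(z)=P_1(e^{z})/P_2(e^{z})$, $y=z/x$, with $z=xy$ a global affine coordinate; take $\kappa=1$ after rescaling. I would introduce the algebraic curves $\mc{S}_N$ cut out by
\begin{equation}
P_N(x,y)=x\,P_2\big(\exp_N(xy)\big)-P_1\big(\exp_N(xy)\big),\qquad \exp_N(u)=(1+u/N)^N .
\end{equation}
This is a polynomial in $x$ and $xy$, linear in $x$, and has $x_Ny_N=z$ as an affine coordinate, so each $\mc{S}_N$ is a genus zero algebraic curve. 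Checking regularity should reduce to the same Newton polygon bookkeeping as in the proof of \Cref{t:TR/QCess}. Writing $P_N=\sum_i q_i(x)y^i$, each $q_i$ has the form $t_i^N x^{i+1}-u_i^N x^i$ (dividing by $x$ if $P_1(0)=0$, exactly as in that proof). Here $t_i^N$ and $u_i^N$ are the coefficients of $u^i$ in $P_2(\exp_N(u))$ and $P_1(\exp_N(u))$, so $\alpha_i=i$.

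Next I would apply \Cref{t:TR/QCpole} at the basepoint $b=\infty$. For this I need $z=\infty$ to be a pole of $dx_N$ at which the $\omega^N_{g,n}$ are holomorphic, and I need $E_i=t_i^N$ by the same limit computation as in \Cref{t:TR/QCess}. Naturalness holds automatically here, since $x$ is a rational function of $e^z$ and so $e^{xy}\to0,\infty$ forces $x\to$ the values $P_1/P_2$ at $0,\infty$. If those values are not $0$ or $\infty$, I would instead work in the $\log x$ picture or treat this as an edge case. Given naturalness, the argument at the end of the proof of \Cref{t:seqexp} shows that $\omega_{0,1}^N$ has a double pole at $z=\infty$, so that point does not contribute. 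Since $F_i=\hslash x\,d/dx$, the differential equation from \Cref{t:TR/QCpole} resums, as in \Cref{t:TR/QCess}, to
\begin{equation}
\Big(P_2\big(\exp_N(\hslash x\tfrac{d}{dx})\big)x-P_1\big(\exp_N(\hslash x\tfrac{d}{dx})\big)\Big)\psi_N(z;v\infty)=0 .
\end{equation}
The $\hslash E_i$ terms are exactly what is needed to move $x$ to the right of the operator polynomial in $P_2$. The identity should first be checked at finite $N$, where it is a finite sum.

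Finally I would take $N\to\infty$ order by order in $\hslash$. At each fixed order only finitely many $t^N_i,u^N_i$ enter, and these converge to the Taylor coefficients of $P_2(e^u)$ and $P_1(e^u)$. So the operator converges coefficientwise to $P_2(e^{\hslash x d/dx})x-P_1(e^{\hslash x d/dx})$. The wavefunctions converge by \Cref{l:wavefuncexp}: its proof already handles the principal parts at $z=-N$ (only the $(1,1)$ part survives, and only as a constant), and it gives $W^N_{g,n}\to W_{g,n}+\delta_{n,1}C_g(v)$. Hence $\psi_N\to e^{\sum_g\hslash^{2g-1}C_g}\psi$, and a multiplicative constant does not affect the equation.

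I expect the main obstacle to be the basepoint mismatch. \Cref{t:TR/QCpole} is stated for the basepoint $z=\infty$ itself, whereas $\psi$ uses the regularised path through $v$. I would first show that for finite $N$ the integrals from $\infty$ and from $v\infty$ differ only by constants, because $\omega^N_{g,n}$ is holomorphic at $\infty$. Then I would verify that the constants $C_g(v)$ coming from \Cref{l:wavefuncexp} are genuinely $z$-independent, using the $\cot$-type resummation of the $l=2$, $n=1$ terms and the fact that the correlators are residueless.
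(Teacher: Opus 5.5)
\paragraph{Gap 1: the pole at $z=\infty$.} Your route is the paper's own: its proof is just ``as in \Cref{t:TR/QCess}'' plus \Cref{l:wavefuncexp}. So the two problems below are ones the paper's ``mutatis mutandis'' also passes over. Your bookkeeping ($q_i=t^N_ix^{i+1}-u^N_ix^i$, $E_i=t^N_i$, the $\hslash E_i$ terms moving $x$ to the right) is correct when $\deg P_1>\deg P_2$. However, \Cref{t:TR/QCpole} requires $z=\infty$ to be a pole of $dx_N$. You list this hypothesis but never check it, and it fails in general.

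Naturalness is not automatic, as your own caveat concedes: it holds iff $\deg P_1\neq\deg P_2$ and $P_1(0)P_2(0)=0$. If $\deg P_1<\deg P_2$, then $x_N=P_1(\exp_N z)/P_2(\exp_N z)$ has a \emph{zero} of order $N(\deg P_2-\deg P_1)$ at $z=\infty$. That point is a ramification point, so neither \Cref{t:TR/QCpole} nor \Cref{t:TR/QCzero} applies there.

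If in addition $P_2(0)=0$, you can repair this by approximating $e^{z}$ by $\exp_N(-z)^{-1}$. This is equivalent to writing the relation in $e^{-xy}$, as the footnote to \Cref{p:QCGWtor} suggests, and it makes $z=\infty$ a pole of $x_N$. After undoing the induced $\hslash\mapsto-\hslash$ you recover the stated operator. A minor point: the division by $x$ is needed when $u^N_0=P_1(1)=0$, not when $P_1(0)=0$, since $\exp_N(0)=1$.

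\paragraph{Gap 2: the limit of the wavefunction.} The step ``hence $\psi_N\to e^{\sum_g\hslash^{2g-1}C_g}\psi$'' does not follow. \Cref{l:wavefuncexp} only controls stable correlators, and the $(0,2)$ term is exactly where the direction of $v$ matters. For the $\hslash^0$ part $S_0$ of $\log\psi(z;b)$ one has
\begin{equation*}
\partial_z S_0(z)=-\frac{x''(z)}{2x'(z)}+\frac{x'(z)}{x(z)-x(b)}-\frac{1}{z-b}\,.
\end{equation*}
For finite $N$ with $b=\infty$ a pole of $x_N$, the last two terms vanish.

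Consequently $\psi_N$ converges to a constant multiple of $\psi(z;v\infty)$ only if $x\to\infty$ along the ray through $v$. If $x\to0$ along that ray, the limit is a constant times $x^{-1}\psi$, and the stated operator does not annihilate $\psi$. Indeed, shifting $S_0$ by $\log x$ changes the $\hslash^1$ coefficient of $\psi^{-1}\big(P_2(e^{\hslash x\frac{d}{dx}})x-P_1(e^{\hslash x\frac{d}{dx}})\big)\psi$ by $Y\partial_Y\big(xP_2(Y)-P_1(Y)\big)\big|_{Y=e^{z}}$, which is nonzero.

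So $v$ cannot be arbitrary in $\C^*\setminus\Delta$; it must lie on the side where $x\to\infty$. That means $\Re v>0$ if $\deg P_1>\deg P_2$, and $\Re v<0$ in the repaired case. When $x\to0$ at both ends, no such $v$ exists and the equation as stated fails. An example is $x=e^z/(1+e^z)^2$, which is natural and regular. In that case, running your argument on $(1/x,\omega_{0,1})$, which has the same correlators and wavefunction, yields the stated equation for $x^{-1}\psi$ instead.
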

\begin{proof}
	The proof of the form of the quantum curve proceeds, \textit{mutatis mutandis}, as the proof of \Cref{t:TR/QCess}. That the wavefunction is well-defined follows from \Cref{l:wavefuncexp}.
\end{proof}

\section{The old quantum curve curiosity shop}\label{s:app}
\subsection{Gromov-Witten theory of $\mathbb{P}^1$}\label{ss:P1}
It is well-known that topological recursion produces generating functions for the Gromov-Witten invariants of $\mathbb{P}^1$ from the initial data of the following spectral curve \cite{Z12,NS14,DOSS14}
\begin{equation}\label{e:GWnorm}
  \mathcal{S} = \left(\mathbb{P}^1,\, x(z)=z+\frac{1}{z},\, y(z)=\log(z),\, \omega_{0,2}(z_1,z_2) = \frac{dz_1dz_2}{(z_1-z_2)^2}\right)\,,
\end{equation}
which is a parametrisation of the equation
\begin{equation}
	P(x,y) = x-2\cosh(y) = 0 \,,
\end{equation}
which itself can be recognised as the curve from \Cref{ex:GWnorm}. Here a different parametrisation of $P(x,y) = 0$ is studied, namely that of \Cref{ex:GWus}
\begin{equation}\label{e:GWus}
  \mathcal{S}_\infty = \left(\P,\, x(z)=2\cosh(z),\, y(z)=z,\, \omega_{0,2}(z_1,z_2) = \frac{dz_1dz_2}{(z_1-z_2)^2}\right).
\end{equation}
Denote by $\omega_{g,n}$ the correlators constructed from \eqref{e:GWnorm} and $\omega_{g,n}^\infty$ the correlators constructed from \eqref{e:GWus}.

Initially, one might na\"ively guess that since $x$ and $y$ in the curve \eqref{e:GWus} are just the pullbacks of $x$ and $y$ in the curve \eqref{e:GWnorm} under the exponential map $\pi=\exp$ that the correlators $\omega_{g,n}$ constructed from the initial data of \eqref{e:GWus} will be the pullbacks of the correlators constructed from the initial data of \eqref{e:GWnorm} under $\pi$. This, however, is not the case as $\omega_{0,2}$ is the same for both curves. However, rather surprisingly, the two different spectral curves have the same quantum curve; the quantum curve of \eqref{e:GWnorm} was computed in \cite{DMNPS17} and the quantum curve of \eqref{e:GWus} is shown to agree in the following proposition.

\begin{proposition}\label{p:QCGWP1}
	The quantum spectral curve of \Cref{ex:GWus} is
	\begin{equation}
		\left[\hat{x} - 2\cosh(\hat{y})\right] \psi_\infty(z;\infty) = 0\,,
	\end{equation}
	where $\hat{x}=x=2\cosh(z)$, $\hat{y}=\hslash{d}/{d}x$, and $\psi_\infty(z;\infty)$ is both the limit of $\psi_N(z;\infty)$ and (up to a multiplicative constant) the wavefunction constructed directly in the limit with the integration described in \Cref{l:wavefuncper}.
\end{proposition}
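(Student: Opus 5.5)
This is a direct application of \Cref{t:TR/QCper}, with the periodicity, regularity and basepoint hypotheses to be checked.

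First I verify the hypotheses. The curve $\mc{S}_\infty$ has $\Sigma=\P^1$ and $x(z)=2\cosh(z)$, which is entire and periodic with period $p=2\pi i$. It has order one, and $dx=2\sinh(z)dz$ has exactly the two simple zeros $z=0,i\pi$ modulo $p$. So \Cref{d:per} holds, with $r=2$, $b_1=b_2=1$, and $\lambda=0$ by \Cref{ex:infprod}. Since $y=z$ is itself an affine coordinate, $\lambda=0$, and the only essential singularity of $x$ is at $z=\infty$, the curve is $\C$-regular in the sense of \Cref{d:regper}. It is therefore $\C$-admissible. For completeness I also check this directly: $y'=1$ gives a pole-like singularity with $\phi\equiv 0$, and $\omega_{0,2}=dz_1dz_2/(z_1-z_2)^2$ is tamed with $\phi_{0,2}=0$.

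Next I apply \Cref{t:TR/QCper}. With $\tilde y=y=z$, the relation reads $P(x,y)=x-f(y)$ with $f=2\cosh$, and $\tilde x=x$. The convex hull $\Delta$ of $R_0=\{k\pi i\}_{k\in\Z}$ is the imaginary axis, so any $v\notin i\R$ is an admissible basepoint direction. The theorem then gives $(x-2\cosh(\hslash\, d/dx))\psi(z;v\infty)=0$, with the $\hslash$ placed as in the normalisation of \Cref{t:TR/QCess}; this is exactly $[\hat x-2\cosh(\hat y)]\psi=0$.

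To make the argument self-contained, and to justify both characterisations of $\psi_\infty$, I would unpack the proof of \Cref{t:TR/QCper} in this case. I take the sequence $\mc{S}_N$ of \Cref{t:seqper}, with $dx_N$ the truncated product of $\sinh$. Each $\mc{S}_N$ is a transalgebraic $\C$-regular curve satisfying $x_N=F_N(y)$ for an entire function $F_N$ built from the truncated product. Applying \Cref{t:TR/QCpole} at $z=\infty$, mimicking \Cref{t:TR/QCess}, gives $(x-F_N(\hslash d/dx))\psi_N=0$: since $P$ is linear in $x$, one has $q_0=x-c_0$ and $q_i=-c_i$ constant, $\alpha_i=0$, $F_i=\hslash d/dx$, and the $E_i$ terms vanish because $Q_i/x\to0$ at the basepoint. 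The Taylor coefficients of $F_N$ converge to those of $2\cosh$, and term-by-term convergence of $\psi_N$ follows from \Cref{l:wavefuncper}. Passing to the limit therefore yields the equation for $\psi_\infty=\lim\psi_N$. By \Cref{l:wavefuncper}, the $W^N_{g,n}$ converge to $W_{g,n}$ up to constants $C_g(v)$ appearing only for $n=1$, so $\psi_\infty$ equals the directly constructed wavefunction times $\exp(\sum_g\hslash^{2g-1}C_g(v))$, which is a multiplicative constant.

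The main obstacle is the limit $N\to\infty$ in the differential operator. $F_N(\hslash d/dx)$ is an infinite-order operator acting on a formal $\hslash$-series, so I would argue order by order in $\hslash$. At fixed order in $\hslash$ only finitely many coefficients of $F_N$ and finitely many $W^N_{g,n}$ contribute, and these converge by \Cref{t:seqper} and \Cref{l:wavefuncper}. The delicate part is making sure the $n=1$ constants $C_g(v)$ and the $l=2$ cotangent terms pass consistently through the operator. I would handle this by observing that the equation is linear in $\psi$, so it is insensitive to overall constant factors.
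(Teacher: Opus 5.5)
Your argument follows the paper: its proof of this proposition is a single-line appeal to \Cref{t:TR/QCper}. Your check of the hypotheses is correct: period $2\pi i$, order one, two simple zeros of $dx$ per period, $\lambda=0$, $y=z$ an affine coordinate, and $\Delta=i\R$. You were also right to restore the $\hslash$ that is missing from the displayed equation of \Cref{t:TR/QCper}.

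In your optional unpacking, however, one step is false as stated.

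First, a smaller point. Because $\lambda=0$, $dx_N$ is a polynomial times $dz$, so $F_N$ is a polynomial. Hence $\mc{S}_N$ is algebraic, not transalgebraic. It is regular in the sense that $P$ is linear in $x$; it is not $\C$-regular, which is a notion for periodic curves. Also, $F_N(\hslash\,d/dx_N)$ has finite order; only the limit $2\cosh(\hslash\,d/dx)$ has infinite order.

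The substantive point is that it is not true that only finitely many coefficients of $F_N$ contribute at fixed order in $\hslash$. The wavefunction $\psi_N$ contains $e^{S^N_{-1}/\hslash}$ with $dS^N_{-1}/dx_N=y$, so $(\hslash\,d/dx_N)^i\psi_N=\psi_N\,(y^i+\mc{O}(\hslash))$. Thus every Taylor coefficient already enters at order $\hslash^0$, where the equation is just the classical relation $x_N-F_N(y)=0$.

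The correct bookkeeping is to conjugate. The $\hslash^k$ coefficient of $\psi_N^{-1}F_N(\hslash\,d/dx_N)\psi_N$ is a universal polynomial in the $F_N^{(j)}(y)$ with $j\le 2k$ and in $x_N$-derivatives of the $S^N_\chi$ with $\chi<k$. Each ingredient converges:
\begin{itemize}
\item Since $dx_N\to dx$ locally uniformly, $F_N$ and all its derivatives converge locally uniformly, once the additive constant in $x_N$ is matched with that of $x$.
\item $d/dx_N\to d/dx$ away from $R_0$.
\item The $S^N_\chi$ converge up to additive constants by \Cref{l:wavefuncper}; the unstable ones converge directly because $dx_N\to dx$.
\end{itemize}
Hence each coefficient converges, and the limiting equation holds order by order. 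With this repair your proof is complete. The paper does not spell out this limit at all, deferring to the \textit{mutatis mutandis} proof of \Cref{t:TR/QCess}.
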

\begin{proof}
	This immediately follows from \Cref{t:TR/QCper}.
\end{proof}

Now turn to the corresponding wavefunctions. To better facilitate the discussion define the $\hslash^\chi$ coefficients in the logarithm of the wavefunction for the two curves with the following notation: the $S^\infty_\chi$ correspond to the curve \eqref{e:GWus} (with $x_\infty(z) = 2\cosh(z)$ and $\omega_{g,n}^\infty$ denoting the correlators of this curve) and the $S_\chi$ to the curve \eqref{e:GWnorm}.
\begin{equation}
	\begin{split}
		\chi &= -1: \\
		S^\infty_{-1}(z) &\coloneq -dx_\infty(z)/dz + zx_\infty(z), \qquad
		S_{-1}(z) \coloneq z^{-1}-z + (z^{-1}+z)\log(z) \\
		\chi &= 0: \\
		S^\infty_{0}(z) &\coloneq \frac{1}{2!} \int_{-\infty}^z\int_{-\infty}^z \left( \frac{dz_1dz_2}{(z_1-z_2)^2} - \frac{dx_\infty(z_1)dx_\infty(z_2)}{(x_\infty(z_1)-x_\infty(z_2))^2} \right), \qquad
		S_{0}(z) \coloneq S^\infty_{0}(\log(z)) \\
		\chi &\geq 1: \\
		S^\infty_{\chi}(z) &\coloneq \sum_{ \substack{ 2g+n-2 = \chi \\ (g,n) \in \mathbb{Z}_{\geq 0} \times \mathbb{Z}_{\geq 1} } } \frac{1}{n!} \stackrel{ n\text{-integrals} } { \overbrace{\int_{-\infty}^z\dots\int_{-\infty}^z} } \omega^\infty_{g,n}, \qquad
		S_{\chi}(z) \coloneq \sum_{ \substack{ 2g+n-2 = \chi \\ (g,n) \in \mathbb{Z}_{\geq 0} \times \mathbb{Z}_{\geq 1} } } \frac{1}{2^nn!} \stackrel{ n\text{-integrals} } { \overbrace{\int_{z^{-1}}^z\dots\int_{z^{-1}}^z} } \omega_{g,n}.
	\end{split}
\end{equation}
For a justification regarding the definition of the $S_\chi(z)$ see \cite{DMNPS17}. It is worth noting that the definition of $S_0(z)$ does not agree with what one would expect from \eqref{e:wfuncconj}, and is instead defined in such a way that it does indeed satisfy the relation $S^\infty_0(z) = \pi^*S_0(z)$.\footnote{The definition given in \cite{DMNPS17} actually differs by a constant, but this is irrelevant.} In fact, the following proposition holds.

\begin{proposition}\label{p:GWagree}
	The wavefunctions of the spectral curves $\mc{S}$ and $\mc{S}_\infty$ agree up to a pulback, \textit{i.e.}
	\begin{equation}
		e^C \exp\left[\sum_{\chi = -1}^{\infty}\hslash^{\chi}S^\infty_{\chi}(z)\right] = \pi^*\exp\left[\sum_{\chi = -1}^{\infty}\hslash^{\chi}S_{\chi}(z)\right]\,,
	\end{equation}
	where $\pi$ is the exponential map and $C\in \C[[\hslash]]$.
\end{proposition}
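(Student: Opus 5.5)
Both wavefunctions are characterised, up to normalisation, as formal WKB solutions of the same linear difference equation. I would therefore show that both $\pi^*$ of the right-hand side and the left-hand side solve $[\hat{x}-2\cosh(\hat{y})]\psi=0$ with $\hat{y}=\hslash\, d/dx$. Then I would compare the two WKB expansions order by order in $\hslash$.

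\textbf{Step 1: the two quantum curves.} By \Cref{p:QCGWP1}, $\psi_\infty(z;\infty)=\exp[\sum_\chi\hslash^\chi S^\infty_\chi(z)]$ satisfies $[x-2\cosh(\hslash\,d/dx)]\psi_\infty=0$. For $\mc{S}$ I will quote \cite{DMNPS17}. With the normalisation of the $S_\chi$ above (the $z^{-1}$ to $z$ integrations, the $2^{-n}$ factors, and the modified $S_0$), the function $\exp[\sum\hslash^\chi S_\chi]$ satisfies the same equation in the variable $x=z+z^{-1}$. Since $\pi^*x=2\cosh z$ and $d/dx$ commutes with $\pi^*$ once we write things in terms of $x$, $\pi^*$ of that function also solves $[x-2\cosh(\hslash\,d/dx)]\Psi=0$, now on the $z$-plane of \Cref{ex:GWus}.

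\textbf{Step 2: WKB uniqueness.} Write $\Psi=\exp[\sum_{\chi\ge-1}\hslash^\chi T_\chi(z)]$ and expand $e^{\pm\hslash\,d/dx}\Psi/\Psi$ in powers of $\hslash$. At order $\hslash^{-1}$ this gives $x=2\cosh(dT_{-1}/dx)$. I fix the branch $dT_{-1}/dx=z$, which both $S^\infty_{-1}$ and $\pi^*S_{-1}$ satisfy; a direct differentiation checks this. At order $\hslash^{k}$ with $k\ge 0$ the equation has the form $2\sinh(dT_{-1}/dx)\,dT_{k}/dx=(\text{polynomial in }T_{-1},\dots,T_{k-1}\text{ and their }x\text{-derivatives})$. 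Because $\sinh z\not\equiv 0$, this determines $dT_k/dx$ uniquely, and hence $T_k$ up to an additive constant $c_k$. By induction, once the branch of $T_{-1}$ is fixed, the equation determines $\Psi$ up to $e^{C}$ with $C=\sum_k c_k\hslash^k\in\C[[\hslash]]$. One also has to absorb the $\delta_{n,1}C_g(v)$ constants from \Cref{l:wavefuncper} and the basepoint shift into $C$; this is harmless by the remark after that lemma. Matching $S_0^\infty=\pi^*S_0$ directly, as noted above, provides a consistency check at the first nontrivial order.

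\textbf{Main obstacle.} Step 1 for $\mc{S}$ is the delicate part. The result of \cite{DMNPS17} is stated with their own conventions (operator ordering, possibly $\hslash$ shifted by a sign, and a differently normalised $S_0$), and I would need to confirm that it gives exactly $\hat{x}-2\cosh(\hat{y})$ with no $\hslash$-corrections for the $S_\chi$ defined here. I would first check this by comparing the $\hslash^0$ and $\hslash^1$ WKB equations against $S_0$ and the explicit $(1,1)$ and $(0,3)$ contributions computed in \Cref{Comp}.

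If a correction such as a shift $\hat{x}\to\hat{x}+c\hslash$ appeared, uniqueness would fail. In that case I would instead relate the operators through a conjugation, giving the same relation up to an $\hslash$-dependent rescaling. A second point is the choice of branch in Step 2: $\pi^*S_{-1}$ is built from $\log z$ pulled back, so it equals $S^\infty_{-1}$ exactly. Hence no sign ambiguity $dT_{-1}/dx=-z$ arises.
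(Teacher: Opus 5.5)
Your proposal is correct and is essentially the paper's argument: the paper likewise uses \Cref{p:QCGWP1} and \cite{DMNPS17} to see that both wavefunctions are annihilated by $\hat{x}-2\cosh(\hat{y})$ (up to $\pi^*$). It then uses $\omega^\infty_{0,1}=\pi^*\omega_{0,1}$ to match the $\hslash^{-1}$ terms, leaving implicit the WKB uniqueness step that you spell out (after dividing by $\Psi$, the equation $x=2\cosh(dT_{-1}/dx)$ appears at order $\hslash^{0}$ and $dT_k/dx$ is fixed at order $\hslash^{k+1}$, a harmless indexing slip in your write-up).
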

\begin{proof}
	The quantum curve that kills both wavefunctions is the same, up to pullback via $\pi$. Furthermore, the relation $\omega_{0,1}^\infty = \pi^*\omega_{0,1}$ holds so the $\hslash^{-1}$ term in both expansions coincides.
\end{proof}

In \Cref{Comp_GWp1} the $\chi=1$ case is computed explicitly for both $\mc{S}$ and $\mc{S}_\infty$; it is indeed found that $S_1^\infty(z) = \pi^*S_1(z)$. To the knowledge of the author, this is the first time two different collections of correlators have been shown to have the same quantum curve and wavefunction. 

Now, a natural question to ask is whether the expansion coefficients of the $\omega_{g,n}^\infty$ have a simple expression in terms of Gromov-Witten invariants of $\P^1$ as do the $\omega_{g,n}$. Although some interesting relations between the $\omega_{g,n}^\infty$ and the $\omega_{g,n}$ are presented in \Cref{s:pullback}, a good answer to this question has not been obtained.

\subsection{Gromov-Witten theory of framed $\C^3$}\label{ss:C3}

Consider the family of curves depending on an integer $f\in\mathbb{Z}$ called the `framing'
\begin{equation}
	P(x,y)=-e^{(f+1)xy}+e^{fxy}-x=0 \,.
\end{equation}
The parametrisation of this curve
\begin{equation}\label{e:CYthem}
	\mathcal{S} = \left(\C\setminus[0,\infty),\, x(z) = z^f(1-z),\, y(z) = \log(z)/x(z) ,\, \omega_{0,2}(z_1,z_2) = \frac{dz_1dz_2}{(z_1-z_2)^2}\right),
\end{equation}
is known to calculate the Gromov-Witten invariants of framed $\C^3$ when expanded around $z=1$ in the variable $e^{x(z)}$ \cite{BM08,BS12,Z12}. As in the previous section, in order to apply the results of this paper, a different parametrisation is used, one without $\log$-cuts, namely that of \Cref{ex:GWtorus}
\begin{equation}\label{e:CYus}
  \mathcal{S}_\infty = \left(\P^1,\, x(z)=e^{fz}(1-e^{z}),\, y(z)=z,\, \omega_{0,2}(z_1,z_2) = \frac{dz_1dz_2}{(z_1-z_2)^2}\right)\,,
\end{equation}
so that $\mathcal{S}_\infty$ is a periodic $\C^*$-admissible curve for $f\in\Q\setminus\{0,-1\}$.\footnote{In the $f=0,-1$ cases the only ramification point of $x$ is the one at infinity, and this point does not contribute to the topological recursion. Curiously, taking the $f\to 0,-1$ limit results in non-zero correlators (see \Cref{Comp_C3}).}

Inspired by the `miracle' that occurred in \Cref{ss:P1}, the quantum curve with basepoint $b=\infty$ will be computed. This is the content of the following proposition.
\begin{proposition}\label{p:QCGWtor}
  The quantum spectral curve of \eqref{e:CYus} (\Cref{ex:GWtorus}) is
	\begin{equation}
		\left[1-e^{\hat{x}\hat{y}}-\hat{x}e^{-f\hslash}e^{-f\hat{x}\hat{y}}\right]\psi(z;\infty) = 0 \,,
	\end{equation}
	where the integration in the wavefunction is defined in \Cref{l:wavefuncexp}.
\end{proposition}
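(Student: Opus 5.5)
The curve $\mc{S}_\infty$ of \eqref{e:CYus} is a regular exponential curve in the sense of \Cref{d:regexp}: $z = xy$ is a global affine coordinate on $\P^1$. It also has the form of \Cref{p:formexp} with $\kappa = 1$:
\begin{equation}
	x(z) = \frac{P_1(e^{z})}{P_2(e^{z})}\,, \qquad P_1(t) = t^f(1-t)\,, \qquad P_2(t)=1\,,
\end{equation}
for $f\geq 1$. For negative $f$ we instead clear denominators, $P_1(t) = 1-t$ and $P_2(t) = t^{-f}$, so that $P_1,P_2$ are coprime polynomials. The plan is therefore to apply \Cref{t:TR/QCexp} directly. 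This gives
\begin{equation}
	\left( P_2\big(e^{\hslash x\frac{d}{dx}}\big) x - P_1\big(e^{\hslash x\frac{d}{dx}}\big) \right) \psi(z;\infty) = 0\,,
\end{equation}
with the wavefunction well-defined and equal to the limit of the $\psi_N$ (up to a constant) by \Cref{l:wavefuncexp}. Here $\hat{x}\hat{y} = \hslash x\,d/dx$, so $e^{\hslash x d/dx}$ is the shift operator $x\mapsto e^{\hslash}x$.

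The remaining work is an operator rearrangement into the stated normal form. Take $f\geq 1$. The equation reads
\begin{equation}
	\Big(x - e^{f\hat{x}\hat{y}}\big(1-e^{\hat{x}\hat{y}}\big)\Big)\psi = 0\,.
\end{equation}
Left-multiply by $e^{-f\hat{x}\hat{y}}$ and commute it past $x$ using $e^{c\hslash x d/dx}\, x = e^{c\hslash}\, x\, e^{c\hslash x d/dx}$. This produces
\begin{equation}
	e^{-f\hslash}\,\hat{x}\, e^{-f\hat{x}\hat{y}} - \big(1-e^{\hat{x}\hat{y}}\big)\,,
\end{equation}
which, up to an overall sign, is exactly the operator in \Cref{p:QCGWtor}. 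The negative-$f$ case works the same way: starting from $e^{-f\hat{x}\hat{y}}x - (1-e^{\hat{x}\hat{y}})$, one only needs to move $x$ to the left, and this yields the identical factor $e^{-f\hslash}$. I would also check the sign of the $\hslash$ in this commutation against the leading order $S_{-1}$, to be sure $e^{-f\hslash}$ and not $e^{f\hslash}$ appears.

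The main obstacle is verifying the hypotheses of \Cref{t:TR/QCexp}, chiefly that the curve is natural (\Cref{d:natexp}), since \Cref{t:seqexp} and \Cref{l:wavefuncexp} rely on naturalness. On the curve we have $e^{xy} = e^{z}$, which tends to $0$ or $\infty$ only as $\Re z\to\mp\infty$. In those limits $x = e^{fz}(1-e^z)$ tends to $0$ or $\infty$ whenever $f\notin\{0,-1\}$, so naturalness holds. The finite ramification points are the zeros of $dx$, namely $e^{z} = f/(f+1)$; these are simple and do not coincide with zeros of $x$, so admissibility holds. Finally, the basepoint direction $v$ must avoid the convex hull of $R_0$. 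Since $R_0 = \log\frac{f}{f+1} + 2\pi i\Z$ is a vertical line, any $v$ with a nonzero real part is allowed, and this matches the basepoint used in \Cref{l:wavefuncexp}.
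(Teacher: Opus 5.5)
Your proposal is correct and follows the paper's proof. You obtain naturalness from $f\notin\{0,-1\}$ and regularity from $xy=z$, apply \Cref{t:TR/QCexp} (multiplying through by $e^{-fxy}$ when $f<0$, as in the paper's footnote), and then left-multiply by $e^{-f\hat{x}\hat{y}}$ using $e^{c\hat{x}\hat{y}}\hat{x}=e^{c\hslash}\hat{x}e^{c\hat{x}\hat{y}}$, which does produce $e^{-f\hslash}$.

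Your aside on the basepoint does not literally match \Cref{l:wavefuncexp}, where $\Delta$ is the line $\Re z=\log\frac{f}{f+1}$ rather than $\Re z=0$. However, $\Re v\neq 0$ is actually the condition that makes the integrals from $v\infty$ converge for this curve, so this does not affect the argument.
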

\begin{proof}
	First note that, as $f\notin \{0,-1\}$, the curve is indeed natural. It is then clear that the curve is a regular exponential curve and thus falls under \Cref{t:TR/QCexp}. One then immediately obtains\footnote{Technically, one should treat the case where $f>0$ separately from the case $f<0$ and multiply $P$ $e^{-fxy}$ in the $f<0$ case to obtain a polynomial in $x$ and $e^{xy}$. However, it is easy to see this gives the same result.}
	\begin{equation}
		\left[-e^{(f+1)\hat{x}\hat{y}}+e^{f\hat{x}\hat{y}}-\hat{x}\right]\psi(z;\infty) = 0 \,,
	\end{equation}
	from which the desired results follows upon multiplying on the left by $e^{-f\hat{x}\hat{y}}$. 
\end{proof}

This is not in agreement with the results of \cite{Z12}\footnote{\cite{Z12} writes this result slightly differently due to a different convention regarding the definitions of $\hat{x}$ and $\hat{y}$.}
\begin{equation}
	\left[1-e^{\hat{x}\hat{y}}-\hat{x}e^{\frac{1}{2}\hslash}e^{-f\hat{x}\hat{y}}\right]\tilde{\psi} = 0 \,,
\end{equation}
for any integer $f$. However, notice that
\begin{equation}
	e^{(f+\frac{1}{2})\hat{x}\hat{y}}\left[1-e^{\hat{x}\hat{y}}-xe^{-f\hslash}e^{-f\hat{x}\hat{y}}\right]e^{-(f+\frac{1}{2})\hat{x}\hat{y}} = \left[1-e^{\hat{x}\hat{y}}-xe^{\frac{1}{2}\hslash}e^{-f\hat{x}\hat{y}}\right]\,,
\end{equation}
so there is a simple relation between the two curves, and the two curves coincide, at least formally, for $f=-1/2$. Furthermore, 
\begin{equation}
  \psi(x) = e^{(f+1/2)\hat{x}\hat{y}} \tilde{\psi(x)} e^{-(f+1/2)\hat{x}\hat{y}} = \tilde{\psi}(e^{\hslash(f+1/2)}x) \,,
\end{equation}
so the two wavefunctions will also coincide for $f=-1/2$, and a very simple relation between them exits for all $f$.

There are no issues defining TR for $f=-1/2$ for the curve \eqref{e:CYus} (see \Cref{r:nintlam}), so one would expect that in this case the two wavefunctions do coincide, presuming the results of \cite{Z12} extend to the non-integer case. Although half-integer framing may seem somewhat irrelevant, it has appeared in the physics literature \cite{DF05}. Indeed, in the situation at hand $f=-1/2$ is somewhat special, as $x(z) = -2\sinh(z/2)$ has `more symmetry' than the generic case. In \Cref{Comp_C3}, it is shown explicitly that the two wavefunctions $\psi$ and $\tilde{\psi}$ do coincide in the $f=-1/2$ case up to order $\hslash^1$, where the $f=-1/2$ correlators of the finite curve $\mc{S}$ are defined via an analytic continuation.

The quantum curve with basepoint $b=0$ of the spectral curve \eqref{e:CYus} can be computed using \Cref{t:TR/QCzero} and the limiting procedures described in this article. The $\hslash$ corrections turn out to be very complicated and it does not appear to be interesting.

To end the discussion, a comment on the is a well-known fact that the $f\to\infty$ limit of the framed mirror curve reproduces the simple Hurwitz curve $P_H(x,y) = y - e^{-xy} = 0$ \cite{CGMPS07,BKMP08,BM08}. At the level of spectral curves, one can the consider the substitution in $P(x,y)=-e^{(f+1)xy}+e^{fxy}-x$ of $(x,y)\to (f^{-1}x,-y)$ so $P(x,y) = f^{-1}(e^{-xy}f(1-e^{-f^{-1}xy})-x)$. Multiplying $P$ by $(f/x)e^{xy}$ and taking the limit $f\to\infty$ one obtains precisely the curve of $P_H$. Interestingly, this limiting procedure seems to jive with the quantum curve of \Cref{p:QCGWtor}. Indeed, performing the substitution $(\hat{x},\hat{y})\to (f^{-1}\hat{x},-\hat{y})$ and taking the $f\to\infty$ limit of the quantum curve in \Cref{p:QCGWtor}, one obtains the quantum curve of the simple Hurwitz numbers with basepoint $b=\infty$ that was found in \cite{W24}.

\section{Relations to spectral curves with finite ramification loci}
\label{s:pullback}

In the prior section, various quantum curve computations suggested a relation between the newly defined topological recursion on spectral curves with infinite ramification loci, and the traditional Eynard-Orantin topological recursion on spectral curves with finite ramification Loci. Here, this relationship is pursued purely from the perspective of topological recursion, without invoking the quantum curve/topological recursion correspondence.

As a guiding example consider the spectral curves of \Cref{ex:GWnorm}
\begin{equation}
  \label{e:nopullback}
	\mathcal{S} = \left(\P^1,\, x(z) = z+1/z,\, y(z) = \log(z),\, \omega_{0,2}(z_1,z_2) = \frac{dz_1dz_2}{(z_1-z_2)^2}\right) \,,
\end{equation}
and \Cref{ex:GWus}
\begin{equation}
  \label{e:yespullback}
	\mathcal{S}_\infty = \left(\P^1,\, x(z) = 2\cosh(z),\, y(z) = z,\, \omega_{0,2}(z_1,z_2) = \frac{dz_1dz_2}{(z_1-z_2)^2}\right)\, .
\end{equation}
The regular Eynard-Orantin topological recursion of \cite{EO07} defines topological recursion on $\mathcal{S}$ whereas \Cref{s:per} defines topological recursion on $\mathcal{S}_\infty$. However, both these curves identically satisfy the entire equation
\begin{equation}
	P(x,y) = x - 2\cosh(y) = 0 \,,
\end{equation}
and, in fact, $\mathcal{S}_\infty$ looks as if it is the \emph{pullback} of $\mathcal{S}$ under the exponential map. However, not all the objects in the spectral curve \eqref{e:nopullback} are pulled back under the exponential map. Indeed, the bidifferential $\omega_{0,2}$ is identical for both $\mathcal{S}$ and $\mathcal{S}_\infty$. The following (essentially obvious) proposition gives a relation between two curves where $\omega_{0,2}$ is also pulled back.

\begin{proposition}\label{p:pullback}
	Let $\mathcal{S} = \left(\Sigma,\, x,\, (A,B),\, y,\, \omega_{0,2}\right)$ be an admissible spectral curve. Then let $\pi$ be a branched covering from a finite disjoint union of Riemann surfaces $\tilde{\Sigma}$ to $\Sigma$ such that $A$ does not contain any branchpoints of $\pi$. Given a set $\tilde{A}\subset\tilde{\Sigma}$ such that $\pi|_{\tilde{A}}:\tilde{A}\to A\cap R_0$ is a bijection define the spectral curve
	\begin{equation}
		\tilde{\mathcal{S}} \coloneq \left(\tilde{\Sigma},\, \tilde{x}\coloneq \pi^*x,\, (\tilde{A},\tilde{B}\coloneq\pi_1^*\pi_2^*B),\, \tilde{y} \coloneq \pi^*y,\, \tilde{\omega}_{0,2}\coloneq\pi^*_1\pi^*_2\omega_{0,2}\right) \,,
	\end{equation}
	where the notation $\pi_i^*$ means the pullback in the $i$th argument of an multi-differential. Then denote by $\{\omega_{g,n}\}_{(g,n)\in\Z_{\geq 0}\times\Z_{\geq 1}}$ the system of correlators constructed by topological recursion on $\mc{S}$ and $\{\tilde{\omega}_{g,n}\}_{(g,n)\in\Z_{\geq 0}\times\Z_{\geq 1}}$ the system of correlators constructed by topological recursion on $\tilde{\mc{S}}$. The two systems of correlators are then related via the pullback
	\begin{equation}
		\tilde{\omega}_{g,n} = \pi_1^*\pi_2^*\cdots\pi_n^*\omega_{g,n} \,.
	\end{equation}
\end{proposition}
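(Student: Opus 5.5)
The proof will be an induction on $2g+n-2$ that compares the recursion for $\tilde{\mc{S}}$ term by term with the pullback of the recursion for $\mc{S}$. The base cases are the unstable correlators. Here $\tilde\omega_{0,1}=\tilde y\,d\tilde x=\pi^*(y\,dx)=\pi^*\omega_{0,1}$ and $\tilde\omega_{0,2}=\pi_1^*\pi_2^*\omega_{0,2}$ hold by definition.

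\textbf{Step 1: local geometry near $\tilde a\in\tilde A$.} By hypothesis $\pi$ is unbranched at $a=\pi(\tilde a)$. Hence there are neighbourhoods $\tilde U_{\tilde a}$ of $\tilde a$ and $U_a$ of $a$ with $\pi|_{\tilde U_{\tilde a}}\colon\tilde U_{\tilde a}\to U_a$ biholomorphic. Since $\tilde x=x\circ\pi$, the local deck transformations of $\tilde x$ at $\tilde a$ are exactly the conjugates $\tilde\sigma=\pi|^{-1}\circ\sigma_a\circ\pi$ of those of $x$ at $a$. Two consequences follow:
\begin{itemize}
\item the local Galois fibres correspond bijectively, $\tilde{\mf{f}}'_{\tilde a}(\tilde z)=(\pi|)^{-1}\mf{f}'_a(\pi(\tilde z))$;
\item $r_{\tilde a}=r_a$ and $s_{\tilde a}=s_a$, so $\tilde{\mc{S}}$ inherits local admissibility.
\end{itemize}
The polarisation structure also transfers. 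The local symplectic spaces $\tilde V_{\tilde a}$, $\tilde V^+_{\tilde a}$ and the pairing $\Omega_{\tilde a}$ are the pullbacks under $\pi|$ of $V_a$, $V_a^+$, $\Omega_a$, because residues are invariant under biholomorphisms. Consequently $(\tilde A,\tilde B)$ is a valid polarisation, and $\int_{\tilde a}^{\tilde z}\tilde B(\tilde z_0,\cdot)=\pi_0^*\int_a^{\pi(\tilde z)}B(z_0,\cdot)$ as functions of $\tilde z$ near $\tilde a$ and differentials in $\tilde z_0$.

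\textbf{Step 2: the kernel and the combinatorial factor pull back.} The first point is that $\tilde y(\tilde z')-\tilde y(\tilde z)=y(\pi\tilde z')-y(\pi\tilde z)$ and $d\tilde x=\pi^*dx$. Together with Step 1 this gives
$$\tilde K_{|Z|+1}(\tilde z_0,\tilde z,\tilde Z)=\pi^*K_{|Z|+1}(z_0,z,Z)$$
in all variables. Next, by the induction hypothesis every factor $\tilde\omega_{g',n'}$ appearing in $\tilde{\mc W}_{g,n,|Z|+1}$ is a pullback, which includes the $\tilde\omega_{0,2}$ factors. Hence $\tilde{\mc W}=\pi^*\mc W$ after the bijection $\tilde Z\leftrightarrow Z$ of Step 1. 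Finally, the residue at $\tilde a$ of a pullback under a local biholomorphism equals the residue at $a$. So the $\tilde a$-term of the recursion for $\tilde\omega_{g,n+1}$ is $\pi_0^*\pi_1^*\cdots\pi_n^*$ of the $a$-term for $\omega_{g,n+1}$, where $\pi_i^*$ acts on the external variables $z_0,\dots,z_n$.

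\textbf{Step 3: summing over $\tilde A$.} Because $\pi|_{\tilde A}\colon\tilde A\to A\cap R_0$ is a bijection, summing over $\tilde A$ reproduces exactly the pullback of the sum over $A\cap R_0$. This closes the induction. Any ramification points of $\tilde x$ outside $\tilde A$, such as other preimages of $R$, do not appear, since the recursion only sums over the chosen set $\tilde A$.

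\textbf{Main obstacle.} When $\tilde A$ or $A$ is infinite, as in the periodic and exponential settings, the interchange of the sum with the pullback must be justified. This is immediate termwise, since pullback is applied pointwise in the external variables $z_0,\dots,z_n$ and absolute convergence of the $\mc S$-sum (\Cref{l:converge} and its exponential analogue) transfers directly. The more delicate point is checking that $\tilde{\mc S}$ is genuinely a spectral curve in the sense of \Cref{d:SC}, in particular that $\tilde\omega_{0,2}$ has the required local expansion property with respect to $\tilde V^\pm$. Here I would argue purely locally: it follows from the corresponding property of $\omega_{0,2}$ by transport through $\pi|$, because only the local structure at points of $\tilde A$ enters. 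Global properties such as compactness or canonical polarisation may fail for $\tilde{\mc S}$, but they are not needed for the local recursion.
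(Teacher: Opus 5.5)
Your proof is correct and follows the paper's argument essentially step for step. You induct on $2g+n-2$, conjugate the local deck transformations of $x$ at $a$ by the local inverse of $\pi$ near $\tilde a$ to match the local fibres, pull back the kernel and $\mc{W}_{g,n,|Z|+1}$ termwise via the induction hypothesis, and then use invariance of residues under local biholomorphisms together with the bijection $\tilde A\to A\cap R_0$ (your extra remarks on infinite sums and on whether $\tilde{\mc{S}}$ satisfies \Cref{d:SC} go beyond what the paper checks but do not change the core argument).
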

\begin{proof}
	This is proven via induction on the negative of the Euler characteristic $2g+n-2$. The cases $2g+n-2=-1,0$ hold by definition, and so the proof proceeds directly to the induction step. Pick an $a\in A\cap R_0$ and then let $\tilde{a}$ be the unique element of $\tilde{A}$ such that $\pi(\tilde{a})=a$. Let $\sigma_a$ be a local deck transformation of $x$ about $a$ and note that there exists a local partial inverse of $\pi$, satisfying $\pi^{-1}(a)=\tilde{a}$, so that $\tilde{\sigma}_a\coloneq \pi^{-1}\circ\sigma_a\circ\pi$ is a local deck transformation of $\tilde{x}$ about $\tilde{a}$. Indeed, this process induces an isomorphism between the two local deck transformation groups and therefore a bijection between the sets $\mf{f}_a(w)$ and $\mf{f}_{\tilde{a}}(z)$ where $w=\pi(z)$. Furthermore, given a function or differential $g$ one can note that
	\begin{equation}
		\pi^*\big(g\circ\sigma_a\big)(z) = \big(\pi^*g\big)(\tilde{\sigma}_a(z)) \,.
	\end{equation}
	With this knowledge in mind a simple computation establishes the induction step
	\begin{equation}
		\begin{split}
			\pi^*_0\pi^*_1\cdots\pi^*_n\omega_{g,n+1}(z_0, z_{\llbracket n\rrbracket}) &= \sum_{a \in A \cap R_0} \Res_{w = a} \sum_{\emptyset \neq Z \subseteq \mf{f}'_a(w)} \pi^*_0K_{|Z|+1}(z_0,w, Z) \pi^*_1\cdots\pi^*_n \mc{W}_{g,n,|Z|+1} (w, Z \mid z_{\llbracket n \rrbracket}) \\
			&=\sum_{\tilde{a} \in \tilde{A}} \Res_{z = \tilde{a}} \sum_{\emptyset \neq Z \subseteq \mf{f}_{\tilde{a}}'(z)} \tilde{K}_{|Z|+1}(z_0,z, Z) \tilde{\mc{W}}_{g,n,|Z|+1} (z, Z \mid z_{\llbracket n \rrbracket}) \\
			&=\tilde{\omega}_{g,n+1}(z_0,z_{\llbracket n \rrbracket} )\,,
		\end{split}
	\end{equation}
	where $\tilde{\mc{W}}_{g,n,|Z|+1}$ and $\tilde{K}_{|Z|+1}$ are defined in the obvious manner for the curve $\tilde{\mc{S}}$.
\end{proof}

The preceding proposition, then, justifies the succeeding definition.

\begin{definition}\label{d:pullback}
	Given a spectral curve $\mathcal{S} = \left(\Sigma,\, x,\, (A,B),\, y,\, \omega_{0,2}\right)$ and a branched covering $\pi$ from a finite disjoint union of Riemann surfaces $\tilde{\Sigma}$ to $\Sigma$ such that $A$ does not contain any branchpoints of $\pi$, define the pullback of $\mc{S}$ under $\pi$ as
	\begin{equation}
		\pi^*\mc{S} \coloneq \left(\tilde{\Sigma},\, \tilde{x}\coloneq \pi^*x,\, (\tilde{A},\tilde{B}\coloneq\pi_1^*\pi_2^*B),\, \tilde{y} \coloneq \pi^*y,\, \tilde{\omega}_{0,2}\coloneq\pi^*_1\pi^*_2\omega_{0,2}\right)\,,
	\end{equation}
	where $\tilde{A}\subset\tilde{\Sigma}$ has the property that $\pi|_{\tilde{A}}:\tilde{A}\to A\cap R_0$ is a bijection. Note that, the topological recursion on $\pi^*\mc{S}$ will be independent of the choice of $\tilde{A}$, by \Cref{p:pullback}, and so not making this choice explicit in the notation $\pi^*\mc{S}$ is justified.
\end{definition}

However, if the goal is to relate the curves $\mathcal{S}$, of \cref{e:nopullback}, to the curve $\mc{S}_\infty$, of \cref{e:yespullback}, then, after applying the above proposition, there are still three difficulties to overcome: first, the bidifferential $B$ should not be pulled back, but should remain the same; second, rather than just include a subset of the ramification points $\tilde{A}$, the polarisation should include the whole of $\pi^{-1}(A)$; third, $\omega_{0,2}$ should remain the same. The next proposition demonstrates that the first and second issue cancel each other out.

\begin{proposition}\label{p:ncan_pullback}
	Let $\mathcal{S} = \left(\Sigma,\, x,\, (A,B),\, y, \omega_{0,2}\right)$ be an admissible globally polarised spectral curve and denote by $\{\omega_{g,n}\}_{(g,n)\in\Z_{\geq 0}\times\Z_{\geq 1}}$ the system of correlators constructed by topological recursion on $\mc{S}$. Let $\pi$ be a branched covering from $\Sigma$ to itself such that $A$ does not include any branchpoints of $\pi$ and $\pi^{-1}(w)$ is a finite set for every $w\in\Sigma$. Then define the spectral curve
	\begin{equation}
		\bar{\mathcal{S}} = \left(\Sigma,\, \bar{x}\coloneq \pi^* x,\, (\bar{A}, \bar{B})\coloneq (\pi^{-1}(A),B),\, \bar{\omega}_{0,1}\coloneq \pi^*\omega_{0,1}, \bar{\omega}_{0,2}\coloneq \pi_1^*\pi_2^*\omega_{0,2}\right)\,,
	\end{equation}
	and denote by $\{\bar{\omega}_{g,n}\}_{(g,n)\in\Z_{\geq 0}\times\Z_{\geq 1}}$ the system of correlators constructed by topological recursion on $\bar{\mc{S}}$. The following relation then holds
	\begin{equation}
		\bar{\omega}_{g,n} = \pi_1^*\pi_2^*\cdots\pi_n^*\omega_{g,n} \,.
	\end{equation}
\end{proposition}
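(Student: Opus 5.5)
We argue by induction on $2g+n-2$, with the recursive step modelled on the proof of \Cref{p:pullback}. The unstable cases $(g,n)=(0,1),(0,2)$ hold by definition of $\bar{\mc S}$. For the induction step, write $\bar\omega_{g,n+1}(z_0,z_{\llbracket n\rrbracket})$ via \Cref{d:CndTR} as a sum over $\bar a\in\bar A\cap R_0=\pi^{-1}(A\cap R_0)$. Since no element of $A$ is a branchpoint of $\pi$, $\pi$ is a local biholomorphism near each $\bar a$. As in \Cref{p:pullback}, conjugation by a local inverse of $\pi$ identifies the local deck transformations of $\bar x$ at $\bar a$ with those of $x$ at $a=\pi(\bar a)$, so $\mf f'_{\bar a}(z)\leftrightarrow\mf f'_a(\pi(z))$. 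By the induction hypothesis, $\bar{\mc W}_{g,n,|Z|+1}$ is the pullback of $\mc W_{g,n,|Z|+1}$ in every argument. The factor $\prod(\bar y(z')-\bar y(z))\,d\bar x(z)$ is also a pullback.

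\textbf{Main step: the kernel.} The one factor that is not a pullback is $\int_{\bar a}^z \bar B(z_0,\cdot)=\int_{\bar a}^z B(z_0,\cdot)$. Here $B$ is not pulled back and $z_0$ is not in the image form. Change variables $w=\pi(z)$ in each residue at $\bar a$. The contribution of $\bar a$ then equals
\[
\Res_{w=a}\sum_{Z} \frac{\int_{\pi^{-1}_{\bar a}(a)}^{\pi_{\bar a}^{-1}(w)} B(z_0,\cdot)}{\prod(y(w')-y(w))\,dx(w)}\,\mc W_{g,n,|Z|+1}(w,Z\mid \pi(z_{\llbracket n\rrbracket})),
\]
where $\pi^{-1}_{\bar a}$ is the local inverse near $\bar a$. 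Summing over the finitely many $\bar a\in\pi^{-1}(a)$ gives the kernel
\[
\sum_{\bar a\in\pi^{-1}(a)}\int_{\bar a}^{\pi^{-1}_{\bar a}(w)} B(z_0,\cdot),
\]
which is a $1$-form in $z_0$ whose dependence on $w$ is through the germ at $a$. We must compare it with $\pi_0^*\int_a^w B(\cdot,w')$. The residue only sees the polar part in $w$ of the rest of the integrand, so it suffices to show that the two kernels define the same linear functional on polar parts at $a$. In other words, the output of the residue is the projection $\Omega(B,\cdot)$ applied to the relevant principal part.

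\textbf{Comparing the kernels via global polarisation.} Since $\mc S$ is globally polarised, $B$ is the unique bidifferential with only a double pole of biresidue one on the diagonal. Polarisation then means that $\omega_{g,n+1}$ is determined by its principal parts at $A$ through $\omega(z_0)=\sum_a\Res_{w=a}\omega(w)\int_a^wB(z_0,\cdot)$. This is simply the statement that a form with poles only at $A$ equals the sum of its principal parts paired with $B$. I would therefore argue as follows. Let $\eta$ be a meromorphic $1$-form with poles only at $A$, which is exactly what $\omega_{g,n+1}$ will be, by polarisation of the original correlators. Then $\pi^*\eta$ has poles only at $\pi^{-1}(A)$ and no residues. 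Applying the same Cauchy-type identity on $\Sigma$ to $\pi^*\eta$, using $\bar B=B$, gives
\[
\pi^*\eta(z_0)=\sum_{\bar a\in\pi^{-1}(A)}\Res_{z=\bar a}\pi^*\eta(z)\int_{\bar a}^zB(z_0,\cdot).
\]

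\textbf{Assembling the induction step.} The $\mc S$-recursion summand at $a$ is the projection of its integrand's principal part. Consequently $\sum_{\bar a}$ of the $\bar{\mc S}$-contributions equals the $\bar B$-projection of the pulled-back principal parts, and by the identity above this is $\pi_0^*\omega_{g,n+1}$. This closes the induction. The pullbacks in $z_1,\dots,z_n$ are automatic from the induction hypothesis.

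The hard step is making the change of kernel rigorous, specifically showing that the residue at $a$ of the recursion integrand with kernel $\int_a^w B$ equals the principal-part projection. I would use the fact, from \Cref{t:loop} and \Cref{t:origprop}, that $\omega_{g,n+1}$ is residueless with poles only in $A\cap R_0$. Finiteness of $\pi^{-1}(w)$ is needed so that the sum over $\bar a$ is finite.
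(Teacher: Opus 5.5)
Your argument is correct, but it is organised differently from the paper's.

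\textbf{The paper's route.} The paper never evaluates $\bar\omega_{g,n+1}$ from scratch. It introduces the intermediate curve $\tilde{\mc S}=\pi^*\mc S$, for which $\tilde\omega_{g,n}=\pi_1^*\cdots\pi_n^*\omega_{g,n}$ is already known from \Cref{p:pullback}. It then shows $\tilde\omega_{g,n}=\bar\omega_{g,n}$ using the push--pull identity $\pi_1^*\pi_2^*B(z_1,z_2)=\sum_i B(\nu_i(z_1),z_2)$. The kernel built from $\tilde B$ at $\tilde a\in\tilde A$ splits into $m$ kernels built from $B$. Since the rest of the integrand is $\pi$-invariant, the substitution $z\mapsto\nu_i(z)$ turns residues over $\tilde A$ into residues over $\bigcup_i\nu_i(\tilde A)=\pi^{-1}(A)=\bar A$.

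\textbf{Your route.} You stay on $\bar{\mc S}$ throughout:
\begin{itemize}
\item the contribution of $\bar a$ only sees the principal part of $\pi^*F_a$, where $F_a$ is the $\mc S$-integrand with the kernel removed;
\item this principal part agrees with that of $\pi^*\omega_{g,n+1}$ up to a simple-pole term, which is killed because $\int_{\bar a}^zB(z_0,\cdot)$ vanishes at $\bar a$;
\item the reproducing property of $B$, applied to $\pi^*\omega_{g,n+1}$ on all of $\Sigma$, then closes the induction.
\end{itemize}
The kernel comparison in your \emph{main step}, $\sum_{\bar a\in\pi^{-1}(a)}\int_{\bar a}^{\pi^{-1}_{\bar a}(w)}B(z_0,\cdot)=\pi_0^*\int_a^wB$, is exactly the paper's identity $\pi_1^*\pi_2^*B=\pi_2^*{\pi_2}_*B$ with the outer $\pi_2^*$ removed. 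Your Cauchy argument, run over arbitrary principal parts, proves it. So the two proofs rest on the same fact, used from opposite ends.

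\textbf{Points on the write-up.}
\begin{itemize}
\item The step you flag as hard is immediate: the kernel is holomorphic in $w$ at $a$, so the residue only sees the principal part.
\item What deserves an explicit line is the link in your \emph{assembling} paragraph: $\omega_{g,n+1}$ and $F_a$ have the same principal part at $a$, modulo the residue term. This follows because $B$'s only pole is the biresidue-one pole on the diagonal. The same local computation also gives residuelessness and the location of the poles, so \Cref{t:loop} is not needed.
\item Your assertion that global polarisation makes $B$ unique is a genus-zero statement, and so is the Cauchy identity for $\pi^*\omega_{g,n+1}$ built on it. In higher genus neither this identity nor the paper's push--pull identity need hold for a self-cover. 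Both arguments therefore tacitly live on $\P^1$, which is where the paper applies the result, so this does not separate the two proofs.
\end{itemize}

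\textbf{What each route buys.} Yours dispenses with $\tilde{\mc S}$, with \Cref{p:pullback}, and with the individually ill-defined $\nu_i$. It also makes it transparent why keeping $B$ while enlarging $A$ to $\pi^{-1}(A)$ is harmless. The paper's reduces everything to one explicit identity for $B$, and this is what lets it pass to $\pi=\exp$ in \Cref{c:exppullback}: there the identity is just the absolutely convergent expansion of $\csch^2$. Your global residue argument, by contrast, would need extra growth control at the essential singularity, where the poles of $\pi^*\omega_{g,n+1}$ accumulate.
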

\begin{proof}
	Let $\tilde{\mc{S}} \coloneq \pi^*\mc{S}$ be as in \Cref{p:pullback}. The strategy will be to show $\bar{\omega}_{g,n}$ and $\tilde{\omega}_{g,n}$ are, in fact, equal. The proof critically relies on the following property of a globally polarised curves bidifferential $B$ \cite{EO07}
	\begin{equation}
    \pi_1^*\pi_2^*B(z_1,z_2) = \pi_2^*{\pi_2}_*B(z_1,z_2) = \pi_1^*{\pi_1}_*B(z_1,z_2) = \sum_{i=1}^m B(\nu_i(z_1),z_2) \,,
	\end{equation}
	where we have denoted the degree of $\pi$ as $m$ and defined the $m$ distinct maps $\nu_i$ through $\pi\circ\nu_i=\pi$. Of course, these $\nu_i$ may not be individually well-defined, due to a non-trivial action of the monodromy group, but any sum over them is. Next, choosing a $\tilde{A}$ as described in \Cref{d:pullback}, note 
	\begin{equation}
		\bigcup_{i=1}^m \nu_i(\tilde{A}) = \pi^{-1}(A) = \bar{A} \,. 
	\end{equation}
	Now, claim inductively on $2g+n-2$ that $\tilde{\omega}_{g,n} = \bar{\omega}_{g,n}$ and, noting the base cases $(g,n)=(0,1),(0,2)$ are trivial, compute
	\begin{equation}
		\begin{split}
			\tilde{\omega}_{g,n+1}(z_0,z_{\llbracket n \rrbracket}) &= \sum_{a \in \tilde{A}} \Res_{z = a} \sum_{\emptyset \neq Z \subseteq \mf{f}_{\tilde{a}}'(z)} \tilde{K}_{|Z|+1}(z_0,z, Z) \tilde{\mc{W}}_{g,n,|Z|+1} (z, Z \mid z_{\llbracket n \rrbracket}) \\
			&= \sum_{a \in \tilde{A}} \Res_{z = a} \sum_{\emptyset \neq Z \subseteq \mf{f}_{\tilde{a}}'(z)} \left(\int_a^z\sum_{i=1}^m B(\nu_i(z'),z_1)\right) \frac{\bar{\mc{W}}_{g,n,|Z|+1} (z, Z \mid z_{\llbracket n \rrbracket})}{\prod_{z' \in Z}\big(\bar{\omega}_{0,1}(z)-\bar{\omega}_{0,1}(z')\big)} \\
			&= \sum_{a \in \tilde{A}}\sum_{i=1}^m \Res_{z = \nu_i(a)} \sum_{\emptyset \neq Z \subseteq \mf{f}_{\tilde{a}}'(z)} \left(\int_a^z B(z',z_1)\right) \frac{\bar{\mc{W}}_{g,n,|Z|+1} (z, Z \mid z_{\llbracket n \rrbracket})}{\prod_{z' \in Z}\big(\bar{\omega}_{0,1}(z)-\bar{\omega}_{0,1}(z')\big)} \\
			&= \bar{\omega}_{g,n+1}(z_0,z_{\llbracket n \rrbracket}) \,,
		\end{split}
	\end{equation}
	where we were able to go from the second line to the third as the integrand in $z$, except for $B$, is a pullback in $\pi$ and so has the same value at $z$ and $\nu(z)$.
\end{proof}

\begin{corollary}\label{c:exppullback}
	The above proposition holds when $\pi$ is a branched covering from $\P^1$ to $\P^1\setminus\{z=0\}$ given by $\pi(z) = e^z$ where $z$ is an affine coordinate.
\end{corollary}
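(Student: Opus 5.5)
The plan is to rerun the proof of \Cref{p:ncan_pullback} with $\pi(z)=e^z$, which has infinite degree, so every finite sum $\sum_{i=1}^m$ becomes a sum over $\Z$. The maps $\nu_k(z)=z+2\pi i k$, $k\in\Z$, are honest global automorphisms of $\C$ with $\pi\circ\nu_k=\pi$, so no monodromy subtlety arises. Moreover $\pi$ is unramified on $\C$, so no element of $A$ is a branchpoint of $\pi$ (the point $z=0$ of the target is excluded). Fixing $\tilde{A}$ as in \Cref{d:pullback}, we have $\bar{A}=\pi^{-1}(A)=\bigsqcup_{k\in\Z}\nu_k(\tilde{A})$, so \Cref{p:pullback} applies directly and gives $\tilde{\omega}_{g,n}=\pi_1^*\cdots\pi_n^*\omega_{g,n}$. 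What remains is to show $\tilde{\omega}_{g,n}=\bar{\omega}_{g,n}$, where $\bar{\omega}_{g,n}$ is defined by the infinite sum over $\bar{A}$, in the spirit of \Cref{d:TRinflocus}.

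The key identity replacing the finite-degree pushforward formula is
\begin{equation}
  \pi_1^*\pi_2^*B(z_1,z_2)=\sum_{k\in\Z}B(\nu_k(z_1),z_2)=\sum_{k\in\Z}\frac{dz_1dz_2}{(z_1-z_2+2\pi i k)^2}=\frac{e^{z_1}e^{z_2}dz_1dz_2}{(e^{z_1}-e^{z_2})^2}\,,
\end{equation}
which I will check explicitly using the global form of $B$ in genus zero (after passing to the affine coordinate $w=e^z$). This is the classical partial fraction expansion $\sum_k (u+2\pi i k)^{-2}=\tfrac14\operatorname{csch}^2(u/2)$, and the series converges absolutely and locally uniformly away from the poles. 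For a general globally polarised $B$ on $\P^1$ I will reduce to this case: a global polarisation in genus zero is the standard Bergman kernel, plus at worst a holomorphic symmetric correction whose pullback is handled termwise.

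Then I run the induction on $2g+n-2$ exactly as in \Cref{p:ncan_pullback}. In the recursion for $\tilde{\omega}_{g,n+1}$ I insert this series into the kernel $\int_a^z \pi_1^*\pi_2^*B(\cdot,z_0)$. Everything else in the integrand ($\bar{\mc{W}}$ and the $\bar{\omega}_{0,1}$ differences) is a pullback under $\pi$ by the induction hypothesis, hence invariant under each $\nu_k$. Shifting $z\mapsto\nu_k(z)$ inside each term turns the single residue at $a\in\tilde{A}$ into the sum of residues at all $\nu_k(a)$, $k\in\Z$, which is the definition of $\bar{\omega}_{g,n+1}$.

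The main obstacle is justifying the interchange of the infinite sum over $k$ with the residue and the integral $\int_a^z$. I would handle it by shrinking to a small circle $|z-a|=\epsilon$ with $\epsilon<\pi$. On this circle the remaining factors of the integrand are bounded independently of $k$: they are literally the same function after translation. The $k$-th term of the kernel is $O(k^{-2})$ uniformly, so the Weierstrass $M$-test allows swapping the sum and the contour integral. The same bound also shows that the sum over $\bar{A}$ defining $\bar{\omega}_{g,n}$ converges absolutely, playing the role of \Cref{l:converge} in this setting. One further care point is that the primitive $\int_a^z B(\nu_k(\cdot),z_0)$ differs from $\int_{\nu_k(a)}^{\nu_k(z)}B(\cdot,z_0)$ only by relabelling the base point. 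I will check this carefully, since the base point must move along with the residue location.
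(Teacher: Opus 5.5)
Your proposal is correct and follows the paper's proof. The paper likewise establishes $\pi_1^*\pi_2^*B=\sum_{k\in\Z}B(z_1+2\pi i k,z_2)=\tfrac14\csch^2\bigl(\tfrac{z_1-z_2}{2}\bigr)dz_1dz_2$ and uses absolute convergence of this series to interchange the sum with the integration and residue in the proof of \Cref{p:ncan_pullback}; your $M$-test bound and base-point relabelling just make that step explicit. One small simplification: on $\P^1$ there are no nonzero holomorphic symmetric bidifferentials, so a globally polarised $B$ is exactly $dz_1dz_2/(z_1-z_2)^2$, as the paper notes, and your reduction via a holomorphic correction is vacuous.
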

\begin{proof}
	First, the only choice $B$ on $\P^1$ so that $\mc{S}$ is globally polarised is $B(z_1,z_2) = dz_1dz_2/(z_1-z_2)^2$ \cite{EO07}. Then note
	\begin{equation}
		\pi_1^*\pi_2^*B(z_1,z_2) = \frac{e^{z_1}e^{z_2}dz_1dz_2}{(e^{z_1}-e^{z_2})^2} = \frac{1}{4}\csch^2\left(\frac{z_1-z_2}{2}\right) = \sum_{i=-\infty}^{\infty} \frac{dz_1dz_2}{(z_1-z_2-2\pi i)^2} = \pi^*\pi_*B(z_1,z_2)\,.		
	\end{equation}
	Thus, the central property of $B$ still holds in this case. Noticing that convergence in the infinite sum is absolute, the interchange of the sum over $i$ with the integration over $z'$ in the proof of \Cref{p:ncan_pullback} is valid, and so the result will also hold for the exponential map. 
\end{proof}

\begin{remark}
	The above proposition should remain true for any Riemann surface $\Sigma$ with $\pi(z) = M_0(z) e^{M_1(z)}$ a transalgebraic function. This can be shown using the asymptotics of \cite[Lemma 3.10]{BKW24}. 
\end{remark}

\begin{remark}
	It is important to note a certain subtlety when $y$ has branch cuts that are `related' to the branch cuts $\pi^{-1}$. In this case one needs to take care in `cancelling' branch cuts in expressions involving $y\circ\pi^{-1}$. For example, if $\pi(z) = e^z$ and $y(z) = \log(z)$ then the recursion kernel will contain factors of the form
	\begin{equation}
		\left(\log(e^z) - \log(e^{\log(\sigma(e^z))})\right)^{-1}\,,
	\end{equation}
	where $\sigma$ is a deck transformation of the original (not pulled back) $x$ and some of the $\log$'s may be chosen to be different branches. To determine the correct branch choice, one should note that these factors have to give a pole at $a$. For example, if $x(z) = z+1/z$ so $\pi^*x = 2\cosh(z)$ around a ramification point $\pi m$, noting $\sigma(z) = 1/z$, the branches are chosen such that
	\begin{equation}
		\left(\log(e^z) - \log(e^{\log(\sigma(e^z))})\right)^{-1} = (z+z-2\pi m)^{-1}\,.
	\end{equation}
In general, this will result in $y$ having no consistent, global, definition as a meromorphic function; one will need to define it locally around each ramification point to obtain the correct branch choice.
\end{remark}

The next lemma provides, in particular, an explicit relation between the genus zero correlators the curves \cref{e:nopullback} and \cref{e:yespullback}.

\begin{lemma}\label{l:pullback}
	Using the notation of the previous proposition the following relation holds for $n\in\Z_{\geq 0}$
	\begin{equation}
		\bar{\omega}_{0,n+1}(z,z_{\llbracket n \rrbracket}) = \sum_{z'_1 \in \pi^{-1}(\pi(z_1))} \sum_{z'_2 \in \pi^{-1}(\pi(z_2))} \cdots \sum_{z'_n \in \pi^{-1}(\pi(z_n))} \hat{\omega}_{0,n+1}(z,z'_{\llbracket n \rrbracket})\,,
	\end{equation}
	where the $\hat{\omega}_{g,n}$ are constructed from the spectral curve
	\begin{equation}
		\hat{\mathcal{S}} = \left(\Sigma,\, \hat{x}\coloneq \pi^* x,\, (\hat{A}, \hat{B})\coloneq (\pi^{-1}(A),B),\, \hat{y}\coloneq \pi^*y, \hat{\omega}_{0,2}\coloneq \omega_{0,2}\right)\,.
	\end{equation} 
\end{lemma}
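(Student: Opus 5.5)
The plan is to induct on $n$, i.e.\ on $2g+n-2=n-1$ with $g=0$ fixed. I will compare the recursion for the $\bar{\omega}_{0,n+1}$ with the recursion for the $\hat{\omega}_{0,n+1}$ directly. The curves $\bar{\mc{S}}$ and $\hat{\mc{S}}$ share the same $\Sigma$, $\bar{x}=\hat{x}=\pi^*x$, polarisation $(\pi^{-1}(A),B)$ and $\bar{\omega}_{0,1}=\hat{\omega}_{0,1}=\pi^*\omega_{0,1}$. So their recursion kernels $K_{|Z|+1}$, their ramification points and their local deck transformations coincide exactly, and the only difference is the choice of $\omega_{0,2}$.

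\textbf{Base cases.} For $n=0$ there is nothing to prove, since the sum over preimages is empty. For $n=1$ the claim reads
\begin{equation}
\pi_1^*\pi_2^*\omega_{0,2}(z,z_1)=\sum_{z_1'\in\pi^{-1}(\pi(z_1))}\omega_{0,2}(z,z_1')\,.
\end{equation}
This is the push–pull identity $\pi_1^*\pi_2^*\omega_{0,2}=\pi_2^*{\pi_2}_*\omega_{0,2}$ used in the proof of \Cref{p:ncan_pullback}, now applied to $\omega_{0,2}$ instead of $B$. It holds under the same hypotheses, for instance for the standard genus-zero kernel, and for $\pi=\exp$ by the $\csch^2$ computation of \Cref{c:exppullback}, where the sum is absolutely convergent. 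I will state this as the standing assumption inherited from \Cref{p:ncan_pullback}.

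\textbf{Key observation for the induction step.} In \Cref{d:combcomb} with $g=0$, the constraint $\sum g_k=\ell(\mu)-|Z|$ with all $g_k\ge0$ forces $g_k=0$ and $\ell(\mu)=|Z|$, where $|Z|$ here counts the whole set $\{z\}\cup Z$. Hence every block of the set partition $\mu$ contains exactly one point of the fibre $\{z\}\cup Z$. Consequently, no factor $\omega_{0,2}(\sigma_1(z),\sigma_2(z))$ with both arguments in the fibre ever appears; this is precisely why the lemma is restricted to genus zero. Each factor in $\mc{W}_{0,n,|Z|+1}$ therefore has one of two forms:
\begin{itemize}
\item $\omega_{0,2}(\sigma(z),z_i)$, or
\item $\omega_{0,m}(\sigma(z),z_{N_k})$ with $m<n+1$ and $1+|N_k|=m$.
\end{itemize}
Each $z_i$ occurs in exactly one factor.

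\textbf{Induction step.} For factors of the first form I apply the base case, and for factors of the second form the induction hypothesis; in both, the first (fibre) argument is left untouched. This rewrites each factor of $\bar{\mc{W}}_{0,n,|Z|+1}$ as a sum over preimages $z_j'$ of the variables it contains, with the corresponding $\hat{\omega}$ as summand. Since the $z_j$ are distributed disjointly among the factors, the product of these sums equals the single sum over tuples $(z_1',\dots,z_n')$ of the corresponding product of $\hat{\omega}$'s. This gives
\begin{equation}
\bar{\mc{W}}_{0,n,|Z|+1}(z,Z\mid z_{\llbracket n\rrbracket})=\sum_{z'_1}\cdots\sum_{z'_n}\hat{\mc{W}}_{0,n,|Z|+1}(z,Z\mid z'_{\llbracket n\rrbracket})\,.
\end{equation}
Multiplying by the common kernel, taking residues and summing over $\bar{A}=\hat{A}$ then gives the claim for $\bar{\omega}_{0,n+1}$, provided the preimage sums can be pulled out of the residues and out of the sum over ramification points.

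\textbf{Main obstacle.} The combinatorics above is routine. The step needing care is this interchange of the preimage sums with the residues and with the (possibly infinite) sum over $\hat{A}$. When $\pi^{-1}$ is finite this is trivial. For $\pi=\exp$ I would argue by absolute convergence: the $\csch^2$ series converges uniformly on the small circles around the ramification points, and the correlators' sums over $R_0$ converge by \Cref{l:converge}. I would also verify the implicit claim that the first slot $z$ is not summed over. This follows from the symmetry of $\pi_1^*\pi_2^*\omega_{0,2}$: pushing forward in the second argument rather than the first gives the same result, so the summed variables can always be taken to be the $z_i$ and the fibre argument $z$ left fixed.
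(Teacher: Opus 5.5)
Your proposal is correct and follows the paper's proof. Both argue by induction on $n$, with the $n=1$ case given by the push--pull identity for $\omega_{0,2}$, and both use the observation from \Cref{d:combcomb} that in genus zero every factor of $\mc{W}_{0,n,|Z|+1}$ has exactly one argument in $\{z\}\cup Z$, so the preimage sums factor through the common kernel, residues and ramification set. Your identity $\bar{\mc{W}}_{0,n,|Z|+1}(z,Z\mid z_{\llbracket n\rrbracket})=\sum_{z'}\hat{\mc{W}}_{0,n,|Z|+1}(z,Z\mid z'_{\llbracket n\rrbracket})$ is the correctly oriented form of the one the paper displays, and your remark on interchanging the preimage sums with the residues and with the infinite sum over $\hat{A}$ when $\pi=\exp$ covers a point the paper passes over silently.
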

\begin{proof}
  The $n=0$ case holds by definition, and the $n=1$ case is a well-known property of $\omega_{0,2}$ \cite{EO07}. Proceeding via induction on $n$ observe that, by the induction assumption
	\begin{equation}
		\sum_{z'_1 \in \pi^{-1}(\pi(z_1))} \sum_{z'_2 \in \pi^{-1}(\pi(z_2))} \cdots \sum_{z'_n \in \pi^{-1}(\pi(z_n))} \bar{\mc{W}}_{0,n,|Z|+1} (z, Z \mid z'_{\llbracket n \rrbracket}) = \hat{\mc{W}}_{0,n,|Z|+1} (z, Z \mid z'_{\llbracket n \rrbracket})\,,
	\end{equation}
	as $\mc{W}_{0,n,|Z|+1} (z, Z \mid z'_{\llbracket n \rrbracket})$ only contain genus zero correlators with one argument in $z\cup Z$ and the rest in $z'_{\llbracket n \rrbracket}$ by \Cref{d:combcomb}. The desired result immediately follows from this observation and the definition of topological recursion. 
\end{proof}

There is also a simple relation that holds for $\omega_{1,1}$.

\begin{lemma}
  Using the notation of the previous proposition and lemma the following relation holds
  \begin{equation}
    \bar{\omega}_{1,1}(z_1) = \hat{\omega}_{1,1}(z_1) + \frac{1}{2} \sum_{a\in A} \Res_{w_1=a}\Res_{w_2=a} \omega_{0,3}(z_1,w_1,w_2) \phi(w_1,w_2) \,,
  \end{equation}
  with 
  \begin{equation}
    \phi(w_1,w_2) = \int_{b_1}^{w_1}\int_{b_2}^{w_2} \big( \bar{\omega}_{0,2} - \hat{\omega}_{0,2} \big) \,,
  \end{equation}
  where the choice of basepoints $b_1,b_2 \in \Sigma$ does not matter as $\omega_{0,3}$ is residueless and the choice of branch for $\phi$ does not matter for the same reason.
\end{lemma}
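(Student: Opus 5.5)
We compare the two $\omega_{1,1}$'s directly through the recursion formula. The curves $\bar{\mc S}$ and $\hat{\mc S}$ share $\hat x=\bar x$, $\hat y=\bar y$, the polarisation $(\pi^{-1}(A),B)$, and hence the recursion kernel $K_2$. They differ only in $\omega_{0,2}$, and for $(g,n)=(1,1)$ this enters the integrand only through $\mc W_{1,0,2}(z,\sigma(z))=\omega_{0,2}(z,\sigma(z))$.

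\textbf{Step 1: express the difference as a residue.} Subtracting the two recursion formulas gives
\begin{equation*}
\bar\omega_{1,1}(z_1)-\hat\omega_{1,1}(z_1)=\sum_{\bar a\in\bar A}\Res_{z=\bar a}\sum_{\sigma}K_2(z_1,z,\sigma(z))\,\Delta(z,\sigma(z)),
\end{equation*}
where $\Delta\coloneq\bar\omega_{0,2}-\hat\omega_{0,2}$. The bidifferential $\Delta$ is holomorphic near the diagonal over each ramification point: $\pi$ is unramified at points of $\bar A$, so the biresidue-one double poles of $\pi_1^*\pi_2^*\omega_{0,2}$ and of $\omega_{0,2}$ cancel on the diagonal. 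Hence $\phi=\int\!\int\Delta$ is locally holomorphic there.

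\textbf{Step 2: the term linear in $\Delta$ equals the $\omega_{0,3}$ deformation.} The classical variational formula (Eynard--Orantin, \cite{EO07}) describes the change of $\omega_{1,1}$ under a deformation $\omega_{0,2}\mapsto\omega_{0,2}+\epsilon\Delta$ with $\Delta$ holomorphic at the ramification points. The first-order variation is $\tfrac12\sum_a\Res\Res\,\omega_{0,3}(z_1,w_1,w_2)\phi(w_1,w_2)$. Since $\omega_{1,1}$ depends on $\omega_{0,2}$ only linearly through $\mc W_{1,0,2}$ (for fixed $B$), the change is exact at first order, with no higher-order terms.

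I would verify this directly, for $\pi$ unramified on $\bar A$, at a simple ramification point. Write $\Delta(z,\sigma(z))=d_{w_1}d_{w_2}\phi|_{(z,\sigma(z))}$ and expand $\phi$ in local coordinates at $a$. Then compare $\Res_z K_2(z_1,z,\sigma z)\,\partial_1\partial_2\phi$ with the double residue built from the $\omega_{0,3}$ formula of \Cref{t:origprop}, namely
\begin{equation*}
\omega_{0,3}(z_1,w_1,w_2)=\sum\Res\frac{\omega_{0,2}\,\omega_{0,2}\,\omega_{0,2}}{dx\,dy}.
\end{equation*}
Taking the residues in $w_1$ and $w_2$ against $\phi$ turns the two $\omega_{0,2}(z,w_i)$ factors into $d\phi$-derivatives at $z$, since $\omega_{0,3}$ is residueless and has poles only at $a$. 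Symmetrising via $\sigma$ then yields the factor $\tfrac12$.

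For higher-order ramification, I would appeal to the corresponding Bouchard--Eynard statement. In the periodic or exponential settings, I would pass to the limit through \Cref{t:seqper} and \Cref{t:seqexp}.

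\textbf{Step 3: identify the sum with the stated form.} We use that $\omega_{0,3}$ on the right is the base curve's correlator, pulled back appropriately. By \Cref{p:ncan_pullback} and \Cref{l:pullback}, the residues at points of $\bar A$ group into fibres $\pi^{-1}(a)$. Since $\phi$ is defined on the covering, the sum over $\bar a$ collapses to a sum over $a\in A$. This is legitimate once the residue integrand is understood as a pushforward; for the exponential map, \Cref{c:exppullback} guarantees absolute convergence.

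Independence of basepoints and branches follows because $\omega_{0,3}$ is residueless in each variable. Changing the basepoints adds functions of a single variable to $\phi$, and these are killed by the residue in the other variable.

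The main obstacle is Step 2: checking the exact combinatorial factor and the local identity between $\Res_z K_2\,\Delta(z,\sigma z)$ and the double residue against $\omega_{0,3}$, especially for non-simple ramification. I would first prove it for simple ramification by explicit local expansion, then extend it by the limiting arguments.
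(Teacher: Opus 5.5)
Your Step 3 is the genuine problem. What Steps 1--2 actually produce, and what the paper's proof ends with, is
\[
\bar\omega_{1,1}(z_1)-\hat\omega_{1,1}(z_1)=\frac12\sum_{\bar a\in\pi^{-1}(A)}\Res_{w_1=\bar a}\Res_{w_2=\bar a}\phi(w_1,w_2)\,\hat\omega_{0,3}(z_1,w_1,w_2).
\]
So the $\omega_{0,3}$ and the set $A$ in the statement are those of the covering curve $\hat{\mc S}$; the proof writes $A$ for $\bar A=\hat A$. Nothing needs to be collapsed.

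Your conversion to the base curve's $\omega_{0,3}$ summed over $A$ does not follow from \Cref{p:ncan_pullback} and \Cref{l:pullback}. The latter gives $\pi^*\omega_{0,3}(z_1,w_1,w_2)=\sum_{w_1'}\sum_{w_2'}\hat\omega_{0,3}(z_1,w_1',w_2')$. Near $(\tilde a,\tilde a)$ the cross terms $w_1'=\nu_i(w_1)$, $w_2'=\nu_j(w_2)$ with $i\neq j$ drop out. The diagonal terms, however, reproduce the residues at the other fibre points $\nu_i(\tilde a)$ only if $\phi(\nu_i w_1,\nu_i w_2)-\phi(w_1,w_2)$ is a sum of one-variable functions. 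That means $\hat\omega_{0,2}=\omega_{0,2}$ must be invariant under simultaneous deck transformations of $\pi$. This holds for $\pi=\exp$ with $dz_1dz_2/(z_1-z_2)^2$, which is translation invariant. It fails for a general covering as in \Cref{p:ncan_pullback}, e.g.\ whenever the local inverses are not M\"obius. Saying ``$\phi$ is defined on the covering'' does not supply it, since $\phi$ is not a pullback. If you instead summed $\pi^*\omega_{0,3}$ over all of $\pi^{-1}(A)$, you would overcount by $\deg\pi$, which diverges for $\exp$. Drop Step 3 and state the result with $\hat\omega_{0,3}$.

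Step 2 is workable at simple ramification points. There both $\Res_{z=a}K_2\,\Delta(z,\sigma(z))$ and the double residue of $\phi$ against the special $\omega_{0,3}$ formula reduce to the same local leading-order term. For higher order, though, you only gesture at a Bouchard--Eynard statement. The Eynard--Orantin variational formula also varies $B$ inside the kernel, so it is not the fixed-$B$ statement you need. And the limit via \Cref{t:seqper} is unavailable: $\bar{\mc S}$ is not canonically polarised, and $\bar\omega_{0,2}$ has off-diagonal poles, so it is not tamed.

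The paper's idea makes all of this unnecessary: write the difference itself as a residue. Since $\phi$ is holomorphic near $(a,a)$ and the only pole of $\hat\omega_{0,2}$ there is the biresidue-one diagonal pole,
\[
\Delta(z,\sigma(z))=\tfrac12\Res_{w_1=a}\Res_{w_2=a}\phi(w_1,w_2)\bigl[\hat\omega_{0,2}(z,w_1)\hat\omega_{0,2}(\sigma(z),w_2)+\hat\omega_{0,2}(z,w_2)\hat\omega_{0,2}(\sigma(z),w_1)\bigr],
\]
with the $w$-contours enclosing $z$ and $\sigma(z)$. Each term yields $\Delta$ by symmetry, which is where the $\frac12$ comes from. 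Now swap the order of residues. The $z$-residue is exactly the $a$-term of the recursion for $\hat\omega_{0,3}(z_1,w_1,w_2)$, with the same kernel, since $\bar B=\hat B$, $\bar x=\hat x$ and $\bar y=\hat y$. The contributions to $\hat\omega_{0,3}$ from other ramification points are holomorphic in $w_i$ near $a$, so the residues kill them. Both $\omega_{1,1}$ and $\omega_{0,3}$ involve only $|Z|=1$ terms. Hence this argument is exact for every ramification order and holds point by point, with no local expansion, no special $\omega_{0,3}$ formula, and no limiting argument.
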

\begin{proof}
  \begin{equation}
    \begin{split}
      \bar{\omega}_{1,1}(z_1) =& \sum_{a\in A} \Res_{z=a} \sum_{\sigma(z) \in f'_a(z)} K_{2}(z_1,z,\{\sigma(z)\}) \bar{\omega}_{0,2}(z,\sigma(z)) \\
      =&  \sum_{a\in A} \Res_{z=a} \sum_{\sigma(z) \in f'_a(z)} K_{2}(z_1,z,\{\sigma(z)\}) \left( \hat{\omega}_{0,2}(z,\sigma(z)) + \vphantom{\frac12}\right.\\ 
       &\left.\frac{1}{2}\Res_{w_1=z}\Res_{w_2=\sigma(z)}\phi(w_1,w_2) \left[\hat{\omega}_{0,2}(z,w_1)\hat{\omega}_{0,2}(\sigma(z),w_2) + \hat{\omega}_{0,2}(z,w_2)\omega_{0,2}(\sigma(z),w_1)\right] \right) \\
      =& \hat{\omega}_{1,1}(z_1) + \sum_{a_1\in A} \Res_{w_1=a_1}\Res_{w_2=a_1} \phi(w_1,w_2) \sum_{a\in A} \delta_{a,a_1} \Res_{z=a} \sum_{\sigma(z) \in f'_a(z)} \\ 
       &\left. \frac{1}{2} K_2(z_1,z,\{\sigma(z)\}) \left[\hat{\omega}_{0,2}(z,w_1)\hat{\omega}_{0,2}(\sigma(z),w_2) + \hat{\omega}_{0,2}(z,w_2)\hat{\omega}_{0,2}(\sigma(z),w_1)\right]\right. \\
      =& \hat{\omega}_{1,1}(z_1) + \frac{1}{2} \sum_{a_1\in A}\Res_{w_1=a_1}\Res_{w_2=a_1} \phi(w_1,w_2) \hat{\omega}_{0,3}(z_1,w_1,w_2) \,,
    \end{split}
  \end{equation}
  where the first equality is the definition of $\bar{\omega}_{1,1}$, the second uses the fact that $\hat{\omega}_{0,2}$'s only pole is a double pole on the diagonal with biresidue one, the third equality uses the fact that $\bar{\omega}_{0,2}-\hat{\omega}_{0,2}$ has no pole on the diagonal (so no other poles are picked up when commuting the residues), and the fourth equality uses the fact that $B$ has no off-diagonal poles as well as the fact about $\hat{\omega}_{0,2}$'s pole structure to get rid of the Kronecker delta, $\delta_{a,a_1}$, as the terms with $a_1\neq a$ are zero anyway. Finally, there is a slight subtlety with commuting the residues involving the sum over $f'_a(z)$. To really do this correctly, one needs to put the residue at $z=a$ inside the aforementioned sum, commute the residues in $w_1,w_2$ with the residue in $z$ inside the sum, and then proceed. However, the integrand is not necessarily a well-defined function of $z$ before summing over $f'_a(z)$; this issue can easily be resolved by either going to local coordinates or treating the $\sigma(z)$ as being defined by the first couple of terms in their expansions about $a$.
\end{proof}

\begin{remark}
  One would like to try and extend the above results to all correlators. In fact, both of the above results are very redolent of the formula to go from non-blobbed TR to blobbed TR \cite{BS17}. Indeed, one can rewrite the sum over preimages as residues of correlators with the integral of the difference $\bar{\omega}_{0,2}-\hat{\omega}_{0,2}$; this difference acts like a `$\phi_{0,2}$' holomorphic part of $\omega_{0,2}$. However, naïvely trying to extend the formulas for blobbed TR to this case does not work, not even for the genus zero case, and the appropriate adjustments for the higher genus case are not clear to the author.

  Similarly, the above result is also redolent of the change of polarisation formula given in \cite{EO15}. Again, naïve application of this formula does not work and the required modifications are not clear to the author.
\end{remark}

\begin{remark}\label{r:qas}
  One can rephrase this process, at least for simple curves, in the ABCD language of \cite{KS17,ABCO24}. Changing just the $\omega_{0,2}$ with one argument in $z_{\llbracket n \rrbracket}$ corresponds to symmetrising over indices in the $A$ and $B$ tensors. Changing the $\omega_{0,2}$ evaluated at $z$ and an element of $Z$ (so for simple curves: just the $\omega_{0,2}$ appearing in the formula for $\omega_{1,1}$) corresponds to a shift in the $D$ tensor and therefore a change in the quantisation of the classical Airy structure. The $C$ tensor never changes in these processes. 
\end{remark}

\begin{remark}
	In \cite{AC24}, a spectral curve was pulled back via the map $\varpi(z) = (\pi b^2)^{-1}\sin(\pi b z)$\footnote{A suitable generalisation of \Cref{p:ncan_pullback}, proved in an identical manner to \Cref{c:exppullback}, would have to be applied here.} where $b^2 = 2/(2p+1)$ and $p\in\N$. The results here would allow one to restore the canonical form of $B$ at the cost of enlarging the $A$ to include all ramifications points of $\varpi^* x$.
\end{remark}

\section{Conclusion and outlook}

In this paper two new classes of spectral curves were defined and studied: the periodic curves and the exponential curves. The topological recursion was then defined for such curves and it was established that many of the most salient features of the original Eynard-Orantin framework carried over to this more general case. This more general framework was then applied to the study of quantum curves, where it was shown that the QC/TR connection could be proven for a large collection of periodic and exponential curves. Curiously, in multiple cases, this produces quantum curves which can also be obtained from curves that fall under the traditional Eynard-Orantin framework.

Many open questions remain (there is some intersection here with the 'Further Directions' section of \cite{AH26}).

\begin{itemize}
	\item In two examples of periodic and exponential curves, the quantisation produced by the parametrisation used here gives the identical results to the standard parametrisation used elsewhere in the literature. The results in \Cref{s:pullback,Npol} give some indications as to the connections between these curves, but if there is a nicer, general, result it would be good to derive it.
	\item Only genus zero curves were quantised here, but many interesting exponential curves that are believed to be related to knot theory are higher genus curves . In principle, \Cref{t:seqexp} allows one to construct a sequence of curves $\mc{S}_N$ that can be quantised for each $N$ according to the results of \cite{EGMO21}; in principle the limit as $N\to\infty$ can then be taken as in the genus zero case. In practice, this is a non-trivial computation even for simpler examples. It would be intriguing to see if this program could carried out successfully for any higher genus exponential curve. Using the traditional, Eynard-Orantin, parameterisation of such curves there is a conjectured relationship with knot theory when the spectral coincides with the $A$-polynomial of a given knot \cite{ABM12,BE12,GJKS14}. However, quantisations of these curves (with the traditional parameterisation) are not known to exist. Given the genus zero results (or coincidences) in the present work, this could be an interesting line of study.
	\item None of the curves studied in this work were explicitly given enumerative interpretations. As the wavefunctions of some coincide with wavefunctions that have known enumerative interpretations\footnote{Up to pulling back by the exponential map.} (see \Cref{ss:P1,ss:C3}), it is intuitively plausible that there should be one.
\item The concept of a non-canonically polarised spectral curve was introduced, but not really explored. Some initial results were obtained in \Cref{Npol}, but this is a far cry from a systematic study. In particular, a relation to quantum airy structures and existing classification results \cite{KS17,ABCO24,BBCCN24,BKS24} should be undertaken.
\end{itemize}

\appendix

\section{Non-canonically polarised spectral curves}\label{Npol}

Prior to the present work, no study of non-canonically polarised spectral curves has been conducted in the literature. It is likely that such curves are, in principal, already defined through the framework of quantum airy structures (see \Cref{r:qas}) and a complete understanding merely involves correctly relating non-canonically polarised TR to known classification results of quantum airy structures (see \cite{ABCO24,BBCCN24,BKS24} for such classification results). Although it is not the main point of focus here, this appendix provides some rudimentary steps (mainly examples) towards understanding these curves from a purely TR-based perspective, using what has been established in \Cref{s:pullback}.

\begin{proposition}
	Let
	\begin{equation}
		\mc{S} = \left(\Sigma,\, x,\, (A,B=\omega_{0,2}),\, y,\, \omega_{0,2}\right)\,,
	\end{equation}
	be an admissible, canonically polarised, spectral curve. Let $\pi$ be a branched covering from $\Sigma$ with a ramification locus disjoint from $A$. Then the correlators computed by TR of the curve $\pi^* \mc{S}$ satisfy all the properties of \Cref{t:origprop}.
\end{proposition}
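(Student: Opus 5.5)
The plan is to reduce everything to \Cref{p:pullback}: if the $\tilde\omega_{g,n}$ of $\pi^*\mc{S}$ are exactly $\pi_1^*\cdots\pi_n^*\omega_{g,n}$, then each property of \Cref{t:origprop} either is preserved by pullback directly or is a local statement at a point where $\pi$ is a local biholomorphism. So the first step is to verify that $\pi^*\mc{S}$ meets the hypotheses of \Cref{p:pullback} and \Cref{d:pullback}. The ramification locus of $\pi$ is disjoint from $A$, so we may choose $\tilde A$ with $\pi|_{\tilde A}:\tilde A\to A\cap R_0$ a bijection. Because $\pi$ is unramified near $\tilde A$, the local deck groups of $\tilde x=\pi^*x$ at $\tilde a$ are conjugate to those of $x$ at $a=\pi(\tilde a)$. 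Hence $r_{\tilde a}=r_a$ and $s_{\tilde a}=s_a$, since orders of $\omega_{0,1}-\sigma^*\omega_{0,1}$ are preserved by a local biholomorphism. Local admissibility in the sense of \Cref{d:CondSCAdm}(1) therefore holds at every $\tilde a\in\tilde A$. For the same reason, the local form \eqref{e:bidiff} of $\tilde B=\pi_1^*\pi_2^*B$ near $\tilde a$ is inherited from that of $B$ near $a$.

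Next, \Cref{p:pullback} gives $\tilde\omega_{g,n}=\pi_1^*\cdots\pi_n^*\omega_{g,n}$, and I will check the properties one at a time.
\begin{itemize}
\item \emph{Symmetry} is immediate, since pulling back each argument by the same $\pi$ commutes with permuting the arguments.
\item \emph{Homogeneity}: $y\mapsto fy$ gives $\tilde y\mapsto f\tilde y$, and the scaling factor $f^{2-2g-n}$ commutes with pullback.
\item \emph{Residuelessness}: a residue of $\pi^*\omega$ at $p$ equals $\deg_p(\pi)$ times the residue of $\omega$ at $\pi(p)$, so vanishing residues pull back to vanishing residues.
\item \emph{The $\omega_{0,3}$ formula}: pull back the formula for $\omega_{0,3}$ and use the bijection $\tilde A\leftrightarrow A$ to rewrite each residue at $a$ as a residue at $\tilde a$. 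Since $\pi$ is locally biholomorphic there, the integrand $\omega_{0,2}^{\otimes 3}/(dx\,dy)$ becomes $\tilde\omega_{0,2}^{\otimes3}/(d\tilde x\,d\tilde y)$, giving the same formula with the sum over $\tilde A$.
\end{itemize}

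The main obstacle is the \emph{pole structure}, in two respects. The first is where the poles lie. In the natural reading for $\pi^*\mc{S}$, which is canonically polarised with respect to $\tilde A$ but not globally, poles of $\tilde\omega_{g,n}$ sit at $\pi^{-1}(\text{poles of }\omega_{g,n})\subseteq\pi^{-1}(A\cap R_0)$, which is generally larger than $\tilde A$. I would state the property in that form: poles only in the ramification locus of $\tilde x$. The second is the order bound, which I would get from two facts. At any $p\in\pi^{-1}(a)$, $\pi$ is unramified because branchpoints of $\pi$ avoid $A$. Pole orders are therefore preserved, and $s_p=s_a$, so the bound $(s_a-1)(2g+n-2)+2g$ transfers verbatim. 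What remains is to confirm that at points of $\pi^{-1}(a)\setminus\tilde A$ the relevant $s$ is still defined through the local deck group of $\tilde x$, which holds by the same conjugation argument. I expect this bookkeeping about which points count as ramification points carrying poles to be the only genuinely delicate step; everything else is transport of structure.
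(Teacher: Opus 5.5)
Your proposal is correct and follows essentially the paper's route. Everything is transported through \Cref{p:pullback}: symmetry, residuelessness and homogeneity survive pullback, and the $\omega_{0,3}$ formula comes from the change-of-variables rule for residues of a pullback at the local biholomorphism $\tilde a\mapsto a$. Your treatment of pole structure is correct but goes beyond the paper, which is silent on it; in \Cref{t:origprop} that bullet is conditional on global polarisation, which $\pi^*\mc{S}$ generally lacks, so your $\pi^{-1}(A\cap R_0)$ version is a strengthening rather than something required.
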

\begin{proof}
	Symmetry, vanishing residues, and homogeneity, follow directly from \Cref{p:pullback}. The special formula for $\omega_{0,3}$ is just a special case of the general formula for taking the residue of a pullback.
\end{proof}

Spectral curves that are the pullback of an admissible curve will result in TR that computes a solution of the higher abstract loop equations in the sense of \Cref{t:loop}. Thus, these curves should also, presumably, be defined as admissible.

Using \Cref{p:ncan_pullback} and the previous proposition, one can actually derive large classes of admissible curves which have $\tilde{A}$ as the entire ramification locus ($\tilde{A}$ being the same as in \Cref{d:pullback}), but are not canonically polarised. For example, consider the following easy corollary.

\begin{corollary}
	Let $\pi$ be a finite degree branched covering from a Riemann surface $\Sigma$ to itself, $f:\Sigma\to\Sigma$ be a meromorphic function with poles at all branchpoints of $\pi$, $g:\Sigma\to\Sigma$ be a meromorphic function such that $dg$ is non-zero at all branchpoints of $\pi$, and pick a symmetric bidifferential $B$ on $\Sigma^2$ whose only pole is a double pole on the diagonal with biresidue one. Denote the ramification points of $f$ as $R_f$ and the ramification points of $\pi$ as $R_\pi$. Then the topological recursion applied to the globally polarised spectral curve
	\begin{equation}
		\mc{S} = \left(\Sigma,\, x = \pi^*f,\, (R_f\cup R_\pi, B),\, y = \pi^*g,\, \omega_{0,2} = \pi_1^*\pi_2^*\omega_{0,2}\right)\,,
	\end{equation}
	produces correlators that satisfy the properties of \Cref{t:origprop}, save for the special formula for $\omega_{0,3}$.
\end{corollary}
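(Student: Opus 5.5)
The plan is to recognise the curve in the statement as an instance of $\bar{\mc{S}}$ from \Cref{p:ncan_pullback}. We then transfer the properties of \Cref{t:origprop} from an auxiliary canonically polarised curve on the base, via the pullback relation $\bar{\omega}_{g,n}=\pi_1^*\cdots\pi_n^*\omega_{g,n}$.

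Concretely, consider the base curve $\mc{S}_0=(\Sigma,\, f,\, (R_f,B),\, g,\, B)$. It is global and canonically polarised, since $B$ is the unique-type bidifferential with only a biresidue-one double pole on the diagonal. We take it to be admissible; if it is not, we add that as a standing hypothesis, since \Cref{t:origprop} is needed for it. Topological recursion on $\mc{S}_0$ then yields correlators $\omega^0_{g,n}$ satisfying symmetry, residuelessness, homogeneity and the pole bounds.

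Next we check that $\pi^*f$ has ramification locus $\pi^{-1}(R_f)\cup R_\pi$. At a ramification point $c\in R_\pi$ of $\pi$, the hypothesis that $f$ has a pole at $\pi(c)$ makes $\pi^*f$ ramified there. The hypothesis $dg\neq0$ at the branchpoints guarantees that $\pi^*g$ separates the local sheets of $\pi^*f$ in the way required by \Cref{d:SC}(4). We then compare the recursion on $\mc{S}$, whose polarisation set is $R_f\cup R_\pi$ (read as all of $\pi^{-1}(R_f)\cup R_\pi$), with that on $\bar{\mc{S}}$ from \Cref{p:ncan_pullback}, whose set $\bar A=\pi^{-1}(R_f)$ avoids $R_\pi$.

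The key step is to show that the extra points of $R_\pi$ do not contribute to the recursion. At $c\in R_\pi$, the local deck transformations of $\pi^*f$ are exactly those of $\pi$, so $\pi^*g$, $\pi^*\omega_{0,1}$ and all pulled-back correlators are invariant under them. The kernel denominators $\pi^*g(\sigma(z))-\pi^*g(z)$ therefore vanish identically, and this is problematic.

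The resolution is that $y\,dx=\pi^*(g\,df)$ has a pole at $c$ whose order dominates, since $f$ has a pole at $\pi(c)$. I would argue this in the style of the treatment of $z=\infty$ in \Cref{t:seqexp}: the pole of $\omega_{0,1}$ kills all contributions, so $c$ behaves like the ignored points of $R_\infty$. Alternatively, one can note that $c$ lies over a pole of $f$ at which $f$ is unramified, so the local Galois group of $\pi^*f$ at $c$ coincides with that of $\pi$, under which every $\mc{W}$ term is invariant. The residue of the symmetrised integrand then vanishes, because $\int B$ is holomorphic and the remaining factor is the pullback of a form regular at $\pi(c)$.

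Once the $R_\pi$ contributions vanish, $\mc{S}$ and $\bar{\mc{S}}$ give identical correlators. \Cref{p:ncan_pullback} then identifies these with pullbacks of the $\omega^0_{g,n}$. Symmetry, residuelessness and homogeneity are preserved under pullback. The pole structure also transfers, because $\pi$ is unramified over $R_f$, so pole orders at the points of $\pi^{-1}(R_f)$ are unchanged.

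The $\omega_{0,3}$ formula is excluded because $B$ is not pulled back. The formula would involve $\omega_{0,2}$ rather than $B$, and the sum over preimages breaks it. I expect the main obstacle to be the vanishing-contribution step at $R_\pi$, specifically making the degenerate kernel well defined; I would first try perturbing $g$ generically and taking a limit.
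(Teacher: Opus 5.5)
Your overall route is the paper's. Its proof consists of the preceding (unlabelled) proposition on pullbacks of admissible canonically polarised curves, \Cref{p:ncan_pullback}, and the assertion that the points of $R_\pi$ do not contribute because $\pi^*df$ has a pole of order at least $\Mult_a(x)\geq 2$ there and $dg\neq 0$. That assertion is essentially your option (a). The gap is in how you propose to justify this last step. First, your claim that $dg\neq 0$ makes $y=\pi^*g$ satisfy condition (4) of \Cref{d:SC} at $R_\pi$ is false, and you contradict it yourself a paragraph later. If $f$ has a simple pole at $\pi(c)$, it is locally injective there, so the local Galois conjugates of $x=f\circ\pi$ near $c$ are exactly the $\pi$-conjugates and $y-\sigma^*y\equiv 0$. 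If the pole has higher order, the $\pi$-conjugates still form a subgroup of the local Galois group, with the same effect for those $Z$. Second, the degeneracy is worse than you noticed. Since $\pi\circ\sigma=\pi$, the curve $z\mapsto(z,\sigma(z))$ lies in the polar locus of $\omega_{0,2}=\pi_1^*\pi_2^*B$. Hence every $\mc{W}$ term that pairs $z$ with an element of $Z$ inside an $\omega_{0,2}$ is identically infinite; this already covers the only term of the $\omega_{1,1}$ integrand.

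Consequently your ``alternative'' argument (holomorphic $\int B$ times a regular pulled-back form) does not apply. A pole-order count modelled on $z=\infty$ in \Cref{t:seqexp} has no meromorphic integrand to work with. Perturbing $g$ alone cannot regularise anything either, since $x$ (hence $\sigma$) and $\omega_{0,2}$ are unchanged. There is no well-defined local contribution at $c$ whose vanishing could be proved. The statement only makes sense if the points of $R_\pi$ are excluded from the recursion sum, and the paper's one-line justification has to be read that way. With that convention $\mc{S}$ is literally $\bar{\mc{S}}$ with $\bar A=\pi^{-1}(R_f)$. The rest of your argument then goes through, including the admissibility hypothesis on the base curve, which the paper also needs implicitly. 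One further repair: ``$\pi$ is unramified over $R_f$'' is not automatic. A pole of $f$ of order $k\geq 2$ at a branchpoint of $\pi$ belongs to $R_f$, which violates the hypothesis of \Cref{p:ncan_pullback}. Either assume these poles are simple, or drop such points from the base polarisation. The latter costs nothing, since $dg\neq 0$ gives $s_a=1-k\leq -1$ there, so they do not contribute on the base curve.
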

\begin{proof}
	Follows directly from the previous proposition, \Cref{p:ncan_pullback}, and noting that the ramification points in $R_\pi$ do not contribute to the topological recursion due to both the non-vanishing condition on $dg$ and the requirement on the location of poles of $f$ so $\pi^*df$ has at least a pole of order $\Mult_a(x)\geq 2$ at $a\in R_\pi$.
\end{proof}

This corollary lets one quickly formulate large classes of natural-looking non-canonically polarised curves that produce correlators with all the properties one would expect from TR, as demonstrated in the following example.

\begin{example}
	Fix $n\in\Z\setminus\{0,1\}$, let $R_n = \{\exp[\pi i m/(n-1)]\, |\, m=0,1\dots,2n-3 \}\cup\{0,\infty\}$, take $m\in Z$ such that $m+1$ is coprime with $n$ and $|m+1|<|n|$, and let $c\in\C^*$. Then the topological recursion applied to the spectral curve
	\begin{equation}
		\mc{S} = \left(\P^1,\, x(z) = z^n+z^{-n},\, \left(R_n,\, \frac{dz_1dz_2}{(z_1-z_2)^2}\right),\, y(z) = cz^n,\, \omega_{0,2} = \frac{n^2z_1^{n-1}z_2^{n-1}dz_1dz_2}{(z_1^n-z_2^n)^{2}}\right) \,,
	\end{equation}
  where $z$ is an affine coordinate, produces correlators that satisfy the properties of \Cref{t:origprop} (except the special formula for $\omega_{0,3}$, which would have to be modified). Indeed, consider the pullback of
	\begin{equation}
		\tilde{\mc{S}} = \left(\P^1,\, x(z) = z+z^{-1},\, \left(\{\pm 1\},\, \frac{dz_1dz_2}{(z_1-z_2)^2}\right),\, y(z) = (c/n)z,\, \omega_{0,2} = \frac{dz_1dz_2}{(z_1-z_2)^{2}}\right)\,,
	\end{equation}
	under the map $\pi(z) = z^n$.
\end{example}

Of course, the above example looks somewhat contrived; it's essentially a pullback. One could start with an appropriate choice of $y(z) = z^{1/n}$, then after the pullback $y$ would be $z$ multiplied by an $n^{\text{th}}$ root of unity; the root of unity would have to change depending on which ramification point $z$ was near to match the original branch structure of $z^{1/n}$.

Curiously, other related examples such as
\begin{equation}
	\mc{S} = \left(\P^1,\, x(z) = z^2+z^{-2},\, \left(R_2,\, \frac{dz_1dz_2}{(z_1-z_2)^2}\right),\, y(z) = cz,\, \omega_{0,2} = \frac{2(z_1^2+z_2^2)dz_1dz_2}{(z_1^2-z_2^2)^{2}}\right) \,,
\end{equation}
also seem to produce symmetric correlators even though $\omega_{0,2}$ is not a pullback of $B$.

\section{Computations of correlators}\label{Comp}

Here some correlators for a variety of admissible periodic and exponential spectral curves are computed. The calculations were done partially by hand, and partially with SageMath \cite{Sage}. The chosen examples are then checked for consistency with various results obtained in this work.

\subsection{$ P(x,y) = x - 2\cosh(y) $}
\label{Comp_GWp1}
Consider the periodic, admissible, and regular spectral curve of \Cref{ex:GWus}
\begin{equation}
	\mathcal{S}_\infty = \left(\mathbb{P}^1,\, x(z) = 2\cosh(z),\, y(z) = z,\, B(z_1,z_2) = \frac{dz_1dz_2}{(z_1-z_2)^2}\right),
\end{equation}
where $z$ is an affine coordinate on $\mathbb{P}^1$. The unstable correlators are
\begin{equation}
\omega_{0,1}(z_1) = y(z_1)dx(z_1) = 2z_1\sinh(z_1),\qquad \omega_{0,2}(z_1,z_2) = B(z_1,z_2) = \frac{dz_1dz_2}{(z_1-z_2)^2}.
\end{equation}
Begin by a calculation of the simplest stable correlator $\omega_{0,3}$. Luckily, by \Cref{c:perprop} there exists a simple formula for $\omega_{0,3}$
\begin{equation}
	\begin{split}
		\omega_{0,3}(z_1,z_2,z_3) &= \sum_{a\in R_0} \Res_{z=a} \frac{ \omega_{0,2}(z,z_1) \omega_{0,2}(z,z_2) \omega_{0,2}(z,z_3) } { dx(z) dy(z) }\\
		&= \frac{dz_1dz_2dz_3}{2} \sum_{k=-\infty}^{\infty} \Res_{z = k\pi i} \frac{\csch(z)dz}{(z-z_1)^2(z-z_2)^2(z-z_3)^2}\\
		&= \frac{dz_1dz_2dz_3}{2} \sum_{k=-\infty}^{\infty} \frac{(-1)^k}{(z_1-k\pi i)^2(z_2-k\pi i)^2(z_3-k\pi i)^2}
	\end{split}
\end{equation}

Next, compute $\omega_{1,1}$
\begin{equation}
	\begin{split}
		\omega_{1,1}(z_1) &= \sum_{k=-\infty}^{\infty} \Res_{z=k\pi i} \left(\frac{ dz_1 } { z_1-z } - \frac{ dz_1 }{z_1-k\pi i}\right) \frac{1}{(y(2k\pi i-z)-y(z))dx(z)} \frac{dzd(2k\pi i-z)}{(z-(2k\pi i-z))^2}\\
		&= -\frac{dz_1}{16} \sum_{k=-\infty}^{\infty} \Res_{z=k\pi i} \frac{ dz } { (z_1-z)(z_1-k\pi i) } \frac{1}{(z-k\pi i)^2\sinh(z)} \\
		&= \frac{dz_1}{96} \sum_{k=-\infty}^{\infty} (-1)^k\left(\frac{6}{(z_1-k\pi i)^4}-\frac{1}{(z_1-k\pi i)^2}\right) \\
		&= \frac{dz_1}{16} \coth(z_1)\csch^3(z_1).
	\end{split}
\end{equation}
Observe that, in agreement with \Cref{p:periodic}, both these correlators are invariant under simultaneous shifts in all variables by the period of $x$. Now, with all of the $2g+n-2=1$ correlators at hand the order $\hslash^1$ contribution to the wavefunction is easily computed
\begin{equation}
	\begin{split}
		\frac{1}{3!}\int_{-\infty}^{z}\int_{-\infty}^{z}\int_{-\infty}^{z}\omega_{0,3}(z_1,z_2,z_3)+\int_{-\infty}^{z}\omega_{1,1}(z_1) &= -\sum_{k=-\infty}^{\infty} \frac{(-1)^k}{12(z-k\pi i)^3} - \frac{1}{48}\csch^3(z)\\
		&=-\frac{1}{48}\csch^3(z)\big(\cosh(2z)+4\big).
	\end{split}
\end{equation}
Finally, by making the replacement $z\mapsto\log(z)$ one finds
\begin{equation}
	\label{e:Nh1}
	\pi^*\left(-\frac{(z^4+8z^2+1)z}{12(z^2-1)^3}\right)=-\frac{1}{48}\csch^3(z)\big(\cosh(2z)+4\big),
\end{equation}
where $\pi$ is the exponential map.

To compare this answer to the usual one obtained for
\begin{equation}
	\mathcal{S} = \left(\mathbb{P}^1,\, x(z) = z+\frac{1}{z},\, y(z) = \log(z),\, \omega_{0,2}(z_1,z_2) = \frac{dz_1dz_2}{(z_1-z_2)^2}\right)\,,
\end{equation}
where $\log(z)$ is computed using the series 
\begin{equation}
	\log(z) = -\frac{1}{2}\sum_{k=0}^{\infty}\frac{(1-z^2)^k}{k}\,,\quad |1-z^2|<1\,,
\end{equation}
the $-\chi_{g,n}=1$ correlators must be computed. This is easily done and the following answer is obtained
\begin{equation}
	\begin{split}
		\omega_{0,3}(z_1,z_2,z_3) &= \frac{dz_1dz_2dz_3}{2(z_1-1)^2(z_2-1)^2(z_3-1)^2} + \frac{dz_1dz_2dz_3}{2(z_1+1)^2(z_2+1)^2(z_3+1)^2}\,, \\
		\omega_{1,1}(z_1) &= -\frac{(z_1^2+4z_1+1)(z_1^2-4z_1+1)(z_1^2+1)dz_1}{24(z_1^2-1)^4}\,.
	\end{split}
\end{equation}
Now compute the $\mathcal{O}(\hslash^1)$ term in the wavefunction for this spectral curve
\begin{multline}
		\frac{1}{2^3 \cdot 3!} \int_{z^{-1}}^z\int_{z^{-1}}^z\int_{z^{-1}}^z \omega_{0,3}(z_1,z_2,z_3) + \frac{1}{2}\int_{z^{-1}}^z \omega_{1,1}(z_1) \\ 
		= -\frac{\left(3z^{2} + 1\right) \left(z^2+3\right) z}{24 \left(z^2 - 1\right)^{3}} + \frac{\left(z^{2} + 2 z - 1\right) \left(z^{2} - 2 z - 1\right) z}{24 \left(z^2 - 1\right)^{3}} = -\frac{{\left(z^{4} + 8z^2 + 1\right)} z}{12 {\left(z^2 - 1\right)}^{3} }\,,
\end{multline}
which agrees with \eqref{e:Nh1}, as it must, by \Cref{p:GWagree}.

\subsection{$ P(x,y) = -e^{(f+1)xy} + e^{fxy} - x $}
\label{Comp_C3}
Consider now the periodic admissible spectral curve
\begin{equation}
	\mathcal{S}_\infty = \left(\P^1,\, x(z)=e^{fz}(1-e^{z}),\, y(z) = z,\, \omega_{0,2}(z_1,z_2) = \frac{dz_1dz_2}{(z_1-z_2)^2}\right)\,,
\end{equation}
where $z$ is an affine coordinate on $\P^1$ and $f\in\Q\setminus\{0,-1\}$. The finite ramification points are located at $a_k = -\Log(1+f^{-1})+2\pi i k$, where $\Log$ is the branch of the logarithm with the negative imaginary axis cut out and the imaginary part of the logarithm taking values in $(-\pi/2,3\pi/2)$. Then define the local deck transformation $\sigma_k$ through the equalities $x\circ \sigma_k = x$ and $\sigma_k(a_k)=a_k$. The first couple of terms in the expansion of $\sigma_k$ about $a_k$ are easily computed
\begin{equation}
	\sigma_k(z) = a_k - (z-a_k) - \frac{1}{3}(2f+1) (z-a_k)^2 - \frac{1}{9}(2f+1)^2 (z-a_k)^3 + \mathcal{O}\left((z-a_k)^4\right)\,.
\end{equation}
The $2g+n-2=1$ correlators may now be computed. First, calculate $\omega_{0,3}$ using \Cref{c:expprop}, defining $y(z) \coloneq \omega_{0,1}(z)/dx(z) = ze^{-fz}(1-e^z)^{-1}$
\begin{equation}
	\begin{split}
		\omega_{0,3}(z_1,z_2,z_3) &= \sum_{k=-\infty}^{\infty} \Res_{z = a_k} \frac{\omega_{0,2}(z,z_1)\omega_{0,2}(z,z_2)\omega_{0,2}(z,z_3)}{dx(z)dy(z)} \\
		&=-\frac{dz_1dz_2dz_3}{f(f+1)}\sum_{k=-\infty}^{\infty}\frac{1}{(z_1-a_k)^2(z_2-a_k)^2(z_3-a_k)^2}\,.
	\end{split}
\end{equation}
Next, $\omega_{1,1}$ may be computed. For this, SageMath \cite{Sage} was used and the result is
\begin{equation}
	\begin{split}
		\omega_{1,1}(z_1) &= \sum_{k=-\infty}^{\infty} \Res_{z=a_k} \left(\frac{dz_1}{z_1-z} - \frac{dz_1}{z_1-a_k}\right) \frac{1}{\sigma_k(z)-z} \frac{x(z)}{dx(z)} \frac{dz d\sigma_k(z)}{(z-\sigma_k(z))^2} \\
		&= \sum_{k=-\infty}^{\infty}\left( \frac{dz_1}{24(z_1-a_k)^2} + \frac{(2f+1)dz_1}{24f(f+1)(z_1-a_k)^3} - \frac{dz_1}{8f(f+1)(z_1-a_k)^4}\right)\,.
	\end{split}
\end{equation}
This will allow for the computation of the order $\hslash^1$ contribution to the wavefunction. Integrating $\omega_{0,3}$
\begin{multline}
\frac{1}{3!}\int_{\infty}^z\int_{\infty}^z\int_{\infty}^z\omega_{0,3}(z_1,z_2,z_3) = \frac{1}{6f(f+1)}\sum_{k=-\infty}^{\infty}\frac{1}{(z_1-a_k)^3} \\ = \frac{1}{48f(f+1)}\coth\left(\frac{z+\Log(1+f^{-1})}{2}\right)\csch^2\left(\frac{z+\Log(1+f^{-1})}{2}\right)\,,
\end{multline}
and then $\omega_{1,1}$
\begin{equation}
	\begin{split}
		\int_{\infty}^z \omega_{1,1}(z_1) &= -\frac{1}{24}\left(\frac{1}{z+\Log(1+f^{-1})}+2\sum_{k=1}^{\infty}\frac{z+\Log(1+f^{-1})}{(z+\Log(1+f^{-1}))^2+4\pi^2k^2}\right)\\ 
		&- \sum_{k=-\infty}^{\infty}\left(\frac{2f+1}{48f(f+1)(z+\Log(1+f^{-1})-2\pi i k)^2} - \frac{1}{24(f+1)f(z+\Log(1+f^{-1})-2\pi i k)^3}\right) \\
		&= -\frac{1}{48}\coth\left(\frac{z+\Log(1+f^{-1})}{2}\right) - \frac{2f+1}{192(f+1)}\csch^2\left(\frac{z+\Log(1+f^{-1})}{2}\right)\\
		&+ \frac{1}{192f(f+1)}\coth\left(\frac{z+\Log(1+f^{-1})}{2}\right)\csch^2\left(\frac{z+\Log(1+f^{-1})}{2}\right)\,.
	\end{split}
\end{equation}
Then, denoting the order $\hslash^1$ term in the wavefunction as $S_1(z)\coloneq (3!)^{-1}\int_{\infty}^z\int_{\infty}^z\int_{\infty}^z\omega_{0,3} + \int_{\infty}^z \omega_{1,1}$
\begin{equation}
	S_1(\log z) = \frac{z((f+1)z+f)}{12((f+1)z-f)^3} - \frac{(f+1)z+f}{48((f+1)z-f)} - \frac{f(2f+1)z}{48((f+1)z-f)^2} + \frac{z((f+1)z+f)}{48((f+1)z-f)^3}\,.
\end{equation}
It's interesting to note that neither the $\omega_{g,n}$, nor the coefficients $S_{-1},S_0,S_1$ vanish in the limits $f\to 0,-1$, as this limit does not commute with the infinite sum in hte correlators. Indeed, one can calculate
\begin{equation}
	\begin{split}
		\lim\limits_{f\to 0}\omega_{1,1}(z_1) = -\frac{dz_1}{48}e^{-z_1}\,,\qquad \lim\limits_{f\to -1}\omega_{1,1}(z_1) = -\frac{dz_1}{24}e^{z_1}\,.
	\end{split}
\end{equation}
In \cite{GS11} there was some confusion regarding the $f=0,-1$ cases, as the correlators do na\"ively vanish, but from the physics perspective should not, which suggests that there might be a `better' definition of TR in these unramified cases; this better definition is now generally understood to be the so-called $\log$-TR first systematically developed in \cite{ABDKS24} (see also \cite{H23b} for some early results), so the author is slightly hesitant to claim the above result is anything beyond a coincidence.

Now focus on the case $f=-1/2$ where $x(z) = -2\sinh(z/2)$, $\omega_{0,1}(z) = (z/2)\coth(z/2)dz$, and the deck transformation $\sigma_k$ takes the simple form $\sigma_k(z) = 2\pi i(2k-1) - z$. In this case $S_1(\log z)$ simplifies dramatically
\begin{equation}\label{e:GWTorhbar1}
	S_1(\log z) = \frac{z(z-1)}{3(z+1)^3} - \frac{z-1}{48(z+1)} + \frac{z(z-1)}{12(z+1)^3} = -\frac{(z-1)(z^2-18z+1)}{(z+1)^3}\,.
\end{equation}
It would be nice to check that this matches the order $\hslash^1$ contributions of the finite curve
\begin{equation}
	\mathcal{S} = \left(\mathbb{C}\setminus [0,-\infty),\, x(z) = z^f(1-z),\, y(z) = \log(z) / x(z),\, \omega_{0,2}(z_1,z_2) = \frac{dz_1dz_2}{(z_1-z_2)^2}\right)\,,
\end{equation}
for $f=-1/2$, as the quantum curves for $\mc{S}$ and $\mc{S}_\infty$ agree in this case. However, for $f=-1/2$, $\mc{S}$ would have $x(z) = z^{-1/2} + z^{1/2}$, which has branch cuts. Furthermore, by taking the $f \to -1/2$ limit, it is easy to convince oneself that the local deck transformation about the ramification point $z=-1$ should be $\sigma(z) = 1/z$, which only satisfies $x\circ \sigma = x$ if one starts to get very creative about interpreting branch cuts.

To avoid these difficulties, one can instead compute the correlators $\omega_{1,1}$ and $\omega_{0,3}$ for all integer $f$, and then analytically continue to all complex $f$ by requiring that the dependence on $f$ is rational (for explicit expressions of these correlators for all $f\in \Z$, see \cite{BM08}\footnote{\cite{BM08} uses a different sign convention where $\omega_{g,n}\to (-1)^{2g+n-2}\omega_{g,n}$.}). For $f=-1/2$ one obtains
\begin{equation}
	\begin{split}
		\omega_{0,3}(z_1,z_2,z_3) &= -\frac{4dz_1dz_2dz_3}{(z_1+1)^2(z_2+1)^2(z_3+1)^2}\,, \\
		\omega_{1,1}(z_1) &= \frac{dz_1}{24(z_1+1)^2} + \frac{dz_1}{2(z_1+1)^3} - \frac{dz_1}{2(z_1+1)^4}\,.
	\end{split}
\end{equation}
It is now easy to compute the corresponding order $\hslash^1$ term in the wavefunction\footnote{Note that the integration would give the same answer if the basepoint $b=1$ was chosen; this follows from the linear loop equations.}
\begin{equation}
	\begin{split}
		\frac{1}{3!2^3} \int_{z^{-1}}^{z}\int_{z^{-1}}^{z}\int_{z^{-1}}^{z} \omega_{0,3} + \frac{1}{2} \int_{z^{-1}}^{z} \omega_{1,1} &= -\frac{(z-1)^3}{12(z+1)^3} + \frac{z-1}{48(z+1)} + \frac{z-1}{8(z+1)} - \frac{z^3-1}{12(z+1)^3} \\
		&= -\frac{(z-1)(z^2-18z+1)}{(z+1)^3}\,,
	\end{split}
\end{equation}
which agrees with \eqref{e:GWTorhbar1}, as expected.
	
\subsection{$ P(x,y) = x^2 - \frac{4}{\pi^4}\sin^2(\pi\sqrt{-y}) $}
Consider the periodic and admissible, but not regular, spectral curve
\begin{equation}\label{e:WP}
	\mathcal{S} = \left(\mathbb{P}^1,\, x(z) = \frac{2}{\pi^2}\sin(\pi z),\, y(z) = -z^2,\, \omega_{0,2}(z_1,z_2) = \frac{dz_1dz_2}{(z_1-z_2)^2}\right)\,,
\end{equation}
where $z$ is an affine coordinate on $\P^1$. This is obtained from the well-known curve
\begin{equation}
	\mathcal{S}^{\vee} = \left(\mathbb{P}^1,\, x^{\vee}(z) = z^2,\, y(z) = \frac{2}{\pi^2}\sin(\pi z),\, \omega_{0,2}^{\vee}(z_1,z_2) = \omega_{0,2}(z_1,z_2) = \frac{dz_1dz_2}{(z_1-z_2)^2}\right)\,,
\end{equation}
via the $x$-$y$ swap $(x,y)\mapsto(y,-x)$.\footnote{The sign here is essentially irrelevant as it just rescales the correlators $\omega_{g,n} \mapsto (-1)^{2g+n-2}\omega_{g,n}$; the resulting correlator computed here looks slightly nicer with the chosen sign convention. \cite{ABDKS22} does not flip this sign, which results in \eqref{e:x-y-explicit} having slightly different signs than shown in \cite{ABDKS22}.} By \Cref{c:perx-y} the correlators of these two curves will be related by an algebraic formula given in \cite{ABDKS22}. Here, we will check this for $\omega_{1,1}$. For results about $\omega_{0,3}$ and $\omega_{0,4}$, see \cite{A25}.


First, for a ramification point $k\in\mathbb{Z}+1/2$ the local Galois conjugate is $\sigma_{k}(z) = 2k-z$. With this compute
\begin{equation}\label{e:w11JT}
	\begin{split}
		\omega_{1,1}(z_1) &= \sum_{k\in\mathbb{Z}+1/2} \Res_{z=k}\, \frac{1}{2} \left(\frac{ dz_1 } { z_1-z } - \frac{ dz_1 }{z_1-(2k-z)}\right) \frac{1}{(y(2k-z)-y(z))dx(z)} \frac{dzd(2k-z)}{(z-(2k-z))^2}\\
		&= -\frac{dz_1}{32} \sum_{k\in\mathbb{Z}+1/2} \Res_{z=k} \frac{ dz } { (z_1-z)(z_1+z-2k) } \frac{ \pi } { \cos(\pi z) k (z-k)^2 } \\
		&= \frac{1}{32} \sum_{k\in\mathbb{Z}+1/2} \frac{(-1)^{k-1/2}}{k} \left( \frac {dz_1} { (z_1-k)^4} + \frac {\pi^2 dz_1} {6(z_1-k)^2} \right)\,.
	\end{split}
\end{equation}

Now compare the above to the $x$-$y$ dual
\begin{equation}
		\omega_{0,1}^{\vee}(z_1) = \frac{\pi^3 dz_1}{192 z_1^2} + \frac{\pi dz_1}{32 z_1^4}\,.
\end{equation}
Then (see \cite{ABDKS22}) 
\begin{equation}\label{e:x-y-explicit}
	\begin{split}
		\omega_{1,1}(z_1) & = \omega_{1,1}^{\vee}(z_1) + d_{z_1}\left(\frac{\bar{\omega}^{\vee}_{0,2}(z_1,z_1)}{2dx^{\vee}(z_1)dy^{\vee}(z_1)} - \frac{1}{24}\frac{d^2}{dy^{\vee}(z_1)^2}\frac{dy^{\vee}(z_1)}{dx^{\vee}(z_1)}\right)\,, \\
		\bar{\omega}^{\vee}_{0,2}(z_1,z_2) & \coloneq \omega^{\vee}_{0,2}(z_1,z_2) - \frac{dx^{\vee}(z_1)dx^{\vee}(z_2)}{(x^{\vee}(z_1)-x^{\vee}(z_2))^2}\,,
	\end{split}
\end{equation}
where $i',i''$ are chosen so that $\{i,i',i''\} = \llbracket 3 \rrbracket$ (the ordering does not matter). Applying these formulae to the case considered here one obtains
\begin{equation}
  \omega_{1,1}(z_1) = \frac{\pi {\left(\pi^{2} z_1^{2} + 2 \, {\left(\pi^{2} z_1^{2} + 3 \, \pi {\left(\pi {\left(\pi z_1 \tan\left(\pi z_1\right) - 1\right)} z_1 \sec\left(\pi z_1\right)^{2} + \tan\left(\pi z_1\right)\right)} z_1 - 3\right)} \sec\left(\pi z_1\right) + 6\right)}}{192 \, z_1^{4}} \,,
\end{equation}
and it is straightforward to check that the above expression has no pole at $z_1=0$ and correctly reproduces the principal part shown in \eqref{e:w11JT} at $z_1 = k+1/2$ for every $k\in\Z$. A similar procedure may be carried out for $\omega_{0,3}$ and the author has checked on SageMath the same agreement of the two expressions for $\omega_{0,3}$. It is interesting to notice that, in this case, it is generally easier to calculate the correlators using the $x$-$y$ swap formula than by proceeding directly through the periodic TR formalism. 

%

{\setlength\emergencystretch{\hsize}\hbadness=10000
\printbibliography}

\end{document}